%% file: main.tex
\documentclass[11pt]{article}
\usepackage[letterpaper,margin=1in]{geometry}
\usepackage[dvipsnames]{xcolor}
\definecolor{DarkRed}{rgb}{0.65,0,0}
\usepackage{amsmath,amssymb,amsthm}
\usepackage{graphicx}
\DeclareTextFontCommand{\textsc}{\normalfont\scshape}
\usepackage{algorithm}
\usepackage{algpseudocode}
\usepackage[colorlinks=true,linkcolor=DarkRed,linktoc=page,citecolor=green!50!black,urlcolor=blue]{hyperref}
\usepackage[capitalize]{cleveref}
\hypersetup{bookmarksopen=true,bookmarksnumbered=true,bookmarksdepth=2}
\newcommand{\email}[1]{\href{mailto:#1}{#1}}
\numberwithin{equation}{section}
\numberwithin{algorithm}{section}
\makeatletter
\providecommand{\theHALG@line}{}
\renewcommand{\theHALG@line}{\theHalgorithm.\arabic{ALG@line}}
\makeatother
\crefname{section}{section}{sections}
\crefname{subsection}{subsection}{subsections}
\Crefname{figure}{Figure}{Figures}
\crefformat{equation}{\textup{#2(#1)#3}}
\input{header}

\title{Local Important Separators with Applications\\ to Unbreakable Decomposition and Multicut}
\input{metadata}
\date{}

\begin{document}

\pagenumbering{roman}
\pagestyle{empty}

\maketitle
\thispagestyle{empty}

\input{abstract}

\clearpage
\tableofcontents
\clearpage
\pagenumbering{arabic}
\pagestyle{plain}

\input{intro}
\section{Technical Overview.}

\paragraph{Local Important Separators.} We are given an undirected graph $G$ and vertices $s$ and $t$. The goal is to enumerate all important $s$-$t$ separators with volume at most $\nu$ (on the $s$-side) and size at most $k$.

The natural starting point is to try and use the local min-cut algorithm of~\cite{forster2020computing}, which has the following guarantee: it either finds \emph{some $s$-$t$ cut} of volume at most $\nu \lambda$ and size at most $\lambda$  containing $s$ in time $\Otil(\nu \lambda^2)$, or declares that there is no $s$-$t$ cut of volume at most $\nu$ and size at most $\lambda$, with high probability. Let us refer to this subroutine as $\LocalCut(G, s,t,\nu, \lambda)$. Let us call this procedure and obtain its output, say a cut $(L,R)$.  This cut may not even be an important separator: In fact, a standard fact, which follows from submodularity, is that for any important separator $(L,R)$ and the furthest min-cut $(L^*,R^*)$ between $s$ and $t$, $L^* \subseteq L$. But the returned cut from $\LocalCut(s,t,\nu,\lambda)$ may not even be a furthest $s$-$t$ min-cut.

Therefore, we first develop a subroutine $\LocalMincut(s,t,\nu,\lambda)$, whose goal is to find the furthest $s$-$t$ min-cut $(L^*,R^*)$, if \emph{we are promised} that the furthest $s$-$t$ min-cut has volume at most $\nu$ and size equal to $\lambda$.\footnote{The formal algorithm does not require this promise: if the furthest min-cut has volume greater than $\nu$, no relevant important separator remains in that branch.} We obtain this as follows. For simplicity, assume that we know $\lambda$.

First, we call $\LocalCut{(s,t,\nu,\lambda)}$, which returns a cut $(L,R)$. This cut must be of size exactly $\lambda$, since we assumed that the furthest $s$-$t$ min-cut has size equal to $\lambda$. We pick any cut vertex of this cut, and see if the cut can be ``pushed'' further to the right without increasing the size. More concretely, we pick a cut vertex $x$, contract $L \setminus R \cup \{x\}$ into a single vertex $s'$ to obtain $G'$, and then recursively call $\LocalMincut{(G', s',t,\nu',\lambda)}$ where $\nu'$ is appropriately adjusted. 
 If $\LocalMincut(G', s',t,\nu',\lambda)$ succeeds, we simply return its output. If it does not succeed, it means that our guess was wrong, and this means that $x$ must be a cut vertex, which cannot be contracted with $L \setminus R$. In this case, using this information, we can decrease $\lambda$ by $1$ - that is, we call $\LocalMincut(G'', s'',t,\nu'',\lambda - 1)$ where again the bound $\nu''$ is appropriately computed, and $s''$ is obtained by contracting $L \setminus R$ into a single vertex and $G''$ is the resulting graph, where we also delete the vertex $x$. This recursive algorithm enables us to find the furthest $s$-$t$ min-cut.

Once we know how to  ``reach'' the furthest $s$-$t$ min-cut, $(L^*,R^*)$ of size $\lambda$, we can now try to emulate the branching  algorithm that enumerates important separators. This is the goal of the subroutine \LocalImpSep{$(G, s,t,\nu,k,\lambda)$}, which aims to find all important separators of volume at most $\nu$ and size at most $k$, where we are promised that the minimum $s$-$t$ cut in $G$ is of size $\lambda$. First, we compute the maximal min-cut by calling $\LocalMincut(G,s,t,\nu,\lambda)$. Let $(L^*,R^*)$ be the returned cut. We pick a cut vertex $x \in (L^* \cap R^*)$. Then, we guess if this vertex is reachable from $s$ or not, branching two different ways.  If we guess that it is reachable, we contract it and continue. In this case, since $(L^*,R^*)$ was already the furthest min-cut, after contraction, the size of the min-cut must increase. Hence we can increase $\lambda$ by $1$ and call $\LocalImpSep(G',s',t,\nu', k, \lambda + 1)$, where $s'$ is obtained by contracting $L^* \setminus R^*$ and $x$ into a single vertex. The other case is the branch where this vertex is on the boundary of the important separator, and hence we can reduce $k \gets k - 1$ - thus we call $\LocalImpSep(G'', s'', t, \nu'', k - 1, \lambda-1)$, where $s''$ is the single vertex obtained by contracting $L^* \setminus R^*$, $G''$ is the resulting graph after contraction and deleting $x$, and $\nu''$ is appropriately modified. 

In this short overview, we do not discuss the full details of the running time, but we note that the branching algorithm clearly has at most $4^k$ branches, as in each recursive call the parameter $2k - \lambda$ decreases by at least $1$. It remains to bound the time of a single invocation of $\LocalMincut$. Recall that $\LocalMincut$ repeatedly tries to push the current cut past a cut vertex $x$. Each successful push contracts $L \setminus R \cup \{x\}$ and reduces the volume parameter by at least $1$, so there are at most $\nu$ successful pushes. Charging each of them the worst-case running time of $\LocalCut$ would give a quadratic dependence on $\nu$; instead, we charge each successful push to the volume it contracts, and these volumes sum to at most $\nu$. An unsuccessful push decreases $\lambda$ by $1$, so it happens at most $\lambda \leq k$ times. Thus, one invocation of $\LocalMincut$ takes $\Otil(\nu\poly(k))$ time, giving $\Otil(\nu 4^k\poly(k))$ time overall.

\paragraph{Unbreakable Decomposition.} On a high level, our algorithm follows the approach of~\cite{anand2025unbreakable}. To make our explanation clear, we use two definitions. Given a graph $G$, an $(X,v,q,k)$ witness is a vertex cut $(L,R)$, such that $X \subseteq R$, $v \in L \setminus R$, $|L| > q$ and $|L \cap R| \leq k$.
An $(X,v,q,k)$ witness for a terminal set $T \subseteq V(G)$ is a natural generalization: it is a vertex cut $(L,R)$ such that $X \subseteq R$, $v \in L \setminus R$, $|L \cap T| > q$ and $|L \cap R| \leq k$.

From~\cite{anand2025unbreakable}, it is sufficient to solve the following problem: Given a graph $G$, a set $X \subseteq V(G)$ such that $X$ is $(k,k)$ unbreakable in $G$, find a set $T$ satisfying $X \subseteq T \subseteq V(G)$, such that there is no $(X,v,q',2k)$ witness for $T$ in $G$. If we do this, then it is easy to get an $(O(q'),k)$ unbreakable decomposition with adhesion $O(q')$. Thus it suffices to find a set $T$ to ensure that there is no $(X,v,2k,2k)$ witness for $T$ in $G$ and we focus on this goal for the rest of this overview. Let $k' = 2k$. 

We could adapt the natural approach, followed in essentially every previous paper on constructing unbreakable decomposition, by setting $T = V(G)$ first, and then refining $T$. Then, if there is an $(X,v,k',k')$ witness for $T$ in $G$, $(L,R)$, we update $T$ to $T'$ where $T' = T \setminus L \cup (L \cap R)$. Unfortunately, such an approach can be quite slow, due to the following problem: the set $T$ keeps changing, but it is not monotone: while some vertices are removed from $T$, they can be added back to $T$ in some later iteration. Thus even if we could find witnesses in time proportional to their volume, which we can hope for thanks to our local algorithm, the overall running time may not be linear, as a unit of volume may be processed many times.

Instead, we try to achieve a reduction to the case when $G$ has no $(X,v,q',k')$ witness for $q' = 2^{O(k)}$. If this happens, then now we know that any witness $(L,R)$ must have $|L| \leq q'$ and hence its volume must be at most $\nu \leq q'^2 = 2^{O(k)}$. This is where our local algorithm proves useful. Notice that if we assume connectivity of witness cuts, then given $T$, if there is an $(X,v,q,k)$ witness for $T$ in $G$, it can be ``pushed'' towards $X$ as much as possible. Formally, if there is an $(X,v,q,k)$ witness for $T$, there must be an $(X,v,q,k)$ witness that is an important $v$-$X$ separator.

If we now use our local algorithm, then we can find an $(X,v,k',k')$ witness in $\Otil(\nu 2^{O(k)}) = \Otil(2^{O(k)})$ time. When we update $T \gets T \setminus L \cup (L \cap R)$, since $|L \cap R| < |L \cap T|$, we decrease the size of $T$ by at least $1$. This means that there are at most $m$ such steps of finding witnesses and updates, and hence we would finish in only $\Otil(2^{O(k)}m)$ time. However, we must handle disconnected witnesses, and this is accomplished by an additional branching step so that the running time now becomes $\Otil(k^{O(k)}m)$.

But in order to use this approach, we must first reduce to the case when $G$ has no $(X,v,q',k')$ witness.
This concept of reducing to a case where the graph is ``weakly unbreakable'' to facilitate local search is similar to the approach adopted by~\cite{korhonen2025linear}. But the way we obtain this reduction is quite different and conceptually simpler and again leverages our local algorithm.

So how do we accomplish this reduction? Given $G$, we obtain a graph $G'$ with $V(G') \subseteq V(G)$, such that for any $T \subseteq V(G')$ if there is no $(X,v,k',k')$ witness for $T$ in $G'$, then there is no $(X,v,k',k')$ witness for $T$ in $G$. $G'$ must additionally satisfy the property we want: that there is no $(X,v,q',k')$ witness in $G'$. To construct this graph $G'$, instead of adopting the approach of~\cite{korhonen2025linear}, we replace this by a single step which only guarantees something weaker: that there is no $(X,v,q',k')$ \emph{connected witness} in $G'$. Here an $(X,v,q',k')$  connected witness is an $(X,v,q',k')$  witness $(L,R)$ where $G[L \setminus R]$ is connected and $N_G(L \setminus R)=L\cap R$. It turns out that eliminating such connected witnesses is enough, and this weaker guarantee is enough to accomplish the last step similar to the above argument.

Thus, the problem now reduces to eliminating all connected $(X,v,q,k)$ witnesses in $G$ which is the subject of \Cref{sec:connectedwitness}. Here, we use our local algorithm again, but to circumvent the problem of repeatedly examining volume elements, we use the notion of downward well-linkedness introduced by~\cite{korhonen2025linear}. A downward well-linked set $S$ satisfies for any separation $(L,R)$ of $G[S \cup N(S)]$, either $|N(S) \cap L| \leq |L \cap R|$ or $|N(S) \cap R| \leq |L \cap R|$. In simple language, what this means is that, if the goal is to separate vertices of $N(S)$ by deleting vertices from $S \cup N(S)$, then it is essentially always cheaper to cut vertices of $N(S)$ - any cut inside $S$ can always be \emph{uncrossed} to not cut any vertex of $S$, while still achieving the same separation of $N(S)$.

\paragraph{Organization.}
In \Cref{sec:local} we describe our local algorithm for important separators. Sections \ref{sec:connectedwitness} through \ref{sec:Decomp} deal with our result on unbreakable decomposition. \Cref{sec:connectedwitness} describes how to eliminate $(X,v,q',k')$ connected witnesses from a given graph. \Cref{sec:disconnectedwitness}, when given as input a graph with no $(X,v,q',k')$ connected witness, gives the algorithm for removing all $(X,v,k',k')$ witnesses to obtain the desired set $T$. \Cref{sec:keysubroutine} gives the key subroutine we need for unbreakable decomposition by combining the results of \Cref{sec:connectedwitness} and \Cref{sec:disconnectedwitness}. Finally,~\Cref{sec:Decomp} gives our unbreakable decomposition algorithm. In \Cref{sec:singlesink}, we show our result on single-sink important separators. In \Cref{sec:multicut} we show our result on {\textsc{Multicut}}. 

\section{Preliminaries.}\label{sec:preliminaries}

Throughout this paper, we use $n$ to denote the number of vertices and $m$ to denote the number of edges in a graph. All graphs are undirected, unweighted, and connected unless otherwise stated, and hence we shall assume $m \ge n-1$. Given a graph $G$ and a subset of vertices $U \subseteq V(G)$, $G[U]$ denotes the subgraph induced on the set of vertices $U$ and $E_G(U)$ denotes the set of edges in $E(G)$ whose both endpoints are in $U$, i.e. the set of edges in $G[U]$. The set of neighbours of a vertex $v$ in $G$ is denoted by $N_G(v)$. We denote by $N_G(U):= \bigcup_{v \in U} N_G(v) \setminus U$ the set of neighbours of $U$. We omit the subscripts when the graph is clear from the context. 

\paragraph{Separators/Vertex Cuts.} 
A \emph{separator} $(L,R)$ is such that $L \cup R = V(G)$, $L\setminus R$ and $R\setminus L$ are not empty, and there is no edge between $L\setminus R$ and $R\setminus L$. The \emph{size} of the vertex cut $(L,R)$ is $|L\cap R|$. We emphasize that, we will view $(L,R)$ as an \emph{ordered pair}, because in some definitions, the order of $(L,R)$ does matter.
Throughout, we only consider vertex cuts (not edge cuts), so we usually write cuts as an abbreviation of vertex cuts. We further have the following definition of $(s,t)$-separators, which separate two given vertices $s$ and $t$. While it shall be clear from the context, typically we represent a separator by only one side of it: thus usually $(L,R)$ is represented only by $L$. This is crucial to obtain bounds on the running time for local algorithms, as the volume bounds apply only to one side of the cut. 

\begin{definition}[$(s,t)$-separators]%
Let $G$ be an undirected graph. For two vertices $s,t\in V(G)$, two vertex subsets $L,R\subseteq V(G)$ form an $(s,t)$-separator $(L,R)$ if $s\in L\setminus R$, $t\in R\setminus L$ and there is no edge connecting $L\setminus R$ and $R\setminus L$.
\end{definition}

\begin{definition}[Connected separators]%
A separator in $G$ is \emph{connected} if $G[L\setminus R]$ is connected.
\end{definition}

Finally, observe that a separator $(L,R)$ in $G$ can be uniquely fixed once we define $L\setminus R$ and $L\cap R$. Hence, our algorithms will generally represent (and store) $(L,R)$ as $L\setminus R$ and $L\cap R$, which is important to achieve local running time.

\paragraph{Balance, Adhesion and Unbreakability.}

For any vertex set $X\subseteq V(G)$, $X$ is \emph{$\alpha$-balanced} if each connected component $C$ of $G\setminus X$ has size $|C|\le \alpha n$. The \emph{adhesion} $\sigma_G(X)$ of $X$ in $G$, or simply the adhesion of $X$, is the maximum, over connected components $C$ of $G\setminus X$, of the quantity $|N_G(C)|$. Sometimes we will also refer to $|N_G(C)|$ as the adhesion of component $C$.

\begin{definition}[Unbreakability]%
A vertex set $X \subseteq V(G)$ is $(q,k)$-\emph{unbreakable in $G$} if every vertex cut $(L,R)$ of size at most $k$ satisfies $|L\cap X|\leq q$ or $|R\cap X|\leq q$. A $(q,k)$-\emph{breakable witness of $X$ in $G$} is a vertex cut $(L,R)$ of $G$ of size at most $k$ satisfying $|L\cap X| >  q$ and $|R\cap X|> q$.
\end{definition}
By definition, $X$ is $(q,k)$-unbreakable in $G$ iff there is no $(q,k)$-breakable witness of $X$ in $G$. Note that any set $X$ of size at most $q$ is vacuously $(q,k)$-unbreakable.

\begin{definition}
Given a graph $G$ and a vertex set $X$, an $(X,v,q,k)$ witness is a vertex cut $(L,R)$, such that $X \subseteq R$, $v \in L \setminus R$, $|L| > q$ and $|L \cap R| \leq k$.
An $(X,v,q,k)$ witness for a terminal set $T \subseteq V(G)$ is a natural generalization: it is a vertex cut $(L,R)$ such that $X \subseteq R$, $v \in L \setminus R$, $|L \cap T| > q$ and $|L \cap R| \leq k$. We say that an $(X,v,q,k)$ witness (for $T$) is connected if $G[L \setminus R]$ is connected and $N_G(L \setminus R)=L\cap R$.

\end{definition}

\paragraph{Unbreakable Decomposition.} Next, we define unbreakable decompositions.
 
\begin{definition}[Tree Decomposition]%
\label{def:TreeDecomp}%
A \emph{tree decomposition} of a graph $G$ is a pair $(T,\beta)$, where $T$ is a tree and $\beta: V(T) \rightarrow 2^{V(G)}$ is a mapping that assigns to every tree node $t$ a subset $\beta(t)\subseteq V(G)$, called a \emph{bag}. Furthermore, $(T,\beta)$ satisfies the following.
\begin{itemize}
    \item For each vertex $v\in V(G)$, the set $\{t \mid v\in \beta(t)\}$ induces a connected subtree of $T$.
    \item For each edge $\{u,v\}\in E(G)$, there is a tree node $t$ where $u,v\in \beta(t)$.
\end{itemize}
\end{definition}

A \emph{rooted} tree decomposition is a tree decomposition $(T, \beta)$ together with a designated root node $r \in V(T)$. For any node $t\in V(T)$ with parent $t'$, the adhesion of a tree node $t$ is $\sigma(t) = \beta(t) \cap \beta(t')$. We define $\sigma(r) = \emptyset$ for the root $r$. The \emph{adhesion} of $T$ is $\max_{t \in T}|\sigma(t)|$. 
For every $t\in V(T)$, we also define the sets 
\[
\gamma(t) = \bigcup_{\text{descendants $s$ of $t$}} \beta(s)\text{ and }G_t = G[\gamma(t)] - E_G(\sigma(t)).
\]

\begin{definition}[Unbreakable Decomposition]\label{def:UnbreakableDecomp}A $(q,k)$-unbreakable decomposition of $G$ is a rooted tree decomposition $(T,\beta)$ where each bag $\beta(t)$ is $(q,k)$-unbreakable in $G$. The decomposition admits the stronger \emph{subtree unbreakability} property if each bag $\beta(t)$ is $(q,k)$-unbreakable in $G_{t}$.
\end{definition}

The main goal of this paper (\Cref{sec:Decomp}) is a fast algorithm for computing an unbreakable decomposition with  subtree unbreakability property and  small adhesion.

For each tree node $t$, let $\alpha(t) = \gamma(t)\setminus\sigma(t)$.
We say that a rooted tree decomposition $(T, \beta)$ is \emph{compact} if for every node $t \in V(T)$ for which $\sigma(t) \neq \emptyset$, $G[\alpha(t)]$ is connected and $N_G(\alpha(t)) = \sigma(t)$. Compactness is a useful additional property to have for unbreakable decompositions for various applications.

\input{local_cut}

\input{connected_witness_elimination}

\input{phase2}

\input{decomposition}

\appendix

\input{appendix_fast_multicut}

\bibliographystyle{alpha}
\bibliography{references}
\end{document}

%% file: header.tex
\usepackage{enumerate}
\usepackage{multirow}
\usepackage{array}
\usepackage{thm-restate}
\usepackage{float}
\usepackage{tablefootnote}
\usepackage{longtable}
\usepackage{xspace}
\usepackage{enumitem}

\theoremstyle{plain}
\newtheorem{theorem}{Theorem}[section]
\newtheorem{lemma}[theorem]{Lemma}

\newtheorem{observation}[theorem]{Observation}
\newtheorem{claim}[theorem]{Claim}
\theoremstyle{definition}
\newtheorem{definition}[theorem]{Definition}
\theoremstyle{remark}

\AddToHook{cmd/appendix/after}{\crefalias{section}{appendix}}

\newcommand{\poly}{\mathrm{poly}\xspace}
\newcommand{\polylog}{\mathrm{polylog}\xspace}
\newcommand{\vol}{\mathrm{vol}}

\newcommand{\OO}{O}

\newcommand{\LocalImpSep}{{\textsc{LocalImpSep}}}
\newcommand{\LocalMincut}{{\textsc{LocalMincut}}}
\newcommand{\LocalCut}{{\textsc{LocalCut}}}

\newcommand{\eps}{\epsilon}

\global\long\def\val{{\rm val}}

\global\long\def\Decomp{\textsc{Decomp}}

\global\long\def\final{{\rm final}}
\global\long\def\Pop{\textsc{Pop}}
\global\long\def\WellLink{\textsc{WellLink}}
\global\long\def\Clique{{\rm Clique}}
\global\long\def\nnew{{\rm new}}

\global\long\def\old{{\rm old}}
\global\long\def\Torso{{\rm Torso}}
\global\long\def\MakeLean{\textsc{MakeLean}}

\newcommand{\wtilde}{\widetilde}

\newcommand{\Otil}{\wtilde{O}}

\DeclareMathOperator{\dist}{dist}
\DeclareMathOperator{\OPT}{OPT}

\newcommand{\R}{\mathbb R}

%% file: metadata.tex
\author{Aditya Anand\thanks{University of Michigan (\email{adanand@umich.edu}).}
\and Euiwoong Lee\thanks{University of Michigan (\email{euiwoong@umich.edu}). Supported in part by NSF award CCF-2236669.}
\and Yaowei Long\thanks{University of Michigan (\email{yaoweil@umich.edu}).}
\and Thatchaphol Saranurak\thanks{University of Michigan (\email{thsa@umich.edu}). Supported by NSF Grant CCF-2238138 and a Sloan Fellowship.}}
\hypersetup{
  pdftitle={Local Important Separators with Applications to Unbreakable Decomposition and Multicut},
  pdfauthor={Aditya Anand, Euiwoong Lee, Yaowei Long, Thatchaphol Saranurak}
}

%% file: abstract.tex
\begin{abstract}
    We present the first \emph{local FPT} algorithm for finding important separators: Given an undirected graph $G$ with $n$ vertices and $m$ edges, vertices $s,t$ and parameters $k$ and $\nu$, we can enumerate all $s$-$t$ important separators of size at most $k$ and volume at most $\nu$ in time $\tilde{O}(\nu 2^{O(k)})$. Known approaches spend either $\Omega(m)$ time (not local) or $\nu^{\Omega(k)}$ time (not FPT).

Using this local algorithm, we show the following applications. 
\begin{enumerate}
\setlength{\itemsep}{0pt}
\setlength{\parsep}{0pt}
\item The first is an improved construction of unbreakable decomposition, a powerful generalization of tree decompositions.
For any $k$, we can compute a $(5k,k)$-unbreakable decomposition with adhesion $4k$ in $\Otil(k^{O(k)}m)$ time.
This strictly subsumes the result of Anand, Lee, Li, Long, and Saranurak (SODA 2025) and is faster than the recent $O(k^{O(k^2)}m)$-time algorithm by Korhonen (STOC 2025) when $k = \omega\Bigl(\sqrt{\frac{\log \log n}{\log \log \log n}}\Bigr)$. Our algorithm is also simpler than that of Korhonen (STOC 2025).
\item We give a near-linear time algorithm for \emph{single-sink important separators}. For any sink $t$, we can list all $v$-$t$ important separators of size at most $k$ across all $v \in V(G)$ in time $\Otil(2^{O(k)}m)$. Known approaches trivially compute  $v$-$t$ important separators for each $v$, taking $\Omega(4^k k mn)$ time.
\item Using the above result on single-sink important separators, we speed up undirected shadow removal and give the first almost-linear time FPT algorithm for \textsc{Edge Multicut}.
\end{enumerate}
\end{abstract}

%% file: intro.tex
\section{Introduction.}
In the field of parameterized algorithms, problems on graphs form one of the most studied classes of problems. In particular, cut problems, where the goal is to separate some vertices subject to conditions on size, sparsity and connectivity requirements, have received a lot of attention - some of these problems being {\textsc{Multiway Cut}}, {\textsc{Multicut}}, {\textsc{Minimum Bisection}}, {\textsc{Feedback Vertex Set}}, {\textsc{Balanced Separator}}~\cite{marx2006parameterized,fm06,chen2008fixed,marx2011fixed,cygan2014minimum,cygan2020randomized,li2022detecting}.

Perhaps one of the most valuable tools in studying cut problems has been the technique of \emph{important separators}, introduced by Marx~\cite{marx2006parameterized}, finding applications in various settings, see for instance~\cite{marx2006parameterized,chen2008fixed,chen2009improved, chitnis2013fixed, lokshtanov2013clustering,lokshtanov2021fpt,anand2025allsubsetsimportantseparatorsapplications}. Given a graph $G$ and two disjoint subsets of vertices $A$ and $B$, a vertex set $X \subseteq V(G)$ is called an \emph{important separator} (a) if it is an $A$-$B$ separator, that is, there is no path between $A$ and $B$ in $G \setminus X$, (b) it is inclusion-wise minimal, so that no proper subset of $X$ is an $A$-$B$ separator, and (c) for no other separator $X'$ with $|X'| \leq |X|$, $R(X') \supsetneq R(X)$, where $R(X)$ is the set of vertices reachable from $A$ in $G \setminus X$. Intuitively, it captures right-most $A$-$B$ separators, so that the size of an $A$-$B$ separator to the right of this separator (pushed further towards $B$) can only be larger. It is well-known that given two sets $A$ and $B$, there are at most $4^k$ $A$-$B$ important separators of size at most $k$, and they can be enumerated in time $O(4^k k (m + n))$~\cite{chen2009improved}.

While this seems like the best possible running time bound for the enumeration of important separators, for many applications, one needs several invocations of important separator enumeration between many different pairs of sets, leading to a super-linear running time. In the field of fast graph algorithms, a similar story has evolved with the computation of $s$-$t$ min-cuts: many algorithms need multiple computations of $s$-$t$ min-cuts/max-flows, and hence methods have been developed to further reduce the cost of such computation. In particular, the techniques of \emph{isolating cuts}~\cite{li2025deterministic}, which computes multiple pair-wise min-cuts using only polylogarithmic max-flows, and \emph{local min-cuts}~\cite{chechik2017faster,forster2020computing,nalam2023maximal} stand out as two of the successful approaches which have gained traction in recent years. In both these techniques, local computation allows one to compute several $s$-$t$ max-flows but on smaller graphs, whose \emph{total size} is small, the key observation being that if the running time for each computation only depends on the size of the graph, the total running time can still be (almost) linear.

In this paper, we focus on local cut algorithms viewed through the lens of parameterized complexity. We generalize local computation of $s$-$t$ min-cuts to $s$-$t$ important separators, which generalize $s$-$t$ minimum cuts. Our main result is such a local algorithm for finding important separators. Equipped with our local algorithm, we use it to improve the best known construction of unbreakable decompositions, a key tool for obtaining FPT algorithms for cut problems in parameterized complexity. We also use our local algorithm to solve the single-sink important separators problem. We discuss each of these results in more detail in the following paragraphs.

\paragraph{Local Important Separators.} 
Local computation has been a central theme in the graph algorithms literature. In particular, local algorithms for graph cuts have been studied before. Given a graph $G$, the volume of a set $S$ is the sum of the degrees of the vertices in $S$. The local version of the minimum cut problem can then be phrased as follows: Given a graph $G$, a source $s$ and parameters $k,\nu$, is there a subset $S \subseteq V(G)$ with $s \in S$, such that the volume of $S$ is at most $\nu$, and the cut-size of $S$, $|\delta_G(S)| \leq k$? This problem was first studied by~\cite{chechik2017faster}, who showed that it is possible to detect if such a set exists in time $O(\nu k^{O(k)})$. Later, \cite{nanongkai2019breaking} and   \cite{forster2020computing} removed the exponential dependency on $k$, solving the same problem in $\Otil(\nu^{1.5} k)$ and $\Otil(\nu k^2)$ time, respectively\footnote{Throughout the paper, we use $\tilde{O}(\cdot)$ to hide $\polylog(n)$ factors.}.

In this work, we consider local algorithms from a parameterized perspective. We show a \emph{local} algorithm for enumerating important $s$-$t$ separators for fixed vertices $s$ and $t$. Given $s,t,k$ and a \emph{volume parameter} $\nu$, our algorithm finds a list of  at most $4^k$ $s$-$t$ separators of size at most $k$ and guarantees that every $s$-$t$ important separator of size at most $k$ and volume at most $\nu$ is contained in this list. Here the volume of an $s$-$t$ cut $(L,R)$ is defined as the total degree of the vertices in $L \setminus R$.\footnote{In this paper, we deal with vertex cuts, and they are represented as $(L,R)$, where $L \cap R$ is the set of cut vertices, separating $L \setminus R$ from $R \setminus L$. For a separator $X$, let $S$ be the vertices reachable from $s$ in $G-X$; its separation representation is $(S\cup X,V(G)\setminus S)$.} The running time of our algorithm depends only on $\nu$ and $k$ and can be bounded by $\Otil(\nu  4^k  \poly(k))$. The previous best known bound for this problem is an algorithm running in time $O(\nu^{O(k)})$, implicit in~\cite{korhonen2025linear}. Our algorithm can thus be viewed as the first \emph{local FPT} algorithm for this problem. We believe this notion could be interesting for other problems too.

\begin{theorem}[Local Important Separators]\label{thm:breakable}%
Let $G$ be an undirected graph and $s,t \in V(G)$ and  parameters $\nu, k$ be given. Then there is a randomized algorithm that outputs a list of at most $4^k$ $(s,t)$-separators, each of size at most $k$ and volume at most $\nu$, such that every important $s$-$t$ separator of size at most $k$ and volume at most $\nu$ is contained in this list with high probability. The algorithm runs in time $\Otil(\nu 4^{k} \poly(k))$.
\end{theorem}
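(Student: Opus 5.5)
We follow the outline of the technical overview and use the local vertex-cut algorithm of \cite{forster2020computing} as a black box. Call it $\LocalCut(G,s,t,\nu,\lambda)$. In time $\Otil(\nu\lambda^2)$ it either returns an $(s,t)$-separator $(L,R)$ with $|L\cap R|\le\lambda$ and volume $O(\nu\lambda)$, or certifies with high probability that no $(s,t)$-separator of size at most $\lambda$ and volume at most $\nu$ exists. We build two recursive procedures on top of it and then argue correctness and running time. Throughout, we work on contracted graphs in which $L\setminus R$ is merged into a new source. The volume parameter is reduced by the number of edges absorbed into the source, so that any cut of volume at most $\nu$ in $G$ corresponds to a cut of volume at most the new parameter in the contracted graph. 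We never write the contracted graph explicitly; we only explore adjacency lists of the merged vertices, which keeps the cost local.

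\textbf{Step 1 ($\LocalMincut$).} Given $(G,s,t,\nu,\lambda)$, the goal is to return the furthest $s$-$t$ min-cut $(L^*,R^*)$ whenever its size is $\lambda$ and its volume is at most $\nu$. We first call $\LocalCut$ and obtain $(L,R)$. We then iterate over the cut vertices $x\in L\cap R$. For each one we try to contract $(L\setminus R)\cup\{x\}$ into $s'$ and recurse with the same $\lambda$. If the recursion succeeds, we return its output. If it fails, then $x$ lies in every furthest min-cut of size $\lambda$, so we delete $x$ and recurse with $\lambda-1$.

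Correctness rests on two standard submodularity facts.
\begin{itemize}
\item If the returned cut has size $\lambda$, the furthest min-cut $L^*$ contains the reachable side of the union of $(L,R)$ with the furthest min-cut. Hence contracting $L\setminus R$ never destroys $(L^*,R^*)$, which keeps the contraction sound.
\item Consequently, either some $x$ lies on the $s$-side of $L^*$, or $x\in L^*\cap R^*$.
\end{itemize}

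\textbf{Step 2 ($\LocalImpSep$).} This is the branching procedure with measure $2k-\lambda$. We compute $(L^*,R^*)$ by $\LocalMincut$ and pick $x\in L^*\cap R^*$. In one branch we contract $x$ into the source and recurse with $\lambda+1$; the min-cut strictly increases because $L^*$ was furthest. In the other branch we delete $x$ and recurse with $(k-1,\lambda-1)$. Each branch decreases $2k-\lambda$ by at least one, which gives at most $4^k$ leaves; these leaves are exactly the output list.

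The completeness argument is the classical one of \cite{chen2009improved}. By submodularity every important separator $X$ satisfies $L^*\subseteq R(X)\cup X$. We then case on whether $x\in X$ or $x$ is reachable. The new ingredient is that the volume of the $s$-side is monotone along this argument, so an important separator of volume at most $\nu$ survives in a branch whose parameter is still large enough. Branches in which the furthest min-cut has volume greater than $\nu$ can be pruned safely, because by the containment above no relevant important separator lies inside them.

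\textbf{Step 3 (running time).} This is the step we expect to be the main obstacle. Within one $\LocalMincut$ call, a naive count multiplies up to $\nu$ successful pushes by $\Otil(\nu\lambda^2)$ each. We would instead run each $\LocalCut$ call with the current residual volume bound $\nu_i$ and charge its cost to the volume absorbed into the source, so that $\sum_i(\nu_i-\nu_{i+1})\le\nu$. Failed pushes number at most $\lambda\le k$ per level, each followed by a $\lambda$-decrement, and the number of candidate cut vertices tried per push is at most $\lambda$. This yields $\Otil(\nu\poly(k))$ per $\LocalMincut$ call.

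The delicate part is making the amortization rigorous. A failed recursive attempt explores volume without contracting any of it, so we would cap its running time by the $\LocalCut$ bound with the current $\nu$. This wasted work is paid at most $\poly(k)$ times, and each payment costs $\Otil(\nu\poly(k))$. Summed over the $4^k$ branches, the total is $\Otil(\nu4^k\poly(k))$.

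Finally, the failure probability is controlled by a union bound over the $\nu\cdot 4^k\poly(k)$ calls to $\LocalCut$. Each call is boosted to succeed with probability $1-n^{-c}$, which keeps the overall error inverse-polynomial.
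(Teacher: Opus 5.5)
You have the paper's architecture: $\LocalCut$, then $\LocalMincut$ by pushing past or fixing one cut vertex, then $\LocalImpSep$ branching with measure $2k-\lambda$. The gap is in the amortization you flag as the crux, which fails as written. A $\LocalCut$ call run with the residual bound $\nu_i$ costs $\Theta(\nu_i\lambda^2\log n)$ no matter how much volume is absorbed afterwards. Nothing forces $\LocalCut$ to return a far cut: it may return $L\setminus R=\{s\}$, so a push absorbs only $A(x)\approx\deg(x)$, possibly $O(1)$. With $\Theta(\nu)$ such pushes you pay $\sum_i\nu_i=\Theta(\nu^2)$. The missing idea is an \emph{output-sensitive} $\LocalCut$: run it with volume parameters $2,4,8,\dots$ up to $\nu_i$ and stop at the first success. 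Then a successful call costs $O((A(L\setminus R)+\lambda)\lambda^2\log n)$, where $L\setminus R$ is exactly the set contracted next, in both the push and the fix branch. Only then do the successful calls sum to $O(\nu\,\poly(\lambda)\log n)$, with the at most $\lambda$ failed trials paying $O(\nu\lambda^2\log n)$ each. For the same reason you cannot leave contractions implicit: every DFS from the merged source would rescan all previously absorbed adjacency lists. The paper instead rebuilds the contracted adjacency list in time $O(\lambda+A(Z))$, where $Z$ is the newly absorbed set. It uses a degree-respecting weight $A$ with $A=0$ on the source, and keeps the source's list equal to its true neighbourhood.

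Your correctness sketch also needs repair, because it treats $\lambda$ as the exact min-cut value and $(L^*,R^*)$ as the global furthest min-cut. After contracting $x$ the min-cut may jump by more than one. Also, the global furthest min-cut may have huge volume while other cuts of size $\lambda$ are small. Neither situation can be detected locally. In particular, when $\LocalMincut$ finds no separator of size $\le\lambda$ within volume $\nu$, you may not prune: the min-cut might simply exceed $\lambda$ while small-volume important separators remain. You must instead recurse on the same graph with $\lambda+1$, which still lowers $2k-\lambda$. The paper makes this rigorous with volume-restricted notions:
\begin{itemize}
\item the invariant that no $(s,t)$-separator has size $<\lambda$ and volume $\le\nu$;
\item the guarantee that $\LocalMincut$ returns a cut that either has volume $>\nu$ or is such that every strictly further cut of volume $\le\nu$ has size $\ge\lambda+1$;
\item an uncrossing claim that, under the invariant, every important separator of volume $\le\nu$ lies beyond \emph{every} separator of size $\le\lambda$. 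This also justifies returning nothing when the returned cut has volume $>\nu$.
\end{itemize}
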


This leads to much faster algorithms in settings where one needs to compute important separators between different pairs of vertices. As a first application of our result, we use our local algorithm to show a new improved algorithm for computing unbreakable decompositions.

\paragraph{Unbreakable Decomposition.}

Unbreakable Decomposition has been a very important tool to solve cut problems. It was introduced by~\cite{cygan2014minimum}, leading to the first FPT algorithm for \textsc{Minimum Bisection}{} and has been the key tool to obtain FPT algorithms for many cut problems, including the \textsc{Minimum $p$-Way Cut} problem~\cite{cygan2020randomized,lokshtanov2022parameterized}, model checking~\cite{pilipczuk2021algorithms,lokshtanov2018reducing,schirrmacher2024model}, connectivity oracles under vertex failures~\cite{pilipczuk2021algorithms}, \textsc{Multiway Node Hubs}~\cite{saurabh202364}, \textsc{Judicious Partitions}~\cite{lokshtanov2019balanced}, \textsc{Fair Bisection}~\cite{inamdar2023parameterized}, and deletion to bounded degree graphs~\cite{agrawal2022fixed}. See~\cite{anand2025unbreakable} for a more detailed discussion.
A vertex set $T$ is said to be $(q,k)$ unbreakable in graph $G$ if for every vertex cut $(L,R)$ in $G$ with cut-size $|L \cap R| \leq k$, either $|L \cap T| \leq q$ or $|R \cap T| \leq q$. A $(q,k)$ unbreakable decomposition with adhesion $\sigma$ is a tree decomposition where every bag is $(q,k)$ unbreakable, and for any parent-child bags the intersection between them is at most $\sigma$.~\cite{cygan2014minimum} showed an algorithm for obtaining a $(2^{O(k)},k)$ unbreakable decomposition. Later,~\cite{cygan2020randomized} showed an optimal $(k,k)$ unbreakable decomposition algorithm with adhesion $k$, with a running time of $k^{O(k)} n^{O(1)}$.

More recently, \emph{fast algorithms} for unbreakable decomposition, where the dependency on $n$ is close to linear, have been of significant interest. \cite{anand2025unbreakable} gave the first such progress: they showed a close-to-linear time algorithm that constructs a $(O\bigl(\frac{k}{\epsilon}\bigr), k)$ unbreakable decomposition with adhesion $O\bigl(\frac{k}{\epsilon}\bigr)$ running in time $k^{O(k)}m^{1 + \epsilon}$ for any $\epsilon > 0$. Later,~\cite{korhonen2025linear} studied $k$-lean decompositions. $k$-lean decomposition is a natural and powerful generalization of $(k,k)$ unbreakable decomposition.~\cite{korhonen2025linear} gave a linear time algorithm for every fixed $k$ to construct $k$-lean decompositions - the algorithm runs in time $k^{O(k^2)}m$. The dependence on the graph size $m$ here is linear which is ideal, but the dependence on $k$ does not seem optimal and raises the question of if it is possible to do better. Further, the algorithm of~\cite{korhonen2025linear} is complex with several non-trivial steps combining various observations.

In this work, we show a $(5k,k)$  unbreakable decomposition with adhesion $4k$ in time $k^{O(k)}m\,\mathrm{polylog}(n)$. Our result is obtained by combining the approaches in~\cite{anand2025unbreakable} and~\cite{korhonen2025linear} in a new way, with our local algorithm being the main tool for every step. The dependence of the running time of our algorithm as a function of $k$ matches the best known $k^{O(k)}$ for unbreakable decomposition, as opposed to $k^{O(k^2)}$ in~\cite{korhonen2025linear}, at the cost of the factor of 5 in the unbreakability (we note that many applications of unbreakable decomposition remain unaffected by a constant factor overhead). Our algorithm is simpler than that of~\cite{korhonen2025linear}, and is faster when $k = \omega\bigl({\sqrt{\frac{\log \log n}{\log \log \log n}}}\bigr)$.

\begin{theorem}
Given a graph $G$ and a positive integer $k$, there is a randomized algorithm that computes a $(5k,k)$ unbreakable decomposition with adhesion $4k$ in time $k^{O(k)} m\,\polylog(n)$ with high probability.
\end{theorem}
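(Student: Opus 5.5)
We follow the recursive framework of \cite{anand2025unbreakable}, using the local important-separator algorithm (\Cref{thm:breakable}) as the single workhorse. The recursion is on pairs $(G,X)$, where $X\subseteq V(G)$ is a set of at most about $4k$ interface vertices that is $(k,k)$-unbreakable in $G$, for instance because $|X|$ is small or because $X$ is an adhesion inherited from the parent. The key subroutine is: given $(G,X)$, find $T$ with $X\subseteq T\subseteq V(G)$ such that there is no $(X,v,2k,2k)$ witness for $T$ in $G$, in time $k^{O(k)}m\,\polylog(n)$.

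With this subroutine, the decomposition is built as follows. We choose a bag $B\supseteq X$ inside $T$. We enlarge it by a maximal family of $(k,k)$-breaking separators of $T$, found with \Cref{thm:breakable}, so that $B$ becomes $(5k,k)$-unbreakable. The absence of $(X,v,2k,2k)$ witnesses is what bounds the loss: every component $C$ of $G-B$ sees at most $2k$ vertices of $T$. This lets us bound $|N(C)|\le 4k$, with $2k$ vertices coming from $X$ and $2k$ from the new separator vertices. We then recurse on $G[C\cup N(C)]$ with interface $N(C)$, deleting the edges inside $N(C)$ so that subtree unbreakability holds. Compactness is obtained by splitting components so that $G[C]$ is connected and $N(C)$ is the full adhesion. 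Each recursive child strictly shrinks, because $v\in C$ and $B$ contributes new vertices. Charging each node's work to $|E(G_t)|$ then gives the overall running time, provided each edge appears in $\polylog$ many levels. To guarantee this we use the balanced choice from \cite{anand2025unbreakable}: $T$ is picked so that no child contains more than a constant fraction of the edges, or else a recursion-depth argument based on the linear-in-volume local searches is used instead.

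Implementing the subroutine is the main obstacle, and we split it into two phases. In Phase 1 we transform $G$ into $G'$ with $V(G')\subseteq V(G)$ such that $G'$ has no connected $(X,v,q',2k)$ witness for $q'=k^{O(k)}$. We also need the transfer property: if some $T$ has no $(X,v,2k,2k)$ witness in $G'$, then it has none in $G$. To achieve this we repeatedly find, via \Cref{thm:breakable} applied with $v$ as source and $X$ contracted to a sink, important separators of volume about $q'^2$. Whenever a large connected witness appears, we replace its $L\setminus R$ by a clique or torso on $L\cap R$. The downward well-linkedness of \cite{korhonen2025linear} ensures that each volume unit is charged only $k^{O(k)}$ times.

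In Phase 2 we start from $T=V(G')$. While some vertex $v$ admits an $(X,v,2k,2k)$ witness for $T$, we find it and update $T\gets (T\setminus L)\cup(L\cap R)$, which strictly decreases $|T|$. Finding the witness works as follows. A witness can be taken to be a union of at most $2k$ connected pieces, which we guess by branching with $k^{O(k)}$ cost. Each piece can be pushed to an important $v$-$X$ separator. Since Phase 1 guarantees that connected pieces have size at most $q'$, each piece has volume at most $q'^2$. Hence \Cref{thm:breakable} lists all candidates in $k^{O(k)}\polylog(n)$ time. Vertices affected by an update are re-queued, giving $O(m)$ iterations in total. The delicate points are verifying the transfer property of Phase 1 and bounding the re-queuing, and we expect these to be where most of the work lies.
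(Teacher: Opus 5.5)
Your outline follows the paper's architecture: carve out downward well-linked sets to kill large connected witnesses, refine $T\gets (T\setminus L)\cup(L\cap R)$, then recurse as in \cite{anand2025unbreakable}. The main gap is that you have no mechanism bounding the adhesion of $T$, and several of your steps depend on one. Your derivation ``every component $C$ of $G-B$ sees at most $2k$ vertices of $T$, hence $|N(C)|\le 4k$'' is circular. The absence of $(X,v,2k,2k)$ witnesses only constrains separations of size at most $2k$, so it says nothing about a component whose boundary is already large. The paper makes each witness lean before updating $T$: \textsc{MakeLean} replaces $(L,R)$ by a minimum separator between $L\cap T$ and $L\cap R$ inside $G[L]$. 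By \Cref{lemma:UpdateAdhesion} this keeps the adhesion of $T$ at most $2k$ throughout, and \Cref{lemma:unrolladhesionsequence} transfers the bound from $G_{\mathrm{final}}$ back to $G$, using that every carved set has boundary at most $2k$. The bound is used three times:
\begin{itemize}
\item It gives the degree bound of \Cref{lemma:TorsoDegree} in $\Torso(T)$, which is what makes guessing a disconnected minimal witness possible. From $v$ alone you cannot reach the pieces not containing $v$, and by minimality no single piece is itself a witness, so per-piece important-separator searches certify nothing. The paper instead walks in $\Torso(T)$ (using \Cref{lemma:TorsoConnected}) to guess $\min\{|(L\setminus R)\cap T|,2k+1\}$ terminals of $L\setminus R$. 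It joins them by a clique and searches for a connected witness in the modified graph under the enlarged size bound $q(2k+1)+2k$.
\item It bounds the size of every component of $G\setminus T$, since otherwise that component would be a large connected witness. The $\mathrm{poly}(q,k)$ cost of updating the torso needs this.
\item It gives balance. The paper seeds the recursion with a sample that is a $1/2$-balanced $2k$-origin (\Cref{lemma:SepBySampleSet}), so every superset of adhesion at most $2k$ is automatically $1/2$-balanced. Your ``pick $T$ so that no child gets a constant fraction of the edges, or else a depth argument'' supplies no such mechanism.
\end{itemize}

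Separately, your recursion rests on a false premise. An adhesion of size up to $4k$ inherited from the parent need not be $(k,k)$-unbreakable in the child graph, so the precondition of your subroutine can fail. The paper splits on the boundary $B$:
\begin{itemize}
\item If $2k<|B|\le 4k$, it tests $B$ by brute force (\Cref{lemma:CheckUnbreakable}). If some $(L,R)$ breaks $B$, the bag is $B\cup(L\cap R)$. This bag has size at most $5k$, which is where $5k$ comes from, and adhesion at most $|B|-1$, so boundaries shrink.
\item If $B$ is unbreakable in that range, it applies \Cref{thm:reducing_adhesion} with $c=1$. The output is $(2k,k)$-unbreakable directly by \Cref{lemma:unbreakabletowitness}, so your step of enlarging the bag by breaking separators of $T$ is unnecessary.
\item If $|B|\le 2k$, the bag is $B\cup X_1$ with $X_1$ from \Cref{coro:UnbreakLowAdhesion}, giving adhesion at most $4k$.
\end{itemize}

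Finally, Phase~1 cannot be run with volume parameter about $(q')^{2}$. Initially, connected witnesses may have volume $\Theta(m)$, and a search capped at $(q')^{2}$ never sees them. The paper runs phases $\nu=m,m/2,\dots$ and launches searches only from a $(k,\nu)$-net. It keeps the net a hitting set by pushing $N(S)$ and $v$ back after each carving. It then uses \Cref{lemma:volume}, which says witnesses map back to earlier graphs losing only an $O(k)$ factor in volume. This certifies the precondition of \Cref{lemma:witness} that every connected witness has volume $O(\nu k)$.

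Also, what is carved is the downward well-linked subset $S\subseteq L\setminus R$ returned by \textsc{WellLink}, not $L\setminus R$ itself. Well-linkedness is exactly what makes the transfer property (\Cref{lemma:UncrossOneSet}) and the backward mapping of witnesses hold. The charging instead comes from $\deg(S)$ being a $2^{-O(k)}$ fraction of $\deg(L\setminus R)$, not from well-linkedness.
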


As a corollary, we also immediately and directly improve the algorithm for finding $p$-Way Cuts\footnote{Here the goal is to partition the graph into $p$ parts such that the cut-size is at most $k$. This problem is more popularly known as $k$-Way-Cut or $k$-Cut in the literature, which we avoid for confusion with the cut-size parameter $k$} of size $k$ in~\cite{anand2025unbreakable} from $2^{O(k \log k)}m^{1 + o(1)}$ to $2^{O(k \log k)}m\,\polylog(n)$.
Our algorithm also inherits the desirable properties of logarithmic depth and sub-tree unbreakability as in~\cite{anand2025unbreakable} (see~\Cref{sec:preliminaries} for details).

Conceptually, our algorithm overcomes two barriers in~\cite{anand2025unbreakable} and~\cite{korhonen2025linear}. It improves the local search scheme of~\cite{korhonen2025linear} (see the discussion below) using our new local algorithm and avoids the use of iterative refinement in~\cite{anand2025unbreakable} by exploiting the concept of downward well-linked sets introduced by~\cite{korhonen2025linear}.

We believe that the ideal and final goal in this line of work would be to obtain a $k$-lean decomposition in time $k^{O(k)}m$ or even $2^{O(k)} m$. We believe that the local algorithm introduced in this paper has the potential to be useful towards such a goal and is likely to find applications beyond obtaining this particular result for unbreakable decomposition. We discuss one reason for this below.

\paragraph{Addressing a key bottleneck: Finding witness cuts in unbreakable graphs.}
On a high level, one of the main conceptual bottlenecks in~\cite{korhonen2025linear}, which explains the $k^{O(k^2)}$ in the running time, is solving the following problem. In this problem, we are given a $(q,k)$ unbreakable graph $G$ - where for every cut $(L,R)$ of size $|L \cap R| \leq k$, we must have $|L| \leq q$ or $|R| \leq q$. We are also given a terminal set $T \subseteq V(G)$, and we want to find a \emph{witness cut} $(L',R')$ - a cut having at least $k + 1$ terminals on both sides of the cut, so that $|L' \cap T| \geq k + 1, |R' \cap T| \geq k + 1$ and cut-size $|L' \cap R'| \leq k$.~\cite{korhonen2025linear} employs a local search algorithm that works by branching and solves this problem in $q^{O(k)}$ time. Such local search procedures are run with $q = k^{O(k)}$ which leads to the final running time. On the other hand, in such a setting, we can do better.

Our algorithm detects an important separator of volume $\nu$ and size at most $k$ between two nodes in time $\Otil(\nu 2^{O(k)})$. It is not too hard to show that if there exists a witness cut $(L',R')$ such that $G[L' \setminus R']$ is connected and $N_G(L' \setminus R')=L'\cap R'$, then there must be another witness cut $(L'',R'')$ that is a $v$-$X$ important separator, where $ v\in L' \setminus R'$, and $X \subseteq T \cap R'$ is some fixed set of terminals of size $k + 1$. Since we are dealing with a $(q,k)$  unbreakable graph, we also know that $|L''| \leq q$ or $|R''| \leq q$. For simplicity let us assume $|L''| \leq q$. Then the volume of the cut $(L'',R'')$, which is the total degree of vertices in $L'' \setminus R''$, is at most $q(q + k) \leq 2q^2$. Applying our local algorithm with volume parameter $\nu = 2q^2$, we would discover this cut in time $q^2 2^{O(k)}$, an improvement over $q^{O(k)}$.~\Cref{sec:disconnectedwitness} concretely implements this approach, removing the assumption of connectivity of $L' \setminus R'$, at the cost of $k^{O(k)}$ in the running time instead of $2^{O(k)}$.

Thus in principle, one could hope to convert the algorithm of~\cite{korhonen2025linear} to run in time $k^{O(k)}m\,\polylog (n)$. However, this is far from obvious due to the various steps which require careful treatment, many of which deal with disconnected cuts. For instance, some steps in~\cite{korhonen2025linear} require separations in tangle-unbreakable graphs, which are not necessarily unbreakable, where it is less clear if we can obtain a similar running time for finding witness cuts. Nevertheless, we believe that it is plausible that our local algorithm could prove helpful in this case too.

\paragraph{Single Sink Important Separators.} Consider the \emph{single-sink important separator} problem, where given a graph $G$, an integer $k$, and a sink vertex $t$, the goal is to enumerate $(v,t)$ important separators across \emph{all $v \in V(G)$}. We show an algorithm that achieves this, running in time $2^{O(k)}m\,\mathrm{polylog(n)}$ by using a simple reduction to local important separators. This sub-routine arises in many applications, including the first algorithm for unbreakable decomposition~\cite{cygan2014minimum}, and implicitly in our algorithm (see~\Cref{sec:connectedwitness}), and therefore may be of independent interest.

\begin{theorem}\label{thm:singlesink}
Given a graph $G$ and a designated vertex $t$, one can compute with high probability the list of all $(v,t)$ important separators of size at most $k$ for all $v \in V(G), v \neq t$ in time $2^{O(k)}m\,\polylog(n)$.

\end{theorem}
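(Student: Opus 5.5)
We reduce to \Cref{thm:breakable} by running the local algorithm from every source $v$ with a volume budget that depends on $v$. The main difficulty is that a $(v,t)$ important separator can have very large volume, so we cannot simply call the local algorithm with $\nu = m$ for each $v$; that would cost $\Theta(nm)$.

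The natural fix is a doubling scheme with a charging argument. For each $v$ and each scale $\nu = 2^i$, $i = 0,1,\dots,\lceil\log_2 2m\rceil$, we would like to run the local algorithm with budget $\nu$. The total work is acceptable only if, at each scale, the sources $v$ that need budget $\nu$ have total "responsibility" bounded by $O(m/\nu)$. This suggests running the local algorithm not from every vertex but from a random sample. Let $L_v \setminus R_v$ be the source side of an important separator. Every vertex $u$ in this set has the same reachability structure toward $t$ as $v$, up to the separator. We hope to use this to relate the important separators of $v$ to those of a sampled vertex of the same region. The planned steps are as follows.

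\textbf{Step 1 (sampling by volume).} At scale $\nu$, sample each edge endpoint with probability about $\log n/\nu$, so roughly $\tilde O(m/\nu)$ sources are chosen. Any set $S$ of volume $\ge \nu$ then contains a sampled vertex with high probability. From each sampled $u$, run \Cref{thm:breakable} with budget $2\nu$ and size $k$, costing $\tilde O(\nu 4^k\poly(k))$. Summed over samples and scales, this is $\tilde O(m\,4^k\poly(k))$.

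\textbf{Step 2 (small volume).} Separators of $v$ with volume below the smallest scale are handled directly: run from every $v$ with budget $O(\deg(v)\cdot 2^{O(k)})$, which sums to $O(m)$. More carefully, we take budget $\Theta(\deg(v))$ at scale $0$ and let the sampled scales handle everything larger.

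\textbf{Step 3 (transfer).} This is the main obstacle. If $X$ is a $(v,t)$ important separator with $S = L\setminus R \ni v$ of volume in $[\nu,2\nu)$, a sampled $u \in S$ need not have $X$ as a $(u,t)$ important separator: $X$ is still a $(u,t)$ separator, but maximality and minimality can fail. I would instead aim to show that $X$ arises from an important $(u,t)$-type object computed near $u$. Concretely, the plan is to enumerate from $u$ all important $(U,t)$ separators for small sets $U$ containing $u$, and argue via the standard branching characterization that $X$ appears. If that fails, the fallback is the following. For each candidate $(L',R')$ found from sampled vertices, and for each vertex $w \in L'\setminus R'$, we assign $X$ to $w$ if $X$ is $(w,t)$-important. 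Importance is testable in $\tilde O(\vol\cdot\poly(k))$ time by checking whether any single cut vertex can be pushed, using $\LocalMincut$-style flow.

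\textbf{Step 4 (cost and output).} Each vertex in a listed set's interior costs its degree per candidate, and each sample yields at most $4^k$ candidates. Rather than enumerating separators per $v$ explicitly, I would bound the total by $\tilde O(2^{O(k)} m)$ and output each separator together with the region it serves.
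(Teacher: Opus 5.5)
Your Step 3 rests on a false premise, and because of it the proposal never establishes that every important separator gets listed. You assert that for a sampled $u\in S=L\setminus R$, the separator $X=L\cap R$ need not be an important $(u,t)$-separator. It always is.

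Here is why. The vertices $u$ and $v$ lie in the same component of $G-X$, so their reach sets coincide: $R(u,X)=R(v,X)$.
\begin{itemize}
\item Minimality transfers unchanged: every $x\in X$ still has a neighbour in the source component and in $t$'s component.
\item Importance transfers: suppose some $X'$ with $|X'|\le |X|$ had $R(u,X')\supsetneq R(u,X)$. Then $v\in R(u,X')$, so $X'$ is a $(v,t)$-separator with $R(v,X')=R(u,X')\supsetneq R(v,X)$. This contradicts importance for $v$.
\end{itemize}
With the paper's connected-separator definition this is immediate, since any $(L^{+},R^{+})\succ(L,R)$ is also a $(v,t)$-separator.

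This is exactly the ``standard argument'' the paper's proof uses. Once a degree-proportional sample $u$ hits $L\setminus R$, the call to \Cref{thm:breakable} from $u$ with budget $2\nu$ is guaranteed to output $(L,R)$, and nothing more is needed. Your substitutes do not close the gap. Enumerating important $(U,t)$-separators for unspecified small sets $U\ni u$ is not an argument. The fallback presupposes that $X$ is already among the candidates produced from the samples, which is precisely what had to be shown.

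Two further points need fixing even once you have this lemma.

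First, the hitting guarantee must hold simultaneously for a whole family of sets, not one fixed $S$. A rate of $\log n/\nu$ per endpoint gives only $n^{-O(1)}$ failure per set. That is too weak to union-bound over up to $n\cdot 4^{k}$ separator interiors. The paper uses a $(k,\nu)$-net of size $O(mk\log n/\nu)$ (\Cref{lemma:sampleset}); this is an easy fix.

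Second, \Cref{thm:breakable} returns a list that contains the important separators but possibly others, so you must filter. Your proposed local importance test is not justified. A separator witnessing that $X$ is not important can have volume far larger than $\vol(L\setminus R)$, so it cannot in general be detected in $\Otil(\vol(L\setminus R)\poly(k))$ time. Testing separately for each interior vertex $w$ is also redundant, because by the transfer lemma importance depends only on $(L,R)$.

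The paper instead discards every $(L,R)$ that is dominated by some listed $(L',R')$, meaning $L\setminus R\subset L'\setminus R'$ and $|L'\cap R'|\le|L\cap R|$. This is correct precisely because the list is complete: a non-important separator is dominated by an important one, and that one is itself listed. The paper implements this step in $\Otil(16^{k}m)$ time using per-vertex index sets of size at most $4^{k}$.
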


\paragraph{Edge Multicut.} The \textsc{Edge Multicut} problem is a fundamental problem in parameterized complexity. Here we are given a graph $G$, a set $\mathcal{T}$ of pairs $(s_i, t_i)$ and an integer $k$. The goal is to check if there is an edge cut of size at most $k$ in $G$ that can disconnect each pair $(s_i,t_i)$ simultaneously. Using our result on single sink important separators, we can speed up yet another important technique called \emph{shadow removal}, which is an important component of many algorithms, and in particular for \textsc{Multicut}. Together with a new $O(\log k)$ approximate rounding algorithm given an LP solution, recent developments in multicommodity flows~\cite{haeupler2024low}, and following other steps similar to recent work in~\cite{chu2026faster}, we get the first almost-linear time FPT algorithm for this problem.

\begin{theorem}
Given an instance $(G,\mathcal T,k)$ of \textsc{Edge Multicut}, let
\(
N=|V(G)|+|E(G)|+|\mathcal T|.
\)
Then there is a randomized algorithm that, with high probability,
decides whether $G$ has an edge multicut of size at most $k$ in
time
\(
f(k)N^{1+o(1)}.
\)

\end{theorem}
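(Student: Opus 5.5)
The plan is to follow the classical Marx--Razgon framework for \textsc{Multicut}: obtain an approximate solution, use \emph{shadow removal} to make an optimal solution ``shadowless'', and then reduce to a bounded number of \textsc{Multiway Cut}-type instances. Every step must be implemented in $f(k)N^{1+o(1)}$ time. The $N^{1+o(1)}$ factor comes only from the LP step, which uses the almost-linear low-step multicommodity flow algorithm of~\cite{haeupler2024low}. All other steps will be $f(k)\,N\,\polylog(N)$.

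\textbf{Step 1 (approximate solution).} We solve the multicut LP approximately in $N^{1+o(1)}$ time via~\cite{haeupler2024low}. If the LP value exceeds $k$, we reject. Otherwise we round it with the new $O(\log k)$-approximate rounding to an edge multicut $W$ with $|W|=O(k\log k)$. The point of this step is that we now have a small set $W$ with the following property: let $X$ be the set of endpoints of $W$. Then every component of $G-W$, and hence every region relevant to an optimal solution $S$, is anchored at $X$, with $|X|=O(k\log k)$.

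\textbf{Step 2 (shadow removal).} We contract $X$ into a single sink $t$. The shadow of $S$ is the set of vertices separated from $t$ by $S$. Following~\cite{chitnis2013fixed,marx2011fixed}, we build a random set $Z$ as follows: for every vertex $v$, enumerate the $(v,t)$ important separators of size at most $k$, and include each with suitable probability. With probability $2^{-\mathrm{poly}(k)}$, the set $Z$ covers the shadow of some optimal solution $S$ and is disjoint from $S$. Enumerating all single-sink important separators is exactly \Cref{thm:singlesink}, which takes $2^{O(k)}m\,\polylog(n)$ time. The naive alternative of one enumeration per vertex would be quadratic. Sampling and marking is then linear in the total output size $2^{O(k)}n$.

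Because we are in the edge version, ``removing'' $Z$ can be done by contracting the edges inside $Z$ and its attachments rather than by building a torso. This keeps the graph size at most $N$ and avoids the quadratic clique blow-up that the vertex version's torso operation would cause. We repeat the sampling $2^{\mathrm{poly}(k)}\log n$ times for high probability.

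\textbf{Step 3 (shadowless case).} Once $S$ is shadowless, every component of $G-S$ contains a vertex of $X$. We therefore guess the partition of $X$ induced by the components of $G-S$; there are $|X|^{|X|}=k^{O(k\log k)}$ such partitions. For each guess:
\begin{itemize}
\item We discard the guess if some pair $(s_i,t_i)$ becomes forced into the same class. Deciding this is a linear-time labeling check: since $S$ is shadowless, each terminal's class is determined by which part of $X$ reaches it.
\item Otherwise we solve the resulting edge \textsc{Multiway Cut} instance between the classes. This takes $O(4^k k\,m)$ time by the standard important-separator branching, or by the linear-time FPT algorithms for this problem.
\end{itemize}
Multiplying the number of guesses, the number of repetitions and the per-instance cost gives $f(k)N^{1+o(1)}$ overall.

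\textbf{Main obstacle.} I expect the main difficulty to be Step 2: proving that the contraction-based shadow removal remains correct when driven by the single-sink important separators of \Cref{thm:singlesink}, with $t$ being the contracted $X$, while keeping the graph linear in size. The second point needing care is the $O(\log k)$ rounding in Step 1. The classical Garg--Vazirani--Yannakakis bound is $O(\log|\mathcal T|)$, so to get $O(\log k)$ I would first reduce the number of relevant demand pairs to $\mathrm{poly}(k)$ using the LP solution, and then apply region growing.
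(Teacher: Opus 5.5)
\textbf{Main gap: Step 3.} Your Steps 1 and 2 follow the paper's outline: approximate LP via low-step flows, $O(\log k)$ region growing, and shadow covering driven by \Cref{thm:singlesink}. Your random sampling of important separators is a legitimate substitute for the paper's colour coding. Step 3, however, does not work.

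Shadowlessness plus a guessed partition $\Pi$ of $X$ tells you only which vertices of $X$ end up together. It does not tell you which class a vertex outside $X$ joins. Suppose a component $D$ of $G-W$ contains vertices of $X$ from two classes of $\Pi$. Then $S$ must cut inside $D$. A terminal $s_i\in D$ joins whichever of those classes lies on its side of that cut, and that is decided by the unknown $S$, not by $\Pi$.

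So the pair constraints are disequalities $\mathrm{cls}(s_i)\neq\mathrm{cls}(t_i)$ between labels that the cut itself chooses. There is no linear-time labelling check, and a multiway cut for the classes of $\Pi$ can leave $s_i$ and $t_i$ attached to the same class. For example, take a path $a-u-v-b$ in $D$ with $a,b$ in different classes. All three edges separate $a$ from $b$, but if $u$ must avoid the class of $b$ and $v$ must avoid the class of $a$, only $uv$ is admissible.

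Coupling these label choices is precisely what makes \textsc{Multicut} harder than \textsc{Multiway Cut}. The paper, following Marx--Razgon and \cite{chu2026faster}, resolves it with two further steps that your plan lacks:
\begin{itemize}
\item A bipedalization branching. It first branches on vertices that are nonzero in every optimum of the half-integral Node Multiway Cut LP. It then steers towards a contractible solution using the furthest isolating regions $FC_G(w)$, and finishes with a final region contraction. The result is instances where, writing $W$ for the set of contracted terminals, every component $C$ of $G-W$ has $|N(C)\cap W|\le 2$, so every kept vertex has at most two candidate labels.
\item An encoding of labels, edge constraints and pair constraints as an \textsc{Almost 2-SAT} formula of size $O(kN)$. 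This is solved in $2^{O(k)}\tilde{O}(N)$ time by \cite{iwata2014linear}.
\end{itemize}

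\textbf{Step 2 (shadow removal).} Contracting the components of $G[Z]$ does not remove the shadow. Either a shadow component becomes a single vertex that is still cut off from $X$ by edges of $S$. Or, if it is merged with non-shadow vertices or with its attachments, boundary edges of the shadow are contracted, and those are edges of $S$. The paper instead passes to a vertex instance by protected subdivision: $k+1$ clones per vertex and one vertex $x_e$ per edge. It guesses which vertices of $\{x_e: e\in F_0\}$ the solution deletes and how the rest are partitioned, contracting each block into an undeletable terminal. It then removes $Z$ by a sparse torso. Each $N(C_j)$ receives the $(k+1)$-connected circulant $\mathrm{Conn}_{k+1}(N(C_j))$ instead of a clique, and pair endpoints inside $C_j$ are replaced by $k+1$ representatives in $N(C_j)$.

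\textbf{Step 1 (approximation accounting).} The almost-linear LP solver is only $\alpha(\epsilon)=2^{O(1/\epsilon)}$-approximate. So the rejection threshold must be $\alpha k$ rather than $k$, and $|W|=O(\alpha k\log(\alpha k))$ rather than $O(k\log k)$. Partition guessing therefore costs $K_0^{O(K_0)}$ with $K_0=O(\alpha k\log(\alpha k))$. This is $f(k)N^{o(1)}$ only because $1/\epsilon$ is a slowly growing function of $N$, together with a case split on $k$ versus $\log\log N$. Hence the $N^{o(1)}$ factor does not come from the LP alone.

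\textbf{The $O(\log k)$ rounding.} No reduction of the demand pairs is needed. Run region growing with seed volume $L/\Lambda$, where $\Lambda=\max\{2,\lceil L\rceil\}$. Each ball of radius at least $1/8$ around $s_i$ absorbs at least $1/8$ of the LP mass, so there are at most $8L$ balls, and $|F|\le 144L\ln\Lambda$ follows directly.
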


%% file: local_cut.tex
\section{\texorpdfstring{Local algorithm: Proof of \Cref{thm:breakable}.}{Local algorithm.}}\label{sec:local}

The following \Cref{lemma:LocalImpSep} is the main theorem of this section, see \Cref{sect:SectPrelimLocal} for some preliminaries. An adjacency list representation is $A$-respecting if, for every $v\ne s$, its list contains at most $A(v)$ entries, including entries for deleted endpoints.

\begin{restatable}{theorem}{LocalImportantSeparator}
\label{lemma:LocalImpSep}
Let $G$ be an undirected graph with source vertex $s$, sink vertex $t$, a node-weighting $A$ and an $A$-respecting adjacency list representation. Given a volume parameter $\nu$ and a cut-size parameter $k\geq 1$, there is an algorithm that outputs a collection ${\cal S}$ of $4^{k}$ many $(s,t)$-separators, satisfying that
\begin{enumerate}
\item each $(L,R)\in{\cal S}$ has size $\leq k$ and $A$-volume $\leq k\nu$, and
\item with high probability, ${\cal S}$ includes all important $(s,t)$-separators with size $\leq k$ and $A$-volume $\leq \nu$.
\end{enumerate}
The running time is $O(4^{k}\cdot \poly(k)\cdot\nu\cdot \log n)$.

\end{restatable}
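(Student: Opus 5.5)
We prove the theorem by a recursive branching procedure $\LocalImpSep(G,s,t,\nu,k,\lambda)$ built on two subroutines. The first is the vertex-cut local cut routine $\LocalCut$ (the adaptation of~\cite{forster2020computing} recalled in \Cref{sect:SectPrelimLocal}). We use it with the following guarantee: given an $A$-respecting adjacency list, it runs in time $O(\nu\lambda^2\log n)$ and outputs an $(s,t)$-separator of size at most $\lambda$ and $A$-volume at most $\lambda\nu$. Alternatively it reports that none of size at most $\lambda$ and $A$-volume at most $\nu$ exists, correctly with high probability.

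The second subroutine is $\LocalMincut(G,s,t,\nu,\lambda)$. It either returns the furthest $(s,t)$ min-cut, when that cut has size $\lambda$ and $A$-volume at most $\nu$, or reports failure. It first calls $\LocalCut$ and obtains $(L,R)$ with $|L\cap R|=\lambda$. It then picks a cut vertex $x\in L\cap R$ and contracts $(L\setminus R)\cup\{x\}$ into a new source $s'$. The new source inherits the remaining budget $\nu'=\nu-A((L\setminus R)\cup\{x\})$. It then recurses with the same $\lambda$.
\begin{itemize}
\item If the recursion fails, $x$ must lie on the furthest min-cut. In that case we contract only $L\setminus R$, delete $x$, and recurse with $\lambda-1$. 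The deleted $x$ is put back into the separator at the end.
\item The contraction keeps the representation $A$-respecting. We do not materialize the contracted vertex. Instead, $s$ is exempt from the $A$-bound, and its adjacency list is the implicit union of the lists of the contracted vertices, with deleted endpoints kept as stale entries. This is exactly why the definition counts deleted endpoints.
\end{itemize}

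Correctness of $\LocalMincut$ rests on two standard submodularity facts for vertex cuts. First, any $(s,t)$ min-cut $(L,R)$ satisfies $L\setminus R\subseteq L^*\setminus R^*$ for the furthest min-cut $(L^*,R^*)$. Second, if $x\in L\cap R$ is not in $L^*\cap R^*$, then $x\in L^*\setminus R^*$. So contracting $(L\setminus R)\cup\{x\}$ preserves $(L^*,R^*)$ with the volume reduced accordingly. Otherwise $x\in L^*\cap R^*$, and deleting it lowers the min-cut value by exactly one.

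For the running time, the key accounting point is this. A call to $\LocalCut$ made with budget $\nu'$ only needs to be charged $\lambda^2$ times the volume it would explore. We arrange this by running the successive $\LocalCut$ calls as one incremental exploration, or equivalently by guessing volumes in doubling scales, and charging each successful push to the volume it absorbs into $s$. The pushes absorb disjoint volume, and failures happen at most $\lambda\le k$ times. So one $\LocalMincut$ costs $O(\poly(k)\,\nu\log n)$.

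$\LocalImpSep(G,s,t,\nu,k,\lambda)$ then proceeds as follows.
\begin{enumerate}
\item Call $\LocalMincut$. If it fails, or if $\lambda>k$, return the empty list. This is safe because every important separator $(L,R)$ satisfies $L^*\setminus R^*\subseteq L\setminus R$, so a furthest min-cut of volume above $\nu$ excludes all relevant separators.
\item If $\lambda=k$, or if no vertex of $L^*\cap R^*$ remains undecided, output $(L^*,R^*)$ together with the previously deleted vertices.
\item Otherwise pick $x\in L^*\cap R^*$ and branch two ways. In the first branch, contract $(L^*\setminus R^*)\cup\{x\}$ and recurse with $(k,\lambda+1)$; the min-cut strictly increases because $(L^*,R^*)$ was furthest. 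In the second branch, contract $L^*\setminus R^*$, delete $x$, and recurse with $(k-1,\lambda-1)$.
\end{enumerate}
The measure $2k-\lambda$ drops by at least one in each branch, so there are at most $4^k$ leaves. This gives the bound $|\mathcal S|\le 4^k$ and total time $O(4^k\poly(k)\nu\log n)$.

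Every important separator of size at most $k$ and volume at most $\nu$ survives along the branch consistent with it, by the standard argument of~\cite{chen2009improved}: an important separator either contains $x$ or has $x$ strictly on its $s$-side. The output bounds, size at most $k$ and $A$-volume at most $k\nu$, come from the $\LocalCut$ guarantee together with the contracted budget. The high-probability claim follows by a union bound over $4^k\poly(k)$ randomized calls, each amplified with an $O(\log n)$ factor.

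The main obstacle is the amortized running-time bound for $\LocalMincut$. The naive bound charges $\nu\lambda^2$ per push, which gives $\nu^2$. Making the local-cut exploration resumable after contraction, so the total work is linear in the absorbed volume, is the step I would work out most carefully. My first attempt is the doubling-volume-guess scheme with charging to absorbed volume.
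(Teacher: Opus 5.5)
Your overall architecture matches the paper's, but step 1 of your $\LocalImpSep$ contains an error. You treat every failure of $\LocalMincut$ with parameter $\lambda$ as certifying that no relevant important separator exists, and you justify this only for the case where the furthest min-cut has volume above $\nu$. Failure also occurs when there is simply no separator of size $\le\lambda$ (and volume $\le\nu$) at all. Your first branch produces exactly this situation. After contracting $(L^*\setminus R^*)\cup\{x\}$, the min-cut is at least $\lambda+1$, but it can be much larger, because the new source inherits all of $N(x)$.

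Here is an example. Let $s$ be adjacent only to $x$, let $x$ be adjacent to $a_1,a_2,a_3$, let each $a_i$ be adjacent to $t$, and take $k=3$. Then $\lambda=1$ and $L^*\cap R^*=\{x\}$. After contracting $\{s,x\}$ the min-cut is $3$, so $\LocalMincut$ with $\lambda+1=2$ fails and your branch returns $\emptyset$. The important separator $\{a_1,a_2,a_3\}$ is therefore never output.

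The paper avoids this by keeping $\lambda$ only as a \emph{lower bound}: Condition (A) says there is no separator of size $<\lambda$ and volume $\le\nu$. When $\LocalMincut$ returns $\bot$, the paper recurses on the \emph{same} graph with $\lambda+1$. This single-child step also decreases $2k-\lambda$, so the $O(4^k)$ bound on recursion nodes survives. It also settles how $\lambda$ is initialized, which you leave open. The paper returns $\emptyset$ only when the returned cut has size $<\lambda$ or volume $>\nu$. That is safe by the submodularity claim: under Condition (A), every important separator of volume $\le\nu$ extends \emph{every} separator of size $\le\lambda$, not just the furthest min-cut.

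There are two smaller issues. First, $\LocalCut$ only guarantees that its output has volume $\le\lambda\nu$. So $\LocalMincut$ may receive a size-$\lambda$ cut with volume in $(\nu,\lambda\nu]$. Then $\nu'<0$, and your inference ``if the recursion fails, $x$ lies on the furthest min-cut'' no longer holds. Also, the global furthest min-cut may have volume $>\nu$ while nearer min-cuts do not, and in that case your algorithm does not return what its specification claims. The paper instead proves a volume-relative guarantee: the output has size $<\lambda$ or volume $>\nu$, or every strict extension of volume $\le\nu$ has size $\ge\lambda+1$. Its induction is run for exactly that guarantee, including the case analysis you omit.

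Second, your implicit-union adjacency list for the contracted source conflicts with the running-time accounting. The representation requires $\Gamma(s)$ to contain no garbage, precisely because $A(s)=0$ means the budget $\nu$ does not pay for scanning $s$'s list. If stale internal entries pile up in $\Gamma(s)$, each later $\LocalCut$ search from $s$ may rescan all previously absorbed volume. On a long path this can give $\Theta(\nu^2)$ total time. The paper instead rebuilds $\Gamma(s_1)=N_G((L\setminus R)\cup\{x\})$ explicitly, removing the corresponding entries from neighbours' lists via cross-pointers. This costs $O(\lambda+A((L\setminus R)\cup\{x\}))$, which is charged to the absorbed volume; purging stale entries lazily would also work.

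Your doubling-plus-charging plan is the paper's amortization. Successful pushes cost $O((A(L\setminus R)+\lambda)\lambda^2\log n)$ each and are charged to the volume they absorb. The at most $\lambda$ failed trials cost $O(\nu\lambda^2\log n)$ each.
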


Theorem~1.1 follows by taking $A(v)=\deg_G(v)$ and filtering the output list,
retaining only separators of volume at most $\nu$. This filtering does not
remove any important separator of volume at most $\nu$.

\subsection{Section Preliminaries.}
\label{sect:SectPrelimLocal}

\paragraph{Node-Weightings.} Consider an undirected graph $G$ with a source vertex $s$ and an adjacency list representation. A \emph{node-weighting $A$} is a function that assigns each vertex $v\in V(G)\setminus \{s\}$ a positive integer $A(v)$ and sets $A(s) = 0$ for the source vertex $s$. All node-weightings in this section are in fact \emph{degree-respecting}, meaning that $A(v)\geq \deg_{G}(v)$ for all $v\in V(G)\setminus \{s\}$. However, we do not state this explicitly each time, as this property is implicitly enforced through the notion of \emph{$A$-respecting adjacency list representations} introduced below.

\paragraph{Adjacency List Representations.} Because we discuss local algorithms in this section, it is important to specify the model to access the input graph. Concretely, we access an input graph $G$ via its \emph{adjacency list representation}. Normally, a standard adjacency list representation of $G$ simply stores, for each vertex $v\in V(G)$, the list of its incident edges. However, since the algorithm may modify the graph by contracting or deleting vertices and we still want to maintain the adjacency list representation efficiently, our actual definition is slightly more involved than the standard one. We note that this somewhat technical definition is not only for shaving polylog factors in the running time, but readers could skip it on a first reading for convenience.

Precisely, for an undirected graph $G$ with a specified source vertex $s$ and a node-weighting $A$, its adjacency list representation, denoted by ${\cal L}_{G}$, stores the following.
\begin{enumerate}
\item\label{Item:List1} For each vertex $v\in V(G)$, a doubly-linked list $\Gamma_{G}(v) = \{(v,u_{1}),(v,u_{2}),...\}$, which contains all $v$'s incident edges but may additionally include some garbage incident edges $(v,x)$ whose endpoints $x$ are not inside $V(G)$. Specially, the list $\Gamma_{G}(s)$ of the source vertex contains no garbage edge.
\item\label{Item:List2} A \emph{forbidden set} $X_{G}$ (without multiplicities) that collects the invalid endpoints $x$ of all garbage incident edges in all lists $\Gamma_{G}(v)$. This set is represented as an \emph{indicator array} (the IDs of such $x$ are moderately small as we will see, so this is affordable), allowing us to check in $O(1)$ time whether an edge $(v,u)$ in a list $\Gamma_{G}(v)$ is garbage or not.
\item\label{Item:List3} For each edge $(u,v)\in E(G)$, its two occurrences in $\Gamma_{G}(u)$ and $\Gamma_{G}(v)$ are associated with cross-pointers to each other.
\end{enumerate}

\Cref{Item:List3} is only used when we update the adjacency list after modifying the graph. \Cref{Item:List1} and \Cref{Item:List2} allow us to access the adjacency list representation as if it were the standard one, except that we may spend extra time skipping over garbage edges. Hence, we introduce the concept of \emph{$A$-respecting adjacency list representation ${\cal L}_{G}$}, meaning that $|\Gamma_{G}(v)|\leq A(v)$ for all $v\in V(G)\setminus \{s\}$. This allows us to use $A$ to bound access time to ${\cal L}_{G}$ when analyzing local algorithms. Note that the existence of an $A$-respecting adjacency list representation ${\cal L}_{G}$ implicitly implies that $A$ is degree-respecting.

\paragraph{Separators.} Throughout this section, all $(s,t)$-separators (in both lemma statements and proofs) are connected. To avoid clutter, we usually omit the qualifier \emph{connected}. Furthermore, all the connected $(s,t)$-separators $(L,R)$ mentioned in this section will ensure $N(L\setminus R) = L\cap R$. For two connected $(s,t)$-separators $(L^{+},R^{+})$ and $(L,R)$, we write $(L^{+},R^{+})\succeq (L,R)$ if $L^{+}\setminus R^{+}\supseteq L\setminus R$, and similarly write $(L^{+},R^{+})\succ (L,R)$ if $L^{+}\setminus R^{+}\supset L\setminus R$.

\begin{definition}[Important $(s,t)$-separators]%
Let $G$ be an undirected graph with source vertex $s$ and sink vertex $t$. An $(s,t)$-separator $(L,R)$ is important if it is connected, and there is no connected $(s,t)$-separator $(L^{+},R^{+})$ satisfying both $(L^{+},R^{+})\succ (L,R)$ and $|L^{+}\cap R^{+}|\leq |L\cap R|$.
\end{definition}

The above definition of important $(s,t)$-separators may seem different from the textbook definition, but in fact, they are equivalent.

\begin{definition}[Size and $A$-volume]Let $G$ be an undirected graph with source vertex $s$, sink vertex $t$ and node-weighting $A$. For an $(s,t)$-separator $(L,R)$, its size is $|L\cap R|$ and its $A$-volume is $A(L\setminus R)$.
\end{definition}

\paragraph{The Cut-Advanced Graph and The Cut-Vertex-Fixed Graph.} Let $G$ be an undirected graph with source vertex $s$, sink vertex $t$, and a node-weighting $A$. Fix an $(s,t)$-separator $(L,R)$ in $G$. We define the \emph{cut-advanced graph} $G_{1}$ and the \emph{cut-vertex-fixed graph} $G_{2}$, as well as their corresponding node-weightings, source vertices and sink vertices, as follows. Let $x$ be the vertex in $L\cap R$ with the smallest id.  
\begin{itemize}
\item The cut-advanced graph $G_{1}$ is obtained from $G$ by contracting $L\setminus R\cup\{x\}$ into a single vertex $s_{1}$, and then taking the corresponding simple graph. The source vertex is $s_{1}$, and the sink vertex remains to be $t$. The node-weighting $A_{1}$ of $G_{1}$ is the same as $A$ at each $v_{1}\in V(G_{1})\setminus \{s_{1}\}$, and we set $A_{1}(s_{1}) = 0$.
\item The cut-vertex-fixed graph $G_{2}$ is obtained from $G$ by contracting $L\setminus R$ into a single vertex $s_{2}$, deleting $x$, and then taking the corresponding simple graph. The source vertex is $s_{2}$, and the sink vertex remains to be $t$. The node-weighting $A_{2}$ of $G_{2}$ is the same as $A$ at each $v_{2}\in V(G_{2})\setminus \{s_{2}\}$, and we set $A_{2}(s_{2}) = 0$.
\end{itemize}

\paragraph{Computing Adjacency List Representations of $G_{1},G_{2}$.} 
Next, for each $i\in \{1,2\}$, we discuss how to obtain an $A_{i}$-respecting adjacency list representation of $G_{i}$ from an $A$-respecting adjacency list representation of $G$.

\medskip

\noindent\underline{The Adjacency List Representation of $G_{1}$.} We first ignore the vertex $s_{1}$ and its incident edges. Then for each vertex $v_{1}\in V(G_{1})\setminus \{s_{1}\}$, its incident-edge list $\Gamma_{G_{1}}(v_{1})$ can be obtained from $\Gamma_{G}(v_{1})$ by removing all incident edges whose other endpoint lies in $L\setminus R\cup\{x\}$. To do this, we can scan through all edges incident to $L\setminus R\cup\{x\}$ by accessing $\Gamma_{G}(u)$ for all $u\in L\setminus R\cup\{x\}$. Once an edge $(u,v)\in \Gamma_{G}(u)$ with $v\in V(G_{1})\setminus \{s_{1}\} = V(G)\setminus (L\setminus R\cup\{x\})$ is found, use the cross-pointer to locate $(u,v)\in \Gamma_{G}(v)$ and remove it. This process clearly takes time $O(\sum_{v\in L\setminus R\cup\{x\}}|\Gamma_{G}(v)|)$.

The second step is to add each incident edge $(s_{1},v_{1})$ of $s_{1}$ to both $\Gamma_{G_{1}}(s_{1})$ and $\Gamma_{G_{1}}(v_{1})$ ($\Gamma_{G_{1}}(s_{1})$ is empty initially). To do this, we only need to know $N_{G_{1}}(s_{1})$, which is exactly $N_{G}(L\setminus R\cup\{x\})$, and it can be computed in $O(\sum_{v\in L\setminus R\cup\{x\}}|\Gamma_{G}(v)|)$ time by accessing $\Gamma_{G}(u)$ for all $u\in L\setminus R\cup\{x\}$.

We can easily observe that the above adjacency list representation of $G_{1}$ is $A_{1}$-respecting: for each $v_{1}\in V(G_{1})\setminus \{s_{1}\}$, if its list $\Gamma_{G_{1}}(v_{1})$ receives a new edge in the second step, it must lose at least one edge in the first step (also note that $A_{1}(v_{1}) = A(v_{1})\geq |\Gamma_{G}(v_{1})|$)

The running time for obtaining the adjacency list representation of $G_{1}$ from that of $G$ is
\[
O\bigl(\sum_{v\in L\setminus R\cup\{x\}}|\Gamma_{G}(v)|\bigr) \leq O(|\Gamma_{G}(s)| + A(L\setminus R\cup\{x\}))\leq O(|L\cap R| + A(L\setminus R\cup\{x\})).
\]
The first inequality is because $A$ is a list-capacity node-weighting (but $A(s) = 0$). The second inequality is because $|\Gamma_{G}(s)|= |N_{G}(s)|\leq |L\setminus R| + |L\cap R|\leq A(L\setminus R\cup\{x\}) + |L\cap R|$ (recall that the list $\Gamma_{G}(s)$ has no garbage incident edge).

\medskip

\noindent\underline{The Adjacency List Representation of $G_{2}$.} Recall that $G_{2}$ is obtained from $G$ by contracting $L\setminus R$ into $s_{2}$ and then removing $x$. The adjacency list representation right after contracting $L\setminus R$ can be obtained in a similar way in time $O(|L\cap R|+A(L\setminus R))$. To further handle the removal of $x$, we simply put $x$ into the forbidden set $X_{G_{2}}$.

\paragraph{Projections Between Separators in $G$ and $G_{1},G_{2}$.} For both $i\in\{1,2\}$, we define the \emph{projection} and \emph{back-projection} between $(s,t)$-separators in $G$ and $(s_{i},t)$-separators in $G_{i}$ in the following natural way.

\medskip

\noindent\underline{Between separators in $G$ and $G_{1}$.}
Let ${\cal S}_{1}$ collect all $(s,t)$-separators $(L^{+},R^{+})$ in $G$ such that $(L^{+},R^{+})\succeq (L,R)$ and $x\in L^{+}\setminus R^{+}$. For each $(L^{+},R^{+})\in {\cal S}_{1}$, it can be projected onto $G_{1}$ by contracting the subset $L\setminus R\cup\{x\}\subseteq L^{+}\setminus R^{+}$ into the vertex $s_{1}$. For each $(s_{1},t)$-separator $(L_{1},R_{1})$, it can be back-projected onto $G$ by expanding the vertex $s_{1}\in L_{1}\setminus R_{1}$ back to $L\setminus R\cup\{x\}$. The projection and back-projection establish a bijection between $(s,t)$-separators $(L^{+},R^{+})\in {\cal S}_{1}$ in $G$ and $(s_{1},t)$-separators $(L_{1},R_{1})$ in $G_{1}$.

\begin{observation}
\label{ob:project1}
For each $(L^{+},R^{+})\in{\cal S}_{1}$ and its projected $(s_{1},t)$-separator $(L_{1},R_{1})$ in $G_{1}$, they have the same size, and $A(L^{+}\setminus R^{+}) = A(L\setminus R\cup\{x\}) + A_{1}(L_{1}\setminus R_{1})$.
\end{observation}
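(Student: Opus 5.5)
The observation follows by unwinding the definition of the projection. Fix $(L^{+},R^{+})\in{\cal S}_{1}$ and write $C=L\setminus R\cup\{x\}$. Since $(L^{+},R^{+})\succeq(L,R)$ and $x\in L^{+}\setminus R^{+}$, we have $C\subseteq L^{+}\setminus R^{+}$.

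The projected separator is
\[
L_{1}\setminus R_{1}=\bigl((L^{+}\setminus R^{+})\setminus C\bigr)\cup\{s_{1}\},\qquad L_{1}\cap R_{1}=L^{+}\cap R^{+}.
\]
Contraction only merges vertices of $C$ into $s_{1}$, and $C$ is disjoint from $L^{+}\cap R^{+}$. So no cut vertex is merged or lost, and every cut vertex remains a vertex of $G_{1}$ distinct from $s_{1}$. Hence $|L_{1}\cap R_{1}|=|L^{+}\cap R^{+}|$, which gives the equal-size claim.

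For the volume, recall that $A_{1}$ agrees with $A$ on $V(G_{1})\setminus\{s_{1}\}=V(G)\setminus C$, and $A_{1}(s_{1})=0$. Therefore
\[
A_{1}(L_{1}\setminus R_{1})=A\bigl((L^{+}\setminus R^{+})\setminus C\bigr).
\]
Since $C\subseteq L^{+}\setminus R^{+}$, additivity of $A$ over the disjoint union $L^{+}\setminus R^{+}=C\,\dot\cup\,\bigl((L^{+}\setminus R^{+})\setminus C\bigr)$ gives $A(L^{+}\setminus R^{+})=A(C)+A_{1}(L_{1}\setminus R_{1})$.

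There is no real obstacle; the argument is just bookkeeping. The one point to state explicitly is that $x\notin L^{+}\cap R^{+}$, which holds because membership in ${\cal S}_{1}$ requires $x\in L^{+}\setminus R^{+}$.
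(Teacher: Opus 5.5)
Your proposal is correct and matches the paper's reasoning. The paper states this observation without proof, as an immediate consequence of how the projection is defined. Your argument is exactly that unwinding: contracting $L\setminus R\cup\{x\}\subseteq L^{+}\setminus R^{+}$ leaves the cut $L^{+}\cap R^{+}$ untouched, and the volume identity follows because $A_{1}$ agrees with $A$ off $s_{1}$ and $A_{1}(s_{1})=0$.
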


\noindent\underline{Between separators in $G$ and $G_{2}$.} Let ${\cal S}_{2}$ collect all $(s,t)$-separators $(L^{+},R^{+})$ in $G$ such that $(L^{+},R^{+})\succeq (L,R)$ and $x\in L^{+}\cap R^{+}$. For each $(L^{+},R^{+})\in {\cal S}_{2}$, it can be projected onto $G_{2}$ by contracting the subset $L\setminus R\subseteq L^{+}\setminus R^{+}$ into the vertex $s_{2}$ and removing the vertex $x$ from $L^{+}\cap R^{+}$. For each $(s_{2},t)$-separator $(L_{2},R_{2})$, it can be back-projected onto $G$ by expanding the vertex $s_{2}\in L_{2}\setminus R_{2}$ back to $L\setminus R$, and adding back $x$ into the cut. Again, this establishes a bijection between $(L^{+},R^{+})\in {\cal S}_{2}$ and $(s_{2},t)$-separators $(L_{2},R_{2})$ in $G_{2}$.

\begin{observation}
\label{ob:project2}
For each $(L^{+},R^{+})\in{\cal S}_{2}$ and its projected $(s_{2},t)$-separator $(L_{2},R_{2})$ in $G_{2}$, the size of $(L^{+},R^{+})$ is $1$ larger than the size of $(L_{2},R_{2})$, and $A(L^{+}\setminus R^{+}) = A(L\setminus R) + A_{2}(L_{2}\setminus R_{2})$.
\end{observation}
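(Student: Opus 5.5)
The plan is to unwind the definition of the projection onto $G_{2}$ and check the two claims directly, vertex by vertex. Fix $(L^{+},R^{+})\in{\cal S}_{2}$, so $L\setminus R\subseteq L^{+}\setminus R^{+}$ and $x\in L^{+}\cap R^{+}$. By definition, the projected separator $(L_{2},R_{2})$ in $G_{2}$ has
\[
L_{2}\setminus R_{2}=\bigl((L^{+}\setminus R^{+})\setminus(L\setminus R)\bigr)\cup\{s_{2}\},\qquad L_{2}\cap R_{2}=(L^{+}\cap R^{+})\setminus\{x\}.
\]
The far side is $R_{2}\setminus L_{2}=R^{+}\setminus L^{+}$, which contains $t$.

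First I will confirm that this is a legitimate (connected) $(s_{2},t)$-separator, so that the statement is meaningful. Any edge of $G_{2}$ is either an edge of $G$ between vertices outside $(L\setminus R)\cup\{x\}$, or an edge $(s_{2},v)$ coming from an edge of $G$ between $L\setminus R$ and $v$. In the first case, no such edge joins $L^{+}\setminus R^{+}$ to $R^{+}\setminus L^{+}$, because $(L^{+},R^{+})$ is a separator in $G$. In the second case, $L\setminus R\subseteq L^{+}\setminus R^{+}$ forces $v\notin R^{+}\setminus L^{+}$. Deleting $x$ removes no vertex from either side, since $x$ lies in the cut. Connectivity of $G[L^{+}\setminus R^{+}]$ is preserved by contracting a subset into a single vertex. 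The condition $N(L_{2}\setminus R_{2})=L_{2}\cap R_{2}$ follows from $N_{G}(L^{+}\setminus R^{+})=L^{+}\cap R^{+}$ after removing $x$. Taking the underlying simple graph changes none of this.

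The size claim is then immediate: $|L_{2}\cap R_{2}|=|L^{+}\cap R^{+}|-1$, because $x\in L^{+}\cap R^{+}$ is removed and no other cut vertex is touched.

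For the volume claim, I split $L^{+}\setminus R^{+}$ as the disjoint union of $L\setminus R$ and $D:=(L^{+}\setminus R^{+})\setminus(L\setminus R)$. This gives $A(L^{+}\setminus R^{+})=A(L\setminus R)+A(D)$. Since $A_{2}$ agrees with $A$ on every vertex of $G_{2}$ other than $s_{2}$, and $A_{2}(s_{2})=0$, we get $A_{2}(L_{2}\setminus R_{2})=A_{2}(D)+A_{2}(s_{2})=A(D)$, which yields the identity.

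No step is really an obstacle. The only point needing a moment's care is that $x$ lies in the cut rather than in $L\setminus R$, and this is exactly the defining condition of ${\cal S}_{2}$. It guarantees both that the size drops by exactly one and that $x$ contributes nothing to either volume.
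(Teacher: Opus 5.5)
Your proposal is correct and is the direct verification the paper intends. The paper states this as an observation without proof, as an immediate consequence of the projection (contract $L\setminus R$ into $s_{2}$, drop $x$ from the cut); your disjoint decomposition $L^{+}\setminus R^{+}=(L\setminus R)\cup D$, together with $A_{2}=A$ away from $s_{2}$ and $A_{2}(s_{2})=0$, is exactly that check. Your extra verification that the projected pair is a genuine connected $(s_{2},t)$-separator with $N(L_{2}\setminus R_{2})=L_{2}\cap R_{2}$ is also sound, and it fills in the bijection claim that the paper leaves implicit.
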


\subsection{Local Cuts.}

We will use the following local cuts subroutine from \cite{forster2020computing}.

\begin{lemma}[Local Cuts, modified Corollary 4.2 in \cite{forster2020computing}]%
\label{lemma:LocalCuts}%
    Let $G$ be an undirected graph with source vertex $s$, sink vertex $t$, a node-weighting $A$ and an $A$-respecting adjacency list representation ${\cal L}_{G}$. Given a volume parameter $\nu$ and a cut-size parameter $\lambda\geq 0$, there is an algorithm that outputs either
    \begin{enumerate}
    \item\label{Item:Localcut1} an $(s,t)$-separator $(L, R)$ with size $\leq \lambda$ and $A$-volume $\leq \lambda\nu$ if $\lambda \geq 1$, and $\leq \nu$ if $\lambda = 0$, or
    \item\label{Item:Localcut2} a symbol $\bot$,
    \end{enumerate}
    with the following guarantee:
    if there exists an $(s,t)$-separator with 
    size $\leq \lambda$ and $A$-volume $\leq \nu$,
    then with high probability, the algorithm outputs an $(L,R)$ in \Cref{Item:Localcut1}. %
    The algorithm always runs in $O(\nu(\lambda + 1)^{2}\log n)$ time. If the output is a separator $(L,R)$, the running time can also be bounded by $O((A(L\setminus R)+\lambda + 1)(\lambda + 1)^{2}\log n)$\footnote{The $+1$ terms are only to handle the corner case $\lambda = 0$}.
\end{lemma}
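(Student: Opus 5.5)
The plan is to adapt the vertex-connectivity version of the local cut algorithm of \cite{forster2020computing} to our setting, which differs in three ways: a fixed sink $t$, node weights $A$ in place of degrees, and adjacency lists that may contain garbage edges. Throughout we work in the standard split graph, where each vertex $v\neq s,t$ becomes an arc $v_{\mathrm{in}}\to v_{\mathrm{out}}$ of unit capacity and every edge of $G$ becomes arcs of infinite capacity. Both the split graph and its residual graph are never built explicitly; we simulate them on the fly from ${\cal L}_G$.

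\textbf{One trial.} A single trial runs at most $\lambda+1$ rounds, starting from the zero flow. In each round we perform a DFS from $s$ in the current residual graph and charge $A(v)$ whenever a new vertex $v\neq s$ is reached. Scanning $\Gamma_G(v)$ costs $O(|\Gamma_G(v)|)\le O(A(v))$, since each garbage entry is recognised in $O(1)$ time via the indicator array $X_G$. Scanning $\Gamma_G(s)$ costs only $|N_G(s)|$, which we bound by the size plus the $A$-volume of the side being explored. A round ends in one of three ways.
\begin{enumerate}
\item The DFS reaches $t$. We augment one unit along the found path.
\item The charged volume exceeds $\nu$. We stop the DFS, pick an explored vertex $y$ at random with probability proportional to its charged weight, and augment one unit along the DFS-tree path to $y$, treating $y$ as a pseudo-sink. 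This is the Forster et al.\ trick.
\item The DFS gets stuck with charged volume at most $\nu$ and without reaching $t$. The reachable set then determines an $(s,t)$-separator $(L,R)$: $L\setminus R$ is the set of vertices whose in-copy is reachable, and $L\cap R$ is the set of vertices whose in-copy is reachable but whose out-copy is not. We return $(L,R)$.
\end{enumerate}
If all $\lambda+1$ rounds end by augmentation, the trial returns $\bot$.

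\textbf{Correctness of a returned cut.} Any returned cut is a genuine $(s,t)$-separator, and its size is at most the current flow value, which is at most $\lambda$. To enforce the stated volume bound $\le\lambda\nu$, I would use a charging threshold in round $i$ that grows with $i$, or simply cap the threshold at $\lambda\nu$; the calibration must be chosen to stay consistent with the success-probability argument below. The case $\lambda=0$ is a single DFS with threshold $\nu$.

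\textbf{Success probability.} This is the main step. Suppose an $(s,t)$-separator $(L^*,R^*)$ exists with $|L^*\cap R^*|\le\lambda$ and $A(L^*\setminus R^*)\le\nu$. Each augmenting path that ends at $t$, or at a pseudo-sink $y\notin L^*\setminus R^*$, must pass through a distinct vertex of $L^*\cap R^*$. So at most $\lambda$ such paths fit, and after them no residual path can leave $L^*\setminus R^*$. A round ending in case 2 needs charged volume greater than $\nu\ge A(L^*\setminus R^*)$, so the sampled $y$ lies outside $L^*\setminus R^*$ with constant probability. Hence a trial fails only if some pseudo-sink lands inside $L^*\setminus R^*$, and I expect this failure probability to be bounded away from $1$; boosting by $O(\log n)$ independent trials then gives high probability. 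The delicate part is keeping track of vertices re-entered through residual reverse arcs, which are charged again, so that the ``$>\nu$ implies a good sample'' step stays valid. I would first try the potential argument of \cite{forster2020computing} verbatim, with degree replaced by $A$. This is legitimate because $A$ is degree-respecting, i.e.\ $A(v)\ge\deg_G(v)$.

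\textbf{Running time.} Each round of DFS plus augmentation costs $O(\lambda\nu)$ once the threshold scaling is accounted for. With $\lambda+1$ rounds per trial and $O(\log n)$ trials, the total is $O(\nu(\lambda+1)^2\log n)$. When a separator $(L,R)$ is output, every DFS in the successful trial explored only vertices charged against $A(L\setminus R)$ plus the boundary and $N(s)$ terms. This yields the refined bound $O((A(L\setminus R)+\lambda+1)(\lambda+1)^2\log n)$, provided the trials are interleaved, or stopped at the first success.
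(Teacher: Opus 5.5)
Your outline follows Forster et al.\ in spirit, as the paper does (it imports their Corollary~4.2 with three modifications). However, the two quantitative steps you supply yourself do not go through.

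First, the success-probability step fails under the stopping rule you state. If a round stops as soon as the charged weight exceeds $\nu$, it can stop at weight $\nu+1$ while $A(L^*\setminus R^*)=\nu$, so the sample is good with probability only about $1/(\nu+1)$, not a constant. Even a constant per-sample success probability $c<1$ would not suffice. Up to $\lambda$ samples must all be good, so a trial succeeds with probability only about $c^{\lambda}$, which $O(\log n)$ repetitions cannot boost to high probability. What is needed, and what Forster et al.\ do, is a per-round budget of order $\lambda\nu$, e.g.\ $(\lambda+1)\nu$. Re-entered vertices are not the difficulty: within one DFS each vertex is charged at most once, so the weight charged inside $L^*\setminus R^*$ in any round is at most $\nu$. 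Each sample is therefore bad with conditional probability at most $1/(\lambda+1)$, and all of the at most $\lambda$ relevant samples are good with probability at least $(1-\frac{1}{\lambda+1})^{\lambda}\ge 1/e$. Case~3 must then read ``stuck within the budget''. The returned cut's $A$-volume is bounded by that budget, which is the source of the $O(\lambda\nu)$ volume bound; the paper notes this bound is inessential because callers discard separators of volume $>\nu$. For this accounting, charge $A(v)$ only when $v_{\mathrm{out}}$ is reached, i.e.\ when $\Gamma_G(v)$ is scanned. Charging $A(u)$ on reaching $u_{\mathrm{in}}$ for a cut vertex $u\in L^*\cap R^*$ would let these unbounded weights push the DFS over budget after the cut is saturated, and would let bad pseudo-sinks dominate the sample. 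Correspondingly, $L\setminus R$ should be the vertices whose out-copy is reachable, not the in-copy.

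Second, your derivation of the refined bound $O((A(L\setminus R)+\lambda+1)(\lambda+1)^2\log n)$ is false. In a trial that eventually returns $(L,R)$, every earlier round ended by exhausting the full budget, so each cost $\Theta(\lambda\nu)$ regardless of $A(L\setminus R)$. For example, if $s$ has a single neighbour $a$ leading into a huge component, round~1 explores the whole budget and round~2 returns the cut $\{a\}$ of $A$-volume $0$. Stopping or interleaving trials does not change this. The paper instead uses an outer doubling loop: run the algorithm for $\nu'=2,4,8,\dots,\nu$ and stop at the first scale that returns a separator. The geometric sum of costs is $O(\nu'(\lambda+1)^2\log n)$. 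Since scale $\nu'/2$ returned $\bot$, w.h.p.\ no separator of size $\le\lambda$ has $A$-volume $\le\nu'/2$, so $A(L\setminus R)>\nu'/2$. Finally, this section works only with connected separators, so you should also verify that $G[L\setminus R]$ is connected and $N(L\setminus R)=L\cap R$. Both follow from flow conservation on the residual-reachable set. The rest of your plan matches the paper's modifications: using $t$ as the endpoint once the DFS reaches it, $A$ in place of degrees, skipping garbage entries via $X_G$, and $O(\log n)$ repetitions.
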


We make some remarks on the difference between \Cref{lemma:LocalCuts} and the original Corollary 4.2 in \cite{forster2020computing} (in addition to those tiny differences, for example we talk about $A$-volumes and have one more $\log n$ factor to ensure high probability, which can be easily handled). 

\medskip

\noindent\underline{Difference 1.}
First, our local cuts algorithm aims to detect $(s,t)$-separators, while the algorithm in \cite{forster2020computing} is for detecting $(s,\cdot)$-separator (i.e., allow an arbitrary sink). This is not a fundamental difference. The algorithm in \cite{forster2020computing} roughly keeps guessing a vertex $v$ outside the target cut from a large set of candidate vertices generated by a DFS starting from $s$ (and then reverse the $s$-$v$ path). We can simply take $v=t$ once the DFS search reaches $t$, in order to guarantee that the returned separator separates $s$ and $t$. 

\medskip

\noindent\underline{Difference 2.}
Second, we ensure that any separator returned by \Cref{lemma:LocalCuts} has $A$-volume $\leq \lambda\nu$, while this is not explicitly stated in Corollary 4.2 in \cite{forster2020computing}. In \cite{forster2020computing}, they obtain a local vertex cut algorithm by a simple reduction from a local \emph{directed} edge cut algorithm, and they give volume bounds for the latter (intuitively, since the algorithm runs in local time and $(L,R)$ is returned by explicitly outputting $L\setminus R$ and $L\cap R$, the $A$-volume of $L\setminus R$ can thus be bounded). Also, this volume bound is not very crucial for our later algorithms, since we can simply reject all separators with volume larger than $\nu$.

\medskip

\noindent\underline{Difference 3.} \cite{forster2020computing} only guarantees $O(\nu\lambda^{2})$ running time for $\lambda \geq 1$. To additionally guarantee a running time $O((A(L\setminus R) + \lambda)\lambda^{2})$, we can simply run the $O(\nu\lambda^{2})$-time version for $\nu' \in \{2,4,8,...,\nu\}$, and stop once the algorithm returned a separator.

\subsection{Local Furthest Min-Cuts.}

To understand the following local furthest min-cut subroutine better, we point out that we generally invoke this subroutine in the setting where, among $(s,t)$-separators with $A$-volume $\leq\nu$, the minimum-size one has size exactly $\lambda$. We are expecting an $(s,t)$-separator $(L,R)$ that is pushed furthest with $A$-volume $\leq \nu$.

\begin{lemma}[Local Furthest Min-Cuts]%
\label{lemma:LocalMincuts}%
    Let $G$ be an undirected graph with source vertex $s$, sink vertex $t$, a list-capacity node-weighting $A$, and an $A$-respecting adjacency list representation. Given a volume parameter $\nu$ and a cut-size parameter $\lambda\geq 0$, there is an algorithm $\LocalMincut(G,A,s,t,\nu,\lambda)$ which outputs either 
    \begin{enumerate}
    \item\label{Item:LocalMincut1} an $(s,t)$-separator $(L, R)$ with size $\leq \lambda$ and $A$-volume $\leq \lambda\nu$ if $\lambda \geq 1$, and $\leq \nu$ if $\lambda = 0$, or
    
    \item\label{Item:LocalMincut2} a symbol $\bot$,
    \end{enumerate}
    with the following guarantee:
        if there exists an $(s,t)$-separator with 
        $A$-volume $\leq \nu$
        and size $\leq \lambda$, then with high probability, the output is an $(L,R)$ in \Cref{Item:LocalMincut1} further satisfying that either
        \begin{itemize}
        \item it has size $<\lambda$ or $A$-volume $>\nu$, or
        \item any $(s,t)$-separator $(L^{+},R^{+})\succ (L,R)$ with $A$-volume $\leq \nu$ has size $\geq \lambda+1$.
        \end{itemize}
    The algorithm runs in $\nu\cdot\poly(\lambda)\cdot \log n$ time.
\end{lemma}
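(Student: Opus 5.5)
We prove \Cref{lemma:LocalMincuts} by a recursive algorithm that repeatedly calls \Cref{lemma:LocalCuts} and then tries to push the returned cut past one of its cut vertices, as in the technical overview. Concretely, $\LocalMincut(G,A,s,t,\nu,\lambda)$ first calls $\LocalCut(G,A,s,t,\nu,\lambda)$. If it returns $\bot$, we return $\bot$. If it returns $(L,R)$ with size $<\lambda$, or with $A$-volume $>\nu$, or with $\lambda=0$, we return $(L,R)$, since the first bullet of the guarantee holds. Otherwise $(L,R)$ has size exactly $\lambda\geq 1$ and $A$-volume $\leq\nu$. We then let $x$ be its smallest-id cut vertex, build the cut-advanced graph $G_1$ with budget $\nu_1=\nu-A(L\setminus R\cup\{x\})$, and recursively call $\LocalMincut(G_1,A_1,s_1,t,\nu_1,\lambda)$. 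If the call returns a separator of size $\leq\lambda$, we back-project it via \Cref{ob:project1} and return it. If it returns $\bot$, we build the cut-vertex-fixed graph $G_2$ with budget $\nu_2=\nu-A(L\setminus R)$, recursively call $\LocalMincut(G_2,A_2,s_2,t,\nu_2,\lambda-1)$, and back-project via \Cref{ob:project2}. If this second call also fails, we return $(L,R)$ itself. If $\nu_1<0$ we treat the first call as returning $\bot$.

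\textbf{Correctness.} We induct on $(\nu,\lambda)$. Each advanced call lowers $\nu$ strictly, because $A(x)\geq 1$. Each fixed call lowers $\lambda$.

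Suppose a separator with $A$-volume $\leq\nu$ and size $\leq\lambda$ exists, and let $(L,R)$ be the returned cut of size $\lambda$ and $A$-volume $\leq\nu$.

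First suppose some $(L^+,R^+)\succ(L,R)$ has $A$-volume $\leq\nu$ and size $\leq\lambda$. Since $L^+\setminus R^+\supsetneq L\setminus R$ and $N(L\setminus R)=L\cap R$, the vertex $x$ lies either in $L^+\setminus R^+$ or in $L^+\cap R^+$. (Here we need a small uncrossing/connectivity check that $x$ is not in $R^+\setminus L^+$; if it is, replace $(L^+,R^+)$ by a witness that does contain $x$.) In the first case the projection to $G_1$ gives a separator within budget $\nu_1$ and size $\leq\lambda$. In the second case it gives one in $G_2$ within budget $\nu_2$ and size $\leq\lambda-1$.

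The induction hypothesis then says the recursive output satisfies the lemma's guarantee in the subgraph. Back-projecting preserves the first bullet, with size shifted by $1$ and volume shifted by the contracted amount. It also preserves the second bullet, because any separator $\succ$ the back-projection lies in $\mathcal S_1$ (respectively $\mathcal S_2$), and sizes and volumes correspond exactly.

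If neither recursive call returns a separator, then no such $(L^+,R^+)$ exists, so $(L,R)$ satisfies the second bullet. The $\lambda\nu$ volume cap on outputs follows from \Cref{lemma:LocalCuts} and the additivity in the observations.

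\textbf{Running time — the main obstacle.} A naive bound multiplies one $O(\nu\lambda^2\log n)$ call per recursion level by $\nu$ levels, which gives a quadratic dependence on $\nu$. Instead we use the refined bound from \Cref{lemma:LocalCuts}: a call that returns $(L,R)$ costs $O((A(L\setminus R)+\lambda+1)(\lambda+1)^2\log n)$. Building $G_1$ or $G_2$ costs $O(\lambda+A(L\setminus R\cup\{x\}))$.

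The key accounting idea is that, along a chain of successful advances, the volumes $A(L\setminus R\cup\{x\})$ of the contracted pieces are disjoint parts of the original budget $\nu$, so they sum to at most $\nu$. The additive $\lambda$ overhead is charged to the $A(x)\geq 1$ that is consumed each time.

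Failed branches need separate care. A failed advanced branch is a call that returns $\bot$; its cost is bounded by $\nu\,\poly(\lambda)\log n$ by induction, but only for the $\lambda$ value at that point. Each failure is followed by a decrease of $\lambda$, so we set up the recurrence
\[T(\nu,\lambda)\le c(A_0+\lambda)\lambda^2\log n+T(\nu-A_0,\lambda)+T(\nu_2,\lambda-1)\]
and prove $T(\nu,\lambda)\le C\nu\lambda^{d}\log n$ for a suitable constant $d$. The hard step is controlling this branching: the tree of recursive calls must not blow up exponentially. For this we argue that each $\bot$ returned by an advanced call ends that branch, so the recursion is essentially a path with at most $\lambda$ side failures. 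Each side failure costs $\nu\,\poly(\lambda)\log n$, which gives a total of $\nu\,\poly(\lambda)\log n$.
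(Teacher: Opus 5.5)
Your algorithm is essentially the paper's; recursing with $\lambda-1=0$ instead of returning $(L,R)$ when $\lambda=1$ is harmless. However, both halves of your analysis have a gap at exactly the point where the advance-then-fix structure matters.

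\textbf{Correctness.} You split cases on an individual $(L^+,R^+)\succ(L,R)$. The output, however, is decided by whether the call on $G_1$ returns a separator, and the two do not line up.

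In the branch where the $G_2$ call is made, you justify the second bullet by saying every separator $\succ$ the back-projection lies in ${\cal S}_2$. That is false: a separator beyond it can put $x$ strictly on the left. What you need is that every \emph{relevant} one (size $\le\lambda$, $A$-volume $\le\nu$) has $x$ in its cut. This holds only because the $G_1$ call returned $\bot$, or was skipped because $\nu_1<0$. By the induction hypothesis, applied contrapositively under the high-probability event, its precondition then fails. That is, no separator $\succeq(L,R)$ with size $\le\lambda$, $A$-volume $\le\nu$ and $x\in L^+\setminus R^+$ exists.

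You invoke this certificate only in your ``neither call returns'' case. That case is actually vacuous, since the $G_2$ precondition always holds via the projection of $(L,R)$ itself (size $\lambda-1$, $A_2$-volume $0\le\nu_2$).

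The clean split is the paper's:
\begin{itemize}
\item If some separator $\succeq(L,R)$ of size $\le\lambda$ and $A$-volume $\le\nu$ has $x$ on the left, the $G_1$ precondition holds and your ${\cal S}_1$ argument works.
\item Otherwise, suppose the $G_1$ call still returns a separator. The induction hypothesis says nothing about it, but its back-projection has size $\le\lambda$ and therefore must have $A$-volume $>\nu$.
\item Otherwise, the $G_1$ call returns $\bot$. Any violator of the second bullet for the back-projected $G_2$ output must contain $x$ in its cut. So it projects into $G_2$ with size $\le\lambda-1$ and $A_2$-volume $\le\nu_2$, strictly beyond the $G_2$ output, contradicting the guarantee there.
\end{itemize}
Separately, $x\notin R^+\setminus L^+$ needs no uncrossing: $x\in N(L\setminus R)$ and $L\setminus R\subseteq L^+\setminus R^+$.

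\textbf{Running time.} Your recurrence adds the full cost of both recursive calls, and you bound the failed $G_1$ call by the inductive bound at the same $\lambda$. This does not close. A recurrence like $T(\nu,\lambda)\le T(\nu-1,\lambda)+T(\nu,\lambda-1)+\nu\,\poly(\lambda)$ grows like $\binom{\nu+\lambda}{\lambda}$, not $\nu\,\poly(\lambda)$.

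The missing fact is structural. \LocalMincut{} outputs $\bot$ only when its very first \LocalCut{} call does, since every later exit returns a separator. Hence a failed trial is a single \LocalCut{} call of cost $O(\nu\lambda^2\log n)$ and never recurses. This is what makes the recursion a path:
\begin{itemize}
\item at most $\nu$ advance steps, whose contracted volumes $A(L\setminus R\cup\{x\})$ sum to at most $\nu$, for a total of $O(\nu\lambda^3\log n)$;
\item at most $\lambda$ fix steps, each paying for one failed trial.
\end{itemize}
Together these give $O((\nu\lambda^3+\lambda^4)\log n)$ overall.
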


\paragraph{The Algorithm.} The algorithm $\LocalMincut(G,A,s,t,\nu,\lambda)$ is as follows.
\begin{enumerate}
\item\label{Step:LocalMincut1} Invoke $\LocalCut(G,A,s,t,\nu,\lambda)$. If the output is the symbol $\bot$, return this symbol.

\item\label{Step:LocalMincut2} Otherwise, let $(L,R)$ be the $(s,t)$-separator returned by $\LocalCut(G,A,s,t,\nu,\lambda)$. If $(L,R)$ has size $<\lambda$ or $A$-volume $>\nu$, return the same $(L,R)$ immediately. Therefore, from now, $(L,R)$ has size exactly $\lambda$ and $A$-volume $\leq \nu$. If $\lambda = 0$, we again return $(L,R)$ directly.

\item\label{Step:LocalMincut3} Let $x$ be the vertex in $L\cap R$ with the smallest id (when defining the cut-advanced graph). 

If $\nu - A(L\setminus R\cup\{x\})<0$, skip this step. 

Otherwise, let $G_{1},A_{1},s_{1},t$ be the cut-advanced graph and its corresponding node-weighting, source vertex and sink vertex. Let $\nu_{1} =\nu - A(L\setminus R\cup\{x\})$ and invoke 
\[
\LocalMincut(G_{1},A_{1},s_{1},t,\nu_{1},\lambda).
\]
If the output is an $(s_{1},t)$-separator in $G_{1}$, return its back-projected $(s,t)$-separator in $G$. If not, then  proceed.

\item\label{Step:LocalMincut4} 
If $\lambda = 1$, return $(L,R)$.

Otherwise, let $G_{2},A_{2},s_{2},t$ be the cut-vertex-fixed graph and its corresponding node-weighting, source vertex and sink vertex. Let $\nu_{2} = \nu - A(L\setminus R)$ and invoke 
\[
\LocalMincut(G_{2},A_{2},s_{2},t,\nu_{2},\lambda - 1).
\]
If the output is an $(s_{2},t)$-separator in $G_{2}$, return its back-projected $(s,t)$-separator in $G$, otherwise return $(L,R)$.

\end{enumerate}

\paragraph{Correctness.} First, it is relatively easy to see that, if the algorithm returns an $(s,t)$-separator $(L,R)$, it must have size $\leq \lambda$ and $A$-volume $\leq\lambda\nu$. This follows from the guarantee of the subroutine $\LocalCut$ given in \Cref{lemma:LocalCuts}, together with \Cref{ob:project1} and \Cref{ob:project2} on the back-projection of separators.

It remains to prove that the key guarantee holds for any input instance $(G,A,s,t,\nu,\lambda)$. For simplicity, we assume that, when running $\LocalMincut$ for this input instance, the high-probability event in \Cref{lemma:LocalCuts} happens for all calls to the $\LocalCut$ subroutine. By the following \Cref{claim:LocalMincutRecursiveSteps}, the number of calls to $\LocalCut$ is bounded by $\poly(\nu,\lambda)$, so this assumption holds with high probability. Therefore, it suffices to prove \Cref{lemma:LocalMincutCorrectness} under this assumption.

\begin{claim}
\label{claim:LocalMincutRecursiveSteps}
For any input instance $(G,A,s,t,\nu,\lambda)$ to $\LocalMincut$, the number of calls to $\LocalCut$ is at most $\nu + 2\lambda + 1$. 
\end{claim}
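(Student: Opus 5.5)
The bound follows by strong induction on a potential such as $\nu + 2\lambda$. The natural candidate is $f(\nu,\lambda) = \nu + 2\lambda + 1$, which we use to bound the number of $\LocalCut$ calls made by $\LocalMincut(G,A,s,t,\nu,\lambda)$. Each invocation makes exactly one direct call to $\LocalCut$ in Step~\ref{Step:LocalMincut1}, plus the calls made inside at most two recursive invocations (Steps~\ref{Step:LocalMincut3} and~\ref{Step:LocalMincut4}). So we need
\[
1 + f(\nu_1,\lambda) + f(\nu_2,\lambda-1) \le f(\nu,\lambda),
\]
with the convention that a skipped recursive call contributes $0$.

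\textbf{Base cases.} If Step~\ref{Step:LocalMincut1} returns $\bot$, or Step~\ref{Step:LocalMincut2} returns early (this includes $\lambda=0$), the count is $1 \le f$. We also need $\nu \ge 0$ whenever a call is made. This holds because Step~\ref{Step:LocalMincut3} is skipped when $\nu_1<0$. Moreover, $\nu_2 = \nu - A(L\setminus R) \ge 0$, since at Step~\ref{Step:LocalMincut4} we know $A(L\setminus R)\le\nu$.

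\textbf{Inductive step.} Past Step~\ref{Step:LocalMincut2}, we have $\lambda\ge1$ and $A(L\setminus R)\le \nu$. Let $a = A(L\setminus R)$ and $w = A(x)\ge 1$; positivity holds because $x\ne s$ and node weights are positive away from the source. Then
\[
\nu_1 = \nu - a - w \le \nu - 1, \qquad \nu_2 = \nu - a \le \nu.
\]
The Step~\ref{Step:LocalMincut3} call costs at most $\nu_1 + 2\lambda + 1 \le \nu + 2\lambda$ by induction. The Step~\ref{Step:LocalMincut4} call (only when $\lambda\ge2$) costs at most $\nu_2 + 2(\lambda-1)+1 \le \nu+2\lambda-1$. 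Adding these crudely gives about $2\nu+4\lambda$, which overshoots $f(\nu,\lambda)$. This is the main obstacle: the naive recursion tree has two branches, and we must show the volume is not paid twice.

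\textbf{Resolving the obstacle.} The key observation is that if the Step~\ref{Step:LocalMincut3} call returns a separator, the algorithm returns immediately and Step~\ref{Step:LocalMincut4} is never run. Step~\ref{Step:LocalMincut4} runs only if the Step~\ref{Step:LocalMincut3} call returned $\bot$ or was skipped.

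We therefore strengthen the induction hypothesis. Any invocation of $\LocalMincut$ that outputs $\bot$ makes exactly one $\LocalCut$ call, because $\bot$ can only be produced in Step~\ref{Step:LocalMincut1}. Indeed, once Step~\ref{Step:LocalMincut1} succeeds, every later exit returns a separator: either $(L,R)$ itself or a back-projection.

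Hence when Step~\ref{Step:LocalMincut4} is reached, the Step~\ref{Step:LocalMincut3} call cost at most $1$ (or $0$ if skipped). The total is then at most
\[
1 + 1 + (\nu_2 + 2(\lambda-1) + 1) \le \nu + 2\lambda + 1.
\]
In the other case, Step~\ref{Step:LocalMincut3} succeeded, no Step~\ref{Step:LocalMincut4} call is made, and the total is at most
\[
1 + (\nu_1 + 2\lambda + 1) \le \nu + 2\lambda + 1,
\]
using $\nu_1 \le \nu-1$. If $\lambda=1$, Step~\ref{Step:LocalMincut4} returns without a call, and the bound $2 \le \nu+3$ holds trivially. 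This completes the induction. The potential $\nu+2\lambda$ strictly decreases in every recursive call ($\nu_1<\nu$ at the same $\lambda$; $\lambda$ drops by $1$ at Step~\ref{Step:LocalMincut4}), so the induction is well-founded.
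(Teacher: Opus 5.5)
Your proof is correct and is essentially the paper's argument. Both rest on the same key fact: $\bot$ can only be returned at Step 1, so a failed Step-3 trial costs exactly one \LocalCut{} call, and the recursion is effectively a path. The paper counts that path directly (at most $\nu$ Step-3 branchings and at most $\lambda$ Step-4 branchings, each with one extra failed-trial call), while you write the same accounting as an induction on $\nu+2\lambda$; this makes explicit why $\nu_1\le\nu-1$ (from $A(x)\ge 1$) and why $\nu_2\ge 0$.
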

\begin{proof}
The key observation is that, if the algorithm does not return at Step \ref{Step:LocalMincut3}, i.e. the recursive call to $\LocalMincut$ returns the symbol $\bot$ (called a \emph{failed trial}), this recursive call $\LocalMincut$ only calls $\LocalCut$ once. In other words, the recursion tree of $\LocalMincut$ is roughly a path: if the recursive call to $\LocalMincut$ at Step \ref{Step:LocalMincut3} is a failed trial, the algorithm essentially branches into Step \ref{Step:LocalMincut4} with one additional call to $\LocalCut$ in the failed trial; otherwise, the algorithm just branches into Step \ref{Step:LocalMincut3}. The number of Step-\ref{Step:LocalMincut3} branchings (excluding the failed trials) is $\leq\nu$ (each decreases $\nu$ by at least $1$), and the number of Step-\ref{Step:LocalMincut4} branchings is $\lambda$, so the number of calls to $\LocalCut$ is at most $\nu+2\lambda + 1$ (the additive $1$ is just to handle the case when $\lambda = 0$).
\end{proof}

\begin{lemma}
\label{lemma:LocalMincutCorrectness}
For each input instance $(G,A,s,t,\nu,\lambda)$, if 
\begin{enumerate}[label=(\Alph*)]
\item\label{Condition:LocalMincutCorrectness} there exists an $(s,t)$-separator with $A$-volume $\leq \nu$ and size $\leq \lambda$,
\end{enumerate}
then $\LocalMincut(G,A,s,t,\nu,\lambda)$ returns an $(s,t)$-separator $(L,R)$ satisfying either it has size $<\lambda$ or $A$-volume $>\nu$, or any $(s,t)$-separator $(L^{+},R^{+})\succ(L,R)$ with $A$-volume $\leq \nu$ has size $\geq \lambda+1$.
\end{lemma}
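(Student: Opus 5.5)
We argue by induction on the recursion, measured by the pair $(\lambda,\nu)$ ordered lexicographically, or simply by the depth of the recursion tree, which is finite by \Cref{claim:LocalMincutRecursiveSteps}. Throughout we assume condition \ref{Condition:LocalMincutCorrectness} and that every call to $\LocalCut$ succeeds.

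\textbf{Early returns.} By \Cref{lemma:LocalCuts}, Step~\ref{Step:LocalMincut1} does not return $\bot$. If Step~\ref{Step:LocalMincut2} returns because $(L,R)$ has size $<\lambda$ or $A$-volume $>\nu$, the first alternative of the conclusion holds. If $\lambda=0$, every connected $(L^{+},R^{+})\succ(L,R)$ has size $\geq 0+1$ unless it has size $0$. So in the $\lambda=0$ case we only need that no size-$0$ separator strictly extends $(L,R)$. This follows because a size-$0$ connected separator with $N(L\setminus R)=\emptyset$ has $L\setminus R$ equal to the whole component of $s$, which is maximal. From now on $(L,R)$ has size exactly $\lambda\ge 1$ and $A$-volume $\le\nu$.

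\textbf{Main case.} Suppose some connected $(L^{+},R^{+})\succ(L,R)$ has $A$-volume $\le\nu$ and size $\le\lambda$; otherwise returning $(L,R)$ would be fine. Since $N(L\setminus R)=L\cap R$ and $L^{+}\setminus R^{+}\supsetneq L\setminus R$ is connected, some vertex of $L\cap R$ lies in $L^{+}\setminus R^{+}$. The connectedness of $L^+\setminus R^+$ gives a neighbour of $L\setminus R$ inside $L^+\setminus R^+$, and that neighbour is in $L\cap R$. Now split on the fixed vertex $x$.

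If $x\in L^{+}\setminus R^{+}$, then $(L^{+},R^{+})\in{\cal S}_1$. By \Cref{ob:project1} it projects to an $(s_1,t)$-separator in $G_1$ of size $\le\lambda$ and $A_1$-volume $\le\nu_1$. In particular $\nu_1\ge0$, so Step~\ref{Step:LocalMincut3} is not skipped, and condition \ref{Condition:LocalMincutCorrectness} holds for $(G_1,\ldots,\nu_1,\lambda)$. By induction the recursive call returns some $(L_1,R_1)$ satisfying the guarantee in $G_1$, and we must transfer it back. If it has size $<\lambda$ or $A_1$-volume $>\nu_1$, the back-projection has size $<\lambda$ or $A$-volume $>\nu$, by \Cref{ob:project1}. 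Otherwise every separator strictly beyond $(L_1,R_1)$ in $G_1$ with volume $\le\nu_1$ has size $\ge\lambda+1$. Any $(L',R')\succ$ the back-projection lies in ${\cal S}_1$, because it contains $L\setminus R\cup\{x\}$ on its left side. It therefore projects to a separator strictly beyond $(L_1,R_1)$ with matching volume, so the bijection transfers the guarantee.

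If instead $x\notin L^{+}\setminus R^{+}$, note first that the returned separator may still come from Step~\ref{Step:LocalMincut3}. That case is already handled, because any non-$\bot$ output of Step~\ref{Step:LocalMincut3} satisfies the guarantee by induction. This requires applying the inductive hypothesis only when its precondition holds; when it fails, the recursive call may return an arbitrary separator. To handle that, use the weaker observation that the back-projected output always lies in ${\cal S}_1$. The sub-case must then be argued so that either its output is fine or no bad $(L^+,R^+)$ exists. So assume Step~\ref{Step:LocalMincut3} returned $\bot$ or was skipped.

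\textbf{The expected obstacle.} The key sub-claim is that $x\in L^{+}\cap R^{+}$. I expect the proof to need this uncrossing step, done via submodularity of vertex cuts. Suppose $x\in R^{+}\setminus L^{+}$. Combine $(L,R)$ and $(L^+,R^+)$ into the separator with left part $(L\setminus R)\cup(L^+\setminus R^+)\cup\{x\}$ and compare sizes. Alternatively, note that $x$ lies in $N(L\setminus R)\subseteq N(L^+\setminus R^+)\cup (L^+\setminus R^+)$. The neighbourhood identity $N(L^+\setminus R^+)=L^+\cap R^+$ then forces $x\in L^+\cap R^+$ directly, which is simpler.

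Once $x\in L^{+}\cap R^{+}$, the separator $(L^+,R^+)$ lies in ${\cal S}_2$. It projects by \Cref{ob:project2} to a separator in $G_2$ of size $\le\lambda-1$ and $A_2$-volume $\le\nu_2$, so the precondition holds. If $\lambda=1$, this projection would be a size-$0$ separator strictly extending $L\setminus R$. That is impossible by the neighbourhood argument, since $L^+\cap R^+=\{x\}$ would force $L^+\setminus R^+=L\setminus R$. Hence returning $(L,R)$ is correct. For $\lambda\ge2$, induction gives an output $(L_2,R_2)$. If it has size $<\lambda-1$ or volume $>\nu_2$, the back-projection has size $<\lambda$ or volume $>\nu$. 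Otherwise, take any $(L',R')\succ$ the back-projection with volume $\le\nu$ and suppose its size is $\le\lambda$. Then $(L',R')\succ(L,R)$, and it is not in ${\cal S}_1$ with a valid projection, since Step~\ref{Step:LocalMincut3} failed. That failure means no separator in ${\cal S}_1$ has volume $\le\nu$ and size $\le\lambda$, by \Cref{lemma:LocalCuts} applied inside the failed call. So by the same argument $x\in L'\cap R'$, and $(L',R')$ projects to a separator beyond $(L_2,R_2)$ of size $\le\lambda-1$. The inductive guarantee forces its size to be $\ge\lambda$, a contradiction.
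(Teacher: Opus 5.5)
Your ingredients are the paper's. You project to $G_1$ and pull the inductive guarantee back through the bijection with ${\cal S}_1$. You use the neighbourhood argument $x\in N(L\setminus R)\subseteq (L^{+}\setminus R^{+})\cup N(L^{+}\setminus R^{+})$ to force $x\in L^{+}\cap R^{+}$, and you treat $\lambda=1$ separately. In the final contradiction for the Step 4 output you use the failure of Step 3 to exclude ${\cal S}_1$, and that contradiction is correct.

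The problem is that you organise the case analysis around a \emph{fixed} violating separator $(L^{+},R^{+})$, and this leaves two cases unproved.
\begin{enumerate}
\item Your sentence ``otherwise returning $(L,R)$ would be fine'' does not describe what the algorithm does. If no violator strictly beyond $(L,R)$ exists, the algorithm may still return a back-projection from Step 3 or from Step 4; the Step 3 recursive call can return a high-volume separator. You never check that output.
\item In the sub-case $x\notin L^{+}\setminus R^{+}$, the recursive instance of Step 3 may violate condition (A) and the call may still return a separator. You only say this ``must then be argued'' and move on by assuming Step 3 failed.
\end{enumerate}

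Both holes close once you finish the observation you started.
\begin{itemize}
\item Every output is $\succeq(L,R)$ and has size $\le\lambda$, by the unconditional output bound established before this lemma (which is also available for recursive calls). A Step 3 output moreover lies in ${\cal S}_1$.
\item For the second hole: if no member of ${\cal S}_1$ has size $\le\lambda$ and $A$-volume $\le\nu$, a Step 3 output must have $A$-volume $>\nu$, so it is a valid return.
\item For the first hole: if no violator strictly beyond $(L,R)$ exists, any output either has the same $L\setminus R$ and inherits the property of $(L,R)$, or lies strictly beyond $(L,R)$ with size $\le\lambda$ and hence has $A$-volume $>\nu$.
\end{itemize}

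The paper avoids this bookkeeping by splitting not on a fixed violator but on whether \emph{some} member of ${\cal S}_1$ has size $\le\lambda$ and $A$-volume $\le\nu$. In the negative case it disposes of a Step 3 output by exactly the volume argument above. It also certifies condition (A) for the Step 4 call using $(L,R)$ itself rather than a violator, so its contradiction argument covers the Step 4 output whether or not a violator exists.
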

\begin{proof}
Consider such an instance $(G,A,s,t,\nu,\lambda)$. We inductively assume that the statement holds for the two recursive instances in Steps \ref{Step:LocalMincut3} and \ref{Step:LocalMincut4}, and now we prove the statement for this instance $(G,A,s,t,\nu,\lambda)$.

First, the algorithm does not return at Step \ref{Step:LocalMincut1}, because there exist $(s,t)$-separators with size $\leq \lambda$ and $A$-volume $\leq \nu$ by Condition \ref{Condition:LocalMincutCorrectness} and because of the guarantee of \Cref{lemma:LocalCuts}. Second, if the algorithm returns at Step \ref{Step:LocalMincut2}, then either the returned $(L,R)$ has size $<\lambda$ or $A$-volume $>\nu$, as desired, or $\lambda = 0$. In the case $\lambda = 0$, $|L \cap R| = 0$, and in this case $(L,R)$ is the unique furthest $(s,t)$ min-cut.  %

From now, the $(L,R)$ returned by $\LocalCut$ has size exactly $\lambda$ and $A$-volume $\leq \nu$, and we will prove that the algorithm returns correctly at either Step \ref{Step:LocalMincut3} or Step \ref{Step:LocalMincut4}. We consider two cases.

\medskip

\noindent{\underline{Case 1.}} Suppose there exists an $(s,t)$-separator $(L^{+},R^{+})\succeq (L,R)$ in $G$ with size $\leq \lambda$ and $A$-volume $\leq \nu$, satisfying that $x\in L^{+}\setminus R^{+}$. By \Cref{ob:project1}, its projected $(s_{1},t)$-separator $(L_{1},R_{1})$ in $G_{1}$ has size $\leq \lambda$ and $A_{1}$-volume $\leq \nu - A(L\setminus R\cup\{x\})$. Hence Condition \ref{Condition:LocalMincutCorrectness} holds for the recursive call $\LocalMincut$ at Step \ref{Step:LocalMincut3}. By the induction hypothesis, it returns an $(s_{1},t)$-separator $(L_{1}^{\star},R_{1}^{\star})$ satisfying either it has size $< \lambda$ or $A_{1}$-volume $> \nu_{1}$, or any $(s_{1},t)$-separator $(L^{+}_{1},R^{+}_{1})\succ (L^{\star}_{1},R^{\star}_{1})$ with $A_{1}$-volume $\leq \nu_{1}$ has size $\geq \lambda +1$. By \Cref{ob:project1}, the back-projected $(s,t)$-separator in $G$ of $(L^{\star}_{1},R^{\star}_{1})$ is what we desire.

\medskip 

\noindent{\underline{Case 2.}} Suppose Case 1 does not hold. Equivalently, all $(s,t)$-separators $(L^{+},R^{+})\succeq (L,R)$ in $G$ with size $\leq \lambda$ and $A$-volume $\leq \nu$ satisfy that $x\in L^{+}\cap R^{+}$. We first get rid of two corner subcases.
\begin{itemize}
\item If the algorithm still returns an $(s,t)$-separator $(L^{\star},R^{\star})$ in $G$ at Step \ref{Step:LocalMincut3}, we know it has size $\leq \lambda$, $(L^{\star},R^{\star})\succeq (L,R)$ and $x\in L^{\star}\setminus R^{\star}$. Since Case 1 does not hold, $(L^{\star},R^{\star})$ has $A$-volume $>\nu$, so it is a correct output.
\item Suppose the algorithm returns $(L,R)$ at Step \ref{Step:LocalMincut4} because $\lambda = 1$. Since $\lambda = 1$, any $(s,t)$-separator $(L^{+},R^{+})\succ (L,R)$ with size $1$ satisfies $x\in L^{+}\setminus R^{+}$. This means there is no $(L^{+},R^{+})\succ (L,R)$ with size $1$ and $A$-volume $\leq\nu$, because otherwise we will go to Case 1. Hence, $(L,R)$ is a feasible output.
\end{itemize}

From now, suppose the algorithm returns at the end of Step \ref{Step:LocalMincut4}.
First, note that $(L,R)$ is an $(s,t)$-separator in $G$ with size $\leq\lambda$ and $A$-volume $\leq \nu$, satisfying that $(L,R)\succeq (L,R)$ and $x\in L\cap R$. So its projected $(s_{2},t)$-separator in $G_{2}$ implies that Condition \ref{Condition:LocalMincutCorrectness} holds for the recursive call $\LocalMincut$ at Step \ref{Step:LocalMincut4}. By the induction hypothesis, the recursive call returns an $(s_{2},t)$-separator $(L^{\star}_{2},R^{\star}_{2})$ satisfying either it has size $<\lambda-1$ or $A_{2}$-volume $> \nu_{2}$, or any $(s_{2},t)$-separator $(L^{+}_{2},R^{+}_{2})\succ (L^{\star}_{2},R^{\star}_{2})$ with $A_{2}$-volume $\leq \nu_{2}$ has size $\geq \lambda$. If the former holds, then its back-projected $(s,t)$-separator $(L^{\star},R^{\star})$ in $G$ has size $<\lambda$ or $A$-volume $>\nu$ by \Cref{ob:project2}, exactly what we want. Now assume the latter holds, and we will prove that any $(s,t)$-separator $(L^{+},R^{+})\succ(L^{\star},R^{\star})$ with $A$-volume $\leq \nu$ has size $\geq \lambda+1$, as desired.

Assume for contradiction that there exists an $(s,t)$-separator $(L^{+},R^{+})\succ (L^{\star},R^{\star})$ with $A$-volume $\leq \nu$ and size $\leq \lambda$. Because Case 1 does not hold, we know $(L^{+},R^{+})\succeq (L,R)$ and $x\in L^{+}\cap R^{+}$, it can be projected onto an $(s_{2},t)$-separator $(L^{+}_{2},R^{+}_{2})\succeq (L^{\star}_{2},R^{\star}_{2})$ in $G_{2}$, which has size $\leq \lambda-1$ and $A_{2}$-volume $\leq \nu_{2}$. However, this contradicts the property of $(L^{\star}_{2},R^{\star}_{2})$.

\end{proof}

\paragraph{The Running Time.} We follow the terms and arguments in the proof of \Cref{claim:LocalMincutRecursiveSteps}. Consider a (non-failed-trial) recursive node in the recursion tree. If it branches into the recursive call at Step \ref{Step:LocalMincut3}, the running time spent on this recursive step is 
\[
\underbrace{O((A(L\setminus R) + \lambda)\lambda^{2}\cdot \log n)}_{\text{$\LocalCut$ at Step \ref{Step:LocalMincut1}}} + \underbrace{O(\lambda + A(L\setminus R\cup \{x\}))}_{\text{Adjacency list of $G_{1}$}}\leq O((A(L\setminus R\cup\{x\})\cdot\lambda^{2} + \lambda^{3})\cdot \log n).
\]
If it branches into the recursive call at Step \ref{Step:LocalMincut4}, the running time of this step is
\[
\underbrace{O((A(L\setminus R)+\lambda)\lambda^{2}\cdot \log n)}_{\text{$\LocalCut$ at Step \ref{Step:LocalMincut1}}} + \underbrace{O(\nu\lambda^{2}\cdot\log n)}_{\text{The failed trial}} + \underbrace{O(\lambda + A(L\setminus R))}_{\text{Adjacency list of $G_{2}$}}\leq O((\nu\lambda^{2}+\lambda^{3})\cdot \log n).
\]
Because every Step-\ref{Step:LocalMincut3} branching reduces the volume parameter by $A((L\setminus R)\cup\{x\})$, the summation of the terms $A(L\setminus R\cup\{x\})$ over all Step-\ref{Step:LocalMincut3} branchings is at most $\nu$. Furthermore, the number of Step-\ref{Step:LocalMincut3} and Step-\ref{Step:LocalMincut4} branchings is at most $\nu$ and $\lambda$ respectively. The total running time is 
\[
O((\nu\lambda^{2} + \nu\lambda^{3})\cdot\log n) + O((\nu\lambda^{2}+\lambda^{3})\cdot \lambda\cdot\log n) = O((\nu\lambda^{3} + \lambda^{4})\cdot \log n).
\]
We replace $\lambda$ by $\lambda + 1$ in the final running time bound to account for the corner case $\lambda = 0$: clearly this case involves only one call to $\LocalCut$.

\subsection{Local Important Separators.}

We prove our local important separator result now.

\LocalImportantSeparator*

\paragraph{The Algorithm.} We describe an algorithm $\LocalImpSep(G,A,s,t,\nu,k,\lambda)$ that receives an additional cut-size lower-bound parameter $0\leq \lambda\leq k+1$.
We will see that $\LocalImpSep(G,A,s,t,\nu,k,1)$ is the algorithm needed by \Cref{lemma:LocalImpSep}. 

The algorithm $\LocalImpSep(G,A,s,t,\nu,k,\lambda)$ is as follows.
\begin{enumerate}
\item\label{Step:LocalImpSep0} 
If $\lambda>k$, return an empty ${\cal S}$. 

Invoke $\LocalMincut(G,A,s,t,\nu,\lambda)$. 
\item\label{Step:LocalImpSep1} If the output of $\LocalMincut(G,A,s,t,\nu,\lambda)$ is the symbol $\bot$, invoke 
\[
\LocalImpSep(G,A,s,t,\nu,k,\lambda+1)
\]
and return its output. 
\item\label{Step:LocalImpSep2} Otherwise, let $(L,R)$ be the $(s,t)$-separator returned by $\LocalMincut(G,A,s,t,\nu,\lambda)$. If $(L,R)$ has size $<\lambda$ or $A$-volume $>\nu$, 
return an empty set. Therefore, from now on $(L,R)$ has size exactly $\lambda$ and $A$-volume $\leq \nu$. If $\lambda = 0$, simply return $\{(L,R)\}$.
\item\label{Step:LocalImpSep4} Let $x$ be the vertex in $L\cap R$ with the smallest id (when defining the cut-advanced graph). 

If $\nu - A(L\setminus R\cup\{x\})<0$, skip this step. 

Otherwise, let $G_{1},A_{1},s_{1},t$ be the cut-advanced graph and its corresponding node-weighting, source vertex and sink vertex. Let $\nu_{1} =\nu - A(L\setminus R\cup\{x\})$ and invoke
\[
\LocalImpSep(G_{1},A_{1},s_{1},t,\nu_{1},k,\lambda+1).
\]
Let ${\cal S}_{1}$ denote the return of this recursive call.

\item\label{Step:LocalImpSep5} Let $G_{2},A_{2},s_{2},t$ be the cut-vertex-fixed graph and its corresponding node-weighting, source vertex and sink vertex. Let $\nu_{2} = \nu - A(L\setminus R)$ and invoke 
\[
\LocalImpSep(G_{2},A_{2},s_{2},t,\nu_{2},k-1,\lambda-1).
\]
Let ${\cal S}_{2}$ denote the return of this recursive call.

\item \label{Step:LocalImpSep6} Let ${\cal S}$ collect $(s,t)$-separators in $G$ back-projected from separators in both ${\cal S}_{1}$ and ${\cal S}_{2}$. Return ${\cal S}$.

\end{enumerate}

\paragraph{Correctness.} It is relatively easy to see that all $(s,t)$-separators in the returned ${\cal S}$ have size $\leq k$ and $A$-volume $\leq k\nu$, which follows from the guarantee of \Cref{lemma:ImpSepCorrectness} together with \Cref{ob:project1} and \Cref{ob:project2} on the back-projection of separators. 

It remains to show that ${\cal S}$ includes all important $(s,t)$-separators with size $\leq k$ and $A$-volume $\leq\nu$ with high probability. For simplicity, we assume that, when running $\LocalImpSep(G,A,s,t,\nu,k,\lambda)$, the high-probability event in \Cref{lemma:LocalMincuts} happens for all calls to the $\LocalMincut$ subroutine. Since the number of calls to $\LocalMincut$ is at most $O(4^{k})$ by \Cref{claim:LocalImpSepStepNumber}, this assumption holds with high probability (after we boost the correct probability of \Cref{lemma:LocalMincuts} by paying an extra $k$ factor in its running time). Therefore, it suffices to prove \Cref{lemma:ImpSepCorrectness} under this assumption.

\begin{claim}
\label{claim:LocalImpSepStepNumber}
When running $\LocalImpSep$ for any input instance $(G,A,s,t,\nu,k,\lambda)$, the number of calls to $\LocalMincut$ is $2\cdot 4^{k}$.
\end{claim}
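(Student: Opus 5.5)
We would prove the bound by induction on a potential that strictly decreases along every recursive call. The natural candidate is $\Phi = 2k - \lambda$, adjusted to handle the $\lambda = k+1$ base case. The target statement is that the number of calls to $\LocalMincut$ made by $\LocalImpSep(G,A,s,t,\nu,k,\lambda)$ is at most $f(2k-\lambda)$ for some function $f$ with $f(2k-\lambda)\le 2\cdot 4^{k}$ whenever $\lambda\ge 0$. Here "is $2\cdot 4^k$" is read as an upper bound.

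We first record how the potential changes in each branch of the algorithm.
\begin{itemize}
\item Step~\ref{Step:LocalImpSep0} with $\lambda>k$ makes no calls, so the base case holds trivially.
\item Otherwise exactly one $\LocalMincut$ call is made, followed by one of three outcomes.
\item In Step~\ref{Step:LocalImpSep1}, there is a single recursive call with $(k,\lambda+1)$, so $2k-\lambda$ drops by $1$.
\item In Step~\ref{Step:LocalImpSep2}, no further calls are made.
\item In Steps~\ref{Step:LocalImpSep4}--\ref{Step:LocalImpSep5}, there are two recursive calls, with parameters $(k,\lambda+1)$ and $(k-1,\lambda-1)$. Each decreases $2k-\lambda$ by exactly $1$.
\end{itemize}
Hence, writing $f(p)$ for the maximum number of $\LocalMincut$ calls over instances with $2k-\lambda=p$ (and $\lambda\le k$), we get $f(p)\le 1+2f(p-1)$. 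This gives $f(p)\le 2^{p+1}-1$.

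The base case needs care. When $\lambda\le k$ we have $p=2k-\lambda\ge k\ge 0$. A recursive child with $\lambda' > k'$ contributes $0$, and $p$ never goes negative while $\lambda\le k$, so we define $f(p)=0$ for the terminated instances. The recursion $f(p)\le 1+2f(p-1)$ then holds for all $p\ge 0$, with $f(-1)=0$, i.e.\ $f(0)\le 1$. Indeed, $p=0$ means $\lambda=2k$, which together with $\lambda\le k$ forces $k=0$ and $\lambda=0$, and then Step~\ref{Step:LocalImpSep2} returns immediately after one call. With $\lambda\ge 0$ we obtain $f(2k-\lambda)\le 2^{2k+1}-1< 2\cdot 4^{k}$, which is the claimed bound.

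The main point to check is that $k$ and $\lambda$ stay in range after Step~\ref{Step:LocalImpSep5}. The child call has $\lambda-1\ge 0$ because $\lambda\ge 1$ there, since $\lambda=0$ returns at Step~\ref{Step:LocalImpSep2}. It also has $k-1\ge 0$: we reach this step only when $\lambda\le k$ and $\lambda\ge1$, so $k\ge 1$. The invariant $0\le\lambda\le k+1$ is therefore preserved, and nonnegativity of $\lambda$ is what makes $2k-\lambda\le 2k$ in the final estimate. Beyond this invariant the argument is a routine counting induction.
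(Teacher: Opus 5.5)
Your argument is correct and is the paper's own proof made explicit: each node with $\lambda\le k$ makes exactly one $\LocalMincut$ call and at most two recursive calls, and every recursive call lowers $2k-\lambda$ by exactly one, so $\lambda\ge 0$ caps the depth at $2k$; your recurrence $f(p)\le 1+2f(p-1)$ turns this into the bound $2^{2k+1}-1<2\cdot 4^{k}$. One small slip, which does not affect the count: at $k=\lambda=0$ the node may also exit through Step~\ref{Step:LocalImpSep1} (when $\LocalMincut$ returns $\bot$) rather than Step~\ref{Step:LocalImpSep2}, but that child then terminates at Step~\ref{Step:LocalImpSep0}, so $f(0)\le 1$ still holds, exactly as your recurrence with $f(-1)=0$ already gives.
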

\begin{proof}
Note that in the recursion tree, each recursive node will either have a single branch at Step \ref{Step:LocalImpSep1} or two branches at Steps \ref{Step:LocalImpSep4} and \ref{Step:LocalImpSep5}. Every recursive call will reduce the quantity of $2k-\lambda$ by at least $1$. Since $\lambda \geq 0$, the recursion depth is at most $2k$ and the total number of recursive nodes is $2 \cdot 4^{k}$. 
\end{proof}

\begin{lemma}
\label{lemma:ImpSepCorrectness}
For each input instance $(G,A,s,t,\nu,k,\lambda)$, if 
\begin{enumerate}[label=(\Alph*)]
\item\label{Condition:ImpSepCorrectness} the graph $G$ has no $(s,t)$-separator with size $<\lambda$ and $A$-volume $\leq\nu$,
\end{enumerate}
then the returned ${\cal S}$ includes all important $(s,t)$-separators with size $\leq k$ and $A$-volume $\leq \nu$.
\end{lemma}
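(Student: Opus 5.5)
We will prove the lemma by induction on the potential $2k-\lambda$, which strictly decreases along every recursive call (as in \Cref{claim:LocalImpSepStepNumber}). Throughout, we assume the high-probability guarantee of \Cref{lemma:LocalMincuts} holds for every call. Fix an instance satisfying Condition \ref{Condition:ImpSepCorrectness}, and fix an important $(s,t)$-separator $(L^\circ,R^\circ)$ of size $\leq k$ and $A$-volume $\leq\nu$. If $\lambda>k$, then Condition \ref{Condition:ImpSepCorrectness} forbids any separator of size $\leq k$ and $A$-volume $\leq \nu$, so the empty output is correct.

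\textbf{Step \ref{Step:LocalImpSep1}.} Suppose $\LocalMincut$ returns $\bot$. By the guarantee of \Cref{lemma:LocalMincuts}, there is no separator of $A$-volume $\leq\nu$ and size $\leq\lambda$. Combined with Condition \ref{Condition:ImpSepCorrectness}, this yields Condition \ref{Condition:ImpSepCorrectness} for $\lambda+1$, and induction applies.

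\textbf{Step \ref{Step:LocalImpSep2}.} If the returned $(L,R)$ has size $<\lambda$ or $A$-volume $>\nu$, the contrapositive of \Cref{lemma:LocalMincuts} shows that no separator of volume $\leq\nu$ and size $\leq\lambda$ exists. Hence Condition \ref{Condition:ImpSepCorrectness} upgrades to $\lambda+1$. The empty output is then justified only if the promise fails; but \Cref{lemma:LocalMincuts} allows this output even when the promise holds, so this case needs care. I would rule it out by noting that \LocalMincut{} only returns such separators in Step~\ref{Step:LocalMincut2}, or via back-projection in the ``Case 2'' corner, and arguing these cannot occur under the promise. Failing that, I would accept it as a subtle point to handle. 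If $\lambda=0$, the cut is empty, and the only important separator is the one given by the component of $s$.

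\textbf{Main case.} Now $(L,R)$ has size exactly $\lambda$, volume $\leq\nu$, and no separator $\succ(L,R)$ of volume $\leq\nu$ has size $\leq\lambda$. The key submodularity step is to show $(L^\circ,R^\circ)\succeq(L,R)$. Take $S=L\setminus R$ and $S^\circ=L^\circ\setminus R^\circ$, and let $S'$ be the connected component of $s$ in $G[S\cup S^\circ]$, forming the separator $(S'\cup N(S'),\ V\setminus S')$. Vertex-cut submodularity gives
\[
|N(S\cup S^\circ)|+|N(S\cap S^\circ)|\leq |N(S)|+|N(S^\circ)|,
\]
and passing to the connected component only decreases the neighborhood. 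The intersection side (the component of $s$ in $G[S\cap S^\circ]$) has volume $\leq \nu$. By Condition \ref{Condition:ImpSepCorrectness} its size is $\geq\lambda$, so $|N(S')|\leq|N(S^\circ)|$. Importance of $(L^\circ,R^\circ)$ then forces $S'=S^\circ$, i.e.\ $S\subseteq S^\circ$. A volume issue arises, since the union may exceed $\nu$; but $S'=S^\circ$ is what we conclude, and importance uses only the size comparison, so no volume bound on $S'$ is needed.

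\textbf{Branching on $x$.} The vertex $x\in N(S)$ satisfies either $x\in S^\circ$ or $x\in L^\circ\cap R^\circ$, since $x$ is adjacent to $S\subseteq S^\circ$.

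In the first case, $(L^\circ,R^\circ)\in{\cal S}_1$, and it projects to $G_1$ with the same size and $A_1$-volume $\leq\nu_1$ by \Cref{ob:project1}; in particular $\nu_1\geq 0$, so Step \ref{Step:LocalImpSep4} is not skipped. The projection remains important, because separators in $G_1$ back-project bijectively while preserving $\succ$. For Condition \ref{Condition:ImpSepCorrectness} with $\lambda+1$ in $G_1$: any separator of size $\leq\lambda$ and volume $\leq\nu_1$ back-projects to a separator $\succ(L,R)$ of volume $\leq\nu$ and size $\leq\lambda$, contradicting the furthest-min-cut property.

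In the second case, \Cref{ob:project2} gives a projection of size $\leq k-1$ and volume $\leq\nu_2$, which is important in $G_2$. Condition \ref{Condition:ImpSepCorrectness} with $\lambda-1$ holds because a smaller separator in $G_2$ would back-project to one of size $<\lambda$ and volume $\leq\nu$ in $G$.

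The main obstacle is the uncrossing argument establishing $S\subseteq S^\circ$, in particular keeping connectivity and the condition $N(L\setminus R)=L\cap R$ intact when taking components.
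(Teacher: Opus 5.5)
Your main case is sound and is essentially the paper's argument: the uncrossing that gives $L\setminus R\subseteq L^\circ\setminus R^\circ$, the split on $x$, importance of the projections, and Condition~(A) for both children. Your check for $G_2$, which uses Condition~(A) of $G$ directly, is even slightly simpler than the paper's.

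The genuine gap is the Step~\ref{Step:LocalImpSep2} exit. There \LocalMincut{} returns $(L,R)$ of size $<\lambda$ or $A$-volume $>\nu$, and the algorithm outputs $\emptyset$.
\begin{itemize}
\item Your opening inference there is false. \Cref{lemma:LocalMincuts} explicitly allows this output when a separator of size $\le\lambda$ and volume $\le\nu$ exists, so no contrapositive is available.
\item Your proposed repair, that such outputs cannot occur under the promise, also fails. \LocalCut{} only guarantees $A$-volume $\le\lambda\nu$, and \LocalMincut{} passes such a cut straight through at its Step~\ref{Step:LocalMincut2}.
\item Even when the promise does fail, ``Condition~(A) upgrades to $\lambda+1$'' does not help, because the algorithm does not recurse at this step. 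You would still have to exclude important separators of size in $(\lambda,k]$ and volume $\le\nu$.
\end{itemize}

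The missing observation is that your own uncrossing argument never uses the volume of $S=L\setminus R$. It uses only $|N(S)|\le\lambda$, together with Condition~(A) applied to the $s$-component of $S\cap S^\circ$, whose volume is at most $A(S^\circ)\le\nu$. Hence, under Condition~(A), every important $(L^\circ,R^\circ)$ of volume $\le\nu$ satisfies $(L^\circ,R^\circ)\succeq(L,R)$ for every $(s,t)$-separator $(L,R)$ of size $\le\lambda$, regardless of that separator's volume. This is the paper's \Cref{claim:UseSubmodularity}.

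Now apply it to the cut returned at Step~\ref{Step:LocalImpSep2}. That cut always has size $\le\lambda$, and under Condition~(A) it must have $A$-volume $>\nu$, since size $<\lambda$ forces this. Any important $(L^\circ,R^\circ)$ of volume $\le\nu$ would then contain $L\setminus R$ on its left side and so have volume $>\nu$, a contradiction. Thus $\emptyset$ is the correct output whether or not the promise holds; this is \Cref{claim:Step2Correctness}. With this claim inserted, the rest of your proof goes through.
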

\begin{proof}
Consider such an instance $(G,A,s,t,\nu,k,\lambda)$ satisfying Condition \ref{Condition:ImpSepCorrectness}. We inductively assume that the statement holds for the recursive instances when running $\LocalImpSep(G,A,s,t,\nu,k,\lambda)$. Note that the correctness of the base cases which return at Step \ref{Step:LocalImpSep0} is straightforward. 

Before proceeding, we first introduce the auxiliary \Cref{claim:UseSubmodularity}, which essentially states that under Condition \ref{Condition:ImpSepCorrectness}, every important $(s,t)$-separator (with small $A$-volume) lies beyond any $(s,t)$-separator with small size.

\begin{claim}
\label{claim:UseSubmodularity}
Let $(L,R)$ be an $(s,t)$-separator with size $\leq \lambda$. Then every important $(s,t)$-separator $(L^{\star},R^{\star})$ with $A$-volume $\leq \nu$ satisfies that $(L^{\star},R^{\star})\succeq (L,R)$.
\end{claim}
\begin{proof}
Let $(L_{\cap},R_{\cap})$ be the intersection of $(L,R)$ and $(L^{\star},R^{\star})$, and let $(L_{\cup},R_{\cup})$ be their union. Assume for contradiction that $(L^{\star},R^{\star})\succeq (L,R)$ does not hold. Then, we know $(L_{\cup},R_{\cup})\succ (L^{\star},R^{\star})$. Furthermore, we know $(L_{\cap}, R_{\cap})$ has size $\geq \lambda$, because it has $A$-volume $\leq \nu$ and by Condition \ref{Condition:ImpSepCorrectness}. By submodularity, the size of $(L_{\cup},R_{\cup})$ is at most that of $(L^{\star},R^{\star})$, contradicting that $(L^{\star},R^{\star})$ is important. 
\end{proof}

The following \Cref{claim:Step2Correctness} shows that if the algorithm returns at Step \ref{Step:LocalImpSep2}, then this return is always correct.

\begin{claim}
\label{claim:Step2Correctness}
If the algorithm returns $\emptyset$ at Step \ref{Step:LocalImpSep2}, there are no important $(s,t)$-separators with $A$-volume $\leq \nu$.
\end{claim}
\begin{proof}
Recall that if the algorithm returns $\emptyset$ at Step \ref{Step:LocalImpSep2}, the $(s,t)$-separator $(L,R)$ returned by $\LocalMincut$ in Step \ref{Step:LocalImpSep0} satisfies
(1) it has size $\leq \lambda$ (guaranteed by \Cref{lemma:LocalMincuts}), and
(2) it has either size $<\lambda$ or $A$-volume $>\nu$ (guaranteed by Step \ref{Step:LocalImpSep2}).
Condition \ref{Condition:ImpSepCorrectness} says that there is no $(s,t)$-separator with size $<\lambda$ and $A$-volume $\leq \nu$. Hence $(L,R)$ always has size $\leq \lambda$ and $A$-volume $>\nu$. %
Now, by \Cref{claim:UseSubmodularity}, any important $(s,t)$-separator $(L^{\star},R^{\star})$ with $A$-volume $\leq \nu$ satisfies that $(L^{\star},R^{\star})\succeq (L,R)$. However, $(L^{\star},R^{\star})\succeq (L,R)$ implies that $(L^{\star},R^{\star})$ has $A$-volume $>\nu$, meaning that such $(L^{\star},R^{\star})$ does not exist.

\end{proof}
Next, if the algorithm returned $(L,R)$ at Step~\ref{Step:LocalImpSep2}, it is clearly the unique $s$-$t$ important separator, since $|L \cap R| = 0$.

We are now ready to begin our main argument. We consider the following two cases. The first case only branches into the recursive call in Step \ref{Step:LocalImpSep1}, while the second and more substantial case branches into two recursive calls in both Steps \ref{Step:LocalImpSep4} and \ref{Step:LocalImpSep5}.

\medskip

\noindent\underline{Case 1.} Suppose there is no $(s,t)$-separator with size $\leq \lambda$ and $A$-volume $\leq \nu$. If the algorithm returns at Step \ref{Step:LocalImpSep1} with the output of $\LocalImpSep(G,A,s,t,\nu,k,\lambda+1)$, then this output is exactly what we want because of the induction hypothesis and because Condition \ref{Condition:ImpSepCorrectness} holds for this recursive instance (by the assumption of Case 1).
Otherwise, the algorithm must return nothing at Step \ref{Step:LocalImpSep2} (by Condition \ref{Condition:ImpSepCorrectness} of the current instance), which is feasible by \Cref{claim:Step2Correctness}.

\medskip

\noindent\underline{Case 2.} Suppose there is an $(s,t)$-separator with size $\leq \lambda$ and $A$-volume $\leq \nu$. Then $\LocalMincut(G,A,s,t,\nu,\lambda)$ in Step \ref{Step:LocalImpSep0} returns an $(s,t)$-separator $(L,R)$ satisfying either
\begin{enumerate}
\item $(L,R)$ has size $<\lambda$ or $A$-volume $>\nu$, or 
\item\label{Item:CorrectProperty}
any $(s,t)$-separator $(L^{+},R^{+})\succ (L,R)$ with $A$-volume $\leq \nu$ has size at least $\lambda+1$. 
\end{enumerate}
We further assume that the algorithm does not return $\emptyset$ at Step \ref{Step:LocalImpSep2} (otherwise, the correctness is shown by \Cref{claim:Step2Correctness}), so $(L,R)$ has size exactly $\lambda$ and volume $\leq \nu$ and Property \ref{Item:CorrectProperty} must hold. If $\lambda = 0$, since $|L \cap R| = 0$, it is the unique $s$-$t$ important separator and hence we return exactly this one separator. Thus we now assume $\lambda > 0$. Now by \Cref{claim:UseSubmodularity}, every important $(s,t)$-separator $(L^{\star},R^{\star})$ with $A$-volume $\leq \nu$ satisfies that $(L^{\star},R^{\star})\succeq (L,R)$.

We classify the important $(s,t)$-separators $(L^{\star},R^{\star})$ with size $\leq k$ and $A$-volume $\leq \nu$ into two classes ${\cal S}^{\star}_{1}$ and ${\cal S}^{\star}_{2}$. The former includes all such $(L^{\star},R^{\star})$ with $x\in L^{\star}\setminus R^{\star}$, while the latter includes the rest (satisfying that $L^{\star}\cap R^{\star}\ni x$). The following \Cref{ob:projection} is an immediate consequence of our discussion on the projection of separators.

\begin{observation}
\label{ob:projection}
For each $i\in\{1,2\}$, the $(s_{i},t)$-separators projected from separators in ${\cal S}^{\star}_{i}$ are important $(s_{i},t)$-separators in $G_{i}$ with size $\leq k-i+1$ and $A_{i}$-volume $\leq \nu_{i}$.
\end{observation}

With \Cref{ob:projection}, we hope that the recursive calls in Steps \ref{Step:LocalImpSep4} and \ref{Step:LocalImpSep5} will return the small-volume important $(s_{1},t)$- and $(s_{2},t)$-separators in $G_{1}$ and $G_{2}$ respectively, and then the back-projection in Step \ref{Step:LocalImpSep6} will give us all small-volume important $(s,t)$-separators in $G$ as desired. The correctness of these two recursive calls is guaranteed by the induction hypothesis, as long as the recursive instances satisfy Condition \ref{Condition:ImpSepCorrectness}, i.e., the following \Cref{claim:InductionCondition}.

\begin{claim}
\label{claim:InductionCondition}
The graph $G_{1}$ has no $(s_{1},t)$-separator $(L_{1},R_{1})$ with size $<\lambda+1$ and $A_{1}$-volume $\leq \nu_{1}$. The graph $G_{2}$ has no $(s_{2},t)$-separator $(L_{2},R_{2})$ with size $<\lambda-1$ and $A_{2}$-volume $\leq\nu_{2}$.
\end{claim}
\begin{proof}
We first prove the statement for graph $G_{1}$. Assume for contradiction that there exists such an $(L_{1},R_{1})$. Let $(L^{+},R^{+})$ be the $(s,t)$-separator in $G$ from back-projecting $(L_{1},R_{1})$. By \Cref{ob:project1}, $(L^{+},R^{+})$ has size $<\lambda+1$ and $A$-volume $\leq\nu$. Because $(L^{+},R^{+})\succ (L,R)$, the existence of $(L^{+},R^{+})$ contradicts Property \ref{Item:CorrectProperty}.

For the statement for graph $G_{2}$, we again assume for contradiction that such an $(L_{2},R_{2})$ exists and consider its back-projection $(L^{+},R^{+})$ onto $G$. Similarly by \Cref{ob:project2}, we have $(L^{+},R^{+})$ has size $<\lambda$ and $A$-volume $\leq \nu$, and also $(L^{+},R^{+})\succeq (L,R)$. By Property \ref{Item:CorrectProperty} and the fact that $(L,R)$ has size exactly $\lambda$, any $(s,t)$-separator $(L^{+},R^{+})\succeq (L,R)$ with $A$-volume $\leq \nu$ has size at least $\lambda$, contradicting the existence of $(L^{+},R^{+})$.
\end{proof}
This concludes the proof of \Cref{lemma:ImpSepCorrectness}.
\end{proof}

\paragraph{The Running Time.} Consider a node in the recursion tree. Its running time, excluding the time spent in recursive calls and Step \ref{Step:LocalImpSep6}, can be bounded by
\[
\underbrace{O(\nu\cdot\poly(k)\cdot \log n)}_{\text{$\LocalMincut$ at Step \ref{Step:LocalImpSep0}}} + \underbrace{O(k + A(L\setminus R\cup\{x\}))}_{\text{Adjacency List of $G_{1}$}} + \underbrace{O(k + A(L\setminus R))}_{\text{Adjacency List of $G_{2}$}}\leq O(\nu\cdot\poly(k)\cdot \log n).
\]
By \Cref{claim:LocalImpSepStepNumber}, there are at most $O(4^{k})$ nodes in the recursion tree. Therefore, the total running time excluding Step \ref{Step:LocalImpSep6} is $O(\nu\cdot\poly(k)\cdot 4^{k}\cdot\log n)$. For the total running time of Step \ref{Step:LocalImpSep6} summed over all nodes, observe that each leaf node will generate at most one separator, contributing to at most $2k$ nodes, so the total time of Step \ref{Step:LocalImpSep6} is $O(k\cdot 4^{k}\cdot\nu)$ (note that back-projecting a separator simply takes $O(\nu+k)$ time, since we represent a separator $(L,R)$ by storing $L\setminus R$ and $L\cap R$).

%% file: connected_witness_elimination.tex
\section{Connected Witness Elimination.}\label{sec:connectedwitness}

The broad goal of this section, given a graph $G$, set $X$, and parameters $v,q,k$, is to eliminate all $(X,v,q,k)$ connected witnesses, as discussed in the overview.  We will use the local algorithm from~\Cref{sec:local} for this purpose, and the main theorem of this section is \Cref{thm:connected}. We start with a few definitions and useful lemmas before moving to the main theorem.

\begin{definition}[Witnesses and Connected Witnesses]%
\label{def:witness}%
Let $G$ be an undirected graph. Consider a vertex $v\in V(G)$, a vertex set $X\subseteq V(G)\setminus\{v\}$, and two parameters $q,k$.
A separator $(L,R)$ in $G$ is called a \emph{$(X,v,q,k)$-witness}, if 
\[
X\subseteq R,\qquad v\in L\setminus R,\qquad |L|\geq q+1,\text{and}\qquad |L\cap R|\leq k.
\]
Further consider a terminal set $T\subseteq V(G)$. 
A separator $(L,R)$ in $G$ is called an $(X,v,q,k)$-witness for $T$ in $G$, if 
\[
X\subseteq R,\qquad v\in L\setminus R,\qquad |L\cap T|\geq q+1,\text{and}\qquad |L\cap R|\leq k.
\]
A witness $(L,R)$ defined above is \emph{connected} if $G[L\setminus R]$ is connected and $N_G(L\setminus R)=L\cap R$.
\end{definition}

\begin{definition}[Volume]%
Let $G$ be an undirected graph. The \emph{volume} of a set $S \subseteq V(G)$ is the total degree of all the vertices in $S$, denoted by $\deg_{G}(S)$. 
The volume of a witness $(L, R)$ defined in \Cref{def:witness} is $\deg_G(L \setminus R)$. The subscript $G$ is omitted when $G$ is clear from the context. 

\end{definition}

Using~\Cref{thm:breakable}, we can find an $(X,v,q,k)$ connected witness of volume at most $\nu$, if one exists. 
This is formalized by the following \Cref{lemma:witness}.

\begin{lemma}\label{lemma:witness}
Let $G$ be an undirected graph. Given a vertex subset $X\subseteq V(G)$, a vertex $v\in V(G)\setminus X$, parameters $q,k$ and a volume parameter $\nu'$ satisfying that $|X|\geq k+1$, there is an algorithm that outputs either
\begin{itemize}
\item a connected $(X,v,q,k)$-witness of volume at most $\nu'$, or
\item a symbol $\bot$,
\end{itemize}
with the following guarantee: if (A) every connected $(X,v,q,k)$-witness has volume at most $\nu'$, then with high probability, the output is such a witness (if one exists) with maximum possible volume.
The algorithm runs in $2^{O(k)}\cdot \nu'$ time.

\end{lemma}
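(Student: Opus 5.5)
We reduce \Cref{lemma:witness} to \Cref{lemma:LocalImpSep} by building a single-sink instance. Form $G'$ from $G$ by contracting $X$ into one sink vertex $t$ (keeping simplicity), with source $s=v$. Set the node-weighting $A(u)=\deg_G(u)$ for $u\neq v$, and $A(v)=0$. Contraction only merges edges, so $G$'s adjacency lists give an $A$-respecting representation of $G'$. We can afford to build it lazily: the local algorithm only explores the $s$-side, so we mark the vertices of $X$ in an indicator array and treat every edge into $X$ as an edge to $t$. Vertex cuts in $G'$ separating $v$ from $t$ correspond to separators $(L,R)$ of $G$ with $v\in L\setminus R$ and $X\subseteq R\setminus L$. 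Since $|X|\ge k+1$ and the cut has size $\le k$, some vertex of $X$ lies in $R\setminus L$, so nothing is lost by forbidding $X$ from the cut.

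The key structural step is the following claim. If a connected $(X,v,q,k)$-witness $(L,R)$ exists, let $(L^\star,R^\star)$ be a connected important $(v,t)$-separator with $(L^\star,R^\star)\succeq(L,R)$ and $|L^\star\cap R^\star|\le|L\cap R|$; take one that is maximal under $\succeq$ among these.
\begin{itemize}
\item $(L^\star,R^\star)$ is again a connected $(X,v,q,k)$-witness. Indeed $|L^\star|=|L^\star\setminus R^\star|+|L^\star\cap R^\star|$, so we must show $|L^\star|\ge|L|$. Use the standard pushing argument: every vertex of $L\cap R$ either stays in the cut or moves into $L^\star\setminus R^\star$, so $L\subseteq L^\star$. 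This is where the ``connected'' normalization $N(L\setminus R)=L\cap R$ is used: the cut vertices are all adjacent to $L\setminus R\subseteq L^\star\setminus R^\star$, hence lie in $L^\star$.
\item $(L^\star,R^\star)$ has the largest volume among such witnesses.
\end{itemize}
By hypothesis (A), every connected witness has volume $\le\nu'$, so $(L^\star,R^\star)$ has $A$-volume $\le\nu'$. Note that $A$-volume differs from the degree volume only by $\deg(v)$; I would simply run with volume parameter $\nu'$ and check true volume afterwards.

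The algorithm then runs \Cref{lemma:LocalImpSep} on $(G',A,v,t,\nu',k)$, obtaining at most $4^k$ separators, each of size $\le k$ and volume $\le k\nu'$. For each one, it expands $t$ back to $X$ and checks in $O(k\nu')$ time whether it is a connected $(X,v,q,k)$-witness of volume $\le\nu'$. The check tests $|L|\ge q+1$, connectivity of $G[L\setminus R]$ by BFS inside $L\setminus R$, and $N(L\setminus R)=L\cap R$. It returns the valid one of maximum volume, or $\bot$ if none is valid. Every output is a genuine witness of volume $\le\nu'$, unconditionally. Under (A), with high probability the list contains all important separators of volume $\le\nu'$, and hence contains $(L^\star,R^\star)$.

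It remains to show the maximum-volume witness appears in the list. Take a maximum-volume connected witness $W$ and push it to an important separator $W^\star\succeq W$ as above. By the first bullet, $W^\star$ is still a witness, and its volume is at least that of $W$. Hence $W^\star$ is also maximum and is in the list. The running time is $O(4^k\poly(k)\nu'\log n)+O(4^k k\nu')$, which is $2^{O(k)}\nu'$ up to the log factor (absorbed, as the paper does elsewhere).

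The main obstacle I expect is the pushing step. We must verify that maximal $\succeq$-extension under the paper's connected-separator notion of importance preserves $|L|\ge q+1$, and that such an important extension exists and is connected. I would first prove $L\subseteq L^\star$ directly from $L\setminus R\subseteq L^\star\setminus R^\star$ and $N(L\setminus R)=L\cap R$. For existence, take the $\succeq$-maximal element among connected separators dominating $(L,R)$ with size $\le|L\cap R|$. Restricting to the component of $v$ in $G-(L^\star\cap R^\star)$ and pruning non-adjacent cut vertices only shrinks the cut and keeps $L\setminus R$ inside.
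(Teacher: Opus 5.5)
Your overall plan matches the paper's: reduce to local important separators, push a witness to an important separator, bound its volume by (A), then filter and return the maximum-volume valid one. The pushing step is also right: $N(L\setminus R)=L\cap R$ gives $L\subseteq L^\star$. However, the reduction step has a genuine gap.

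\textbf{The gap.} Contracting $X$ into the sink $t$ forbids vertices of $X$ from lying in the cut. An $(X,v,q,k)$-witness only requires $X\subseteq R$, so $X\cap L\cap R\neq\emptyset$ is allowed. Your justification, that $|X|\ge k+1$ forces some vertex of $X$ into $R\setminus L$, only shows that not \emph{all} of $X$ is in the cut, not that \emph{none} of it is.

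For a connected witness, every $x\in X\cap L\cap R$ is adjacent to $L\setminus R$, because $N(L\setminus R)=L\cap R$. After contraction, $L\setminus R$ is therefore adjacent to $t$, and the witness has no counterpart among $(v,t)$-separators of $G'$.

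Concretely, suppose $v$ is adjacent to some $x\in X$, and take $\nu'\ge 2m$ so that (A) holds trivially. Then $v$ is adjacent to $t$ in $G'$, there is no $(v,t)$-separator at all, and your algorithm outputs $\bot$. Yet $(\{v\}\cup N(v),V(G)\setminus\{v\})$ is a connected $(X,v,q,k)$-witness whenever $q\le\deg(v)\le k$. It is a genuine separator because $|X|\ge k+1>\deg(v)$ leaves some $X$-vertex outside $N(v)$. More generally, the maximum-volume witness may use $X$-vertices in its cut, so even when you do output something, it need not be the maximum.

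\textbf{The fix.} Use the paper's construction: do not contract $X$; instead add a new vertex $t$ adjacent to every vertex of $X$.
\begin{itemize}
\item A $(v,t)$-separator $(L',R')$ of this $G'$ automatically has $X\subseteq R'$, since $X$-vertices are adjacent to $t$ and cannot lie in $L'\setminus R'$. It may still place $X$-vertices in the cut. So every witness $(L,R)$ yields the $(v,t)$-separator $(L,R\cup\{t\})$.
\item This is where $|X|\ge k+1$ is genuinely used. For the important separator $(L'_\star,R'_\star)$, some vertex of $X$ must lie in $R'_\star\setminus L'_\star$. Hence $(L'_\star,R'_\star\setminus\{t\})$ is still a separator of $G$, i.e., $L'_\star\neq V(G)$.
\item Since $X\cap(L'\setminus R')=\emptyset$, volumes measured with $\deg_{G'}$ agree with $\deg_G$ on the source side. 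The extra adjacency entry on each $x\in X$ is harmless if you take $A=\deg_{G'}$.
\end{itemize}
With this change, the rest of your argument goes through unchanged.
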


\begin{proof}
First, we obtain a graph $G'$ by creating an artificial vertex $t$ and adding edges between each $X$-vertex and $t$. 

Our goal is to show the correctness when Condition (A) holds. Fix a connected $(X,v,q,k)$-witness $(L,R)$ in $G$ (if one exists). Because $(L,R\cup\{t\})$ is a connected $(v,t)$-separator in $G'$ and by the definition of important separators, there must be a connected $(v,t)$-separator $(L'_{\star},R'_{\star})$ that is important in $G'$, such that $L\subseteq L'_{\star}$ and $|L'_{\star}\cap R'_{\star}|\leq |L\cap R| \leq k$.

Furthermore, we claim that $(L'_{\star},R'_{\star})$ has volume $\deg_{G'}(L'_{\star}\setminus R'_{\star})\leq \nu'$. To see this, observe that $(L'_{\star},R'_{\star}\setminus\{t\})$ is a connected $(X,v,q,k)$-witness in $G$ (we exploit $|X|\geq k+1$ here to avoid the corner case $L'_{\star} = V(G)$), which has volume $\deg_{G}(L'_{\star}\setminus (R'_{\star}\setminus \{t\}))\leq \nu'$ by Condition (A). Therefore, $\deg_{G'}(L'_{\star}\setminus R'_{\star}) = \deg_{G}(L'_{\star}\setminus (R'_{\star}\setminus \{t\}))\leq \nu'$ and we complete proving the claim.

Now, we run~\Cref{thm:breakable} on $G'$ with source vertex $v$, sink vertex $t$, cut-size parameter $k$, and volume parameter $\nu'$. The output is guaranteed to include $(L'_{\star},R'_{\star})$, and thus we can recover $(L_{\star},R_{\star}):=(L'_{\star},R'_{\star}\setminus\{t\})$, which is a connected $(X,v,q,k)$-witness in $G$, better than $(L,R)$ in the sense that $|L_{\star}\cap R_{\star}|\leq |L\cap R|$ and $L_{\star}\supseteq L$. Finally, since the output of \Cref{thm:breakable} will include all possible $(L'_{\star},R'_{\star})$, we just return the feasible $(L_{\star},R_{\star})$ with the maximum volume.

The running time is dominated by \Cref{thm:breakable}, which is $2^{O(k)}\cdot\nu'$.

\end{proof}

\begin{definition}[Downward well-linked]%
Given a graph $G$, a set $S \subseteq V(G)$ is downward well-linked, if every separator $(L, R)$ in $G[S \cup N_{G}(S)]$ has either $|L \cap R| \geq |L \cap N_G(S)|$ or $|L \cap R| \geq |R \cap N_G(S)|$.

\end{definition}

\begin{definition}
Given a graph $G$, a $(k,\nu)$-net is a set of vertices $Q \subseteq V(G)$ such that for every set of vertices $X$ of size at most $k$, $Q$ intersects every connected component of $G \setminus X$ with volume at least $\nu$.

\end{definition}

\begin{lemma}\label{lemma:sampleset}
Any graph $G$ with $m$ edges and $n$ vertices has a $(k, \nu)$-net of size $O(mk\log n/\nu)$, which can be obtained with high probability by simple random sampling of vertices proportional to their degree.
\end{lemma}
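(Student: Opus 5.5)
We sample each vertex $u$ independently with probability $p_u=\min\{1,\, c\,k\deg(u)\log n/\nu\}$ for a suitable constant $c$, and let $Q$ be the sampled set. The expected size of $Q$ is at most $\sum_u c\,k\deg(u)\log n/\nu = 2c\,mk\log n/\nu$. A Chernoff bound then shows $|Q|=O(mk\log n/\nu)$ with high probability. It remains to show that, with high probability, $Q$ hits every component of volume at least $\nu$ of $G\setminus X$, simultaneously for all $X$ with $|X|\le k$. The obstacle is that there are $\binom{n}{k}$ choices of $X$ and possibly many components, so a direct union bound is too weak. We reduce to a family of only $n^{O(k)}$ "witness" events.

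The key step is to fix a small certificate for each bad event. Take a connected component $C$ of $G\setminus X$ with $\deg(C)\ge\nu$. We may assume each vertex of $C$ has degree below $\nu/(ck\log n)$, since otherwise that vertex is sampled with probability $1$ and there is nothing to prove. We then pick a connected subset $C'\subseteq C$ with $\nu/2\le \deg(C')\le \nu$. To do this, we grow a BFS or spanning-tree piece inside $C$ and stop once the volume first reaches $\nu/2$; each added vertex has degree below $\nu/2$, so the overshoot is small. It is enough to hit $C'$. We then want to encode $C'$ cheaply. Rather than counting connected sets directly, which costs about $\exp(\nu)$, we argue that the event "$Q$ misses $C$" is determined by $X$ together with a single vertex $r\in C$: the component is exactly the component of $r$ in $G\setminus X$.

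The union bound therefore runs over pairs $(X,r)$, giving at most $n^{k+1}$ pairs. For a fixed pair, the component $C$ is fixed, so
\[
\Pr[Q\cap C=\emptyset]=\prod_{u\in C}(1-p_u)\le \exp\bigl(-c\,k\log n\,\deg(C)/\nu\bigr)\le n^{-ck}.
\]
This last inequality uses $\deg(C)\ge\nu$. Choosing $c$ large enough relative to $k+1$, say $c\ge 3$ so the bound is $n^{-3k}$ against $n^{k+1}$ pairs, makes the total failure probability $n^{-\Omega(k)}$, which is high probability.

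So the truncation to $C'$ in the previous paragraph is unnecessary. Each component is determined by $X$ and one representative vertex, so the union bound costs only an extra factor of $n$. The $k\log n$ factor in the sampling rate is exactly what pays for the $n^{k+1}$ terms.

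The only delicate point is the accounting of $p_u$ capped at $1$. If some $u\in C$ has $p_u=1$, then $C$ is hit deterministically. Otherwise the product bound above applies verbatim. Together these give the stated size bound and the net property with high probability.
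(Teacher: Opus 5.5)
Your proof is correct and is essentially the paper's argument. Both union-bound over the at most $n^{k+1}$ pairs of a set $X$ with $|X|\le k$ and a component of $G\setminus X$, and each such component of volume at least $\nu$ is missed with probability $n^{-\Omega(k)}$ because of the $k\log n/\nu$ sampling rate. The only difference is that you sample each vertex independently (Bernoulli), so you need an extra Chernoff bound on $|Q|$, whereas the paper draws a fixed number $100mk\log n/\nu$ of degree-proportional samples.

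The detour through the truncated set $C'$ can be deleted, as can your claim that a direct union bound is too weak. Your final argument is exactly that direct union bound.
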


\begin{proof}
We sample $100mk \log n/\nu$ vertices independently with probability proportional to the degree, and let $Q$ be the sampled vertices. Fix a set $S\subseteq V(G)$ of volume at least $\nu$, the probability that $Q$ hits a vertex in $S$ is at least $1- n^{-5k}$, by the Chernoff bound. Now, observe that the number of sets needed to be hit is at most $n^{k+1}$, since there are at most $n^{k}$ choices of $X$ (recall that $X\subseteq V(G)$ has size $\leq k$), and for each $X$, $G\setminus X$ has at most $n$ components. By the union bound, $Q$ is a $(k,\nu)$-net with probability at least $1-n^{-k}$.

\end{proof}

The following is the main theorem of this section.

\begin{theorem}
\label{thm:connected}
Let $G$ be an undirected graph. Given a vertex set $X\subseteq V(G)$ and two parameters $q,k\geq 1$ such that $|X|\geq k+1$, there is an algorithm that outputs another graph $G_{\final}$ satisfying the following.
\begin{enumerate}
\item\label{item:Connected1} $X\subseteq V(G_{\final})\subseteq V(G)$.
\item\label{item:Connected2} There exists a sequence of graphs $(G^{(0)}:=G, G^{(1)},..., G^{(t)}:=G_{\final})$ satisfying that, for each $0\leq i\leq t-1$, if we let $S_{i}:=V(G^{(i)})\setminus V(G^{(i + 1)})$, then %
\begin{enumerate}
\item $G^{(i+1)}$ is exactly $(G^{(i)}\setminus S_{i})\cup \Clique(N_{G^{(i)}}(S_{i}))$, and
\item $S_{i}$ is downward well-linked in $G^{(i)}$.
\item $|N_{G^{(i)}}(S_i)| \leq k$.
\end{enumerate}

\item\label{item:Connected3} The adhesion of $V(G_{\final})$ in $G$ is at most $k$.
\item\label{item:Connected4} With high probability, there is no connected $(X,v,q,k)$-witness in $G_{\final}$ with volume larger than $200\cdot k^{2}\cdot 8^{k}$, for any vertex $v\in V(G_{\final})$.
\end{enumerate}
The algorithm runs in $\tilde{O}(m\cdot 2^{O(k)})$ time.
\end{theorem}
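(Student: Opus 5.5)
The algorithm is an iterative elimination procedure, organized in geometrically growing volume scales and driven by \Cref{lemma:witness} and the nets of \Cref{lemma:sampleset}. We set $\nu_0 := 200k^2 8^k$ and process scales $\nu_j := \nu_0 2^j$ for $j = 0,1,\dots,\log m$ in decreasing order of $j$. The invariant after handling scale $j$ is that the current graph has no connected $(X,v,q,k)$-witness of volume larger than $\nu_j$.

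At scale $j$, we sample a $(k,\nu_j)$-net $Q_j$ of size $O(mk\log n/\nu_j)$. Any connected witness $(L,R)$ of volume greater than $\nu_j$ has $L\setminus R$ a component of $G\setminus(L\cap R)$ with $|L\cap R|\le k$. Hence $Q_j$ contains a vertex $u \in L\setminus R$, and we only need to search from $u$.

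For each $u\in Q_j$, we call \Cref{lemma:witness} with source $u$ and volume parameter $\nu' = 2\nu_j$. By the invariant from scale $j+1$, condition (A) holds: every connected witness has volume at most $\nu_{j+1}=2\nu_j$. So with high probability we obtain a connected witness $(L,R)$ of maximum volume. If its volume exceeds $\nu_j$, we set $S := L\setminus R$ and replace $G$ by $(G\setminus S)\cup \Clique(L\cap R)$, then repeat from $u$ or move on. This gives items 1, 2(a) and 2(c) directly, since $N(S)=L\cap R$ has size at most $k$ and $X\subseteq R$.

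For item 3, each removed set has neighbourhood of size at most $k$, and cliquing preserves the adhesion bound. By induction, every component of $G\setminus V(G_{\final})$ is a union of removed pieces whose attachment is a subset of some clique of size at most $k$.

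The crux is item 2(b), that $S$ is downward well-linked. I would derive it from the maximality of the chosen witness. Suppose $(L',R')$ is a separation of $G[S\cup N(S)]$ with $|L'\cap R'|<|L'\cap N(S)|$ and $|L'\cap R'|<|R'\cap N(S)|$. Uncrossing then produces either a connected witness strictly containing $S$ (obtained by pushing past one side) with cut size at most $k$, or a witness with smaller cut. The first contradicts maximality of volume, or the important-separator property of the output of \Cref{thm:breakable}, since the returned witness is an important $u$-$X$ separator. I would phrase this as: $(L,R)$ is important, so any separation replacing the $R'$-side portion $R'\cap N(S)$ of $N(S)$ by $L'\cap R'$ gives a further $u$-$X$ separator of no larger size, which is a contradiction. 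I expect this to be the main obstacle. The subtle points are keeping $u$ on the correct side (choose the side containing $u$) and keeping $X$ in $R$.

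For the running time, each call at scale $j$ costs $2^{O(k)}\nu_j$ times $|Q_j|$ calls, so each scale costs $\tilde O(2^{O(k)}m)$, and there are $O(\log m)$ scales.

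Successful eliminations are charged to the deleted volume. Removing $S$ deletes volume at least $\nu_j$, while the added clique edges total at most $k^2$. Here I use $\nu_0 \ge 200k^2 8^k$, so that $\nu_j \gg k^2$ and the total volume shrinks by a constant fraction of what is charged. This makes the total number of successful calls $O(m/\nu_j)$ per scale.

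Re-sampling $Q_j$ after modifications is needed because the graph changes. I would argue instead that the net remains valid: newly formed components of small separators in the modified graph correspond to components in the original graph with volume at least as large, up to clique edges. This correspondence needs care but should follow from $N(S)$ being turned into a clique.

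At scale $0$ the invariant gives item 4.
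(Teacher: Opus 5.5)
The gap is item 2(b), which you correctly flagged as the crux. The set $S=L\setminus R$ of a maximum-volume (hence important) connected witness need not be downward well-linked, and the uncrossing you sketch goes in the wrong direction. Suppose $(A,B)$ is a separation of $G[S\cup N(S)]$ with $|A\cap B|<|A\cap N(S)|$ and $|A\cap B|<|B\cap N(S)|$, and take $u\in A\setminus B$. Replacing $B\cap N(S)$ by $A\cap B$ does give a strictly cheaper $u$-$X$ separator. However, its source side is $(A\setminus B)\cap S\subseteq S$, so it lies \emph{closer} to $u$. Importance only excludes cheap separators whose source side strictly contains $S$, and a separation living inside $G[S\cup N(S)]$ can never push past $N(S)$.

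Here is a concrete counterexample.
Let $S$ consist of two large cliques $P_1,P_2$ connected only through a single vertex $c$.
Join each $P_i$ completely to its own boundary set $N_i$, with $|N_1|=|N_2|=3$.
Join $N_1\cup N_2$ completely to a large clique containing $X$.
With $k=6$ and $u\in P_1$, the maximum-volume connected witness is $(S\cup N_1\cup N_2,\,V(G)\setminus S)$. Yet the separation of $G[S\cup N(S)]$ with $A\cap B=\{c\}$ violates downward well-linkedness. The property cannot be dropped: it is exactly the hypothesis \Cref{lemma:UncrossOneSet} needs to lift the absence of witnesses from $G_{\final}$ back to $G$.

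The missing idea is the trimming subroutine $\WellLink$ of \Cref{lemma:DownwardWellLink}. While a violating separation exists, it replaces $S$ by the heavier of $(A\setminus B)\cap S$ and $(B\setminus A)\cap S$. Each round strictly decreases $|N(S)|$, so there are at most $k$ rounds. Each round retains a constant fraction (at least $1/8$) of the volume as long as $\deg(S)\geq 100k^2$. This is where the factor $8^k$ in the threshold $200k^2 8^k$ comes from; in your argument it plays no role, since you only use $\nu_0\gg k^2$.

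Trimming also destroys a convenient feature of your untrimmed scheme. There, a connected witness of the updated graph meeting $N(S)$ would lift, together with $S$, to a strictly heavier connected witness from $u$. So no new witnesses arise, and condition (A) and the net's hitting property do persist; you asserted this but did not prove it.

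With a trimmed, possibly disconnected $S'\subseteq L\setminus R$, new witnesses can arise through the clique on $N(S')$, and the rest of your plan needs the paper's repairs:
\begin{enumerate}
\item Pulling a new witness back to the phase-start graph uses the well-linkedness of $S'$ (\Cref{lemma:BackwardMapping}).
\item That pull-back costs a factor $k+1$ in volume (\Cref{lemma:volume}), so condition (A) requires $\nu'\approx 2(k+1)\nu$ rather than $2\nu_j$.
\item The hitting property is no longer automatic; it is restored by pushing $N(S')$ and the popped vertex back into $Q$.
\item Each update must be charged to a volume drop of at least $\nu/(2\cdot 8^k)$ rather than $\nu$.
\end{enumerate}
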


In the remainder of this section, we prove \Cref{thm:connected}. %

\paragraph{A Subroutine to Ensure Well-Linkedness.} We need to introduce a subroutine, \Cref{lemma:DownwardWellLink}, that makes a given set downward well-linked.

\begin{lemma}
\label{lemma:DownwardWellLink}
Given an undirected graph $G$ and a vertex set $S\subseteq V(G)$ with $\deg_{G}(S)\geq 100\cdot k^{2}\cdot 8^{k}$ where $k:=|N(S)|$, there is an algorithm that outputs another set $S'\subseteq S$ such that
\begin{itemize}
\item $\deg_{G}(S')\geq \deg_{G}(S)/8^{k}$, $|N(S')|\leq |N(S)|$ and 
\item $S'$ is downward well-linked in $G$. 
\end{itemize}
The algorithm runs in $2^{O(k)}\cdot \nu$ time, where $\nu := |G[S\cup N(S)]|$.
\end{lemma}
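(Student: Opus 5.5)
The plan is to refine $S$ iteratively: start with $S_0:=S$, and while the current set $S_j$ is not downward well-linked, find a violating separation and replace $S_j$ by one of its sides. I will show that each step strictly decreases $|N(S_j)|$ and loses at most a factor $8$ in volume. Hence there are at most $k$ steps and the final volume is at least $\deg_G(S)/8^{k}$. Whenever $|N(S_j)|=0$, the set is trivially downward well-linked, since every $|L\cap N(S_j)|=0$, so the process stops.

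\emph{One refinement step.} Let $H_j:=G[S_j\cup N(S_j)]$ and let $(L,R)$ be a separation of $H_j$ with $|L\cap R|<|L\cap N(S_j)|$ and $|L\cap R|<|R\cap N(S_j)|$. Partition $S_j$ into $P_L:=S_j\cap(L\setminus R)$, $P_R:=S_j\cap(R\setminus L)$ and $P_C:=S_j\cap L\cap R$.

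\begin{itemize}
\item Neighbourhood bound. We have $N_G(P_L)\subseteq (N(S_j)\cap(L\setminus R))\cup(L\cap R)$, because every neighbour of an $S_j$-vertex lies in $S_j\cup N(S_j)$ and there are no $L\setminus R$ to $R\setminus L$ edges. Using $|L\cap R|<|R\cap N(S_j)|$, this gives $|N(P_L)|<|N(S_j)\cap(L\setminus R)|+|N(S_j)\cap R|=|N(S_j)|$. Symmetrically, $|N(P_R)|<|N(S_j)|$.
\item Volume bound. Every edge at a vertex of $P_C$ goes to $P_L$, to $P_R$, to $N(S_j)$ (fewer than $k$ such vertices), or stays inside $L\cap R$ (fewer than $k$ vertices). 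Also $|P_C|<k$. Hence $\deg(P_C)\le \deg(P_L)+\deg(P_R)+2k^2$, so $\deg(S_j)\le 2(\deg(P_L)+\deg(P_R))+2k^2$.
\item Choice of side. Take $S_{j+1}$ to be whichever of $P_L,P_R$ has larger volume. Then $\deg(S_{j+1})\ge(\deg(S_j)-2k^2)/4$.
\end{itemize}

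Inductively, $\deg(S_j)\ge 100k^2\cdot 8^{k-j}$. Therefore $2k^2\le \deg(S_j)/2$, which gives $\deg(S_{j+1})\ge \deg(S_j)/8$. This is where the threshold $100k^2 8^k$ is used. The same inequality shows $S_{j+1}\neq\emptyset$.

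\emph{Finding a violating separation.} I expect this to be the main algorithmic step, but it should be routine FPT. For each of the $3^{|N(S_j)|}\le 3^k$ assignments of $N(S_j)$ to the three parts $A$ (inside $L\setminus R$), $B$ (inside $R\setminus L$) and $C$ (inside the cut), do the following.
\begin{itemize}
\item Compute a minimum vertex cut between $A$ and $B$ in $H_j-C$, where vertices of $A\cup B$ are undeletable.
\item Stop after at most $k$ augmenting paths. Each path costs $O(\nu)$, where $\nu:=|G[S\cup N(S)]|$; since $S_j\subseteq S$, the graph $H_j$ is no larger than $G[S\cup N(S)]$.
\item If the cut $Z$ found satisfies $|Z\cup C|<\min(|A|+|C|,\,|B|+|C|)$, it yields a violating separation, with $L$ consisting of the reachable side together with $Z\cup C$.
\end{itemize}
Conversely, any violating separation is detected under its own assignment. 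Nonemptiness of $L\setminus R$ and $R\setminus L$ is automatic because $A,B\neq\emptyset$ in any violation. Each round therefore costs $3^k\cdot O(k\nu)$. There are at most $k$ rounds, and the final round certifies well-linkedness. The total running time is $2^{O(k)}\nu$, and $|N(S')|\le|N(S)|$ holds by monotonicity of the neighbourhood sizes.
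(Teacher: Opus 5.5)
Your proposal is correct and matches the paper's proof essentially step for step: the same refinement loop keeping the heavier of $(L\setminus R)\cap S$ and $(R\setminus L)\cap S$, the same strict decrease of $|N(S)|$, the same $2k^2$ edge-counting slack, and the same $2^{O(k)}$ enumeration of boundary assignments with capped Ford--Fulkerson. You also spell out two things the paper only asserts, namely the neighbourhood decrease and the inductive maintenance of the volume threshold. One small slip: $N(S_j)$ has at most $k$ vertices, not fewer than $k$. This is harmless, because $|P_C|<k$ still gives the $2k^2$ bound.
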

\begin{proof}
The algorithm is given in \Cref{algo:WellLink}. We make a remark on how to find the desired separator $(A,B)$ in $G[S\cup N(S)]$ for each iteration. To do this, we can first enumerate all possible pairs of $Y_{A}:=N(S)\cap A$ and $Y_{B}:=N(S)\cap B$, and for each pair, run the Ford-Fulkerson algorithm to check if the minimum-size separator $(A^{\star},B^{\star})$ satisfying $N(S)\cap A^{\star} = Y_{A}$ and $N(S)\cap B^{\star} = Y_{B}$ has size less than both $|Y_{A}|$ and $|Y_{B}|$. Note that there are $2^{O(k)}$ pairs of $Y_{A}$ and $Y_{B}$, so it takes $2^{O(k)}\cdot\nu$ time to detect a desired $(A,B)$.

\begin{algorithm}[H]\caption{$\WellLink(S,G)$}
\label{algo:WellLink}
\begin{algorithmic}
\While{{there is a separator $(A,B)$ in $G[S \cup N(S)]$ such that\\ ~~~~~~~~~$|A \cap B| < |A \cap N(S)|$ and $|A \cap B| <|B \cap N(S)|$}}
\If{$\deg_{G}((A \setminus B) \cap S) > \deg_{G}((B \setminus A) \cap S)$}
\State $S \gets (A \setminus B) \cap S$
\Else
\State $S \gets (B \setminus A) \cap S$.
\EndIf
\EndWhile
\State \Return $S':=S$

\end{algorithmic}
\end{algorithm}

\noindent\underline{Running time.} Observe that each iteration will decrease $|N_{G}(S)|$ by at least 1, so the number of iterations is at most $k$ and the total running time is $2^{O(k)}\cdot \nu$.

\medskip

\noindent\underline{Correctness.} First, the stopping condition guarantees that the final $S$ must be downward well-linked in $G$. Also, we have argued that $|N(S')|\leq |N(S)|$. It remains to show that $\deg_G(S')\geq \deg_G(S)/8^{k}$.

It suffices to show that, in each iteration, either $\deg_{G}((A\setminus B)\cap S)\geq \deg_{G}(S)/8$ or $\deg_{G}((B\setminus A)\cap S)\geq \deg_{G}(S)/8$. 
First, observe that every edge incident to $S$ is counted at most twice in $\deg_{G}(S)$. Also, every edge incident to $S$, except those whose endpoints both lie in $A\cap B$ or whose one endpoint lies in $A \cap B$ and the other in $N(S)$, is counted in either $\deg_{G}((A\setminus B)\cap S)$ or $\deg_{G}((B\setminus A)\cap S)$. Note that there are at most $|A\cap B|^{2}\leq k^{2}$ edges with both endpoints lying in $A\cap B$, and at most $|A \cap B||N(S)| \leq k^2$ with one endpoint in $A \cap B$ and the other in $N(S)$ (since $G$ is a simple graph). Therefore,
\[
\frac{\deg_{G}(S)}{2}\leq \deg_{G}((A\setminus B)\cap S) + \deg_{G}((B\setminus A)\cap S) + 2k^{2}
\]
Thus as long as $k^{2}\leq \deg_{G}(S)/100$, we can conclude that either $\deg_{G}((A\setminus B)\cap S)$ or $\deg_{G}((B\setminus A)\cap S)$ is at least $(1/2-1/50)/2\cdot \deg_{G}(S)$ as desired. Repeating the same argument for each of the $k$ iterations finishes the proof.

\end{proof}

\paragraph{The Main Algorithm and Correctness for \Cref{item:Connected1,item:Connected2}.}
The algorithm for \Cref{thm:connected} is described in \Cref{algo:connectedwitness}.

\begin{algorithm}[ht]\caption{Connected witness elimination}\label{algo:connectedwitness}
\begin{algorithmic}[1]
\Require A graph $G$, a set $X \subseteq V(G)$, and two parameters $q,k$ such that $|X| \geq k + 1$
\Ensure A graph $G_{\final}$. %

\For{$\nu = m, m/2, m/4, \ldots, \nu\geq 100\cdot k^{2}\cdot 8^k$}
\State $Q\gets$ a $(k,\nu)$-net of size $O(mk\log n/\nu)$ obtained from \Cref{lemma:sampleset}.
\While{$Q$ is not empty}
\State $v \gets \Pop(Q)$
\State If $v\notin V(G)$ \textbf{ or } $v\in X$, \textbf{continue.}($v$ is not in any relevant witness in this case)
\State Run \Cref{lemma:witness} with $X,v,q,k$ and volume parameter $\nu' = 4\nu k$.%
\If{\Cref{lemma:witness} returns a connected $(X,v,q,k)$-witness $(L,R)$ with volume at least $\nu$}

\State $S \gets \WellLink(L\setminus R, G)$
\State Push vertices
$N_{G}(S)$ to $Q$. Push back $v$ into $Q$.
\State $G \gets (G\setminus S)\cup \Clique(N_{G}(S))$.
\EndIf

\EndWhile
\EndFor
\State Return $G_{\final}:=G$
\end{algorithmic}
\end{algorithm} 

\Cref{item:Connected1,item:Connected2} are straightforward, because in each iteration of the inner loop, $S$ is disjoint from $X$, and $S$ is downward well-linked in the current $G$ (from \Cref{lemma:DownwardWellLink}). Let $G^{(i)}$ denote the graph $G$ after $i$ updates, and let $t$ be the number of graph updates. Then $(G^{(0)},G^{(1)},...,G^{(t)})$ is the graph sequence satisfying \Cref{item:Connected2}. Note that, for each $0\leq i\leq t-1$, $S_{i}:=V(G^{(i)})\setminus V(G^{(i+1)})$ is exactly the $(i+1)$-th $S$ computed by the algorithm. 

\paragraph{Correctness of \Cref{item:Connected3}.} Observe that ${\cal S}^{(i)} = \{S_{i'}\mid 0\leq i'\leq i-1\}$ forms a partition of $V(G)\setminus V(G^{(i)})$. We need the following structural \Cref{lemma:MaximalS}. Intuitively, ${\cal S}^{(i)}$ corresponds to a family of laminar separators in the initial graph $G$, and \Cref{lemma:MaximalS} simply takes the maximal ones.

\begin{lemma}
\label{lemma:MaximalS}
For each $1\leq i\leq t$, there exists a partition ${\cal S}^{(i)}_{\star}$ of $V(G)\setminus V(G^{(i)})$ satisfying the following.
\begin{itemize}
\item ${\cal S}^{(i)}_{\star}$ is a coarsening of ${\cal S}^{(i)}$.
\item For each $S_{\star}\in {\cal S}^{(i)}_{\star}$, $N_{G}(S_{\star})\subseteq V(G^{(i)})$, and $N_{G}(S_{\star}) = N_{G^{(i')}}(S_{i'})$ for some $i'\leq i-1$.
\item $G^{(i)}$ is the same as $G[V(G^{(i)})]\cup\bigcup_{S_{\star}\in {\cal S}^{(i)}_{\star}}\Clique(N_{G}(S_{\star}))$.
\end{itemize}
\end{lemma}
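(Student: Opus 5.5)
We prove the statement by induction on $i$. For $i=1$ we take ${\cal S}^{(1)}_{\star}=\{S_0\}$. All three properties hold immediately: $G^{(0)}=G$, so $N_G(S_0)=N_{G^{(0)}}(S_0)\subseteq V(G^{(1)})$, and $G^{(1)}=G[V(G^{(1)})]\cup\Clique(N_G(S_0))$ by the update rule in \Cref{item:Connected2}.

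For the inductive step, assume ${\cal S}^{(i)}_{\star}$ has been constructed for $G^{(i)}$, and consider the new set $S_i\subseteq V(G^{(i)})$ with $N_{G^{(i)}}(S_i)$. We merge $S_i$ with every old part $S_\star\in{\cal S}^{(i)}_{\star}$ that $S_i$ touches, meaning $N_G(S_\star)\cap S_i\neq\emptyset$. Call the union of $S_i$ with all these parts $S^{\mathrm{new}}_\star$. Define ${\cal S}^{(i+1)}_{\star}$ to be the untouched old parts together with $S^{\mathrm{new}}_\star$. This is a partition of $V(G)\setminus V(G^{(i+1)})$. It coarsens ${\cal S}^{(i+1)}$ because ${\cal S}^{(i)}_{\star}$ coarsens ${\cal S}^{(i)}$ and $S_i$ is contained in a single part.

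The key claim is that $N_G(S^{\mathrm{new}}_\star)=N_{G^{(i)}}(S_i)$. To compute neighbourhoods in $G^{(i)}$, we use the inductive description $G^{(i)}=G[V(G^{(i)})]\cup\bigcup\Clique(N_G(S_\star))$. The edges of $G^{(i)}$ leaving $S_i$ are of two kinds:
\begin{itemize}
\item edges of $G$ leaving $S_i$ inside $V(G^{(i)})$;
\item clique edges coming from some touched $S_\star$, whose endpoints are exactly the vertices of $N_G(S_\star)\setminus S_i$.
\end{itemize}
Hence $N_{G^{(i)}}(S_i)$ equals $N_G(S_i)\cap V(G^{(i)})$ together with the sets $N_G(S_\star)\setminus S_i$ over touched $S_\star$.

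We compare this with $N_G(S^{\mathrm{new}}_\star)$. Any $G$-neighbour of $S^{\mathrm{new}}_\star$ cannot lie in an untouched old part $S'_\star$, because $N_G(S'_\star)\subseteq V(G^{(i)})$ and $S'_\star$ has no $G$-neighbour in $S_i$ or in another old part. So every such neighbour lies in $V(G^{(i)})\setminus S_i=V(G^{(i+1)})$. It is either a $G$-neighbour of $S_i$ or a $G$-neighbour of a touched $S_\star$ outside $S_i$, and these are exactly the two contributions above. The reverse inclusion is immediate.

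This gives the second bullet for the new part, with $i'=i$: $N_G(S^{\mathrm{new}}_\star)=N_{G^{(i)}}(S_i)\subseteq V(G^{(i+1)})$. For the untouched parts the bullet is inherited, since their neighbourhoods avoid $S_i$.

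For the third bullet, start from $G^{(i+1)}=(G^{(i)}\setminus S_i)\cup\Clique(N_{G^{(i)}}(S_i))$ and expand $G^{(i)}$ using the inductive hypothesis. Three facts finish it:
\begin{itemize}
\item The cliques of untouched parts survive unchanged.
\item The cliques of touched parts restricted to $V(G^{(i+1)})$ are contained in $\Clique(N_G(S^{\mathrm{new}}_\star))$.
\item $G[V(G^{(i)})]\setminus S_i=G[V(G^{(i+1)})]$.
\end{itemize}
So $G^{(i+1)}=G[V(G^{(i+1)})]\cup\bigcup_{S_\star\in{\cal S}^{(i+1)}_\star}\Clique(N_G(S_\star))$, as required.

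The main obstacle is the neighbourhood computation $N_G(S^{\mathrm{new}}_\star)=N_{G^{(i)}}(S_i)$. In particular, we must argue carefully that a touched part's neighbours in $S_i$ are absorbed rather than exposed, and that untouched parts contribute nothing. The rest is bookkeeping.
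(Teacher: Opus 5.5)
Your proof is correct and takes the same approach as the paper. Both argue by induction, merging $S_i$ with exactly those old parts $S_\star$ whose $G$-boundary meets $S_i$ and keeping the rest, and both rest on the identity $N_G(S^{\mathrm{new}}_\star)=N_{G^{(i)}}(S_i)$; you additionally write out the neighbourhood computation and the clique-absorption step that the paper leaves as ``easily checked.''
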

\begin{proof}
We use induction: assume ${\cal S}^{(i)}_{\star}$ exists; we want to show the existence of ${\cal S}^{(i+1)}_{\star}$. Recall that $S_{i} = V(G^{(i)})\setminus V(G^{(i+1)})$. Note that each $S_{\star}\in {\cal S}^{(i)}_{\star}$ has either (i) $N_{G}(S_{\star})$ intersecting $S_{i}$ (implying that $N_{G}(S_{\star})\subseteq N_{G^{(i)}}[S_{i}]$) or (ii) $N_{G}(S_{\star})\subseteq V(G^{(i)})\setminus S_{i} = V(G^{(i+1)})$, because $\Clique(N_{G}(S_{\star}))$ lies inside $G^{(i)}$. Then, we define
\begin{align*}
S^{\nnew}_{\star}&:=S_{i}\cup\text{~the union of all $S_{\star}\in {\cal S}^{(i)}_{\star}$ satisfying (i)}\\
{\cal S}^{(i+1)}_{\star} &:= \{S^{\nnew}_{\star}\}\cup \{S_{\star}\in {\cal S}^{(i)}_{\star}\mid \text{satisfying (ii)}\}.
\end{align*}
We can easily check that ${\cal S}^{(i+1)}_{\star}$ satisfies all the requirements. In particular, the new $S^{\nnew}_{\star}$ has its boundary $N_{G}(S^{\nnew}_{\star}) = N_{G^{(i)}}(S_{i})$.

\end{proof}

Now, \Cref{item:Connected3} is a simple corollary of \Cref{lemma:MaximalS}. Throughout the algorithm, each $S_{i}$ has $|N_{G^{(i)}}(S_{i})|\leq |L\cap R|\leq k$ (here $(L,R)$ is the one used to compute $S_{i}$) by \Cref{lemma:DownwardWellLink}. Hence, in the final ${\cal S}^{(t)}_{\star}$, each $S_{\star}$ has $|N_{G}(S_{\star})|\leq k$ since $N_{G}(S_{\star})$ equals some $N_{G^{(i)}}(S_{i})$ by \Cref{lemma:MaximalS}. Finally, consider a connected component $C$ of $G\setminus G_{\final}$. Some $S_{\star}\in {\cal S}^{(t)}_{\star}$ must entirely contain $C$, and thus $N_{G}(C)\subseteq N_{G}(S_{\star})$ which has size at most $k$. This finishes the proof of \Cref{item:Connected3}.

\paragraph{Correctness of \Cref{item:Connected4}.} For an iteration of the outer loop for a particular $\nu$, we call it the \emph{$\nu$-phase}. Let $G_{\nu}$ denote the graph $G$ at the end of the $\nu$-phase, so $G_{2\nu}$ denotes the graph $G$ at the beginning of the $\nu$-phase (to avoid clutter, we use the notation $G_{2\nu}$ even for the first $\nu=m$). We use induction: 
\begin{enumerate}[label=(\Roman*)]
\item\label{Item:InductionItem4} assume there is no connected $(X,v,q,k)$-witness in $G_{2\nu}$ with volume larger than $2\nu$;
\end{enumerate} we want to show that there is no connected $(X,v,q,k)$-witness in $G_{\nu}$ with volume larger than $\nu$. Note that the induction hypothesis trivially holds for the first $\nu$.

The key idea is to maintain an invariant throughout the $\nu$-phase: $Q$ is a hitting set for the family $\{L \setminus R: (L,R)$ \text{is a connected $(X,v,q,k)$-witness of volume at least} $\nu$\} in the current $G$. This invariant is proven in \Cref{lemma:HittingSet}, but before that, we need helper \Cref{lemma:BackwardMapping,lemma:volume}, which state structural properties about rolling back witnesses on the graph sequence.

\begin{lemma}
\label{lemma:BackwardMapping}
Let $(L,R)$ be a connected separator in $G^{(i+1)}$. Then $L\setminus R$ lies in a connected component of $G^{(i)}\setminus (L\cap R)$.
\end{lemma}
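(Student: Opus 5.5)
We prove the statement by contradiction using the structure of the update step. Recall that $G^{(i+1)} = (G^{(i)}\setminus S_i)\cup \Clique(N_{G^{(i)}}(S_i))$. So $V(G^{(i+1)}) = V(G^{(i)})\setminus S_i$. The edges of $G^{(i+1)}$ are the edges of $G^{(i)}[V(G^{(i)})\setminus S_i]$ together with new clique edges inside $N_{G^{(i)}}(S_i)$. Let $C := L\setminus R$ and $Z := L\cap R$. Since $(L,R)$ is connected in $G^{(i+1)}$, the graph $G^{(i+1)}[C]$ is connected. The plan is to show that every edge of $G^{(i+1)}[C]$ can be replaced by a path in $G^{(i)}\setminus Z$. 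Then all of $C$ lies in one component of $G^{(i)}\setminus Z$.

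\textbf{Old edges.} Every edge of $G^{(i+1)}[C]$ that is not a new clique edge is already an edge of $G^{(i)}$. Both of its endpoints lie in $C$, which is disjoint from $Z$. So this edge survives in $G^{(i)}\setminus Z$.

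\textbf{New clique edges.} Consider a new edge $uw$ with $u,w\in C\cap N_{G^{(i)}}(S_i)$ that is not an edge of $G^{(i)}$. We route it through $S_i$. Let $S_i'$ be the set $S_i$ passed to the update, i.e.\ the output of $\WellLink$. By construction in the algorithm, $G^{(i)}[S_i]$ is connected. Also, $S_i\cap Z=\emptyset$, because $Z\subseteq V(G^{(i+1)})$. Since $u$ and $w$ are both neighbours of $S_i$, there is a $u$-$w$ path through $G^{(i)}[S_i]$ that avoids $Z$.

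\textbf{Main obstacle.} The difficulty is justifying that $G^{(i)}[S_i]$ is connected. $\WellLink$ only outputs a downward well-linked subset, not obviously a connected one. My first attempt would use downward well-linkedness directly. Suppose $u$ and $w$ were in different components of $G^{(i)}\setminus Z$. Consider the separation of $G^{(i)}[S_i\cup N(S_i)]$ induced by the component containing $u$ versus the rest, with separator $Z\cap N(S_i)$. The boundary vertices $u$ and $w$ lie on opposite sides, and every component of $S_i$ touches $N(S_i)$.

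This separation does not immediately contradict well-linkedness, since only $\ge 1$ boundary vertex lies on each side. The fallback is to observe that each $\WellLink$ step replaces $S$ by $(A\setminus B)\cap S$, which need not be connected, so connectivity is not automatic. If the paper intends that component, I would insert a step taking a single component of $G^{(i)}[S_i]$ of large volume. The claim then holds with the caveat that a component might have boundary exceeding $|Z|$.

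Alternatively, an edge-free version suffices for the lemma. Suppose $u$ and $w$ are joined only through a component $D$ of $G^{(i)}[S_i]$ with $u,w\in N(D)$. Then the path goes through $D$, which is disjoint from $Z$. We only need each new clique edge $uw$ to have $u$ and $w$ adjacent to a common component of $S_i$. I would verify this holds because the clique is placed on $N_{G^{(i)}}(S_i)$ with $S_i$ chosen connected in the algorithm. Chaining these paths along a spanning tree of $G^{(i+1)}[C]$ then finishes the proof.
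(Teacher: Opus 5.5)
There is a genuine gap. Your handling of the new clique edges rests on the claim that $G^{(i)}[S_i]$ is connected ``by construction in the algorithm'', and that claim is not available. $S_i$ is the output of \textsc{WellLink}, whose updates $S\gets (A\setminus B)\cap S$ can disconnect $S$, as you note yourself. \Cref{thm:connected} only guarantees that each $S_i$ is downward well-linked with $|N_{G^{(i)}}(S_i)|\le k$. Your fallbacks do not repair this. Keeping a single component changes the set $S_i$ the lemma is about, so downward well-linkedness and the other properties would have to be re-verified. The ``common component'' version is again justified only by the unsupported connectivity claim. So the proof is incomplete at exactly its crucial step.

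The missing idea is in your ``first attempt'', which you discarded because of a miscount. Let $A$ be the vertices of $S_i\cup N(S_i)$ lying in $u$'s component of $G^{(i)}\setminus Z$, together with $Z\cap N(S_i)$. Let $B$ be the remaining vertices of $S_i\cup N(S_i)$, together with $Z\cap N(S_i)$. This is a separator of $G^{(i)}[S_i\cup N(S_i)]$ with $u\in A\setminus B$ and $w\in B\setminus A$. Because $Z\cap S_i=\emptyset$, we have $A\cap B=Z\cap N(S_i)$. Every separator vertex is therefore itself a boundary vertex, and it counts toward both $|A\cap N(S_i)|$ and $|B\cap N(S_i)|$. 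Hence $u$ gives $|A\cap N(S_i)|\ge |A\cap B|+1$ and $w$ gives $|B\cap N(S_i)|\ge |A\cap B|+1$. Both inequalities violate downward well-linkedness; one extra boundary vertex per side is enough. So $u$ and $w$ lie in one component of $G^{(i)}\setminus Z$, and your spanning-tree chaining finishes the proof.

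This is essentially the paper's argument. The paper first uses the clique on $N_{G^{(i)}}(S_i)$ to get $N(S_i)\subseteq L$ or $N(S_i)\subseteq R$. In the first case it shows that $D=N(S_i)\cap(L\setminus R)$ is connected in $G^{(i)}[S_i\cup D]$, via the same kind of separation with separator $N(S_i)\cap R$. The same counting also shows that any two non-adjacent boundary vertices of a downward well-linked set share a neighbouring component of $G^{(i)}[S_i]$. So your ``common component'' claim is true, but because of well-linkedness, not because $S_i$ is connected.
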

\begin{proof}
Recall that $S_{i} = V(G^{(i)})\setminus V(G^{(i+1)})$, and $G^{(i+1)} = G^{(i)}\setminus S_{i}\cup \Clique(N_{G^{(i)}}(S_{i}))$. Because we add $\Clique(N_{G^{(i)}}(S_{i}))$ to $G^{(i+1)}$, there must be either $N_{G^{(i)}}(S_{i})\subseteq L$ or $N_{G^{(i)}}(S_{i})\subseteq R$. If the latter holds, then \(N_{G^{(i)}}(S_i)\cap (L\setminus R)=\emptyset\), so no clique edge
added on \(N_{G^{(i)}}(S_i)\) is used inside \(G^{(i+1)}[L\setminus R]\). Thus
\(L\setminus R\) is connected in \(G^{(i)}\setminus (L\cap R)\), as desired. Hence, from now on, assume $N_{G^{(i)}}(S_{i})\subseteq L$.

It suffices to show that $(L\setminus R)$ is connected in $G^{(i)}[(L\setminus R)\cup S_{i}]$. Note that $L\setminus R$ is connected in $G^{(i+1)}[L\setminus R]$ since $(L,R)$ is a connected witness. Also, from $G^{(i)}[(L\setminus R)\cup S_{i}]$ to $G^{(i+1)}[L\setminus R]$, the difference is that we remove the vertex set $S_{i}$ and gain a clique on $N_{G^{(i)}}(S_{i})\cap (L\setminus R)$. 

Formally, writing
\[
D := N_{G^{(i)}}(S_i) \cap (L \setminus R),
\]
we want to show that \(D\) is connected in
\[
G^{(i)}[S_i \cup D].
\]

To see this, we will exploit the downward well-linkedness of $S_{i}$. Assume the opposite for contradiction. Then we can obtain a partition $(\tilde{A}, \tilde{B})$ of $S_{i}\cup D$ such that (i) both have non-empty intersection with $D$, denoted by $D_{\tilde{A}}$ and $D_{\tilde{B}}$, and (ii) there is no edge connecting $\tilde{A}$ and $\tilde{B}$ in $G^{(i)}[S_i \cup D]$. Now, consider the separator $(A,B)$ in $G^{(i)}[S_{i}\cup N_{G^{(i)}}(S_{i})]$ defined by
\begin{align*}
A &:= \tilde{A}\cup ((S_{i}\cup N_{G^{(i)}}(S_{i}))\cap R) = \tilde{A}\cup (N_{G^{(i)}}(S_{i})\cap R),\\
B &:= \tilde{B}\cup ((S_{i}\cup N_{G^{(i)}}(S_{i}))\cap R) = \tilde{B}\cup (N_{G^{(i)}}(S_{i})\cap R).
\end{align*}
It violates the downward well-linkedness of $S_{i}$, because $A\cap N_{G^{(i)}}(S_{i}) = (N_{G^{(i)}}(S_{i})\cap R)\cup D_{\tilde{A}}$ clearly has size larger than $A\cap B = N_{G^{(i)}}(S_{i})\cap R$ (similarly, $|B\cap N_{G^{(i)}}(S_{i})|>|A\cap B|$). This is a contradiction, and completes the proof. 
\end{proof}

\begin{lemma}
\label{lemma:volume}
Let $(L,R)$ be a connected $(X,v,q,k)$-witness in $G^{(i)}$. For $i'\leq i-1$, there is a connected $(X,v,q,k)$-witness $(L_{\old},R_{\old})$ in $G^{(i')}$ satisfying that $\deg_{G^{(i)}}(L\setminus R)\leq (k+1)\cdot \deg_{G^{(i')}}(L_{\old}\setminus R_{\old})$.
\end{lemma}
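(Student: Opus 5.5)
The plan is to construct $(L_{\old},R_{\old})$ directly in $G^{(i')}$, rather than inducting one step at a time. Going step by step would compound the factor $(k+1)$ once per step, so the argument must be global. Set $K:=L\cap R\subseteq V(G^{(i)})$. Let $C$ be the connected component of $G^{(i')}\setminus K$ containing $v$. Define $L_{\old}\setminus R_{\old}:=C$, $L_{\old}\cap R_{\old}:=N_{G^{(i')}}(C)$, and $R_{\old}:=V(G^{(i')})\setminus C$.

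\textbf{Step 1: $C\supseteq L\setminus R$.} Apply \Cref{lemma:BackwardMapping} repeatedly. In $G^{(i)}$, the set $L\setminus R$ is connected and avoids $K$. By the lemma it lies in a single component of $G^{(i-1)}\setminus K$. That component, together with its neighborhood (which is contained in $K$), is again a connected separator containing $L\setminus R$. Iterating down to $G^{(i')}$ gives $L\setminus R\subseteq C$. Since $N_{G^{(i')}}(C)\subseteq K$, the new cut has size at most $k$. It is connected by construction, and $|L_{\old}|\geq |L|\geq q+1$.

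\textbf{Step 2: $X\subseteq R_{\old}$.} I will show that $C\cap V(G^{(i)})=L\setminus R$. The key fact is that each update (delete $S_j$, add $\Clique(N(S_j))$) preserves connectivity among the surviving vertices. Indeed, any path through $S_j$ is shortcut by a clique edge, and a path avoiding $K$ stays avoiding $K$ because clique endpoints are surviving vertices on the path. So a vertex of $V(G^{(i)})\setminus K$ in $C$ is joined to $v$ in $G^{(i)}\setminus K$, hence lies in $L\setminus R$. Since $X\subseteq V(G^{(i)})$ and $X\cap(L\setminus R)=\emptyset$, we get $X\cap C=\emptyset$. Also $C\neq V(G^{(i')})$ because $|X|\geq k+1$.

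\textbf{Step 3: volume charging.} I first need an analogue of \Cref{lemma:MaximalS} started at $G^{(i')}$ instead of $G$, obtained by the same induction. It gives a partition of $V(G^{(i')})\setminus V(G^{(i)})$ into parts $S_\star$ with $|N_{G^{(i')}}(S_\star)|\leq k$, such that
\[
G^{(i)}=G^{(i')}[V(G^{(i)})]\cup\bigcup_{S_\star}\Clique(N_{G^{(i')}}(S_\star)).
\]
Now fix $u\in L\setminus R$ and consider an edge $uw$ of $G^{(i)}$.
\begin{itemize}
\item If $uw$ is an edge of $G^{(i')}$, it is already counted in $\deg_{G^{(i')}}(C)$, since $u\in C$.
\item Otherwise $uw$ comes from a clique on $N(S_\star)$ for some part $S_\star$ with $u\in N(S_\star)$. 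Then $u$ has an edge $us$ with $s\in S_\star$ in $G^{(i')}$, and this edge is counted in $\deg_{G^{(i')}}(C)$ because $u\in C$. Charge $uw$ to this edge $us$.
\end{itemize}
Each pair $(u,S_\star)$ generates at most $|N(S_\star)|\leq k$ clique neighbours $w$, and all of them are charged to a single edge from $u$ into $S_\star$. So every edge incident to $u$ in $G^{(i')}$ absorbs at most $1+k$ charges. Summing over $u\in L\setminus R\subseteq C$ gives $\deg_{G^{(i)}}(L\setminus R)\leq (k+1)\deg_{G^{(i')}}(C)$.

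The main obstacle is the relative coarsening lemma in Step 3. I must make sure that the parts $S_\star$ have boundaries of size at most $k$ measured in $G^{(i')}$, and that every added edge in $G^{(i)}$ is attributed to exactly such a part. This is where I would redo the induction of \Cref{lemma:MaximalS} carefully.
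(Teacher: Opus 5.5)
Your overall route matches the paper's. The old witness is the component $C$ of $G^{(i')}\setminus K$ containing $L\setminus R$, reached by iterating \Cref{lemma:BackwardMapping}. $X$ stays on the right because updates never create new connectivity among surviving vertices. The volume bound uses the same relative version of \Cref{lemma:MaximalS} and the same per-vertex charging. Steps 2 and 3 are fine.

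The gap is the unproved claim $|L_{\old}|\geq|L|$ in Step 1. Declaring the cut to be $N_{G^{(i')}}(C)$ makes connectivity free. But then you only know $L\setminus R\subseteq C$ and $N_{G^{(i')}}(C)\subseteq K$, and $|C|+|N_{G^{(i')}}(C)|\geq|L\setminus R|+|K|$ does not follow from these. A vertex of $K$ that sees $L\setminus R$ in $G^{(i)}$ only through a clique edge may have no neighbour in $C$ at all. For example, take a removed set $S=\{s_0,s_1\}$ where $s_0$ is adjacent to $v$ and $c_3$, $s_1$ is adjacent to $c_1,c_2$, and $s_0s_1$ is not an edge. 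After the update, $v$ may have exactly the neighbours $c_1,c_2,c_3$, so $|L|=4$. Yet $C=\{v,s_0\}$ and $N(C)=\{c_3\}$, so $|L_{\old}|=3$. The condition $|L_{\old}|\geq q+1$ can therefore fail unless you use the downward well-linkedness of the removed sets. Your argument never invokes that property for this purpose.

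The missing step, which the paper carries out explicitly, is to show that for every $j$ from $i$ down to $i'$, the component $D_j$ of $G^{(j)}\setminus K$ containing $L\setminus R$ satisfies $N_{G^{(j)}}(D_j)=K$.
\begin{enumerate}
\item Go by induction on $j$. Suppose $c\in K$ is adjacent to some $d\in D_{j+1}$ only via an edge of $\Clique(N_{G^{(j)}}(S_j))$. Then $N_{G^{(j)}}(S_j)\subseteq D_{j+1}\cup K$.
\item Let $D':=N_{G^{(j)}}(S_j)\cap D_{j+1}$. Let $P$ be the set of vertices of $S_j$ connected to $D'$ in $G^{(j)}[S_j\cup D']$.
\item Suppose $c$ had no neighbour in $P\cup D'$. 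Consider the separation of $G^{(j)}[S_j\cup N_{G^{(j)}}(S_j)]$ with $A\setminus B=P\cup D'$, $B\setminus A=(S_j\setminus P)\cup\{c\}$ and $A\cap B=(N_{G^{(j)}}(S_j)\cap K)\setminus\{c\}$.
\item Both $A\cap N_{G^{(j)}}(S_j)$ and $B\cap N_{G^{(j)}}(S_j)$ are strictly larger than $A\cap B$. This contradicts the downward well-linkedness of $S_j$.
\item Hence $c$ has a neighbour in $P\cup D'\subseteq D_j$.
\end{enumerate}

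Once $N_{G^{(i')}}(C)=K$ holds, you get $|L_{\old}|=|C|+|K|\geq|L|\geq q+1$. It also makes your iteration of \Cref{lemma:BackwardMapping} literally correct. As written, you apply the lemma with cut $N(D)$, so its conclusion concerns $G^{(j-1)}\setminus N(D)$ rather than $G^{(j-1)}\setminus K$. Once $N(D)=K$, the two coincide.
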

\begin{proof}
Let $(L^{(i)},R^{(i)}):=(L,R)$. We will define $(L^{(j)},R^{(j)})$ in $G^{(j)}$ for $j$ from $i-1$ to $i'$: ensure that
\begin{align*}
L^{(j)}\cap R^{(j)} &= L^{(j+1)}\cap R^{(j+1)}\\
L^{(j)}\setminus R^{(j)} &= \text{the connected component of $G^{(j)}\setminus (L^{(j+1)}\cap R^{(j+1)})$ containing $L^{(j+1)}\setminus R^{(j+1)}$} 
\end{align*}
Note that $(L^{(j)},R^{(j)})$ is well-defined by \Cref{lemma:BackwardMapping}. Additionally, we have 
\[
R^{(j)}\setminus L^{(j)}\supseteq R^{(j+1)}\setminus L^{(j+1)},
\]
because from $G^{(j+1)}$ to $G^{(j)}$, turning $\Clique(N_{G^{(j)}}(S_{j}))$ back to the subgraph $G^{(j)}[S_{j}\cup N_{G^{(j)}}(S_{j})]$ will not bring extra connectivity between $L^{(j+1)}\setminus R^{(j+1)}$ and $R^{(j+1)}\setminus L^{(j+1)}$, meaning $L^{(j)}\setminus R^{(j)}$ is disjoint from $R^{(j+1)}\setminus L^{(j+1)}$. Moreover, $N_{G^{(j)}}(L^{(j)}\setminus R^{(j)})=L^{(j)}\cap R^{(j)}$: if a vertex $c\in L^{(j)}\cap R^{(j)}$ is adjacent to $L^{(j+1)}\setminus R^{(j+1)}$ only through an edge of $\Clique(N_{G^{(j)}}(S_{j}))$, then by the downward well-linkedness of $S_{j}$ (as in the proof of \Cref{lemma:BackwardMapping}), $c$ is connected to $L^{(j+1)}\setminus R^{(j+1)}$ through $S_{j}$, and hence has a neighbor in $L^{(j)}\setminus R^{(j)}$. Therefore, we can easily observe that each $(L^{(j)},R^{(j)})$ is a connected $(X,v,q,k)$-witness in $G^{(j)}$. 

It remains to prove the volume bound. We apply the coarsening argument of
\Cref{lemma:MaximalS} to the subsequence of updates from
\(G^{(i')}\) to \(G^{(i)}\). This gives a partition \(\mathcal S_\star\) of
\(V(G^{(i')})\setminus V(G^{(i)})\) into coarsened removed pieces such that
\[
    G^{(i)}
    =
    G^{(i')}[V(G^{(i)})]
    \cup
    \bigcup_{S_\star\in\mathcal S_\star}
        \Clique\bigl(N_{G^{(i')}}(S_\star)\bigr).
\]
Moreover, for every \(S_\star\in\mathcal S_\star\), the boundary
\(N_{G^{(i')}}(S_\star)\) is equal to the boundary \(N_{G^{(r)}}(S_r)\) of
some actual removed set \(S_r\), where \(i'\le r<i\). Hence
\[
    |N_{G^{(i')}}(S_\star)|\le k.
\]

Fix a vertex \(a\in V(G^{(i)})\). Every neighbor of \(a\) in \(G^{(i)}\) is
either already a neighbor of \(a\) in \(G^{(i')}\), or is created by the
clique added on \(N_{G^{(i')}}(S_\star)\) for some coarsened piece
\(S_\star\) with \(a\in N_{G^{(i')}}(S_\star)\). For each such coarsened
piece, choose one edge of \(G^{(i')}\) from \(a\) to \(S_\star\). These
chosen edges are distinct, because the sets \(S_\star\) are pairwise
disjoint. Therefore the number of coarsened pieces whose boundary contains
\(a\) is at most \(\deg_{G^{(i')}}(a)\).

Each such coarsened piece contributes at most \(k\) neighbors of \(a\) in
\(G^{(i)}\), since its boundary has size at most \(k\). Thus
\[
    \deg_{G^{(i)}}(a)
    \le
    \deg_{G^{(i')}}(a)
    +
    k\cdot \deg_{G^{(i')}}(a)
    =
    (k+1)\deg_{G^{(i')}}(a).
\]

Summing over \(a\in L\setminus R\), and using
\(L\setminus R\subseteq L_{\mathrm{old}}\setminus R_{\mathrm{old}}\), gives
\[
\begin{aligned}
    \deg_{G^{(i)}}(L\setminus R)
    &=
    \sum_{a\in L\setminus R}\deg_{G^{(i)}}(a) \\
    &\le
    (k+1)\sum_{a\in L\setminus R}\deg_{G^{(i')}}(a) \\
    &\le
    (k+1)\deg_{G^{(i')}}(L_{\mathrm{old}}\setminus R_{\mathrm{old}}).
\end{aligned}
\]
This proves the desired volume bound.

\end{proof}

Now, we are ready to show the invariant of the hitting set $Q$. Recall from the beginning of the proof of \Cref{item:Connected4} that, we assume that \ref{Item:InductionItem4} there is no connected $(X,v,q,k)$-witness in $G_{2\nu}$ with volume larger than $2\nu$, where $G_{2\nu}$ is the graph at the beginning of $\nu$-phase. 

\begin{lemma}
\label{lemma:HittingSet}
With the assumption \ref{Item:InductionItem4}, throughout the $\nu$-phase, the set $Q$ remains a hitting set of the family $L\setminus R$ over all connected $(X,v,q,k)$-witnesses $(L,R)$ in $G$ with volume at least $\nu$, with high probability.

\end{lemma}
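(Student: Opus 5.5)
We prove the invariant by induction over the steps of the $\nu$-phase, i.e., over pops from $Q$ and graph updates. The claim concerns every connected $(X,v,q,k)$-witness $(L,R)$ of volume at least $\nu$ in the current graph $G^{(i)}$. We want to show that $L\setminus R$ contains a vertex of $Q$.

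\emph{Base case: the start of the phase.} Here $G=G_{2\nu}$, and $Q$ was sampled as a $(k,\nu)$-net of $G_{2\nu}$ by \Cref{lemma:sampleset}. Take a connected witness $(L,R)$ of volume at least $\nu$. Then $L\setminus R$ is a connected component of $G\setminus(L\cap R)$, because $G[L\setminus R]$ is connected and $N_G(L\setminus R)=L\cap R$. Its volume is at least $\nu$ and $|L\cap R|\le k$. So the net property gives $Q\cap (L\setminus R)\neq\emptyset$ with high probability.

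\emph{Inductive step, case of a pop.} Suppose we pop $v$ from $Q$ and the graph does not change. We must show that no witness of volume at least $\nu$ relied only on $v$, meaning there is none whose $L\setminus R$ meets $Q$ exclusively at $v$. If $v\notin V(G)$ or $v\in X$, then $v$ lies in no relevant $L\setminus R$, so nothing is lost. Otherwise we apply \Cref{lemma:witness} with $\nu'=4\nu k$, and we need its Condition (A).

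To check Condition (A), take any connected $(X,v,q,k)$-witness in the current $G=G^{(i)}$. By \Cref{lemma:volume} it comes from a connected witness in $G_{2\nu}$ with $\deg_{G^{(i)}}\le (k+1)\deg_{G_{2\nu}}$. By \ref{Item:InductionItem4}, the latter has volume at most $2\nu$. The current volume is therefore at most $2(k+1)\nu\le 4k\nu$.

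So Condition (A) holds, and with high probability \Cref{lemma:witness} returns a witness of maximum volume. If we get $\bot$, or the returned volume is below $\nu$, then no witness containing $v$ has volume at least $\nu$. Dropping $v$ from $Q$ is then safe. If the graph is updated instead, $v$ is pushed back into $Q$, so again nothing is lost.

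\emph{Inductive step, case of a graph update.} This is the main obstacle. We pass from $G^{(i)}$ to $G^{(i+1)}$, where the vertices of $S$ are deleted and the vertices of $N(S)$ are pushed into $Q$. Let $(L,R)$ be a connected witness of volume at least $\nu$ in $G^{(i+1)}$. By the backward mapping of \Cref{lemma:BackwardMapping} and the construction in \Cref{lemma:volume}, there is a connected witness $(L',R')$ in $G^{(i)}$ with the same cut $L'\cap R'=L\cap R$ and with $L'\setminus R'\supseteq L\setminus R$. Moreover $L'\setminus R'\subseteq (L\setminus R)\cup S$.

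There are two sub-cases.
\begin{itemize}
\item If $L'\setminus R'$ contains no vertex of $S$, then $L'\setminus R'=L\setminus R$. The rollback adds no vertices, and undoing the clique does not raise degrees within $L\setminus R$, so $\deg_{G^{(i)}}(L'\setminus R')\ge\nu$ still holds and the hypothesis applies. If adjacency changed, the replacement must have passed through $N(S)$, which is hit anyway.
\item If $L'\setminus R'$ meets $S$, then $N(S)\cap (L\setminus R)\neq\emptyset$ or $L\setminus R\subseteq S$. The latter is impossible since $S$ is deleted. Hence $L\setminus R$ contains a vertex of $N(S)$, and these vertices were just pushed into $Q$.
\end{itemize}

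It remains to handle the first sub-case rigorously. There, the hypothesis gives a $Q$-vertex in $L'\setminus R'=L\setminus R$ that is still present in the graph. For the volume comparison I would argue as follows. When $L\setminus R$ avoids $S\cup N(S)$, degrees in $L\setminus R$ are unchanged between $G^{(i)}$ and $G^{(i+1)}$. When $L\setminus R$ meets $N(S)$, the set is already hit.

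Finally, we union bound over all $\poly(n)$ calls to the high-probability subroutines in the phase.
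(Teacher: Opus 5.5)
Your proof is correct and follows the paper's argument. The base case is the same net-based argument, and Condition (A) for $\nu'=4\nu k$ is checked the same way, via \Cref{lemma:volume} and assumption (I). In the update step, the decisive split is exactly the paper's: either $L\setminus R$ meets $N_G(S)$, and it is hit because $N_G(S)$ and the popped $v$ are pushed into $Q$, or it avoids $N_G(S)$, and then it is a connected witness of unchanged volume in the previous graph, so the old $Q$ already hits it.

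The detour through \Cref{lemma:BackwardMapping} is unnecessary. Your remark that undoing the clique ``does not raise degrees'' points the wrong way for the bound you need, but your final paragraph replaces it with the correct unchanged-degree argument.
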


\begin{proof}

At the very beginning of the $\nu$-phase, we initialize $Q$ to be a $(k,\nu)$-net obtained from~\Cref{lemma:sampleset}. Hence, with high probability, the property holds initially. Inductively, assume that up to some inner iteration of the $\nu$-phase, the property continues to hold. 

There are two cases depending on the output of \Cref{lemma:witness}. Before we start the case study, we need to show that every time we call \Cref{lemma:witness}, its Condition (A) is satisfied: every connected $(X,v,q,k)$-witness in the current graph $G$ has volume at most $\nu' = 4\nu k$. 

\medskip

\noindent\underline{Condition (A) of \Cref{lemma:witness}.} Assume for contradiction that there exists a connected $(X,v,q,k)$-witness $(L,R)$ in the current graph $G$ with volume $>\nu'$. Then by \Cref{lemma:volume}, there is a connected $(X,v,q,k)$-witness $(L_{\old},R_{\old})$ in the graph $G_{2\nu}$ with volume $>\nu'/(k+1)\geq 2\nu$, contradicting assumption (I).

\medskip

Now we start the case study. To avoid clutter, we assume the high-probability event of \Cref{lemma:witness} holds.

\medskip

\noindent\underline{Case 1.} In the first case, \Cref{lemma:witness} returns a connected  $(X,v,q,k)$-witness of volume $\geq \nu$. We update the graph $G$ by carving out a set $S$ and turning $N_G(S)$ into a clique. We also update $Q$ by removing $v$ and adding $N_{G}(S)$. Let $G,G_{\nnew}$ be the graphs before and after the update respectively. Fix a connected $(X,v,q,k)$-witness $(L,R)$ with volume at least $\nu$ in $G_{\nnew}$.
\begin{itemize}
\item
If $L\setminus R$ is disjoint with $N_{G}(S)$, then $\deg_{G_{\nnew}}(L\setminus R) = \deg_{G}(L\setminus R)$. Note that $(L,V(G) \setminus (L \setminus R))$ is a connected $(X,v,q,k)$-witness in $G$: there can be no edges between $ L \setminus R$ and $S$ since $L \setminus R$ does not intersect $N_G(S)$. Recall that in this case, we push the popped vertex $v$ back into $Q$, and hence $Q\cap (L\setminus R)$ does not change during the update, so $Q$ still hits $(L,R)$ by the induction hypothesis.

\item
Otherwise, $L\setminus R$ intersects $N_{G}(S)$. Recall that we add $N_{G}(S)$ to $Q$, and thus $Q$ hits $L\setminus R$.
\end{itemize}

\noindent\underline{Case 2.} In the second case, \Cref{lemma:witness} guarantees that either connected $(X,v,q,k)$-witnesses have maximum volume $<\nu$, or that there is no such witness. Hence, we can safely remove $v$ from $Q$, exactly what the algorithm did.

\end{proof}

At the end, we complete the proof of \Cref{item:Connected4} by finishing the induction. Assuming the induction hypothesis \ref{Item:InductionItem4}, \Cref{lemma:HittingSet} guarantees that, at the end of the $\nu$-phase, $Q$ hits all connected $(X,v,q,k)$-witnesses in $G_{\nu}$ with volume at least $\nu$ (with high probability). However, the algorithm guarantees that $Q$ is empty at the end of the $\nu$-phase, so there is no $(X,v,q,k)$-witness in $G_{\nu}$ with volume at least $\nu$, as desired. 

\paragraph{Running Time.} Finally, we bound the running time of \Cref{algo:connectedwitness}.

\begin{lemma}
\Cref{algo:connectedwitness} runs in time $\Otil(m\cdot 2^{O(k)})$.
\end{lemma}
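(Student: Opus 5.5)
The plan is to charge the work of \Cref{algo:connectedwitness} phase by phase, splitting each phase's cost into unsuccessful and successful calls to \Cref{lemma:witness}.

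\textbf{Cost of one call.} Every call to \Cref{lemma:witness} in the $\nu$-phase uses $\nu' = 4\nu k$, so it costs $2^{O(k)}\cdot\nu$. When a call succeeds, it is followed by a call to \WellLink{} on $L\setminus R$, whose cost is $2^{O(k)}$ times the size of $G[(L\setminus R)\cup N(L\setminus R)]$. That size is $O(\deg_G(L\setminus R)) \le O(\nu')$, so the successful call costs $2^{O(k)}\nu$ as well. Removing $S$ and adding $\Clique(N_G(S))$ costs $O(\deg_G(S)+k^2)$.

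\textbf{Unsuccessful pops.} A popped $v$ is either discarded, or it causes an update, after which $v$ and at most $k$ new vertices are pushed. So the number of pops in the $\nu$-phase is at most
\[
|Q_{\text{init}}| + (k+1)\cdot\#\text{updates} + \#\text{updates},
\]
where $|Q_{\text{init}}| = O(mk\log n/\nu)$ by \Cref{lemma:sampleset}.

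\textbf{Bounding the number of updates.} This is the key step. Each update in the $\nu$-phase comes from a witness of volume at least $\nu$. By \Cref{lemma:DownwardWellLink}, the deleted set $S$ has $\deg_G(S)\ge \nu/8^k$; the lemma applies because $\nu\ge 100k^2 8^k$ and $|N(L\setminus R)|\le k$. The total volume of the current graph must not grow, so I need to check that removing $S$ and adding a clique on at most $k$ vertices does not increase it. The clique adds at most $k^2$ edges, while removing $S$ deletes at least $\deg_G(S)/2 \ge \nu/(2\cdot 8^k)\gg k^2$ edges. Hence the edge count decreases by at least $\Omega(\nu/8^k)$ per update and never increases. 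The number of updates in the $\nu$-phase is therefore $O(m\cdot 8^k/\nu)$; in fact the bound $O(m 8^k/\nu)$ holds per phase even summed against the current $m$.

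\textbf{Summing up.} The $\nu$-phase makes $O(m k^2 8^k\log n/\nu)$ pops, each costing $2^{O(k)}\nu$. The per-phase cost is thus $2^{O(k)} m\log n$. Over $O(\log m)$ phases this gives $\Otil(m\,2^{O(k)})$. Sampling the net also fits in this bound, since it takes $\Otil(m)$ per phase via prefix sums over degrees.

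\textbf{Main obstacle.} The delicate point is the claim that the graph's edge count is monotone non-increasing despite the clique insertions; this is exactly why the threshold $100k^2 8^k$ is chosen. I also need that the adjacency structure under clique additions still supports the local access that \Cref{lemma:witness} requires, with degrees measured in the current graph, which is the graph the witness volumes refer to.
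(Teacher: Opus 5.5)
Your proposal is correct and follows essentially the same argument as the paper. In each $\nu$-phase, you bound the number of updates with a volume potential that drops by $\Omega(\nu/8^k)$ per update, using $\nu \ge 100k^2 8^k$ so the $O(k^2)$ clique edges are dominated; you bound pops by the initial net size plus $k+1$ pushes per update; and you multiply by the $2^{O(k)}\nu$ cost per pop over $O(\log n)$ phases. Tracking edge count instead of total degree, and your slightly looser pop bound, make no difference.
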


\begin{proof}

We will bound the time taken for each iteration of the outer for loop. Since there are only $O(\log n)$ choices of $\nu$, this suffices to show the runtime bound. Fix a phase $\nu$. First, we bound the number of pushes into $Q$. Then, we bound the time to process a vertex $v \in Q$ once it is popped.

\medskip

\noindent\underline{Number of Pushes into $Q$.} 
Every time we make a push into $Q$ it is due to an update to $G$, which removes $S$, adds a clique on $N_G(S)$, and pushes at most $k + 1$ vertices $v \cup N_G(S)$ into $Q$. Hence, we bound the number of updates instead. To do this, we keep track of the total degree of $G$. When updating $G$, we remove a vertex set $S$. By \Cref{lemma:DownwardWellLink} and the algorithm statement, we have
\[
\deg_{G}(S)\geq \deg_{G}(L\setminus R)/8^{k}\geq \nu/8^{k},
\]
which causes a volume drop of at least $\nu/8^{k}$.
Also, we add $\Clique(N_{G}(S))$, which causes a volume increase of at most $2k^{2}\leq \nu/(50\cdot 8^{k})$. Therefore, the net effect is a reduction in volume of at least $\rho = \nu/(2\cdot 8^k)$.
The initial volume is $2m$, and thus the number of pushes to $Q$ is at most $2m/\rho = 4\cdot 8^{k}\cdot m/\nu$.

\medskip

\noindent{\underline{Number of Pops from $Q$.}} Since $Q$ has $O(mk\log n/\nu)$ vertices initially, and each push will add at most $k$ vertices, the total number of pops is at most $O((\log n + 8^{k})\cdot mk/\nu)$.

\medskip

\noindent{\underline{Time for One Pop.}} Next, we bound the running time when we pop $v \in Q$ by 
\[
\underbrace{\tilde{O}(2^{O(k)}\cdot \nu k)}_{\text{\Cref{lemma:witness}}} + \underbrace{O(2^{O(k)}\cdot \nu k)}_{\text{\Cref{lemma:DownwardWellLink}}} + \underbrace{O(\nu k+k^{2})}_{\text{Update (the adjacency list of) $G$}} = \tilde{O}(2^{O(k)}\cdot \nu).
\]

\medskip

\noindent{\underline{Total Running Time.}} The total running time is simply $\tilde{O}(2^{O(k)}\cdot\nu) \cdot 2^{O(k)}\cdot m/\nu = \tilde{O}(2^{O(k)}\cdot m)$.

\end{proof}

%% file: phase2.tex
\section{Witness Elimination.}\label{sec:disconnectedwitness}

In the previous section, we obtained a graph $G$ such that there is no connected $(X,v,q,k)$-witness for any $v \in V(G)$. In this section, we will obtain a set $T \subseteq V(G)$ such that there is no $(X,v,k)$ witness for $T$ in $G$. An $(X,v,k)$ witness is a generalization of $(X,v,k,k)$ witness, and is defined below.

\begin{definition}
Given a graph $G$, a set $T$, and $v \in V(G)$, an $(X,v,k)$-witness for $T$ is a separator $(L,R)$ such that $|L \cap R| \leq k$, $|L \cap T| > |L \cap R|$, $v \in L \setminus R$, $X \subseteq R$. A connected $(X,v,k)$-witness is an $(X,v,k)$-witness $(L,R)$ where $G[L \setminus R]$ is connected and $N_G(L \setminus R)=L\cap R$.
    
\end{definition}

\begin{observation}
In the graph $G$, an $(X,v,k)$-witness $(L,R)$ for $T$ and some $v\in V(G)$ is also an $(X,v',k)$-witness for $T$ and some $v'\in T$.
\end{observation}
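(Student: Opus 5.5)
The plan is to keep the same separator $(L,R)$ and only change the distinguished vertex, so the whole argument reduces to locating a terminal strictly inside $L\setminus R$.

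First I would use the strict inequality $|L\cap T| > |L\cap R|$ from the definition of an $(X,v,k)$-witness for $T$. Since $L\cap T$ splits disjointly as
\[
L\cap T = \bigl((L\setminus R)\cap T\bigr)\cup\bigl((L\cap R)\cap T\bigr),
\]
and the second part has size at most $|L\cap R|$, counting gives
\[
|(L\setminus R)\cap T| \;\geq\; |L\cap T| - |L\cap R| \;>\; 0 .
\]
So there is a vertex $v'\in (L\setminus R)\cap T$, and I would fix such a $v'$.

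Next I would check that $(L,R)$ is an $(X,v',k)$-witness for $T$. The conditions $|L\cap R|\leq k$, $|L\cap T|>|L\cap R|$ and $X\subseteq R$ do not involve the distinguished vertex, so they are inherited unchanged from the original witness. The only new condition, $v'\in L\setminus R$, holds by the choice of $v'$, and $v'\in T$ as required. Since $X\subseteq R$ and $v'\notin R$, we also get $v'\notin X$, which keeps $v'$ a legitimate source in the later calls that require $v\in V(G)\setminus X$.

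Finally, the same argument preserves connectedness: if $(L,R)$ is a connected $(X,v,k)$-witness, it is also a connected $(X,v',k)$-witness, because being connected depends only on $L$ and $R$, not on the distinguished vertex. I do not expect any step to be an obstacle; the only point needing care is using the strict inequality so that a terminal exists outside the cut $L\cap R$.
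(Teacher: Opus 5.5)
Your proposal is correct and follows essentially the same argument as the paper. You keep $(L,R)$ and use $|(L\setminus R)\cap T| \geq |L\cap T| - |L\cap R| \geq 1$ to pick $v'\in (L\setminus R)\cap T$; the paper's proof is exactly this count. Your extra remarks, that $v'\notin X$ and that connectedness is preserved, are valid additions.
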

\begin{proof}
This is simply because $|(L\setminus R)\cap T| = |L\cap T| - |L\cap R\cap T| \geq |L\cap T| - |L\cap R|\geq 1$.
\end{proof}

By the above observation, $\{\text{$(X,v,k)$-witnesses for $T$}\mid v\in V(G)\}$ equals $\{\text{$(X,v,k)$-witnesses for $T$}\mid v\in T\}$. Hence, to avoid clutter, we use $(X,v,k)$-witness (without specifying $v$) to denote an arbitrary one in the above collection. Analogous simplifications also apply to $(X,v,k,k)$-witnesses for $T$.

Observe that an $(X,v,k)$-witness for $T$ in $G$ is simply an $(X,v,k',k')$-witness for $T$ and some $k'\leq k$.
Therefore, if there is no $(X,v,k)$-witness for $T$ in $G$, then there is no $(X,v,k,k)$-witness for $T$ in $G$. Our algorithm will eventually construct $T$ such that there is no $(X,v,k)$-witness for $T$ in $G$. 

The following is the main theorem of this section.

\begin{restatable}{theorem}{WitnessEliminate}
\label{thm:WitnessEliminate}
Given a graph $G$, a set $X \subseteq V(G)$, and parameters $q\geq k\geq 1$ such that $|X|\geq k+1$ and there is no connected $(X,v,q,k)$-witness in $G$ for any $v \in V(G)$, there is a randomized algorithm that runs in time $q^{O(1)}k^{O(k)}m\cdot \polylog(n)$ that with high probability computes a set $T \subseteq V(G)$, such that 
\begin{enumerate}
\item $X \subseteq T$,
\item there is no $(X,v,k,k)$-witness for $T$ and any $v\in T$ in $G$. (In fact, we will ensure there is no $(X,v,k)$ witness for $T$ in $G$), and
\item the adhesion of $T$, $\sigma(T) \leq k$.
\end{enumerate}

\end{restatable}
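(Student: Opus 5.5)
We will start from $T:=V(G)$ and run an iterative refinement that only ever shrinks $T$. While some $(X,v,k)$-witness $(L,R)$ for $T$ exists, we replace $T$ by $T':=(T\setminus L)\cup(L\cap R)$. Since $|L\cap T|>|L\cap R|$, we get $|T'|<|T|$, so there are at most $n$ updates. The invariant $X\subseteq T$ is kept because $X\subseteq R$. We also maintain that every component $C$ of $G\setminus T$ satisfies $|N(C)|\le k$. This holds because each removed region $L\setminus R$ has boundary inside $L\cap R\subseteq T'$, of size at most $k$. Components of $G\setminus T'$ merge only through $L\setminus R$, and we plan to argue, by uncrossing each new witness with the old components, that we may take witnesses whose $L\setminus R$ is a union of full components of $G\setminus(L\cap R)$. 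Then $N(C)\subseteq L\cap R$, which gives item~3. When the loop stops, items~1 and~2 hold by definition. The real work is to find a witness, or certify that none exists, in local time per terminal, so that the total time is $q^{O(1)}k^{O(k)}m\,\polylog(n)$.

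For the search we use the hypothesis that there is no connected $(X,v,q,k)$-witness. Take any witness $(L,R)$ for $T$ and decompose $L\setminus R$ into components $C_1,\dots,C_r$ of $G[L\setminus R]$. Each $(C_j\cup N(C_j),\,V\setminus C_j)$ is a connected separator with $X$ on the $R$ side and boundary at most $k$. By hypothesis it has $|C_j\cup N(C_j)|\le q$, so its volume is at most $q(q+k)=O(q^2)$. Each component can be pushed to an important $v_j$-$X$ separator, with $X$ contracted to a sink $t$ as in \Cref{lemma:witness}. By the same hypothesis this important separator also has volume $O(q^2)$, so \Cref{thm:breakable} enumerates all candidates for a fixed $v_j$ in time $2^{O(k)}q^{2}$. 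To handle disconnected witnesses, we will show by a counting argument that there is a minimal witness with $r\le k+1$ components: if $|L\cap T|>|L\cap R|$, then some subfamily of at most $k+1$ components already has more terminals than the union of their boundaries.

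The algorithm branches as follows. Starting from a terminal $v\in T$, it enumerates the $4^k$ local important separators $(L_1,R_1)$ from $v$. Then it guesses the next component by picking a terminal among the $O(q)$ vertices of $N(L_1\setminus R_1)$ or among vertices within the already explored region. It recurses with at most $k$ budget left. The branching factor is $q^{O(1)}\cdot 4^k$ and the depth is $\le k+1$, which is where the $k^{O(k)}q^{O(1)}$ factor comes from.

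For the running time, we use a queue of terminals as in \Cref{algo:connectedwitness}. We start with all of $T$, pop $v$, and search for a witness containing $v$ in $q^{O(1)}k^{O(k)}$ time. On success we update $T$ and push back the $O(k)$ vertices of $L\cap R$ and $v$. On failure we discard $v$, arguing monotonicity: since $T$ only shrinks and removed regions are fixed, a vertex with no witness cannot gain one later unless a witness passes through a newly created boundary vertex, and those vertices are re-pushed. Pushes are bounded by $n+O(k)\cdot(\#\text{updates})=O(kn)$, which gives the claimed total.

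The main obstacle is making the disconnected-witness search sound. Components of a witness need not contain terminals reachable from $v$ via a local search, and they may share boundary vertices. The monotonicity argument for discarding $v$ also has to survive the changing $T$. To restore locality, we would first try to require that the components share boundary vertices, so that the components of a minimal witness are linked through $L\cap R$ and are discoverable by exploring neighbors of the current boundary.
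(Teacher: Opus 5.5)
Your skeleton matches the paper's. You replace $T$ by $(T\setminus L)\cup(L\cap R)$ so that $|T|$ drops, you use local important separators for the connected case, and you run a queue of terminals. But the three places where the real work happens are not handled, and the ``main obstacle'' you name yourself is left open.

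\textbf{Disconnected witnesses.} This is the central gap. Pushing each component $C_j$ separately to an important $v_j$--$X$ separator destroys the witness property of the union.
\begin{itemize}
\item A minimal disconnected witness is a witness only because its components share non-terminal boundary vertices. If the components could be split into two groups $D_1,D_2$ with $N(D_1)\cap N(D_2)\subseteq T$, one $D_i$ alone would already be a witness.
\item After independent pushing, the boundaries $L_j\cap R_j$ need not overlap. The separator of the union of the pushed pieces can then have size up to $\sum_j|N(C_j)|$, which exceeds $|L\cap R|$.
\item There is no reason the next component meets the $O(q)$ explored vertices. The shared boundary vertex may lie in $L_1\cap R_1$ and have unbounded degree.
\item The count is wrong. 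Branching factor $q^{O(1)}4^k$ over depth $k+1$ gives $q^{O(k)}2^{O(k^2)}$. With $q=2^{O(k)}$ as in \Cref{thm:reducing_adhesion}, this is $2^{O(k^2)}$, which is exactly the $q^{O(k)}$ local-search bottleneck the theorem is meant to beat.
\end{itemize}
The missing idea is to branch in $\Torso(T)$ rather than in $G$. For a minimal witness, $\Torso(T)[(L\setminus R)\cap T]$ is connected (\Cref{lemma:TorsoConnected}). When $G[L\setminus R]$ is disconnected, every terminal of $L\setminus R$ has torso degree at most $6k^2$ (\Cref{lemma:TorsoVolume,lemma:TorsoDegree}, which need $\sigma(T)\le k$). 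One then guesses, in $k^{O(k)}$ ways independent of $q$, a set $S_v$ of $\min\{|(L\setminus R)\cap T|,k+1\}$ witness terminals. Adding a clique on $S_v$ turns the relevant part of the witness into one connected witness in $G_c$. A single call to the connected-witness finder with $Q=q(k+1)+k$ then suffices, so $q$ enters only polynomially.

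\textbf{Adhesion (item~3).} As written, your uncrossing claim is automatic: $L\setminus R$ is always a union of components of $G\setminus(L\cap R)$. It only controls the new components inside $L\setminus R$. The danger is an old component $C$ of $G\setminus T$ that contains vertices of $(L\cap R)\setminus T$. Once these become terminals, a piece of $C$ on the $R$ side has boundary inside $(N(C)\cap R)\cup(C\cap L\cap R)$, which can exceed $k$. The paper avoids this by first making the witness lean ($\MakeLean$: a minimum separator in $G[L]$ between $L\cap T$ and $L\cap R$) and then applying \Cref{lemma:UpdateAdhesion}. This adhesion bound is also what the torso-degree bound depends on. So is the bound $|C|\le q$ on components of $G\setminus T$ used in \Cref{lemma:torsotime}.

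\textbf{Queue invariant.} $T$ does not only shrink, since $(L\cap R)\setminus T$ is added. A vertex already certified witness-free can later acquire a witness $(L',R')$ for $T'$ in which a new terminal $x$ lies in $L'\cap R'$. Re-pushing $x$ puts no vertex of $Q$ into $L'\setminus R'$. The paper pushes all of $L^\star$ and, for each new terminal $x$, its neighbors in $\Torso(T')$. By minimality, one of those neighbors lies in $(L'\setminus R')\cap T'$.
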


Throughout this section, we will work with a graph $G$ and a terminal set $T \subseteq V(G)$. The graph $G$ will remain unchanged throughout, and should be thought of simply as the final output graph from the previous section. The terminal set $T$ itself will be updated. We start with $T = V(G)$. The key sub-routine is to find an $(X,v,k)$ witness for $T$ in $G$ if it exists. If it exists, we update $T$, while \emph{keeping $G$ unchanged}. 

\subsection{\texorpdfstring{Detecting $(X,v,k)$-Witnesses for $T$.}{Detecting (X,v,k)-Witnesses for T.}}

Naturally, we need a subroutine to detect $(X,v,k)$-witnesses for $T$. If we only want to detect the connected ones (formalized in \Cref{lemma:connectedwitness}), then the local important separator algorithm suffices. However, the main challenge is that we need to detect even the non-connected ones. To this end, we need to introduce the concepts of \emph{minimal} $(X,v,k)$-witness for $T$ (\Cref{def:Minimal}) and the \emph{torso} of $T$ (\Cref{def:torso}). The main lemma of this subsection is \Cref{lemma:unconnectedwitness}.

\begin{lemma}\label{lemma:connectedwitness}
Let $G$ be a graph and $T \subseteq V(G), X \subseteq T, v \in T$, with $|X|\geq k+1$. Suppose we are promised that there is no connected $(X,v,q,k)$-witness in $G$ for some $q \geq k$. Then there is an algorithm that determines if there is a connected $(X,v,k)$-witness for $T$ in $G$, and finds one if it exists. The algorithm runs in time $q^2\cdot 2^{O(k)}$ and it is correct with high probability.
\end{lemma}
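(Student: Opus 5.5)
We reduce to \Cref{thm:breakable} exactly as in the proof of \Cref{lemma:witness}. Build $G'$ from $G$ by adding an artificial sink $t$ adjacent to every vertex of $X$. Then run the local important separator algorithm of \Cref{thm:breakable} on $G'$ with source $v$, sink $t$, cut-size parameter $k$ and volume parameter $\nu := q(q+k)$, which is at most $2q^2$ since $q\geq k$. For each returned separator $(L',R')$ we set $(L,R):=(L',R'\setminus\{t\})$ and check in $O(\nu+k)$ time whether it is a connected $(X,v,k)$-witness for $T$. This means checking that $|L\cap R|\le k$, that $|L\cap T|>|L\cap R|$, that $X\subseteq R$, and connectivity together with $N_G(L\setminus R)=L\cap R$. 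We output any separator that passes, and $\bot$ otherwise. There are at most $4^k$ candidates, so the running time is $\Otil(q^2 2^{O(k)})$.

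\emph{Step 1: pushing a witness to an important one.} Fix a connected $(X,v,k)$-witness $(L,R)$ for $T$. Then $(L,R\cup\{t\})$ is a connected $(v,t)$-separator in $G'$. Take an important $(v,t)$-separator $(L'_\star,R'_\star)\succeq(L,R\cup\{t\})$ of size at most $|L\cap R|$; for example, a maximal one under $\succeq$ among those of size at most $|L\cap R|$. Set $(L_\star,R_\star):=(L'_\star,R'_\star\setminus\{t\})$. Monotonicity of the witness condition is the point to check. The set $L_\star\setminus R_\star$ contains $L\setminus R$, and $N(L\setminus R)=L\cap R$. Every vertex of $L\cap R$ is either absorbed into $L_\star\setminus R_\star$ or stays on the boundary, so $L\subseteq L_\star$. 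Hence $|L_\star\cap T|\ge |L\cap T|>|L\cap R|\ge|L_\star\cap R_\star|$. Also $X\subseteq R_\star$, since $X\subseteq N_{G'}(t)$ and $t$ is on the sink side. Finally, $|X|\ge k+1$ rules out $R_\star\setminus L_\star=\emptyset$, as in \Cref{lemma:witness}. So $(L_\star,R_\star)$ is again a connected $(X,v,k)$-witness for $T$.

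\emph{Step 2: volume bound, the main obstacle.} We must show that $(L'_\star,R'_\star)$ has volume at most $\nu$, so that \Cref{thm:breakable} lists it with high probability. Note that $(L_\star,R_\star)$ satisfies $X\subseteq R_\star$, $v\in L_\star\setminus R_\star$, $|L_\star\cap R_\star|\le k$, and it is connected with $N_G(L_\star\setminus R_\star)=L_\star\cap R_\star$. Hence it would be a connected $(X,v,q,k)$-witness in $G$ if $|L_\star|\ge q+1$. The promise therefore forces $|L_\star|\le q$. Since $G$ is simple, each vertex of $L_\star\setminus R_\star$ has all its neighbours inside $L_\star$, so its degree is at most $|L_\star|-1<q$. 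Thus $\deg_{G'}(L'_\star\setminus R'_\star)\le q\cdot(q+1)\le\nu$; the extra edges to $t$ touch only $X\subseteq R_\star$, so they do not contribute. The delicate point is that we may not use $|L_\star\cap T|$ to bound anything; we need the plain vertex count $|L_\star|$, and it is exactly the promise that bounds it.

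\emph{Step 3: conclusion.} Every separator we output has been verified explicitly, so the algorithm never reports a false positive. If a connected witness exists, Steps 1 and 2 put one in the list with high probability, and the verification accepts it. If no witness exists, the algorithm outputs $\bot$, so both directions are correct.
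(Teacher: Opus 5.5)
Your proposal is correct and follows essentially the same route as the paper. Both add a sink $t$ adjacent to $X$, run the local important-separator algorithm with volume parameter $\Theta(q^2)$, and filter the at most $4^k$ outputs for $|L\cap T|>|L\cap R|$; both rely on this condition surviving the push to an important separator, and on the promise capping the pushed separator's size and hence its volume. Your explicit check that $L\subseteq L_\star$ (via $N(L\setminus R)=L\cap R$) and your bound through $|L_\star|\le q$ spell out what the paper states tersely, but the argument is the same.
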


\begin{proof}
Observe that, since there is no connected $(X,v,q,k)$-witness in $G$, there are at most $q$ vertices in $L \setminus R$ for any connected $(X,v,k)$-witness $(L,R)$ for $T$ in $G$. The volume of any such witness is therefore at most $q(q + k)\leq 10 q^{2}$.

We now use the algorithm from~\Cref{lemma:witness} with $\nu = 10q^2$ to find a connected $(X,v,k)$-witness if it exists. The original \Cref{lemma:witness} does not directly provide this, but the algorithm and proof are almost the same - the only difference from \Cref{lemma:witness} is that, after enumerating the relevant
important \(v\)-\(X\) separators, we filter for the condition
\(|L\cap T|>|L\cap R|\). This condition is preserved when a separator is
pushed to an important separator, since the left side only grows and the
separator size does not increase.
\end{proof}

\begin{definition}[The Torso of a Terminal Set]%
\label{def:torso}%
Given a connected graph $G$ and a terminal set $T \subseteq V(G)$, the torso of $T$, denoted $\Torso(T)$ is the graph with vertex set $T$ obtained by keeping every existing edge in $G$ and  turning every adhesion of $T$ into a clique. Equivalently, there is an edge $(u,v) \in E(\Torso(T))$ for distinct $u,v \in T$ whenever there is a path between $u$ and $v$ in $G$ none of whose internal vertices are in $T$.\footnote{We do not add parallel edges between any pairs of vertices, so there is at most one edge between every pair of vertices.}  
\end{definition}

\begin{definition}
\label{def:Minimal}
We say that an $(X,v,k)$-witness for $T$, $(L,R)$, is \emph{(inclusion-wise) minimal} if there is no $(X,u,k)$-witness $(L',R')$, with $L' \setminus R' \subset L \setminus R$ for any $u \in T$.
\end{definition}

The motivation for defining minimal $(X,v,k)$-witnesses is that, if we want to eliminate all $(X,v,k)$-witnesses, it suffices to eliminate all the minimal ones. Furthermore, minimal witnesses give more structure and connectivity that we will exploit. We remark that this notion of minimal witness has appeared in previous papers (e.g., \cite{cygan2020randomized,anand2025unbreakable}) on unbreakable decomposition in different forms. For completeness and clarity, we will fully prove all the structural properties \Cref{lemma:TorsoConnected,lemma:TorsoVolume,lemma:TorsoDegree} that we will exploit for minimal witnesses.

\begin{lemma}
\label{lemma:TorsoConnected}
Let $G$ be a graph with terminal set $T\subseteq V(G)$. Let $H:=\Torso(T)$ be the torso of $T$ in $G$. Then, for any minimal $(X,v,k)$-witness $(L,R)$ for $T$ in $G$, $H[(L \setminus R) \cap T]$ is connected.
\end{lemma}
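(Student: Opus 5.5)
We argue by contradiction using minimality. Let $(L,R)$ be a minimal $(X,v,k)$-witness for $T$, and set $P:=(L\setminus R)\cap T$. Note $P\neq\emptyset$ by the observation at the start of this section. Suppose $H[P]$ is disconnected, and let $P_1$ be the vertex set of one connected component of $H[P]$, with $P_2:=P\setminus P_1\neq\emptyset$.

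\emph{Step 1: lift $P_1$ to a side in $G$.} Let $C_1$ be the union of the connected components of $G[L\setminus R]$ that meet $P_1$, together with all non-terminal vertices of $L\setminus R$ reachable from $P_1$ inside $G[L\setminus R]$. Because any $G$-path inside $L\setminus R$ from $P_1$ to $P_2$ whose internal vertices avoid $T$ would yield an edge of $H$ between $P_1$ and $P_2$, I will instead decompose more carefully. Consider $G[L\setminus R]$ with $P_2$ removed, and let $C_1$ be the set of vertices reachable from $P_1$ there. Then $C_1\cap T=P_1$ is impossible to violate, and $C_1\cap P_2=\emptyset$. Every $G$-neighbor of $C_1$ lies in $(L\cap R)\cup P_2$. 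It cannot lie in $P_2$. Otherwise a path from $P_1$ through non-terminals of $C_1$ would reach $P_2$; cutting this path at the last terminal of $P_1$ on it gives an $H$-edge from $P_1$ to $P_2$, contradicting the choice of $P_1$ as a component. So $N_G(C_1)\subseteq L\cap R$. Define $C_2:=(L\setminus R)\setminus C_1\supseteq P_2$ symmetrically, so that there are no edges between $C_1$ and $C_2$.

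\emph{Step 2: split the cut.} Set $Z_i:=N_G(C_i)\subseteq L\cap R$ and $L_i:=C_i\cup Z_i$, $R_i:=V(G)\setminus C_i$. Each $(L_i,R_i)$ is a separator with $X\subseteq R_i$ and $|L_i\cap R_i|=|Z_i|\le |L\cap R|\le k$. It remains to show that at least one $i$ satisfies $|L_i\cap T|>|Z_i|$. Then that separator is an $(X,u,k)$-witness for $T$, where $u\in P_i$ is a terminal obtained via the same observation. Moreover $L_i\setminus R_i=C_i\subsetneq L\setminus R$, which contradicts minimality.

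\emph{Step 3 (main obstacle): the counting.} Suppose both fail, so $|P_i|+|Z_i\cap T|\le|Z_i|$, that is, $|P_i|\le|Z_i\setminus T|$. This is where the argument needs care, since $Z_1$ and $Z_2$ may overlap. Let $W:=L\cap R$. We have $|L\cap T|=|P_1|+|P_2|+|W\cap T|>|W|$. Counting vertices in $Z_1\cup Z_2\subseteq W$ alone does not suffice, because $Z_1\cap Z_2$ could be large.

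The fix I would try first is to use minimality more cleverly, and a cleaner route is to avoid the symmetric split. Take instead $L':=C_1\cup W$ and $R':=V(G)\setminus C_1$, which is a valid separator with cut $W$, size at most $k$, and $L'\setminus R'=C_1\subsetneq L\setminus R$. We need $|L'\cap T|=|P_1|+|W\cap T|>|W|$. If this fails for $C_1$, then $|P_1|+|W\cap T|\le |W|$, so $|P_2|\ge|L\cap T|-|W|+ (|W|-|W\cap T|-|P_1|)+\dots$. More simply, $|L\cap T|-|W|=|P_1|+|P_2|+|W\cap T|-|W|>0$, so it suffices to show that for some $i$, $|P_i|+|W\cap T|>|W|$. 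This can fail for both $i$. So I would combine the two strategies: use the cut $W$ for the first component and the reduced cut $Z_2$ for the second, tightening via $Z_2$ with vertices of $W\setminus Z_2$ moved to $R$ (they lie in $R$'s side freely, since they have no neighbors in $C_2$). I expect the decisive inequality to come from a submodularity-style count of $|Z_1|+|Z_2|\le |W|+|Z_1\cap Z_2|$ combined with the fact that each vertex of $Z_1\cap Z_2\cap T$ is counted in both $L_i$. Resolving this counting, and possibly choosing $P_1$ as a particular component to make it go through, is the step I expect to be hardest.
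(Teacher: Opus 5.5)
Your Steps 1 and 2 are fine, and the reformulation at the start of Step 3 is the right one. Since $L_i\cap T=P_i\cup(Z_i\cap T)$, the condition $|L_i\cap T|>|Z_i|$ is equivalent to $|P_i|>|Z_i\setminus T|$. Likewise, $|L\cap T|>|W|$ is equivalent to $|P_1|+|P_2|>|W\setminus T|$.

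The gap is that the proof stops there: the counting step is left open. The routes you suggest (a submodularity bound $|Z_1|+|Z_2|\le|W|+|Z_1\cap Z_2|$, a special choice of $P_1$, mixing the cuts $W$ and $Z_2$) are not what closes it. The missing idea is that the two boundaries can overlap only in terminals, i.e. $Z_1\cap Z_2\subseteq T$. Suppose a non-terminal $w$ has neighbours $x_1\in C_1$ and $x_2\in C_2$. As in your Step 1, $x_1$ is joined to some $p_1\in P_1$ by a path in $G[C_1]$ with no internal terminals. Likewise $x_2$ is joined to some $p_2\in P_2$ by such a path in $G[C_2]$. Concatenating through $w$ gives a $p_1$--$p_2$ path in $G$ whose internal vertices all avoid $T$. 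That is an edge of $H$ between $P_1$ and $P_2$, contradicting that $P_1$ is a component of $H[P]$. This is where the torso hypothesis enters the counting, and it is the paper's key observation $N_G(D_1)\cap N_G(D_2)\subseteq T$.

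Two further points are needed to make this rigorous. First, on the $x_2$ side, every vertex of $C_2$ must lie in a component of $G[L\setminus R]$ that contains a terminal, which then lies in $C_2\cap T=P_2$. Because you defined $C_2$ as a complement, this has to come from minimality: a terminal-free component $K$ could be moved into $R$, and $(L\setminus K,R\cup K)$ would be a witness with strictly smaller left side.

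Second, once $Z_1\cap Z_2\subseteq T$ holds, the sets $Z_1\setminus T$ and $Z_2\setminus T$ are disjoint subsets of $W\setminus T$. If both $i$ failed you would get $|P_1|+|P_2|\le|Z_1\setminus T|+|Z_2\setminus T|\le|W\setminus T|$, a contradiction. The terminals in $Z_1\cap Z_2$ that worried you cancel from both sides of each inequality and cause no double-counting. So your symmetric split already works as is, and no submodularity argument is needed.
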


\begin{proof}
Consider a minimal $(X,v,k)$-witness $(L,R)$ for $T$ in $G$, and suppose to the contrary that $H[(L \setminus R) \cap T]$ is not connected. We assume w.l.o.g. that $N(L \setminus R) = L \cap R$, because otherwise, we can consider another minimal $(X,v,k)$-witness $(L',R'):=(L \setminus R \cup (N(L \setminus R)), V(G) \setminus (L \setminus R))$ (note that $(L'\setminus R')\cap T = (L\setminus R)\cap T$, implying the minimality) and continue the rest of the proof.

Let $C_1, C_2, \ldots, C_t$ denote the connected components of $G[L \setminus R]$. Note that each $C_{i}$ must intersect $T$, since otherwise $(L,R)$ is not minimal. Also, each connected component of $H[(L \setminus R)\cap T]$ must be the union of $C_i \cap T$ for some collection of components $C_i$. Now, since $H[(L \setminus R)\cap T]$ is disconnected, we can partition $L\setminus R$ into two parts $D_{1}$ and $D_{2}$ such that
\begin{itemize}
\item there is no edge in $H$ connecting $D_{1}\cap T$ and $D_{2}\cap T$, and
\item $D_{1} = C_{1}\cup \cdots\cup C_{h}$ and $D_{2} = C_{h+1}\cup\cdots\cup C_{t}$ (after re-numbering $\{C_{i}\mid 1\leq i\leq t\}$).
\end{itemize}

The key observation is that $N_{G}(D_1) \cap N_{G}(D_2) \subseteq T$. Otherwise, there is a vertex $w\notin T$ lying in $N_{G}(D_{1})\cap N_{G}(D_{2})$, i.e., $w$ is adjacent to some vertex $w_{1}\in D_{1}$ and $w_{2}\in D_{2}$. Note that, $w_{1}$ can reach a $T$-vertex $w'_{1}\in C_{w_{1}}\cap T\subseteq D_{1}\cap T$, where $C_{w_{1}}\ni w_{1}$ is the component containing $w_{1}$ of $G[L\setminus R]$, via a path in $G[C_{w_{1}}]$ internally disjoint from $T$. We can find $w'_{2}\in D_{2}\cap T$ similarly. This creates a path in $G$ internally disjoint from $T$ connecting $T$-vertices $w'_{1}$ and $w'_{2}$. By the definition of $\Torso(T)$, there is an edge between $w'_{1}\in D_{1}\cap T$ and $w'_{2}\in D_{2}\cap T$, a contradiction.

Now we argue that one of $(D_1 \cup N(D_1), V(G) \setminus D_1)$ or $(D_2 \cup N(D_2), V(G) \setminus D_2)$ must also be an $(X,u,k)$ witness for some $u \in T$.

The proof follows by simple counting. We know that $|L \cap T| > |L \cap R|$, or equivalently, $|(L \setminus R) \cap T| > |(L \cap R) \setminus T| $. But $(L \setminus R) \cap T$ is the union of disjoint sets $D_1 \cap T$ and $D_2 \cap T$, and $(L \cap R) \setminus T$ is the union of disjoint sets $N(D_1) \setminus T$ and $N(D_2) \setminus T$. Thus, it follows that $|D_i \cap T|> |N(D_i) \setminus T|$ for at least one of $i \in \{1,2\}$. Pick this $i$, and let $u$ be any vertex of $D_i\cap T$. Then it follows that $(D_i \cup N(D_i), V(G) \setminus D_i)$ is an $(X,u,k)$ witness, contradicting the minimality of the witness $(L,R)$.

\end{proof}

\begin{lemma}
\label{lemma:TorsoVolume}
For any minimal $(X,v,k)$-witness $(L,R)$ for $T$ in $G$, every connected component of $G[L \setminus R]$ intersects $T$.

Further, either (a) $L \setminus R$ is connected, or (b) there are at most $k + 1$ connected components of $G[L \setminus R]$ and each of these components has at most $k + 1$ vertices in $T$.
\end{lemma}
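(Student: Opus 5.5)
The plan reuses the minimality argument from \Cref{lemma:TorsoConnected} and then adds a counting step.

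\textbf{First claim.} Fix a minimal $(X,v,k)$-witness $(L,R)$ for $T$. As in \Cref{lemma:TorsoConnected}, we may assume $N(L\setminus R)=L\cap R$: replacing $(L,R)$ by $(L\setminus R\cup N(L\setminus R),\,V(G)\setminus(L\setminus R))$ shrinks the cut and keeps $(L\setminus R)\cap T$ unchanged, so it preserves both the witness property and minimality. Let $C_1,\dots,C_t$ be the components of $G[L\setminus R]$, and suppose some $C_j$ misses $T$. Set $D=(L\setminus R)\setminus C_j$. This $D$ is nonempty, since $(L\setminus R)\cap T\neq\emptyset$ by the earlier observation. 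Then
\[
(D\cup N(D),\,V(G)\setminus D)
\]
is a separator with $N(D)\subseteq L\cap R$ and $D\cap T=(L\setminus R)\cap T$. Its cut is contained in $L\cap R$, so its cut-side terminals $N(D)\cap T$ are a subset of $L\cap R\cap T$. Hence the inequality
\[
|(L\setminus R)\cap T|>|(L\cap R)\setminus T|
\]
still holds when $L\cap R$ is replaced by $N(D)$. The sets $X\subseteq R$ and the size bound $\le k$ are also preserved. Taking $u\in D\cap T$ gives an $(X,u,k)$-witness with a strictly smaller left interior, which contradicts minimality. This proves that every component meets $T$.

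\textbf{Second claim.} Assume $t\ge 2$. The key tool is the same counting as at the end of \Cref{lemma:TorsoConnected}, applied to a single component $C_i$ against the rest. For each $i$, minimality forbids $(C_i\cup N(C_i),V(G)\setminus C_i)$ from being a witness. This separator has size $|N(C_i)|\le k$, $X\subseteq R$ is kept, and $C_i\cap T\neq\emptyset$. So the only way it can fail is
\[
|C_i\cap T|\le |N(C_i)\setminus T|.
\]
Since $N(C_i)\subseteq L\cap R$, this already gives $|C_i\cap T|\le k$. To reach the stated bound of $k+1$ terminals per component, it suffices to use the cruder estimate $|C_i\cap T|\le |N(C_i)|\le k$, which is even stronger.

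\textbf{Bounding the number of components.} Sum the inequality over $i$:
\[
|(L\setminus R)\cap T|=\sum_i |C_i\cap T|\le \sum_i |N(C_i)\setminus T|.
\]
Each $C_i$ contains at least one terminal, so the left side is at least $t$. To bound $t$ I would use $|(L\setminus R)\cap T|>|(L\cap R)\setminus T|$ together with the per-component inequality. That alone does not bound $t$, because the sets $N(C_i)$ can overlap. Simpler: since $|L\cap R|\le k$ and $|L\cap T|>|L\cap R|$ is required, I would try to show $t\le k+1$ by splitting into two unions $D_1$, $D_2$ of components, as in \Cref{lemma:TorsoConnected}. If some vertex of $(L\cap R)\setminus T$ is adjacent to only one component, the counting is easy; the overlaps are the issue.

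I expect this overlap of the neighborhoods $N(C_i)$ to be the main obstacle. My first attempt would be an averaging argument. Each nonterminal cut vertex $w$ contributes $1$ to $|(L\cap R)\setminus T|$ but appears in possibly many $N(C_i)$. Alternatively, I would use the torso connectivity from \Cref{lemma:TorsoConnected} to argue that only at most $k+1$ components can exist: each additional component must be torso-connected to the others through cut vertices, and the cut has only $k$ vertices. If this fails, I would fall back to bounding $t\le |(L\setminus R)\cap T|$. I would then bound the total number of terminals by a minimal-counterexample argument that removes one component at a time.
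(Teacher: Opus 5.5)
Your first claim is correct, and so is your per-component bound. Isolating $C_i$ by $(C_i\cup N(C_i),V(G)\setminus C_i)$ and invoking minimality gives $|C_i\cap T|\le |N(C_i)\setminus T|\le k$. This is a valid, even slightly stronger, alternative to the paper's route, which deletes a \emph{different} component $C_2$ and notes that $C_1$ alone still carries at least $k+2>|L\cap R|$ terminals.

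\textbf{The gap.} You never establish the bound $t\le k+1$ on the number of components, and the inequalities you collect cannot yield it. Suppose $L\cap R=\{w\}$ with $w\notin T$, and $w$ is adjacent to $t$ components that each contain exactly one terminal. Then $|C_i\cap T|=1=|N(C_i)\setminus T|$ for all $i$, and the summed inequality reads $t\le t$, for every $t$. The torso restricted to $(L\setminus R)\cap T$ is even a clique in this example, so the torso-connectivity idea does not help either.

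\textbf{The missing step.} The obstacle you single out, overlap of the neighbourhoods $N(C_i)$, is a red herring, because there is no need to shrink the cut. Keep $L\cap R$ fixed and move one whole component across. If $t\ge k+2$, take any component $C^*$ and consider $(L\setminus C^*,R\cup C^*)$:
\begin{itemize}
\item it is a separator, since no edge leaves $C^*$ within $L\setminus R$;
\item its cut is still $L\cap R$, of size at most $k$;
\item $X$ stays on the right;
\item by your first claim, its left side still contains a terminal from each of the remaining $t-1\ge k+1>|L\cap R|$ components.
\end{itemize}
So it is an $(X,u,k)$-witness for any terminal $u$ in $(L\setminus R)\setminus C^*$, with strictly smaller interior, contradicting minimality. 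Equivalently, your own construction from the first claim works with $D=(L\setminus R)\setminus C^*$, since $|D\cap T|\ge k+1>|N(D)|$.

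This is the paper's argument. Your ``remove one component at a time'' fallback gestures at it but is not carried out. The count it needs is simply that the remaining $t-1$ components contribute at least $t-1$ terminals.
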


\begin{proof}
Suppose there is a connected component of $L \setminus R$, say $C_1$, that does not intersect $T$. Then it follows that $(L \setminus C_1, R \cup C_1)$ is also an $(X,v,k)$ witness, which contradicts the minimality of $(L,R)$. Similarly, if there is a vertex $x$ of $L \cap R$ that does not have any edge to $L \setminus R$, then $(L \setminus x, R)$ is also an $(X,v,k)$ witness, a contradiction.

If $L \setminus R$ is connected, we satisfy condition (a). 

Otherwise, suppose that $L \setminus R$ has more than $k + 1$ connected components. By the above paragraph, we can assume that each such connected component intersects $T$. Let $C^*$ be any component of $ L \setminus R$. It is easy to see that $(L \setminus C^*, R \cup C^*)$ is still an $(X,v,k)$-witness, contradicting minimality.

Next, suppose that there is a connected component $C_1$ of $L \setminus R$ that has more than $k + 1$ vertices of $T$. Since $L \setminus R$ is not connected, there is another connected component $C_2$ of $L \setminus R$. But then $(L \setminus C_2, R \cup C_2)$ is still an $(X,v,k)$-witness, again contradicting minimality.

\end{proof}

\begin{lemma}
\label{lemma:TorsoDegree}
For any minimal $(X,v,k)$-witness $(L,R)$ for $T$ in $G$, if $\sigma(T) \leq k$ and $G[L \setminus R]$ is not connected, then for any $v \in (L \setminus R) \cap T$, the degree of $v$ in $\Torso(T)$ is at most $6k^2$.

\end{lemma}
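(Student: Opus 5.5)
The plan is to bound the torso-neighbours of a terminal $v \in (L\setminus R)\cap T$ by splitting them according to where they sit in $G$. By \Cref{lemma:TorsoVolume}, the case hypothesis that $G[L\setminus R]$ is disconnected puts us in case (b). So $G[L\setminus R]$ has at most $k+1$ components, and each contains at most $k+1$ terminals. As in the proof of \Cref{lemma:TorsoConnected}, I would first normalize to $N(L\setminus R)=L\cap R$, which is harmless by minimality. Let $C$ be the component of $G[L\setminus R]$ containing $v$, and let $u\in T$ be a torso-neighbour of $v$. By \Cref{def:torso}, there is a $v$–$u$ path $P$ in $G$ whose internal vertices avoid $T$.

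The case split is by where $P$ first leaves $C$. If $P$ stays inside $C$, then $u\in C\cap T$, which gives at most $k+1$ choices. Otherwise, since $C$ is a component of $G[L\setminus R]$, the first vertex of $P$ outside $C$ must lie in $N(C)\subseteq L\cap R$. If that vertex is $u$ itself, then $u\in L\cap R$, which gives at most $k$ choices. If not, it is an internal vertex $w\in (L\cap R)\setminus T$.

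Those internal vertices $w$ are the main obstacle: a single non-terminal $w$ can open up paths to many terminals on the $R$ side. To handle this, I would fix $w$ and let $D$ be the component of $G\setminus T$ containing $w$. Every vertex after $w$ on $P$ up to $u$ is either in $D$ or is $u$, which lies in $N_G(D)$. Hence $u\in N_G(D)$, and since $\sigma(T)\le k$, we get $|N_G(D)|\le k$. So each such $w$ contributes at most $k$ torso-neighbours. The number of choices for $w$ is at most $|L\cap R|\le k$.

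Summing the three cases gives
\[
\deg_{\Torso(T)}(v)\le (k+1)+k+k\cdot k\le 6k^2 \quad\text{for } k\ge 1.
\]
The looser constant $6k^2$ leaves room for a cruder count, for example bounding $u$ by all terminals of $L\setminus R$, which is at most $(k+1)^2$, plus $k+k^2$ more. That is still at most $6k^2$.

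I would double-check two points. The first is that disconnectedness is what licenses case (b) of \Cref{lemma:TorsoVolume}. The second is that the normalization to $N(L\setminus R)=L\cap R$ keeps $v$ in the same component of $G[L\setminus R]$.
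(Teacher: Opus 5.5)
Your proof is correct and follows essentially the same route as the paper. Both arguments split the torso neighbours of $v$ into three groups: terminals of $L\setminus R$, vertices of $L\cap R$, and vertices reached through a non-terminal of $L\cap R$. The last group is charged to the at most $k$ components of $G\setminus T$ meeting $L\cap R$, each of adhesion at most $\sigma(T)\le k$.

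The differences are minor. You organise the split by where the path first leaves $v$'s own component $C$, so you only need the at most $k+1$ terminals of $C$ rather than all $(k+1)^2$ terminals of $L\setminus R$, which gives a slightly tighter total of $(k+1)+k+k^2$. Your normalisation $N(L\setminus R)=L\cap R$ is harmless but not needed, since $N_G(C)\subseteq L\cap R$ holds for every component $C$ of $G[L\setminus R]$.
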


\begin{proof}
Fix a $v \in (L \setminus R)$. First, there are at most $k  + 1 $ connected components having $k + 1$ vertices of $T$ in $(L \setminus R)$, so the contribution to the degree of $v$ from these vertices is at most $(k + 1)^2$. Second, there are at most $k$ vertices in $L \cap R$. 

Third, we show that if $(v,u)$ is an edge in $\Torso(T)$ with $u \in(R \setminus L)$, then there can be at most $k$ connected components in $G \setminus T$ which have $u$ and $v$ both as part of their adhesion. The reason for this is that for every such connected component $C$, there is a path between $u$ and $v$ whose every internal vertex is in $C$. But since $L \cap R$ separates $L \setminus R$ from $R \setminus L$, one such internal vertex must be in $L \cap R$, and $|L \cap R| \leq k$. Thus there are at most $k$ connected components $C$. Each adhesion contributes a degree of at most $k$ to $v$, and there are at most $k$ of them, leading to a total contribution of at most $k^2$ to the degree of $v$.

Together, the total degree of $v$ is bounded by $(k + 1)^2 + k + k^2 \leq 2k^2 + 3k + 1 \leq 6k^2$.

\end{proof}

We are ready to show the subroutine \Cref{lemma:unconnectedwitness} for detecting (minimal) $(X,v,k)$-witnesses for $T$.

\begin{lemma}\label{lemma:unconnectedwitness}
Let $G$ be a graph and $T \subseteq V(G), X \subseteq T, v \in T$, with $|X|\geq k+1$, and suppose that $\sigma(T) \leq k$. Suppose we are promised that there is no connected $(X,u,q,k)$-witness in $G$ for any $u \in V(G)$.

Then given access to the adjacency list of $\Torso(T)$, there is an algorithm running in time $q^2\cdot k^{O(k)}$ that either returns an $(X,v,k)$-witness for $T$ in $G$, or declares that there is no minimal $(X,v,k)$-witness for $T$ in $G$. 
\end{lemma}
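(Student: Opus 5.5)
The plan is to split on the dichotomy of \Cref{lemma:TorsoVolume}. Fix a minimal $(X,v,k)$-witness $(L,R)$ for $T$ with $v\in L\setminus R$; we may assume $N(L\setminus R)=L\cap R$ by the normalization used in the proof of \Cref{lemma:TorsoConnected}. The algorithm runs two independent searches, and whenever either produces a separator we check it directly against the definition, so every output is a genuine $(X,v,k)$-witness.

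\emph{Case (a): $G[L\setminus R]$ connected.} Then $(L,R)$ is a connected $(X,v,k)$-witness. We invoke \Cref{lemma:connectedwitness}, which is applicable because there is no connected $(X,u,q,k)$-witness and $q\geq k$. It finds such a witness in $q^2 2^{O(k)}$ time with high probability. This handles every minimal witness with connected left side.

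\emph{Case (b): $G[L\setminus R]$ disconnected.} Here \Cref{lemma:TorsoVolume} gives at most $k+1$ components, each containing at most $k+1$ terminals. Hence $|(L\setminus R)\cap T|\leq (k+1)^2$. By \Cref{lemma:TorsoConnected}, $(L\setminus R)\cap T$ induces a connected subgraph of $H=\Torso(T)$ containing $v$. By \Cref{lemma:TorsoDegree}, every vertex of this set has $H$-degree at most $6k^2$.

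We therefore enumerate all connected vertex sets $Z\ni v$ in $H$ of size at most $(k+1)^2$, exploring only through vertices of degree at most $6k^2$. The exploration stops at any vertex whose adjacency list in $\Torso(T)$ exceeds $6k^2$, which costs $O(k^2)$ time per inspected vertex. The standard bound on connected subgraphs of size $s$ containing a fixed vertex in a graph of maximum degree $d$ is $(ed)^{s}$. This gives $(6ek^2)^{(k+1)^2}$ candidates, which is $k^{O(k^2)}$, not $k^{O(k)}$. This is the main obstacle.

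To reach $k^{O(k)}$, I would instead branch more cleverly. Grow $Z$ one component at a time. Each component of $G[L\setminus R]$ contains at most $k+1$ terminals, and there are at most $k+1$ components. The torso neighbourhood of the current $Z$ also has size $O(k^3)$, so we branch on which torso-neighbour of $Z$ lies in $L\setminus R$ and which lies in the cut. Each branching step either adds a terminal to $Z$ or places a vertex into the cut, so a budget argument charging vertices to the cut ($\leq k$) bounds the branching. Alternatively, and more simply, I would guess one terminal $u_i$ per component, at most $k+1$ guesses among the $O(k^{3})$ torso-vertices near $v$. Local connectivity then determines each component: each $G[C_i]$ is connected with $|C_i\cap T|\leq k+1$ and $N(C_i)\subseteq L\cap R$, so $C_i$ is a connected $(X,u_i,\cdot,k)$-type set. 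By the no-connected-witness promise, $|C_i|\leq q$ and its volume is $O(q^2)$. I expect this step to be where the real work lies.

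For each candidate terminal set $Z$, we recover a separator whose left side contains $Z$ and excludes $X$. We compute $L\setminus R$ as $Z$ together with the $G$-components of $G\setminus T$ that attach only to $Z$ and the cut. Concretely, we run a local min vertex cut between $Z$ (contracted) and $X\cup(T\setminus Z)$ (contracted to a sink), restricted to size $\leq k$ and volume $O(kq^2)$, using \Cref{lemma:LocalCuts}. We then check whether $|L\cap T|>|L\cap R|$. Correctness follows because the true minimal witness has $(L\setminus R)\cap T=Z$ for some enumerated $Z$. Its left side has at most $k+1$ components, each of size $\leq q$, so its volume is at most $(k+1)q(q+k)$. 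Hence the local cut call with cut-size $k$ finds some separator of size $\leq|L\cap R|$ isolating exactly $Z$'s terminals, which is again a witness. Per candidate this costs $q^2\poly(k)$, so the total is $q^2k^{O(k)}$ provided the enumeration above achieves $k^{O(k)}$ candidates.
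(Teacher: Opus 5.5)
Your Case (a) matches the paper. Case (b) has a genuine gap, and it is the one you flag yourself. Because you try to recover the whole terminal set $Z=(L\setminus R)\cap T$, which can have $(k+1)^2$ elements, you are forced into a $k^{O(k^2)}$ enumeration, and neither proposed repair closes it.

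\begin{itemize}
\item The dichotomy ``a torso-neighbour of $Z$ lies in $L\setminus R$ or in the cut'' is false. A terminal in $L\setminus R$ can have torso-neighbours in $R\setminus L$, joined through a component of $G\setminus T$ that contains a non-terminal cut vertex. Moreover, the charging argument only bounds the cut placements; the up to $(k+1)^2$ terminal-adding steps each still branch.
\item Guessing one terminal per component ``among $O(k^3)$ torso-vertices near $v$'' is unjustified. $Z$ may be torso-connected only in a path-like way, so a far component's terminals need not be within $O(1)$ torso-hops of $v$.
\end{itemize}

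The missing idea is that you never need all of $Z$. Guess only $w=\min\{|Z|,k+1\}$ terminals, grown from $v$ as a torso-connected set. By \Cref{lemma:TorsoConnected,lemma:TorsoDegree} there are at most $6k^2(k+1)$ candidates per step and at most $k$ steps, so $k^{O(k)}$ branches. Since $k+1>|L\cap R|$, the union $C$ of the components of $G[L\setminus R]$ meeting the guessed set $S_v$ is already a witness, even if the other components are discarded. Adding a clique on $S_v$ makes $C$ connected in the modified graph $G_c$, so \Cref{lemma:connectedwitness} applies there. Its promise holds with $Q=q(k+1)+k$: a connected left side in $G_c$ splits into at most $k+1$ components in $G$, and one of them would then be a connected $(X,u,q,k)$-witness in $G$.

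Your separator-recovery step is also flawed, independently of the enumeration. Contracting $T\setminus Z$ into the sink makes the terminals of $L\cap R$ undeletable. If such a terminal is adjacent to a vertex of $Z$, the source is adjacent to the sink and no separator exists, so the true witness is lost. Even if you contract only $X$, the routine of \Cref{lemma:LocalCuts} returns an arbitrary separator of size at most $\lambda$, not a furthest one. Nothing then guarantees $|L'\cap T|>|L'\cap R'|$, let alone that it ``isolates exactly $Z$'s terminals''. The paper avoids this by enumerating important separators. An important separator $(L^\star,R^\star)$ dominating a connected witness $(L,R)$ has $L^\star\supseteq L$ and $|L^\star\cap R^\star|\le|L\cap R|$. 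So the witness inequality is preserved, and one only has to filter the enumerated list.
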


\begin{proof}

First, we run~\Cref{lemma:connectedwitness} to check if there is a connected $(X,v,k)$-witness for $T$ in $G$, if so we return it. 

Otherwise, any minimal $(X,v,k)$-witness $(L,R)$ for $T$ must be disconnected. First, we guess $w = \min\{|(L \setminus R) \cap T|, k+1\}$. Next, we try to obtain a subset of $(L \setminus R) \cap T$ of size $w$. We will exploit \Cref{lemma:TorsoConnected}, i.e., $H[(L\setminus R) \cap T]$ is connected, where $H:=\Torso(T)$. %

In order to do so, we employ the following branching algorithm. Guess a neighbor $u_1$ of $v$ in $H$ such that $u_{1}\in L \setminus R$. Since $\deg_{H}(v) \leq 6k^2$ by \Cref{lemma:TorsoDegree}, there are only $6k^2$ candidates for $u_1$. If the guess is correct, again $\deg_{H}(u_1) \leq 6k^2$, so we first make sure that this degree bound holds, and then repeat and guess another neighbor $u_{2}$ of $\{v,u_{1}\}$ in $H$ such that $u_2 \in L \setminus R$. Repeat this process till we obtain $u_{w-1}$, so that $S_v = \{v, u_1, \ldots u_{w-1}\}$ forms a set of size $w$.

There are $k^{O(k)}$ branches, and one of them must find a subset  $S_v^* \subseteq (L \setminus R) \cap T$ with $|S_v^*| = w$. For each branch, we make $S_v$ connected by adding a clique on the vertex set $S_v$, and let the resulting graph be $G_c$. The following lemma shows that we can now focus on finding a connected witness in this modified graph $G_c$.

\begin{lemma}
We have the following equivalence between witnesses in $G_c$ and $G$.
\begin{itemize}
\item Any $(X,v,k)$ connected witness for $T$ in $G_c$ is an $(X,v,k)$ witness for $T$ in $G$.

\item Conversely, suppose that there is a minimal $(X,v,k)$ witness $(L,R)$ for $T$ in $G$, and we correctly guessed $w$ and $S_v$ as defined above, then there must be an $(X,v,k)$ connected witness for $T$ in $G_c$.  

\end{itemize}

\end{lemma}

\begin{proof}
 First, any $(X,v,k)$ witness for $T$ in this graph $G_c$ corresponds to an $(X,v,k)$-witness in $G$, as we only added edges to $G$ to obtain $G_c$. 

Conversely, suppose there is a minimal $(X,v,k)$-witness $(L,R)$ for $T$ in $G$. We may assume $N_G(L\setminus R)=L\cap R$, since deleting redundant separator vertices preserves the witness condition and minimality. Consider the branch that correctly guesses a set
$S_v \subseteq (L\setminus R)\cap T$ of size
$w=\min\{|(L\setminus R)\cap T|,k+1\}$ containing $v$.

First suppose that $w=|(L\setminus R)\cap T|$. Then $S_v=(L\setminus R)\cap T$. By minimality of $(L,R)$, every connected component of $G[L\setminus R]$ contains a vertex of $T$. Therefore, after adding a clique on $S_v$, all connected components of $G[L\setminus R]$ become connected to each other. Hence $(L,R)$ itself is a connected $(X,v,k)$-witness for $T$ in $G_c$.

Otherwise, $w=k+1$. Let $C$ be the union of all connected components of $G[L\setminus R]$ that contain a vertex of $S_v$. Since we add a clique on $S_v$, the set $C$ is connected in $G_c$. Moreover, $v\in C$, $X\cap C=\emptyset$, and
\[
    |C\cap T|\ge |S_v|=k+1>|L\cap R|.
\]
Also, since $C\subseteq L\setminus R$ and $(L,R)$ is a separator in $G$, we have
\[
    N_{G_c}(C)\subseteq L\cap R,
\]
where the added clique edges do not create neighbors outside $C$ because all vertices of $S_v$ lie in $C$. Thus
\[
    (C\cup N_{G_c}(C),\, V(G_c)\setminus C)
\]
is a connected $(X,v,k)$-witness for $T$ in $G_c$.

\end{proof}

To find a connected $(X,v,k)$-witness for $T$ in $G_{c}$, we call~\Cref{lemma:connectedwitness}. If the promise of \Cref{lemma:connectedwitness} is satisfied, i.e., there is no connected $(X,v,Q,k)$-witness in $G_{c}$ for some $Q$, then ~\Cref{lemma:connectedwitness} will return either
\begin{itemize}
\item a connected $(X,v,k)$-witness for $T$ in $G_{c}$, which, as pointed out before, corresponds to an $(X,v,k)$-witness for $T$ in $G$, or
\item report that no connected $(X,v,k)$-witness for $T$ in $G_{c}$ exists, meaning that there is no minimal $(X,v,k)$-witness for $T$ in $G$.
\end{itemize}

The running time per branch would then be $O(Q^2 2^{O(k)})$. Thus, it suffices to show that the promise of \Cref{lemma:connectedwitness} is satisfied with a small value of $Q$. We will show this for $Q = q(k + 1) + k$. 

Suppose to the contrary that there is a connected $(X,v,Q,k)$-witness $(L',R')$ in $G_c$. Then, $(L',R')$ is also a (possibly disconnected) $(X,v,Q,k)$ witness in $G$. Now since $L' \setminus R'$ is connected in $G_c$, $G[L' \setminus R']$ has at most $k + 1$ connected components, since $|S_v| \leq k + 1$. Choose the connected component $C$ of $L' \setminus R'$ which has the maximum number of vertices. First we have $|L'| - |L' \cap R'| > Q - k = q(k + 1)$. By the choice of $C$, we get $|C| \geq (|L'| - |L' \cap R'|)/(k + 1) > q$. But now $(C \cup N_G(C), V(G) \setminus C)$ is a connected $(X,u,q,k)$-witness in $G$ for any $u \in C$.

\end{proof}

\subsection{\texorpdfstring{Proof of \Cref{thm:WitnessEliminate}.}{Proof of Witness Elimination.}}

For convenience, we restate \Cref{thm:WitnessEliminate} below.

\WitnessEliminate*

The algorithm is given in \Cref{algo:WitnessEliminate}.

\begin{algorithm}[!htbp]\caption{Witness elimination}
\label{algo:WitnessEliminate}
\begin{algorithmic}[1]
\Require A graph $G$ with $X\subseteq V(G)$, and two parameters $q,k$.
\Ensure A terminal set $T$.

\State $Q \gets V(G)$.
\State $T \gets V(G)$
\While{$Q$ is not empty}
\State Keep doing $\Pop(Q)$ until getting a $v\in T$
\State Run \Cref{lemma:unconnectedwitness} with $T,X,v,q$.
\If{~\Cref{lemma:unconnectedwitness} returns an $(X,v,k)$-witness $(L,R)$ for $T$}
\State $(L^{\star},R^{\star})\gets \MakeLean((L,R),G,T)$
\State $U \gets L^* \cap R^* \setminus T$.
\State $T\gets (T\setminus L^{\star})\cup (L^{\star}\cap R^{\star})$

\State Update $\Torso(T)$.
\State Push vertices in $L^{\star}$ to $Q$. Push back $v$ into $Q$. Finally, for every $u \in U$, push the neighbors of $u$ in $\Torso(T)$ into $Q$.
\EndIf

\EndWhile

\end{algorithmic}
\end{algorithm}

\paragraph{Lean Witnesses and the $\MakeLean$ Subroutine.} We introduce the concepts of \emph{lean witnesses}, and then describe the subroutine $\MakeLean$, which makes a witness lean. We note that, as shown in previous works \cite{cygan2020randomized,anand2025unbreakable}, the motivation for introducing lean witnesses is to ensure that $T$ has small adhesion throughout the algorithm. For this part, we only state the algorithm and lemmas, and we refer the readers to Section 5.2 of \cite{anand2025unbreakable} for detailed proofs.

\begin{definition}[Lean Witnesses]%
An $(X,v,k)$-witness $(L^{\star},R^{\star})$ is lean if there exists a set of $|L^{\star}\cap R^{\star}|$ vertex disjoint paths  in $G[L^{\star}]$ starting from every vertex in $L^{\star}\cap R^{\star}$ and ending at $L^{\star}\cap T$.
\end{definition}

\begin{lemma}[Modified Lemma 5.18 in \cite{anand2025unbreakable}]%
Let $G$ be an undirected graph with a terminal set $T$. Given an $(X,v,k)$-witness $(L,R)$ for $T$ and $v\in T$, there is an algorithm $\MakeLean((L,R),G,T)$ that computes a lean $(X,v,k)$-witness $(L^{\star},R^{\star})$ for $T$. The algorithm runs in $O(|G[L]|\cdot k)$ time.
\end{lemma}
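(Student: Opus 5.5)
The plan is to follow the standard ``push towards the terminals'' argument, as in Section 5.2 of \cite{anand2025unbreakable}, via a single flow computation inside $G[L]$. Put $Z:=L\cap R$ and $T_L:=L\cap T$. Consider vertex-disjoint paths in $G[L]$ from $Z$ to $T_L$, and compute a minimum $Z$--$T_L$ vertex separator $W$ inside $G[L]$. By Menger, $|W|$ equals the maximum number of such disjoint paths. Among all minimum separators, we choose the one closest to $Z$; this is the one obtained from the residual graph of a maximum flow. Let $L^\star$ consist of $W$ together with the vertices of $L$ separated from $Z$ by $W$ (the $T_L$-side), and set $R^\star:=V(G)\setminus(L^\star\setminus W)$.

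First I check that $(L^\star,R^\star)$ is a separator of $G$ with $L^\star\cap R^\star=W$. Every path from $L^\star\setminus W$ to $R\setminus L$ must leave $L$, so it crosses $Z$. Before reaching $Z$, that path must hit $W$, because $W$ separates $Z$ from the $T_L$-side inside $G[L]$. Next, $|W|\le|Z|\le k$, since $Z$ itself is a $Z$--$T_L$ separator. We also have $X\subseteq R^\star$: indeed $X\subseteq R$, so any $X$-vertex in $L$ lies in $Z$, and the only $Z$-vertices on the $T_L$-side are those in $W$.

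For the witness inequality, every vertex of $T_L$ lies on the $T_L$-side or in $W$. Hence $L^\star\cap T\supseteq T_L\setminus(Z\setminus W)$, which gives
\[|L^\star\cap T|\ge|T_L|-|Z\setminus W|>|Z|-|Z\setminus W|\ge|W|=|L^\star\cap R^\star|.\]
Membership $v\in L^\star\setminus R^\star$ is the delicate point. Since $v\in T_L\setminus Z$, $v$ is either on the $T_L$-side or in $W$. If $v\in W$, I would pass to the observation that some other $v'\in T$ works, which the paper's convention ``$(X,v,k)$-witness without specifying $v$'' permits. Alternatively, I can treat $v$ as a sink with infinite capacity so that it is never cut, at the cost of at most one unit, which the strict inequality absorbs.

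Leanness then follows from the choice of $W$. The max-flow paths from $Z$ to $T_L$ each pass through exactly one vertex of $W$, and their suffixes from $W$ to $T_L$ lie in $G[L^\star]$. These give $|W|$ vertex-disjoint paths from every vertex of $L^\star\cap R^\star$ to $L^\star\cap T$. The running time is that of at most $k+1$ augmenting-path iterations on $G[L]$, which is $O(|G[L]|\cdot k)$.

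The main obstacle is the interplay between keeping $v$ strictly inside and keeping the strict inequality $|L^\star\cap T|>|W|$. I would handle it by modeling $v$ as an uncuttable sink and redoing the counting above.
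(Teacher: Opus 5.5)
Your construction is the paper's \textsc{MakeLean}: a minimum separator $W$ in $G[L]$ between $Z=L\cap R$ and $T_L=L\cap T$, with $R\setminus L$ added to the far side, found with at most $k+1$ augmenting paths. Your arguments for the separator property, for $X\subseteq R^\star$, and for leanness are fine. Leanness in fact holds for every minimum separator, so you do not need the one closest to $Z$.

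Your counting chain is wrong at its last step. You have $|Z|-|Z\setminus W|=|Z\cap W|$, which is smaller than $|W|$ whenever $W\not\subseteq Z$. For example, if $z_1,z_2$ are both attached only to a vertex $w$ leading to the terminals, then $W=\{w\}$. The correct argument is simpler. A vertex of $Z\cap T_L$ is a trivial $Z$--$T_L$ path, so it lies in $W$. Your own observation then gives $T_L\subseteq L^\star$, hence $|L^\star\cap T|\ge|L\cap T|>|L\cap R|=|Z|\ge|W|$.

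The real problem is how you handle $v$. The option you finally commit to, forbidding $v$ from the separator, breaks leanness. Leanness is exactly Menger's equality between $|W|$ and the maximum number of disjoint $Z$--$T_L$ paths, and a minimum separator that avoids $v$ can be larger than that maximum. No repair of this kind can work, as the following example shows.

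Take $X=\{x_1,x_2,x_3\}$ a triangle, $k=2$, and $T=X\cup\{v,t_1,t_2\}$. Make $z_1,z_2$ adjacent to all of $X$ and to $v$, and make $v$ adjacent to $t_1,t_2$. Then $L=\{z_1,z_2,v,t_1,t_2\}$ with $L\cap R=\{z_1,z_2\}$ is an $(X,v,2)$-witness.

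Now consider any $(X,v,2)$-witness $(L^\star,R^\star)$. Since $v\in L^\star\setminus R^\star$, we get $N(v)\subseteq L^\star$. If some $z_i$ were in $L^\star\setminus R^\star$, all of $X$ would be forced into the cut, which has size at most $2<3$. So the cut is exactly $\{z_1,z_2\}$ and $L^\star=L$. This witness is not lean, because in $G[L^\star]$ both $z_1$ and $z_2$ reach the terminals only through $v$.

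So the statement holds only under the paper's convention that the designated vertex of an $(X,v,k)$-witness is unspecified, which is your first option. Take the unconstrained minimum separator, allow it to contain $v$, and output a lean $(X,v',k)$-witness for any $v'\in(L^\star\setminus R^\star)\cap T$. This set is nonempty by the corrected inequality. This is exactly what the paper's \textsc{MakeLean} does. The witness-elimination algorithm only uses $(L^\star,R^\star)$ to update $T$, so nothing downstream needs the original $v$.
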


$\MakeLean((L,R),G,T)$ simply computes a minimum-size separator $(L',R')$ in $G[L]$ such that $L\cap T\subseteq L'$ and $L\cap R\subseteq R'$ (we slightly abuse the notation by allowing $(L',R') = (L,L\cap R)$). Then let $(L^{\star},R^{\star}):=(L',R'\cup (R\setminus L))$ be the output. Clearly, the running time is $O(|G[L]|\cdot k)$ using the Ford-Fulkerson algorithm.

\begin{lemma}[Lemma 5.9 in \cite{anand2025unbreakable}]%
\label{lemma:UpdateAdhesion}%
Suppose $(L^{\star},R^{\star})$ is a lean $(X,v,k)$-witness for $T$. Then the adhesion of the new terminal set $T' = (T \setminus L^*) \cup (L^* \cap R^*)$ is at most $\max(|L^* \cap R^*|, \sigma(T)) = \max(k,\sigma(T))$.
\end{lemma}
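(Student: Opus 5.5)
We argue component by component. Fix a connected component $C$ of $G\setminus T'$, where $T'=(T\setminus L^\star)\cup(L^\star\cap R^\star)$. The vertices outside $T'$ are of two kinds: those in $R^\star\setminus L^\star$ that were already outside $T$, and those in $L^\star\setminus R^\star$ (all of them, since we remove $L^\star\setminus R^\star$ from the terminal set). Every vertex of $L^\star\cap R^\star$ lies in $T'$, so $L^\star\cap R^\star$ blocks any path between the two sides inside $G\setminus T'$. Hence $C\subseteq L^\star\setminus R^\star$ or $C\subseteq R^\star\setminus L^\star$, and we treat the two cases separately.

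\emph{Case $C\subseteq R^\star\setminus L^\star$.} Here $C\cap T=\emptyset$, so $C$ lies inside some component $C'$ of $G\setminus T$. We claim $C=C'$. The only vertices added to $T$ are those of $L^\star\cap R^\star$, so $C'$ can only be split by $L^\star\cap R^\star$. A component $C'$ of $G\setminus T$ meeting $L^\star\cap R^\star$ that also touched $R^\star\setminus L^\star$ would have to be handled, so I would allow $C\subsetneq C'$ and bound the neighbourhood directly: $N_G(C)\subseteq N_G(C')\cup (L^\star\cap R^\star)$, and further $N_G(C)\cap L^\star\subseteq L^\star\cap R^\star$. If $C\cap L^\star=\emptyset$ and $C'$ does not meet $L^\star\cap R^\star$, then $C=C'$ and $N_G(C)=N_G(C')$ (the new terminal set contains $N_G(C')\setminus L^\star$ and no vertex of $L^\star\setminus R^\star$ is adjacent to $C$). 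So $|N_G(C)|\le\sigma(T)$. The subcase where $C'$ meets the cut vertices is exactly where leanness is needed, as described below.

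\emph{Case $C\subseteq L^\star\setminus R^\star$.} Then $N_G(C)\subseteq L^\star$, and since $C$ is a component of $G\setminus T'$, $N_G(C)\subseteq T'\cap L^\star=L^\star\cap R^\star$. So $|N_G(C)|\le|L^\star\cap R^\star|\le k$.

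The main obstacle is the first case when a component $C'$ of $G\setminus T$ contained both vertices of $R^\star\setminus L^\star$ and vertices of $L^\star\cap R^\star\setminus T$. Then the new component $C$ inherits part of $N_G(C')$ together with some cut vertices, and the total could exceed both bounds. I would rule this out with leanness, via an exchange argument. Let $Z=N_G(C)$, and split it into $Z_1=Z\cap(L^\star\cap R^\star)$ and $Z_2=Z\setminus L^\star$, with $Z_2\subseteq T\setminus L^\star$. Take the $|L^\star\cap R^\star|$ disjoint paths in $G[L^\star]$ from the cut vertices to $L^\star\cap T$. The paths starting at $Z_1$ end at distinct vertices of $L^\star\cap T$, so the old terminal boundary seen from $C\cup(\text{those path prefixes before hitting }T)$ has size at least $|Z_1|+|Z_2|$. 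More concretely, I would show that $C$ together with the non-terminal portions of these paths is connected and avoids $T$, so it lies in a single component of $G\setminus T$. The vertices where these paths first hit $T$ are distinct and lie in $L^\star$, so they are disjoint from $Z_2$. Together with $Z_2$ they give at least $|Z|$ distinct neighbours of that component, so $|Z|\le\sigma(T)$.

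The delicate point is making sure each path's first $T$-vertex is distinct and is a neighbour of that component. If a path starts at a cut vertex already in $T$, it contributes itself, which is still fine since that vertex then has a $T$-vertex in its own place. Combining the two cases gives $\sigma(T')\le\max(|L^\star\cap R^\star|,\sigma(T))$.
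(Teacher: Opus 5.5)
Your proof is correct and is essentially the standard argument for this lemma. The paper itself does not prove it and defers to Section 5.2 of the cited work. A component of $G\setminus T'$ inside $L^\star\setminus R^\star$ has neighbourhood contained in $L^\star\cap R^\star$. For a component $C$ inside $R^\star\setminus L^\star$, you cut the disjoint lean paths from $N_G(C)\cap L^\star\cap R^\star$ at their first terminal, which is a neighbour because it is adjacent to its non-terminal predecessor on the path; this extends $C$ to a component of $G\setminus T$ with at least $|N_G(C)|$ distinct terminal neighbours. This exchange argument covers the whole case $C\subseteq R^\star\setminus L^\star$ at once, so you can drop the tentative subcase split and the abandoned claim $C=C'$.
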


\paragraph{Correctness.} First, by \Cref{lemma:UpdateAdhesion}, the final $T$ has adhesion at most $k$ (note that initially $T=V(G)$ has adhesion $0$). Also, $X\subseteq T$ is easy to see. Hence, we focus on showing that there is no $(X,v,k)$-witness for $T$ and any $v\in T$ at the end (implying that $(X,v,k,k)$-witnesses for $T$ and some $v\in T$ do not exist). 

\begin{lemma}
Throughout the algorithm, $Q\cap T$ is a hitting set for all $(X,v,k)$-witnesses for $T$ across all $v\in T$ in $G$.
\end{lemma}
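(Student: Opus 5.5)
We prove the invariant by induction over the iterations of the while loop. The invariant is understood as in \Cref{lemma:HittingSet}: for every $(X,v,k)$-witness $(L,R)$ for the current $T$, the set $(L\setminus R)\cap T$ meets $Q$. It suffices to do this for \emph{minimal} witnesses. Every witness $(L,R)$ contains a minimal witness $(L',R')$ with $L'\setminus R'\subseteq L\setminus R$, so hitting $(L'\setminus R')\cap T$ also hits $(L\setminus R)\cap T$.

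\textbf{Base case.} Initially $Q=T=V(G)$. By the observation at the start of this section, every witness has $(L\setminus R)\cap T\neq\emptyset$, so $Q$ meets it.

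\textbf{Iterations that do not change $T$.} Suppose we pop some $u$. If $u\notin T$, dropping $u$ changes nothing, since only $Q\cap T$ matters. If $u=v\in T$ and \Cref{lemma:unconnectedwitness} reports that no minimal $(X,v,k)$-witness exists, consider any minimal witness $(L,R)$ containing $v$ in $L\setminus R$. Then $(L,R)$ is a minimal $(X,v,k)$-witness, which is impossible. So removing $v$ loses no minimal witness, and the reduction above covers all witnesses. This case needs the preconditions of \Cref{lemma:unconnectedwitness}, namely $\sigma(T)\le k$ (from \Cref{lemma:UpdateAdhesion}) and the no-connected-witness promise, which holds by hypothesis.

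\textbf{Iterations that update $T$.} Here $T'=(T\setminus L^\star)\cup(L^\star\cap R^\star)$, and we push $L^\star$, $v$, and the $\Torso(T')$-neighbours of $U=(L^\star\cap R^\star)\setminus T$. Fix a minimal $(X,u,k)$-witness $(L,R)$ for $T'$; minimality gives \Cref{lemma:TorsoConnected} in $\Torso(T')$.
\begin{itemize}
\item If $(L\setminus R)\cap T'$ meets $L^\star$ (this includes the new terminals in $L^\star\cap R^\star$), it is hit, because all of $L^\star$ was pushed.
\item Otherwise $(L\setminus R)\cap T'\subseteq T\setminus L^\star$, which consists of old terminals, and we want to show that $(L,R)$ is already (after adjustment) a witness for $T$. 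The count $|L\cap T|$ can differ from $|L\cap T'|$ only on $L\cap L^\star$. In $L\cap R$, new terminals can only lower $|L\cap R\setminus T'|$ relative to $T$, so the witness inequality $|(L\setminus R)\cap T'|>|(L\cap R)\setminus T'|$ could fail for $T$ only because of vertices of $U$ lying in $L\cap R$.
\end{itemize}

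The main obstacle is the case $U\cap(L\cap R)\ne\emptyset$. My approach uses \Cref{lemma:TorsoConnected}: $\Torso(T')[(L\setminus R)\cap T']$ is connected, and I will argue it must contain a $\Torso(T')$-neighbour of some $w\in U\cap L$. Such a $w\in L\cap R$ is adjacent in $G$ to $L\setminus R$, because we may assume $N(L\setminus R)=L\cap R$. Following a path from $w$ into $L\setminus R$ until it first reaches $T'$ yields a $T'$-vertex in $L\setminus R$ adjacent to $w$ in $\Torso(T')$. This needs care: the path might exit through $L\cap R$, and then I would use minimality, with components of $G[L\setminus R]$ meeting $T$ (\Cref{lemma:TorsoVolume}), to reroute it. Since those neighbours were pushed, $(L,R)$ is hit. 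If $U\cap L=\emptyset$, I will check the second bullet directly: $|L\cap T|\ge|L\cap T'|$ on $L\setminus L^\star$, and the cut vertices in $L\cap R$ count no worse. Hence $(L,R)$ is an $(X,u,k)$-witness for $T$ whose left part avoids $L^\star$. By induction it was hit by $Q\cap T$ by a vertex outside $L^\star$, hence lying in $T'$. The only vertex removed from $Q$ is $v$, and $v$ was pushed back, so the hit survives.
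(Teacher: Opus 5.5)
Your proposal is correct and follows essentially the same route as the paper. You reduce to minimal witnesses for $T'$, dispose of those whose left side meets $L^\star\cap R^\star$ via the pushed $L^\star$, and otherwise either reuse the old hit (with $v$ pushed back) or, when some $w\in U$ lies in $L\cap R$, use a $\Torso(T')$-neighbour of $w$ in $L\setminus R$; your split on whether $U\cap(L\cap R)$ is empty is equivalent to the paper's split on whether $(L,R)$ was already a witness for $T$.

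Two steps need tightening.
\begin{itemize}
\item You assert that $L\setminus R$ avoids $L^\star$, which you need so that the old hitting vertex still lies in $T'$, but you do not prove it. Derive it as the paper does. By \Cref{lemma:TorsoVolume}, every component of $G[L\setminus R]$ meets $T'$. A component that enters $L^\star\setminus R^\star$ without touching $L^\star\cap R^\star$ would lie entirely inside $L^\star\setminus R^\star$, which contains no $T'$-vertex.
\item The path from $w$ never needs rerouting. Take it inside the component of $G[L\setminus R]$ that contains a neighbour of $w$, and stop at its first $T'$-vertex; this also makes \Cref{lemma:TorsoConnected} unnecessary.
\end{itemize}
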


\begin{proof}

In the beginning, we initialize $Q = V(G)$, so this holds. Consider an iteration. Suppose inductively that this statement is true at the beginning of this iteration, and we want to reprove it at the end of this iteration. There are two cases depending on the output of \Cref{lemma:unconnectedwitness}.

First, consider the case when \Cref{lemma:unconnectedwitness} successfully finds an $(X,v,k)$-witness $(L,R)$ for $T$ in $G$. {\textsc{MakeLean}} then returns another $(X,v,k)$ witness $(L^*, R^*)$, and we update $T$ to $T'$ where $T' = T \setminus L^* \cup (L^* \cap R^*)$. Now consider a minimal $(X,v,k)$ witness $(L',R')$ for $T'$ (it is enough to prove the lemma for minimal witnesses). Without loss of generality, we can assume that $N_G(L' \setminus R') = L' \cap R'$ (for if not, we can simply move vertices from $L' \cap R'$ to $R' \setminus L'$ till this is satisfied). If $L' \setminus R'$ intersects $L^* \cap R^*$, since $L^* \cap R^* \subseteq T'$ was added to $Q$, $(Q \cap T') \cap (L' \setminus R') \neq \emptyset$.

Therefore we can now assume that $L' \setminus R'$ is disjoint with $L^* \cap R^*$. But by the minimality of $(L',R')$, we know that every component of $L' \setminus R'$ intersects $T'$ (\Cref{lemma:TorsoVolume}), and by the execution of the algorithm, $L^* \cap R^* \subseteq T'$ and $(L^* \setminus R^*) \cap T' = \emptyset$. Thus it must be the case that $L' \setminus R'$ is disjoint with $L^* \setminus R^*$ as well. Now if $(L',R')$ was an $(X,v,k)$ witness for $T$, then since we only added more vertices to $Q$ and $Q \cap T$ was already a hitting set for $(X,v,k)$ witnesses for $T$, we must have $(Q \cap T) \cap (L' \setminus R') \neq \emptyset$, and hence $(Q \cap T') \cap (L' \setminus R') \neq \emptyset$. 

Finally, suppose that $(L',R')$ was not previously an $(X,v,k)$ witness for $T$, but is now an $(X,v,k)$ witness for $T'$. In this case, $L' \cap R'$ must intersect $L^* \cap R^*$ in some vertex $x$, such that $x \in T' \setminus T$. Informally, $x$ helps make $(L',R')$ a witness: it was previously a non-terminal, but now has been turned into a terminal. Since $N_G(L' \setminus R') = L' \cap R'$, $x$ has at least one edge incident to $L' \setminus R'$. By minimality of $(L',R')$ and~\Cref{lemma:TorsoVolume}, since every component of $L' \setminus R'$ intersects $T'$, it follows that there is some $y \in (L' \setminus R') \cap T$, such that there is a path from $y$ to $x$ in $G[L']$, whose each internal vertex is not in $T'$. Then $(x,y)$ is an edge in $\Torso(T')$. But this means $y$ was pushed to $Q$ by the execution of the algorithm, and thus $Q$ hits $(L',R')$.  

Now consider the case when we simply pop $v$ from $Q$, with no witness found, and no update to $T$. By the execution of the algorithm, this means that there is no minimal $(X,v,k)$ witness for $T$ in $G$. Since every hitting set for minimal $(X,v,k)$ witnesses is also a hitting set for $(X,v,k)$ witnesses, this means that removing $v$ from $Q$ is safe and $Q$ continues to be a hitting set for all $(X,v,k)$ witnesses for $T$.

\end{proof}

\paragraph{Running Time.}

\begin{lemma}\label{lemma:torsotime}
Suppose we are given an $(X,v,k)$-witness $(L,R)$ for $T$ in $G$, such that $|L|\leq \poly(q,k)$. Further suppose that every connected component of $G \setminus T$ has at most $q$ vertices. Consider the update operation which updates $T$ to $T':=(T \setminus L)\cup (L \cap R)$. Then we can update (the adjacency list of) $\Torso(T)$ to (that of) $\Torso(T')$ in time $\tilde{O}(\poly(q,k))$.
\end{lemma}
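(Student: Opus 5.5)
The plan is to localize the change from $\Torso(T)$ to $\Torso(T')$ to a small region around $L$. Torso edges are determined by the components of $G\setminus T$ and their adhesions, together with the original $G$-edges between terminals. So I would first show that only components of $G\setminus T$ touching $L$ can change, and likewise only components of $G\setminus T'$ touching $L$ can be new.

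\emph{Old side.} Every component $C$ of $G\setminus T$ with $C\cap L\neq\emptyset$ is affected, and these are the only ones, since $T$ and $T'$ agree outside $L$. Each such $C$ has at most $q$ vertices. Its adhesion $N_G(C)\subseteq T$ is therefore computable by BFS from vertices of $L\setminus T$. There are at most $|L|$ such components. Each component has at most $q$ vertices, and each of its adhesion vertices has at least one edge into $C$. Hence the total size of these adhesions is bounded by the number of edges leaving these components, which I would bound in terms of $|L|\cdot q$ times degrees.

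The main obstacle is that vertices may have huge degree in $G$, so a naive BFS is not local. To avoid this, I would explore only from non-terminals in $L\setminus T$ and inside components of size at most $q$. Every non-terminal $w$ in such a component has $\deg_G(w)\le q-1+|N_G(C)|$. I would then bound $|N_G(C)|\le \sigma(T)$; this requires assuming $\sigma(T)\le k$, which holds throughout the algorithm by \Cref{lemma:UpdateAdhesion}. So each non-terminal has degree at most $q+k$, and the exploration costs $\poly(q,k)$ per component.

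\emph{Deleting old edges.} With the affected components in hand, I remove from $\Torso(T)$ all vertices of $T\cap(L\setminus R)$, which are no longer terminals. For each surviving pair $u,w$ in an old affected adhesion, I decrement a multiplicity counter. Concretely, I maintain for each torso edge the number of witnessing components plus an indicator for a direct $G$-edge, stored in a hash map to give the $\polylog$ factor. An edge disappears when its counter hits zero. Vertices of $T\cap(L\setminus R)$ have torso degree bounded by their adhesions inside $L$ plus their $G$-neighbors. I need their degree to be $\poly(q,k)$; I would argue this from $|L|\le\poly(q,k)$ together with the bound on adhesions. This is the delicate point, and if it fails I would charge the work to the edges instead.

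\emph{New side.} In $G\setminus T'$, the new components that intersect $L$ arise from merging old affected components with the former terminals $(L\setminus R)\cap T$, now non-terminals. Such a component lies inside $L\setminus R$, because $L\cap R\subseteq T'$ separates it from everything else. So new components have size at most $|L|$, and their adhesions lie in $L\cap R$, of size at most $k$. I compute them by BFS inside $L\setminus R$. For each one I add the clique on its adhesion, which contributes at most $k^2$ counter increments. I also add direct $G$-edges among the new terminals in $L\cap R$ by scanning their adjacency restricted to $T'$-neighbours discovered locally.

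Finally, old components not touching $L$ keep their adhesions unchanged. Their clique edges remain counted correctly, because none of their endpoints was deleted: the endpoints lie in $T\setminus L\subseteq T'$. Summing the work gives $\tilde{O}(\poly(q,k))$.
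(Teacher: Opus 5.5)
Your overall plan (localize to the components of $G\setminus T$ that meet $L$, delete their cliques, recompute) is the paper's, but your \emph{new side} misses a whole class of components, so the graph you build is in general not $\Torso(T')$. You remove the clique contributions of every affected old component $C$. You then re-add cliques only for new components meeting $L$, which do lie in $L\setminus R$, and treat everything else as an unchanged old component. However, an affected $C$ can contain a non-terminal $x\in(L\cap R)\setminus T$ (the set $U$ in \Cref{algo:WitnessEliminate}) together with vertices of $R\setminus L$. Once $x$ becomes a terminal, the components of $G[C\setminus L]$ are components of $G\setminus T'$ that lie in $R\setminus L$ and do not meet $L$, yet they are not old components. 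Nothing in your procedure adds their cliques. For instance, suppose $C$ contains a path $x - a - b$ with $a,b\in R\setminus L$, and $b$ has terminal neighbours $t_2,t_3$ that are not $G$-adjacent to $x$. Then $xt_2$ is an edge of $\Torso(T')$ that you never create. Also, the counter of $t_2t_3$ loses the contribution of $C$ without regaining one from the fragment $\{a,b\}$, so that edge may wrongly disappear.

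The missing step is the paper's second step: for each affected old component $C$, also search from the vertices of $C\setminus L$ to find the components of $G[C\setminus L]$ and their adhesions, and add those cliques. These adhesions lie in $T'$ because vertices of $R\setminus L$ have no neighbours in $L\setminus R$. Since $|C|\le q$ and non-terminals have degree at most $q+k$, this costs $\poly(q,k)$ per affected component.

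Your reason why components disjoint from $L$ are unchanged (``$T$ and $T'$ agree outside $L$'') is also incomplete. Such a component could a priori have a vertex of $T\cap(L\setminus R)$ in its adhesion, and what excludes this is again the separator property. Moreover, its adhesion may contain vertices of $T\cap L\cap R$, not only of $T\setminus L$; this is harmless since they stay in $T'$.

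Your explicit use of $\sigma(T)\le k$ is fine, and some such bound is genuinely needed. A non-terminal of $L\cap R$ with $\Omega(n)$ terminal neighbours in $R$ would force $\Omega(n)$ work. The paper's proof uses this bound implicitly, and it holds in the application by \Cref{lemma:UpdateAdhesion}.
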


\begin{proof}
Because $\Torso(T)$ changes due to the change in the connected components of $G \setminus T$, let us additionally maintain a list of the connected components $S$ of $G\setminus T$, along with their adhesions $N_{G}(S)\subseteq T$. Recall that we define
\[
\Torso(T)= G[T]\cup\bigcup_{\text{comps $S$ of $G\setminus T$}}\Clique(N_{G}(S)). 
\]

We say that a connected component of $G \setminus T$ is \emph{affected} if it has a vertex in $L$. If a component $S$ is not affected, then $N_G(S)$ does not change, so we do not need to update anything for these components. There are at most $\poly(q,k)$ affected components since $|L|\leq \poly(q,k)$. Consider an affected component $S$ in the old list. 
\begin{enumerate}
\item First, we remove $S$ with its adhesion from the list. This may lead to the removal of some edges in $\Clique(N_{G}(S))$.
\item Then, we should add back new components (intuitively, generated by breaking $S$) in $G[S\setminus L]$. Finding these components and their adhesions simply needs searching starting from every vertex in $S\setminus L$. Then add cliques on the new adhesions to the torso graph.
\end{enumerate}
Lastly, there will be new components in $G[L\setminus R]$, which can be handled similar to the Step 2 above. The running time is clearly bounded by $\tilde{O}(\poly(q,k))$.
\end{proof}

\begin{lemma}
\Cref{algo:WitnessEliminate} runs in time $\Otil(nk^{O(k)}\poly(q,k))$.
\end{lemma}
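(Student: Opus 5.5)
The plan is a potential argument that bounds the number of pops from $Q$, multiplied by the cost of one iteration.

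\textbf{Cost of one pop.} An iteration pops $v\in T$ and calls \Cref{lemma:unconnectedwitness}, which takes $q^2k^{O(k)}$ time. Its hypotheses hold throughout. The condition $\sigma(T)\le k$ is maintained by \Cref{lemma:UpdateAdhesion}. The promise that there is no connected $(X,u,q,k)$-witness is about $G$, which never changes.

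When a witness is found, I need $|L^\star|\le \poly(q,k)$ for the remaining steps. The returned witness comes from a connected witness in $G$ or in $G_c$. The proof of \Cref{lemma:unconnectedwitness} shows it has $|L\setminus R|\le Q=q(k+1)+k$, because otherwise a component of size $>q$ would yield a connected $(X,u,q,k)$-witness in $G$. Since $\MakeLean$ only shrinks $L$, we get $|L^\star|\le \poly(q,k)$. Its running time $O(|G[L]|k)$ is then $\poly(q,k)$, since every vertex of $L\setminus R$ has degree at most $|L|$.

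Updating $\Torso(T)$ costs $\tilde O(\poly(q,k))$ by \Cref{lemma:torsotime}. Its hypothesis that components of $G\setminus T$ have at most $q$ vertices holds by the same argument: a component $C$ of $G\setminus T$ has $|N(C)|\le\sigma(T)\le k$ and is disjoint from $X\subseteq T$, so $(C\cup N(C),V(G)\setminus C)$ is a connected $(X,u,q,k)$-witness unless $|C|+|N(C)|\le q$. The pushes cost $|L^\star|+1+|U|\cdot 6k^2\cdot\poly(q,k)$. Torso degrees near the new separator are bounded by adhesion counting: each adhesion has at most $k$ vertices, and there are at most $\poly(q,k)$ affected components. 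So one iteration with an update costs $k^{O(k)}\poly(q,k)$.

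\textbf{Number of updates.} Each update replaces $T$ by $(T\setminus L^\star)\cup(L^\star\cap R^\star)$. Because $|L^\star\cap T|>|L^\star\cap R^\star|$, the set $T$ strictly shrinks and never grows. Hence there are at most $n$ updates.

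\textbf{Number of pops.} Each update pushes at most $\poly(q,k)$ vertices, and $Q$ starts with $n$ vertices. Popping vertices not in $T$ costs $O(1)$ each and is charged to their push. Therefore the total number of pops is at most $n+n\cdot\poly(q,k)$.

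\textbf{Total and main obstacle.} Multiplying gives $\tilde O(n\,k^{O(k)}\poly(q,k))$. The main obstacle is rigorously bounding $|L^\star|$ and the torso degree of the vertices of $U$, so that each update pushes only $\poly(q,k)$ vertices. I would derive both from the absence of connected $(X,u,q,k)$-witnesses, combined with $\sigma(T)\le k$, as sketched above.
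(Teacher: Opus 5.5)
Your proposal is correct and follows essentially the paper's argument. It bounds the number of updates by $n$ because $|T|$ strictly decreases, shows that each update pushes only $\poly(q,k)$ vertices (using that components of $G\setminus T$ have at most $q$ vertices, so \Cref{lemma:torsotime} applies), and bounds the work per pop by $k^{O(k)}\poly(q,k)$. You also make explicit the bound $|L^\star|\le q(k+1)+k$ via the proof of \Cref{lemma:unconnectedwitness}, which the paper only asserts. One small correction: the $6k^2$ bound of \Cref{lemma:TorsoDegree} does not apply to vertices of $U$. The needed $\poly(q,k)$ torso-degree bound for these vertices instead comes from $\deg_G(u)\le q+k$, which holds because $u$ lay in a non-terminal component of size at most $q$ with adhesion at most $k$, combined with the fact that new adhesions have size at most $k$.
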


\begin{proof}
First, we will bound the number of pushes into $Q$. Then we will bound the running time for each pop. 

\medskip

\noindent{\underline{Number of pushes to $Q$.} }Every time we push a vertex into $Q$, it is due to an update, which updates $T \gets T \setminus L^* \cup (L^* \cap R^*)$ for some lean $(X,v,k)$ witness $(L^*,R^*)$ satisfying $|L^* \cap T| > |L^* \cap R^*|$. This means that the size of $T$ decreases by at least $1$ in every update, and hence there are at most $n$ updates. For each update, first we push $|L^*| \leq \poly(q,k)$ vertices into $Q$. Then we push for each $u \in U$, the neighbors of $u$ in $\Torso(T)$ into $Q$. The number of such neighbors is at most the time needed to update $\Torso(T)$, and hence by \Cref{lemma:torsotime}, at most $\Otil(\poly(q,k))$. We note here that the precondition of the lemma is fulfilled: every connected component of $G \setminus T$ has size at most $q$. This is because we are guaranteed that there is no $(X,v,q,k)$ witness in $G$ for any $v \in V(G)$, and since the adhesion of $T$ at all times is at most $k$ (\Cref{lemma:UpdateAdhesion}), any component with size more than $q$ would contradict this. Thus there are at most $\Otil(n\poly(q,k))$ pushes into $Q$.

\medskip

\noindent{\underline{Time for each pop.}} The algorithm checks if there is an $(X,v,k)$ witness using~\Cref{lemma:unconnectedwitness}, which takes time $\Otil(k^{O(k)}\poly(q,k))$. \textsc{MakeLean} and the update to $T$ take only time $\Otil(\poly(q,k))$. By~\Cref{lemma:torsotime}, $\Torso(T)$ can be updated in time $\Otil(\poly(q,k))$.

\end{proof}

\section{Key Subroutine for Unbreakable Decomposition.}
\label{sec:keysubroutine}

In this section, we describe the main subroutine that we will use for unbreakable decomposition. The next two lemmas show how to uncross a cut with a downward well-linked set. 

\begin{lemma}
\label{lemma:UncrossOneSet}
Given graphs $H, G$ satisfying 

\begin{enumerate}
\item $G$ is exactly $H \setminus S\, \cup \mathrm{Clique}(N_{H}(S))$
\item $S$ is downward well-linked in $H$.

\end{enumerate}
 Let $X \subseteq T \subseteq V(G)$. If for every $v \in T$ there is no $(X,v,k,k)$ witness for $T$ in $G$, then for every $v \in T$, there is no $(X,v,k,k)$ witness for $T$ in $H$.

\end{lemma}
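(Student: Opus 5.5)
We argue by contraposition: take an $(X,v,k,k)$ witness $(L,R)$ for $T$ in $H$ with $v\in T$, and build one in $G$. Write $N:=N_H(S)$. Since $T\subseteq V(G)=V(H)\setminus S$, terminals never lie in $S$. The plan is to modify $(L,R)$ inside $S\cup N$ so that $N$ lies entirely on one side, and then restrict to $V(G)$. Once $N\subseteq L$ or $N\subseteq R$, the clique on $N$ adds no edge between $L\setminus R$ and $R\setminus L$. Also $(L\setminus S,R\setminus S)$ is a separator of $G$: edges of $G$ are edges of $H$ or clique edges on $N$.

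\textbf{Step 1 (restrict to the local graph).} Restrict $(L,R)$ to $H':=H[S\cup N]$, giving $(L_S,R_S):=(L\cap(S\cup N),R\cap(S\cup N))$. This is a separation of $H'$ with $|L_S\cap R_S|\le|L\cap R|$. Downward well-linkedness of $S$ gives either $|L_S\cap R_S|\ge|L\cap N|$ or $|L_S\cap R_S|\ge|R\cap N|$. Degenerate cases where one side of the restriction is empty are handled directly, since then $N$ is already on one side.

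\textbf{Step 2 (uncross).} In the first case, define $L':=(L\setminus S)\setminus(N\setminus R)$ and $R':=R\cup S\cup N$. Equivalently, all of $S$ is moved to the right and $N\cap L$ is placed in the cut. The new cut is $(L\cap R\setminus(S\cup N))\cup(N\cap L)$. Its size is at most $|L\cap R|-|L_S\cap R_S|+|N\cap L|\le|L\cap R|\le k$. Vertices of $N\cap L\setminus R$ move into the cut and stay in $L'$, so $|L'\cap T|=|L\cap T|$ except that terminals only in $S$ are lost, and there are none. Thus $|L'\cap T|>k$ still holds. In the second case, symmetrically move $S$ entirely to the left and put $N\cap R$ in the cut. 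This keeps $X\subseteq R$ because $X\subseteq T$ is disjoint from $S$, and it only enlarges $L$.

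We must also check that $v\in L'\setminus R'$ survives. In case one, $v$ might lie in $N\cap L\setminus R$ and so move into the cut. In that event we replace $v$ by another terminal of $L'\setminus R'$. That this is allowed should follow from the observation after the definition of $(X,v,k)$-witnesses: if $|L'\cap T|>k\ge|L'\cap R'|$, some terminal lies in $L'\setminus R'$. The lemma quantifies over all $v\in T$, so this suffices. We also need $L'\setminus R'$ and $R'\setminus L'$ to be nonempty. Nonemptiness of $R'\setminus L'$ follows because $X\subseteq R$ and $|X|$ is large, or is otherwise immediate in this setting.

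\textbf{Step 3 (project to $G$).} Now $N\subseteq R'$ in case one, or $N\subseteq L'$ in case two. Hence $(L'\setminus S,R'\setminus S)$ is an $(X,v',k,k)$ witness for $T$ in $G$, a contradiction.

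I expect the main obstacle to be the bookkeeping in Step 2. The cut-size inequality must be carried out exactly as $|(L\cap R)\setminus(S\cup N)|+|N\cap L|\le|L\cap R|$, which uses $L_S\cap R_S=(L\cap R)\cap(S\cup N)$. The possible loss of $v$ from the strict left side must be handled by the re-choice of $v$ described above.
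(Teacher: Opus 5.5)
Your argument is the paper's proof. You restrict $(L,R)$ to $H[S\cup N_H(S)]$ and use downward well-linkedness to decide which side $S$ is pushed to, with $N\cap L$ (resp.\ $N\cap R$) entering the separator. You then bound the new separator by $|L\cap R|$, re-pick a terminal on the strict left side, and project to $G$. Your explicit treatment of the second alternative (moving $S$ to the left and putting $N\cap R$ in the cut) is more careful than the paper, which dismisses it ``by symmetry'' even though the witness conditions are not symmetric in $L$ and $R$.

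Two things need fixing.

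\textbf{The formula for $L'$.} Your displayed definition $L':=(L\setminus S)\setminus(N\setminus R)$ contradicts your own description and your formula for the new cut. It places $N\cap(L\setminus R)$ strictly on the right side. Those vertices may be adjacent to vertices of $(L\setminus R)\setminus(S\cup N)$, for instance to $v$, so $(L',R')$ need not be a separator. The intended and correct choice is $L'=L\setminus S$, $R'=R\cup S\cup N$, exactly as in the paper; with it, your cut formula and size bound are right.

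\textbf{Nonemptiness of the right side.} After projecting to $G$, nonemptiness of $R''\setminus L''$ is not ``otherwise immediate''. It genuinely requires $|X|\ge k+1$: then $X\subseteq R''$ and $|L''\cap R''|\le k$ force a vertex of $X$ into $R''\setminus L''$. Without this hypothesis the statement is false. Take $H$ to be the path with edges $ax$ and $xs$, with $S=\{s\}$, $X=\{x\}$, $T=\{a,x\}$ and $k=1$. Since $H[\{s,x\}]$ is a single edge, $S$ is vacuously downward well-linked. Then $(\{a,x\},\{x,s\})$ is an $(X,a,1,1)$ witness for $T$ in $H$. But $G$ is the single edge $ax$, and any witness there would need $L=V(G)$, leaving $R\setminus L=\emptyset$.

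So you should state $|X|\ge k+1$ as a hypothesis. It holds where the lemma is used (in \Cref{thm:reducing_adhesion}, $|X|\ge k'+1$), and the paper's proof silently relies on it as well.
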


\begin{proof}
Suppose to the contrary that there is an $(X,v,k,k)$ witness $(L,R)$ for $T$ in $H$. We will ``uncross'' this witness to get an $(X,u,k,k)$ witness for $T$ in $G$ for some $u \in T$. For brevity, we will write $N(S)$ for $N_{H}(S)$.

First, suppose that $N(S) \subseteq L$ or $N(S) \subseteq R$. Consider the cut $(L',R')$ obtained by restricting the cut $(L,R)$ to $V(G)$. Then $|L' \cap R'| \leq |L \cap R|$, $|L \cap T|$ does not change, $X \subseteq R'$ and $v \in L' \setminus R'$. Hence $(L',R')$ is an $(X,v,k,k)$ witness for $T$ in $G$.

Therefore, we now assume that $N(S) \cap (L \setminus R) \neq \emptyset, N(S) \cap (R \setminus L) \neq \emptyset$. Now we consider the restriction of the cut $(L,R)$ to $H[S \cup N(S)]$. Formally, let $A = L \cap (S \cup N(S))$ and $B = R \cap (S \cup N(S))$. $(A,B)$ is a cut in $H[S \cup N(S)]$, since we have $N(S) \cap (L \setminus R) \neq \emptyset, N(S) \cap (R \setminus L) \neq \emptyset$.

By the definition of downward-well-linkedness, the downward-well-linkedness of $S$ implies that either $|A \cap B| \geq |A \cap N(S)|$ or $|A \cap B| \geq |B \cap N(S)|$. By symmetry, assume the former. Consider the separation $(L',R')$ obtained as follows. Let $L' = L \setminus S$ and $R' = R \cup S \cup N(S)$. 

First, we argue that $(L',R')$ is a cut in $H$. Notice that $(L' \setminus R') \setminus S = (L \setminus R) \setminus S$ and $(R' \setminus L') \setminus S = (R \setminus L) \setminus S$. Also $(L' \setminus R') \cap S = \emptyset$. Thus any edge between $L' \setminus R'$ and $R' \setminus L'$ must be from $(R' \setminus L') \cap S$ and some vertex of $L'\setminus R'$. But $L' \setminus R'$ is disjoint from $S \cup N(S)$, hence there cannot be such an edge. Also $L' \cup R' = V(H)$, thus $(L',R')$ is indeed a cut in $H$.

Next, $L' \cap R' = (L \cap R) \setminus S \cup (N(S) \cap (L \setminus R))$. Thus $|L' \cap R'| \leq |L \cap R| - |(L \cap R) \cap S| + |(L \setminus R) \cap N(S)|$. Note that $|A \cap B| \geq |A \cap N(S)|$. But $|A \cap B| = |L \cap R \cap S| + |L \cap R \cap N(S)|$, and $|A \cap N(S)| = |L \cap N(S)|$. Combining, we get $|L \cap R \cap S| \geq |(L \setminus R) \cap N(S)|$, and thus $|L' \cap R'|\leq |L \cap R|$.

Observe that $T \cap S = \emptyset$, hence the set $L' \cap T$ can only grow compared to $L \cap T$, and thus $|L' \cap T| > |L' \cap R'|$ still holds. Now consider the witness $(L'',R'')$ which is the restriction of $(L',R')$ to the graph $G$. Thus $L'' = L'$, and $R'' = R' \setminus S$. Then $|L'' \cap R''| = |L' \cap R'|$, and $|L'' \cap T| = |L' \cap T| > |L' \cap R'| = |L'' \cap R''|$. We also have $X \subseteq R''$. Pick any $u \in (L'' \setminus R'') \cap T$ (such a $u$ exists since $|(L'' \setminus R'') \cap T| \geq |L \cap T| - |L \cap R| > 0$).
Then $(L'',R'')$ is an $(X,u,k,k)$ witness for $T$ in $G$.
\end{proof}

One can generalize the previous lemma to account for multiple steps of uncrossing.

\begin{lemma}\label{lemma:uncrossingwitness}
Given a sequence of graphs $G^{(0)}, G^{(1)} \ldots G^{(t)} = G$ and vertex sets $S_0, S_1 \ldots S_{t-1}$ such that
\begin{enumerate}
\item $G^{(i+1)}$ is exactly $(G^{(i)}\setminus S_{i})\cup \Clique(N_{G^{(i)}}(S_{i}))$, and
\item $S_{i}$ is downward well-linked in $G^{(i)}$.
\end{enumerate}
Let $X \subseteq T \subseteq V(G)$. If for every $v \in T$ there is no $(X,v,k,k)$ witness for $T$ in $G$, then for every $v \in T$ there is no $(X,v,k,k)$ witness for $T$ in $G^{(0)}$.
\end{lemma}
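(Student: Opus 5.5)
We argue by backward induction on $i$, from $i=t$ down to $i=0$, applying \Cref{lemma:UncrossOneSet} once per step. The claim at level $i$ is: for every $v \in T$ there is no $(X,v,k,k)$ witness for $T$ in $G^{(i)}$. The base case $i=t$ is exactly the hypothesis, since $G^{(t)}=G$.

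For the inductive step, suppose the claim holds for $G^{(i+1)}$. Apply \Cref{lemma:UncrossOneSet} with $H:=G^{(i)}$, $S:=S_i$ and graph $G^{(i+1)}$. Its first condition is item 1 of the present statement. Its second condition is item 2. The conclusion is exactly the claim for $G^{(i)}$.

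The one point needing care is that \Cref{lemma:UncrossOneSet} requires $X\subseteq T\subseteq V(G^{(i+1)})$, and its proof uses $T\cap S=\emptyset$. We check this as follows:
\begin{itemize}
\item Since $V(G^{(j+1)})=V(G^{(j)})\setminus S_j$, the vertex sets are nested: $V(G)=V(G^{(t)})\subseteq V(G^{(i+1)})$.
\item The sets $S_j$ are disjoint from $V(G)$ for all $j$.
\item Since $T\subseteq V(G)$, we get $T\subseteq V(G^{(i+1)})$ and $T\cap S_i=\emptyset$, as required.
\end{itemize}

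There is no real obstacle beyond this bookkeeping. In particular, the witness definition depends only on $T$, $X$, $v$ and $k$, so it transfers verbatim between levels without re-checking.
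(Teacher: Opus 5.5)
Your proof is correct and follows the paper's own argument: the paper assumes a witness in $G^{(0)}$ and pushes it forward step by step to $G^{(t)}$ via \Cref{lemma:UncrossOneSet}, while your backward induction applies the same lemma directly instead of its contrapositive. Your explicit check that $T\subseteq V(G^{(i+1)})$ and $T\cap S_i=\emptyset$ spells out bookkeeping that the paper uses without stating.
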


\begin{proof}
The proof follows directly by ``unrolling'' the previous \Cref{lemma:UncrossOneSet}. Assume for contradiction that there is an $(X,v,k,k)$ witness $(L,R)$ for $T$ in $G^{(0)}$. Using \Cref{lemma:UncrossOneSet}, with $S = S_0$, $H = G^{(0)}$ and $G = G^{(1)}$, we obtain an $(X,v_1,k,k)$ witness $(L_1,R_1)$ for $T$ in $G^{(1)}$, and some $v_1 \in T$.

Inductively, using \Cref{lemma:UncrossOneSet} with $S = S_{i-1}$, $H = G^{(i-1)}$ and $G = G^{(i)}$ we obtain an $(X,v_i,k,k)$ witness for $T$ in $G^{(i)}$ for each $i \geq 1$, and some $v_i \in T$. By induction, it follows that there is an $(X,v_t,k,k)$ witness for $T$ in $G^{(t)} = G$ for some $v_t \in T$, a contradiction. This completes the proof.
\end{proof}

\begin{lemma}\label{lemma:unrolladhesion}
Given graphs $H, G$ satisfying 

\begin{enumerate}
\item $G$ is exactly $H \setminus S\, \cup \mathrm{Clique}(N_{H}(S))$
\item $S$ is downward well-linked in $H$.
\item $|N_H(S)| \leq k'$.

\end{enumerate}
 Let $X \subseteq T \subseteq V(G)$. If the adhesion of $T$ in $G$, $\sigma_G(T) \leq k'$, then $\sigma_H(T) \leq k'$.
\end{lemma}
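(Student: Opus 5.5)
We compare the connected components of $H\setminus T$ with those of $G\setminus T$, and show that each component of $H\setminus T$ has the same neighborhood as some component of $G\setminus T$, or a neighborhood inside $N_H(S)$. Throughout we write $N(S)$ for $N_H(S)$. Note that $T\subseteq V(G)=V(H)\setminus S$, so $S\cap T=\emptyset$. Hence $S$ lies entirely inside $H\setminus T$, and in fact $S$ is contained in the union of those components of $H\setminus T$ that meet $S$.

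First I would handle the easy components. Let $C$ be a component of $H\setminus T$ with $C\cap S=\emptyset$. Suppose moreover that $C$ contains no vertex of $N(S)\setminus T$. Then $C$ avoids $S\cup N(S)$, so every edge of $H$ at $C$ is also an edge of $G$, and the clique edges of $G$ are not incident to $C$. Hence $C$ is also a connected component of $G\setminus T$ with $N_H(C)=N_G(C)$, and $|N_H(C)|\le\sigma_G(T)\le k'$.

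Next I would handle the components touching $S\cup N(S)$. Let $C$ be a component of $H\setminus T$ that meets $S$ or meets $N(S)\setminus T$. If $C$ meets $N(S)\setminus T$, then $C$ contains $S$'s neighbours outside $T$, so we may assume $C\supseteq$ (the part of $S$ adjacent to it). Set $C'=C\setminus S$. This set is nonempty exactly when $C$ contains some non-terminal vertex outside $S$.

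\emph{Case $C'=\emptyset$.} Then $C\subseteq S$, so $N_H(C)\subseteq S\cup N(S)$. Since $N_H(C)\subseteq T$ and $T\cap S=\emptyset$, we get $N_H(C)\subseteq N(S)$. Thus $|N_H(C)|\le k'$.

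\emph{Case $C'\neq\emptyset$.} In $G$, all vertices of $N(S)\setminus T$ that lie in $C$ are pairwise adjacent via the clique. Any $H$-path inside $C$ that passes through $S$ enters and exits $S$ via $N(S)$. Therefore $C'$ is connected in $G\setminus T$: subpaths through $S$ are replaced by clique edges between $N(S)$-vertices, and these vertices are non-terminals because the path stays in $C$. So $C'$ lies inside a single component $D$ of $G\setminus T$.

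Now I claim $N_H(C)\subseteq N_G(D)$. Take $y\in N_H(C)$; then $y\in T$.
\begin{itemize}
\item If $y$ is adjacent in $H$ to a vertex of $C'$, that edge survives in $G$, so $y\in N_G(D)$.
\item Otherwise $y$ is adjacent only to vertices of $C\cap S$, so $y\in N(S)\cap T$. Because $C'\neq\emptyset$ and $C$ is connected, some vertex $z\in N(S)\setminus T$ lies in $C'$. The clique edge $yz$ in $G$ then gives $y\in N_G(D)$.
\end{itemize}
Hence $|N_H(C)|\le|N_G(D)|\le\sigma_G(T)\le k'$.

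I expect the main obstacle to be the connectivity and boundary bookkeeping in the case $C'\neq\emptyset$, namely ensuring that every terminal adjacent only to $S$ is picked up through a clique edge. That argument needs a non-terminal of $N(S)$ in $C'$, and this holds whenever $C$ leaves $S$. Downward well-linkedness does not seem to be needed for this statement; the bound $|N(S)|\le k'$ alone covers the components inside $S$.
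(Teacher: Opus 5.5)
Your proof is correct and takes essentially the same route as the paper. Components of $H\setminus T$ that avoid $S$ are unchanged in $G$, and components lying inside $S$ have boundary inside $N_H(S)$. A component that leaves $S$ has $C\setminus S$ inside a single component of $G\setminus T$, whose boundary absorbs $N_H(S)\cap T$ through the clique edges. Two small differences: you split per component on whether $C\subseteq S$ rather than globally on whether $N_H(S)\subseteq T$, and you explicitly verify that $C\setminus S$ is connected in $G\setminus T$, which the paper leaves implicit.

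The only blemish is your remark that a component meeting $N(S)\setminus T$ ``contains $S$'s neighbours outside $T$.'' This is false in general, since $S$ may be disconnected, but nothing later depends on it. As in the paper's proof, downward well-linkedness is indeed not needed.
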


\begin{proof}

First, note that since $N_H(S)$ is a clique in $G$, $N_H(S)$ can intersect at most one connected component of $G \setminus T$. 

Next, note that every connected component of $H \setminus T$ not containing a vertex of $S$ is also a connected component of $G \setminus T$. We will now argue that every connected component $C'$ of $H \setminus T$ intersecting $S$ also satisfies $|N_H(C')| \leq k'$. Fix such a connected component $C'$.

If $N_H(S) \subseteq T$, then $N_H(C') \subseteq N_H(S)$, and $|N_H(S)| \leq k'$.

Otherwise $N_H(S)$ intersects at most one connected component $C$ of $G \setminus T$. Choose some $v \in N_H(S) \cap C$. Since $N_H(S)$ is a clique in $G$, there is an edge between $v$ and every vertex of $N_H(S) \cap T$ in $G$. In particular, since $v \in C$, this means $N_H(S) \cap T \subseteq N_G(C)$. Then $N_H(C') \cap T \subseteq (N_H(S) \cup N_G(C)) \cap T \subseteq N_G(C)$, since $N_H(S) \cap T$ is contained in $N_G(C)$. But $|N_G(C)| \leq \sigma_G(T) \leq k'$.

\end{proof}

\begin{lemma}\label{lemma:unrolladhesionsequence}
Given a sequence of graphs $G^{(0)}, G^{(1)} \ldots G^{(t)} = G$ and sets $S_0, S_1 \ldots S_{t-1}$ such that for each $i$,
\begin{enumerate}
\item $G^{(i+1)}$ is exactly $(G^{(i)}\setminus S_{i})\cup \Clique(N_{G^{(i)}}(S_{i}))$, and
\item $S_{i}$ is downward well-linked in $G^{(i)}$.
\item $|N_{G^{(i)}}(S_i)| \leq k'$.
\end{enumerate}
Let $X \subseteq T \subseteq V(G)$. If the adhesion of $T$ in $G$, $\sigma_G(T) \leq k'$, then $\sigma_{G^{(0)}}(T) \leq k'$.
\end{lemma}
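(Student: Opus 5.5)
The plan is to unroll \Cref{lemma:unrolladhesion} along the sequence, in the same way \Cref{lemma:uncrossingwitness} unrolls \Cref{lemma:UncrossOneSet}. We argue by backward induction on $i$, from $i=t$ down to $i=0$, that $\sigma_{G^{(i)}}(T)\leq k'$. The base case $i=t$ is exactly the hypothesis $\sigma_G(T)\leq k'$.

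For the inductive step, fix $i<t$ and assume $\sigma_{G^{(i+1)}}(T)\leq k'$. We apply \Cref{lemma:unrolladhesion} with $H=G^{(i)}$, $G=G^{(i+1)}$ and $S=S_i$. Its three hypotheses are items 1--3 of the present statement for index $i$:
\begin{enumerate}
\item $G^{(i+1)}=(G^{(i)}\setminus S_i)\cup\Clique(N_{G^{(i)}}(S_i))$;
\item $S_i$ is downward well-linked in $G^{(i)}$;
\item $|N_{G^{(i)}}(S_i)|\leq k'$.
\end{enumerate}
The lemma then gives $\sigma_{G^{(i)}}(T)\leq k'$.

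We also need $X\subseteq T\subseteq V(G^{(i+1)})$, so that $T$ is a legitimate terminal set in the smaller graph. This holds because $V(G^{(t)})\subseteq V(G^{(t-1)})\subseteq\cdots\subseteq V(G^{(0)})$ and $T\subseteq V(G)=V(G^{(t)})$. In particular $T\cap S_i=\emptyset$ for every $i$, which is what the one-step lemma implicitly uses.

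The only step needing care is checking that the one-step lemma applies verbatim at every index. Its hypotheses are local to the pair $(G^{(i)},G^{(i+1)})$, and the adhesion bound it produces is exactly the input the next step requires, so the induction closes. Concluding at $i=0$ gives $\sigma_{G^{(0)}}(T)\leq k'$.
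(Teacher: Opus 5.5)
Your proposal is correct and matches the paper's proof, which simply applies \Cref{lemma:unrolladhesion} repeatedly along the sequence. Your backward induction from $G^{(t)}$ to $G^{(0)}$ makes that unrolling explicit, including the check that $T\subseteq V(G^{(i+1)})$, and hence $T\cap S_i=\emptyset$, at every step.
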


\begin{proof}
Follows directly by repeated application of~\Cref{lemma:unrolladhesion}.

\end{proof}

\begin{lemma}\label{lemma:unbreakabletowitness}
Let $G$ be a graph, $c \geq 1$ be a constant, and let  $X$ be a $(ck,k)$ unbreakable set, and suppose that for $T \subseteq V(G)$, $X \subseteq T$, there is no $(X,v,(c + 1)k,(c + 1)k)$ witness for any $v \in T$ in $G$. Then $T$ is $((c + 1)k,k)$ unbreakable.  
\end{lemma}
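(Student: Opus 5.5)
We argue by contradiction. Suppose $T$ is not $((c+1)k,k)$ unbreakable, so there is a cut $(L,R)$ of $G$ with $|L\cap R|\le k$, $|L\cap T|>(c+1)k$ and $|R\cap T|>(c+1)k$. The plan is to turn this into an $(X,v,(c+1)k,(c+1)k)$ witness for $T$, with $v\in T$. That would contradict the hypothesis.

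First, use the unbreakability of $X$. Since $X$ is $(ck,k)$ unbreakable and $|L\cap R|\le k$, either $|L\cap X|\le ck$ or $|R\cap X|\le ck$. By symmetry of the two sides, assume $|L\cap X|\le ck$. We want $X$ to lie entirely on the right side, so we move the few $X$-vertices of $L$ into the separator. Define
\[
L' := L,\qquad R' := R\cup (L\cap X).
\]

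Next, check that $(L',R')$ is a separator. We have $L'\cup R'=V(G)$. Also $L'\setminus R'=(L\setminus R)\setminus X\subseteq L\setminus R$ and $R'\setminus L'=R\setminus L$, so there is no edge between the two strict sides. It remains to see that both strict sides are nonempty.
\begin{itemize}
\item $R\setminus L\neq\emptyset$ since $(L,R)$ is a cut.
\item For the left side, note $L'\cap R'=(L\cap R)\cup(L\cap X)$, so $|L'\cap R'|\le k+ck=(c+1)k$. Then
\[
|(L'\setminus R')\cap T| \;\ge\; |L\cap T|-|L'\cap R'| \;>\; (c+1)k-(c+1)k \;=\; 0.
\]
\end{itemize}
So we may pick some $v\in (L'\setminus R')\cap T$.

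Finally, verify the witness conditions. We have $X\subseteq R'$ by construction, $v\in L'\setminus R'$, $|L'\cap T|=|L\cap T|>(c+1)k$, and $|L'\cap R'|\le (c+1)k$. Hence $(L',R')$ is an $(X,v,(c+1)k,(c+1)k)$ witness for $T$ with $v\in T$, which is the desired contradiction.

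The only delicate point is showing the left side stays nonempty after absorbing $L\cap X$ into the separator. The strict inequality $|L\cap T|>(c+1)k$ is exactly what guarantees this.
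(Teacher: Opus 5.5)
Your proof is correct and follows the paper's argument: assume a breakable cut $(L,R)$, use the $(ck,k)$-unbreakability of $X$ to get $|L\cap X|\le ck$ without loss of generality, and move $L\cap X$ into the separator via $(L,R\cup(L\cap X))$. You are slightly more careful than the paper in two places. The paper writes $|L'\cap R'|=|L\cap R|+|X\cap L|$, where only $\le$ holds. It also picks $v\in(L'\setminus R')\cap T$ without showing this set is nonempty, which your counting argument establishes.
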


\begin{proof}
Suppose that $T$ is not $((c + 1)k,k)$ unbreakable. Then there is a separator $(L,R)$ that satisfies $|L \cap T|, |R \cap T| \geq (c + 1)k+1$ and $|L \cap R| \leq k$. Since $X$ is $(ck,k)$ unbreakable, either $|X \cap L| \leq ck$ or $|X \cap R| \leq ck$. Assume without loss of generality that $|X \cap L| \leq ck$, the other case is symmetric. Now consider the cut $(L', R')$ obtained by setting $L' = L, R' = R \cup (X \cap L)$, and let $v$ be any vertex of $(L' \setminus R') \cap T$. We have $|L' \cap T| \geq |L \cap T| \geq (c + 1)k + 1$, and $|L' \cap R'| = |L \cap R| + |X \cap L| \leq (c + 1)k$. We also have $X \subseteq R'$.

It follows that $(L',R')$ is an $(X,v,(c+1)k,(c+1)k)$ witness in $G$, a contradiction.
\end{proof}

\begin{lemma}\label{thm:reducing_adhesion}
Given a graph $G$, and a $(ck,k)$ unbreakable set $X$ with $|X| \geq (c+1)k + 1$, there is an algorithm that runs in time $\Otil(mk^{O(k)})$ and finds a set $X \subseteq T \subseteq V(G)$ that is $((c+1)k,k)$ unbreakable, and the adhesion of $T$ in $G$ is at most $(c+1)k$.  
\end{lemma}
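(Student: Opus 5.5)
The plan is to chain \Cref{thm:connected} and \Cref{thm:WitnessEliminate} with parameter $k' := (c+1)k$ in place of $k$, and then pull the guarantees back to $G$ via the uncrossing lemmas. The hypothesis $|X|\geq (c+1)k+1 = k'+1$ is exactly what both theorems require.

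First, run \Cref{thm:connected} on $(G,X)$ with $q:=k'$ and cut parameter $k'$. This produces $G_{\final}$ with $X\subseteq V(G_{\final})$, together with the implicit sequence $G^{(0)}=G,\dots,G^{(t)}=G_{\final}$. In this sequence each removed set $S_i$ is downward well-linked and $|N_{G^{(i)}}(S_i)|\leq k'$. Moreover, with high probability $G_{\final}$ has no connected $(X,v,k',k')$-witness of volume larger than $\nu_0:=200k'^2 8^{k'}$.

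We need the hypothesis of \Cref{thm:WitnessEliminate}, namely that there is no connected $(X,v,Q,k')$-witness at all, for some $Q=2^{O(k)}$. Take $Q:=\nu_0$. A connected witness with $|L|>Q$ has $|L\setminus R|\geq |L|-k'$, and since $G_{\final}[L\setminus R]$ is connected its volume is at least $|L\setminus R|-1$. So (after adjusting $Q$ by an additive $k'+1$) its volume exceeds $\nu_0$. The witness is also a connected $(X,v,k',k')$-witness, because $|L|>Q\geq k'$, and this contradicts item~\ref{item:Connected4}. This is the step where the quantifier mismatch between ``volume'' and ``$|L|$'' has to be handled; it is only a bookkeeping adjustment.

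Next, run \Cref{thm:WitnessEliminate} on $G_{\final}$ with parameters $(Q,k')$. This yields $T$ with $X\subseteq T\subseteq V(G_{\final})$ such that $G_{\final}$ has no $(X,v,k',k')$-witness for $T$ and any $v\in T$, and $\sigma_{G_{\final}}(T)\leq k'$. The running time is $Q^{O(1)}k'^{O(k')}m\polylog n = k^{O(k)}m\polylog n$, and \Cref{thm:connected} costs $\Otil(m2^{O(k)})$, so the total is within the claimed bound.

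To transfer back to $G$, first apply \Cref{lemma:uncrossingwitness} to the sequence from item~\ref{item:Connected2}. This shows there is no $(X,v,k',k')$-witness for $T$ in $G$, for any $v\in T$. Then \Cref{lemma:unbreakabletowitness}, using that $X$ is $(ck,k)$-unbreakable, gives that $T$ is $((c+1)k,k)$-unbreakable in $G$. For adhesion, apply \Cref{lemma:unrolladhesionsequence} with $k'=(c+1)k$. Its hypotheses $|N_{G^{(i)}}(S_i)|\leq k'$ and $\sigma_{G_{\final}}(T)\leq k'$ both hold, so $\sigma_G(T)\leq (c+1)k$.

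The main subtlety I expect is the first step, converting the volume bound of \Cref{thm:connected} into the cardinality promise needed by \Cref{thm:WitnessEliminate}. Everything else is direct composition of earlier results.
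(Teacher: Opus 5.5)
Your proposal is correct and follows the paper's proof essentially step for step. It uses the same chain \Cref{thm:connected} $\to$ \Cref{thm:WitnessEliminate} with $k'=(c+1)k$, the same volume-to-cardinality conversion to get the promise for some $Q=2^{O(k)}$, and the same pull-back to $G$ via \Cref{lemma:uncrossingwitness}, \Cref{lemma:unbreakabletowitness} and \Cref{lemma:unrolladhesionsequence}. The only cosmetic difference is that the paper passes $q'=200k'^2 8^{k'}$ rather than $q=k'$ to \Cref{thm:connected} and then takes $Q=q'+k'$; either choice works, because a connected witness with $|L|>Q$ lies in both families.
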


\begin{proof}
Fix $k' = (c+1)k$ and $q' =200\cdot k'^2 \cdot 8^{k'}$. We call \Cref{thm:connected} with $G,X$ and parameters $ k', q'$. \Cref{thm:connected} runs in time $\Otil(m2^{O(k)})$ and returns a graph $G_{\mathrm{final}}$ satisfying the following properties. 

\begin{enumerate}
\item $X\subseteq V(G_{\final})\subseteq V(G)$.
\item There exists a sequence of graphs $(G^{(0)}:=G, G^{(1)},..., G^{(t)}:=G_{\final})$ satisfying that, for each $0\leq i\leq t-1$, if we let $S_{i}:=V(G^{(i)})\setminus V(G^{(i + 1)})$, then %
\begin{enumerate}
\item $G^{(i+1)}$ is exactly $(G^{(i)}\setminus S_{i})\cup \Clique(N_{G^{(i)}}(S_{i}))$, and
\item $S_{i}$ is downward well-linked in $G^{(i)}$.
\item $|N_{G^{(i)}}(S_i)| \leq k'$.
\end{enumerate}

\item The adhesion of $V(G_{\final})$ in $G$ is at most $k' = (c+1)k$.
\item With high probability, there is no connected $(X,v,q',k'
)$-witness in $G_{\final}$ with volume larger than $q'$, for any vertex $v\in V(G_{\final})$. Let $q = q' + k'$. Then since a connected set with $s$ vertices has volume at least $s$, this means that for any  $(X,v,q,k')$ connected witness $(L,R)$ in $G_{\mathrm{final}}$, we have $\vol(L \setminus R) \geq |L \setminus R| > q - k' = q'$, a contradiction. Hence there is no $(X,v,q,k')$ connected witness in $G_{\mathrm{final}}$. 
\end{enumerate}

We now invoke~\Cref{thm:WitnessEliminate} with graph $G_{\mathrm{final}}$, set $X$ and the parameter $k'$, which then runs in time $\Otil(mk^{O(k)})$ and returns a terminal set $T \subseteq V(G_{\mathrm{final}})$ satisfying
\begin{enumerate}
\item $X \subseteq T$,
\item there is no $(X,v,k',k')$-witness for $T$ and any $v\in T$ in $G_{\mathrm{final}}$, and
\item the adhesion of $T$ in $G_{\mathrm{final}}$, $\sigma(T) \leq k'$.
\end{enumerate}

Applying~\Cref{lemma:uncrossingwitness} to the sequence of graphs $G^{(i)}$, we obtain that there is no $(X,v,k',k')$ witness for $T$ for any $v \in T$ in $G$.

Next,~\Cref{lemma:unbreakabletowitness} shows that if there is no $(X,v,(c+1)k,(c+1)k)$ witness for $T$ for any $v \in T$ in $G$, then $T$ is $((c+1)k,k)$ unbreakable in $G$. By~\Cref{lemma:unrolladhesionsequence}, since the adhesion of $T$ in $G_{\mathrm{final}}$ is at most $(c+1)k$, the adhesion of $T$ in $G^{(0)} = G$ is at most $(c+1)k$ as well. 

This completes the proof.
\end{proof}

%% file: decomposition.tex
\section{The Unbreakable Decomposition Algorithm.}\label{sec:Decomp}

In this section, we show our second main result, a fast algorithm to construct unbreakable decomposition. We will use the following lemmas from~\cite{anand2025unbreakable}.

\begin{lemma}[Lemma 4.5 of~\cite{anand2025unbreakable}]%
Given an $m$-edge graph $G$ with a set $W\subseteq V(G)$ and parameters $q,k$, there is a deterministic algorithm that either
\begin{itemize}
\item certifies that $W$ is $(q,k)$-unbreakable in $G$, or
\item outputs a $(q, k)$-breakable witness $(L,R)$ of $W$ in $G$.
\end{itemize}
The running time is $\OO(2^{|W|}k m)$.
\label{lemma:CheckUnbreakable}
\end{lemma}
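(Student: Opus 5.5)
The plan is to prove \Cref{lemma:CheckUnbreakable} by brute-force enumeration over how $W$ is split, combined with one max-flow computation per split. If $W$ is $(q,k)$-breakable, there is a separator $(L,R)$ with $|L\cap R|\le k$, $|L\cap W|>q$ and $|R\cap W|>q$. Its trace on $W$ is a triple $(W_L,W_C,W_R)$ with $W_L=W\cap(L\setminus R)$, $W_C=W\cap L\cap R$ and $W_R=W\cap(R\setminus L)$.

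Rather than guess all $3^{|W|}$ triples, I would guess only a partition $W=W_1\sqcup W_2$. Here $W_1$ is meant to be $W\cap(L\setminus R)$ together with an arbitrary subset of $W_C$. This gives $2^{|W|}$ guesses, which matches the stated bound. For each guess:
\begin{itemize}
\item Check that $|W_1|>q-k$ and $|W_2|>q-k$ (or use a cleaner counting condition, worked out below).
\item Compute a minimum vertex cut separating $W_1$ from $W_2$ in which cut vertices may lie in $W$. The standard way is to add a super-source adjacent to $W_1$ and a super-sink adjacent to $W_2$ and compute a min vertex $s$-$t$ cut.
\item Run Ford--Fulkerson for at most $k+1$ augmentations, in $O(km)$ time. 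Either the min cut is $\le k$ and we obtain a cut $(L',R')$ explicitly, or we certify that it exceeds $k$.
\end{itemize}
When a cut $(L',R')$ of size $\le k$ is found, I would test whether $|L'\cap W|>q$ and $|R'\cap W|>q$. If so, output it as the breakable witness.

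Soundness is immediate, because we only output a cut after verifying the breakable-witness conditions directly.

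Completeness is the main obstacle. Given a true witness $(L,R)$, the min cut for the corresponding guess may place its cut in a different position, so the side counts might not exceed $q$. I would handle this by choosing the guess $W_1=W\cap L$ and $W_2=W\cap(R\setminus L)$, and symmetrically the reverse. Then I would require the min-cut computation to put $W_1$ entirely in $L'$ and $W_2$ entirely in $R'$. This is allowed because vertices of $W_1$ may lie in the cut $L'\cap R'$ and still count on the $L$ side. Concretely, I would attach terminals with infinite capacity to $W_1$ (so they lie in $L'$, possibly in the cut) and likewise to $W_2$. Then $|L'\cap W|\ge|W_1|=|L\cap W|>q$. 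For the other side, $R'\supseteq W_2\cup(L'\cap R')$, and $R\cap W=W_2\cup W_C$, so I would need $|R'\cap W|>q$.

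To fix this, I would enumerate the guess as an arbitrary ordered pair of sets $(W_1,W_2)$ covering $W$, forcing $W_1\subseteq L'$ and $W_2\subseteq R'$ with overlaps forced into the cut. That costs $3^{|W|}$. To recover $2^{|W|}$ as stated, I would instead argue a monotonicity. Take the guess $W_1=L\cap W$, $W_2=W\setminus W_1$, and force $W_1\subseteq L'$ and $W_2\subseteq R'\setminus L'$. The true witness is then feasible with size $|L\cap R|\le k$. Any min cut then has $|L'\cap W|\ge|W_1|>q$, and $|R'\cap W|\ge|W_2|+|L'\cap R'\cap W_1|$. This handles the case $|W_2|>q$, which is the main one.

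The remaining subcase is when $R\cap W$ reaches $q+1$ only through cut vertices. I would cover it with the symmetric guess $W_2'=R\cap W$, $W_1'=W\setminus W_2'$, which succeeds whenever $|W\cap(L\setminus R)|>q$. If both guesses fail, I would try to derive that $|W_C|$ is large and reconcile this with $|W_C|\le k$ by adjusting the thresholds. If that does not close, I would fall back to the $3^{|W|}$ enumeration, noting that it is still $2^{O(|W|)}km$.
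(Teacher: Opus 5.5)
The paper does not prove this lemma; it is quoted from \cite{anand2025unbreakable}, so your argument has to stand on its own. As written, it does not establish the stated bound. The only complete argument in your proposal is the fallback. It enumerates the $3^{|W|}$ traces $(W_L,W_C,W_R)$, deletes $W_C$, lowers the budget to $k-|W_C|$, and computes a minimum $W_L$--$W_R$ vertex cut in $G-W_C$ using at most $k-|W_C|+1$ augmentations. That algorithm is correct: for the trace of a true witness $(L,R)$, the returned cut has size at most $|L\cap R|\le k$, with $L'\cap W\supseteq L\cap W$ and $R'\cap W\supseteq R\cap W$. However, it runs in $O(3^{|W|}km)$ time, not $O(2^{|W|}km)$. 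Your two-guess refinement only covers witnesses with $|W\cap(L\setminus R)|>q$ or $|W\cap(R\setminus L)|>q$. The remaining case is real, and adjusting thresholds cannot close it.

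Here is a concrete instance. Let $G$ be the star $K_{1,5}$ with centre $x$ and leaves $W=\{a_1,a_2,c,a_3,a_4\}$, and take $q=k=2$. Then $(\{a_1,a_2,c,x\},\{c,x,a_3,a_4\})$ is a $(2,2)$-breakable witness with cut $\{x,c\}$.
\begin{itemize}
\item For your guess $W_1=\{a_1,a_2,c\}$, $W_2=\{a_3,a_4\}$, the unique minimum cut is $\{x\}$, which gives side counts $3$ and $2$.
\item For the symmetric guess $W_1'=\{a_1,a_2\}$, $W_2'=\{c,a_3,a_4\}$, the unique minimum cut is again $\{x\}$, which gives $2$ and $3$.
\end{itemize}
More generally, every partition of $W$ into two nonempty parts has minimum cut value $1$, and no separation of size $1$ is a witness. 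So no scheme that enumerates bipartitions of $W$ and returns a minimum cut for each can be complete. The vertex $c$ is in the witness's cut only so that it is counted on both sides, and a minimum-cut computation never pays for that. Reaching $2^{|W|}$ needs a genuinely additional ingredient, for example padding a found cut with $W$-vertices together with a proof that the padded cut is never worse than the witness. Your proposal does not supply this.

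For this paper's purposes your $3^{|W|}$ version suffices. The lemma is only invoked on $W=B$ with $|B|\le 2\sigma=O(k)$, where $3^{|B|}=2^{O(k)}$ is absorbed into the $k^{O(k)}$ factor.
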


\begin{definition}[Balanced Origin]%
Given a graph $G$, an $\alpha$-balanced $\sigma$-origin is a vertex set $X \subseteq V(G)$ such that, for any vertex set $X'\supseteq X$ with adhesion $\sigma$, $X'$ must be $\alpha$-balanced.

\end{definition}

\begin{lemma}[Lemma 4.2 of~\cite{anand2025unbreakable}]%
\label{lemma:SepBySampleSet}%
Given an $m$-edge graph $G$ with parameters $k,\sigma$ where $k\leq \sigma$, there is a randomized algorithm that computes a set $X$ which is always $(k,k)$-unbreakable such that with constant probability, $X$ is a  $\frac{1}{2}$-balanced $\sigma$-origin. The running time is $2^{\OO(\sigma)}m$.

\end{lemma}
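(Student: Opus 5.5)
The plan is a random-sampling argument: draw a small random vertex set $W$ of size $s=\Theta(\sigma)$, show by an important-separator counting argument that $W$ is a $\frac12$-balanced $\sigma$-origin with constant probability, and then turn $W$ into a set $X$ that is always $(k,k)$-unbreakable using \Cref{lemma:CheckUnbreakable}. Since $|W|=O(\sigma)$, each call to \Cref{lemma:CheckUnbreakable} costs $2^{O(\sigma)}km$, which fits the budget because $k\le\sigma$.

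\emph{Step 1: sampling.} Sample $W=\{w_0,w_1,\dots,w_s\}$ independently and uniformly from $V(G)$. Call $W$ \emph{bad} if there is a set $X'\supseteq W$ of adhesion at most $\sigma$ such that some component $C$ of $G\setminus X'$ has $|C|>n/2$. In that case $C$ is connected, $|N(C)|\le\sigma$, and $C\cap W=\emptyset$. So it suffices to bound the probability of the following event: there is a connected set $C$ with $|C|>n/2$ and $|N(C)|\le\sigma$ that avoids $W$.

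\emph{Step 2: the union bound via important separators.} Enumerating all such $C$ directly is hopeless, so I will use one sample point as an anchor. Let $W'=W\setminus\{w_0\}$, and suppose such a $C$ avoids $W'$.
\begin{itemize}
\item Given $C$, the point $w_0$ lands in $C$ with probability more than $\frac12$. By exchangeability of the samples, this costs at most a factor $2$ in the bound.
\item Once $w_0\in C$, the separator $(C\cup N(C),V\setminus C)$ is a connected $(w_0,W')$-separator of size at most $\sigma$.
\item This separator is dominated by an important $w_0$-$W'$ separator whose $w_0$-side $C^\star$ satisfies $C^\star\supseteq C$, so $|C^\star|>n/2$.
\item There are at most $4^\sigma$ such $C^\star$, and they depend only on $w_0$ and $W'$. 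Each of them must be avoided by all of $W'$; I would make this rigorous by revealing the samples in stages, or by arguing about a fixed $C^\star$ in the pair $(w_0,\text{rest})$.
\item A fixed set of size more than $n/2$ is avoided by all of $W'$ with probability at most $2^{-s}$.
\end{itemize}
This gives a failure probability of roughly $2\cdot4^\sigma 2^{-s}$, which is small once $s=3\sigma+O(1)$. The conditioning is the delicate point here. The important separators are defined with respect to $W'$ itself, so I would instead split $W'$ into two independent halves: use one half to define the sink, and use the other half to test avoidance.

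\emph{Step 3: forcing unbreakability.} Run \Cref{lemma:CheckUnbreakable} on $W$ with parameters $(k,k)$. If $W$ is certified unbreakable, output $X=W$. Otherwise we obtain a witness $(L,R)$ of size at most $k$, and I would recurse on $W\cap L$ or $W\cap R$, keeping the side that contains more sample points. Each split removes more than $k$ points, so there are at most $O(\sigma/k)$ rounds. A set of size at most $k$ is vacuously unbreakable, so the output $X$ is always $(k,k)$-unbreakable.

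The main obstacle is showing that this shrinking preserves the origin property with constant probability. If a superset $X'\supseteq X$ of adhesion $\sigma$ had a large component $C$ avoiding $X$, the discarded samples would have to sit inside $C$ or on the witness boundaries. The plan is to combine the witness separator (size at most $k\le\sigma$) with $N(C)$ to obtain a separator of size $O(\sigma)$ that cuts off a set of size more than $n/2$ avoiding the retained samples. Then rerun the Step 2 bound with separator size $O(\sigma)$ in place of $\sigma$, which only changes the constant in $s$. If this coupling fails, the fallback is to enumerate all $2^{|W|}$ subsets of $W$ and output the first one that is both unbreakable (checked by \Cref{lemma:CheckUnbreakable}) and large enough for a balanced-separator test; this stays within $2^{O(\sigma)}m$.
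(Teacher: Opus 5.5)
\textbf{Step 3 fails.} Your Step 1 reformulation and the anchoring/exchangeability trick in Step 2 are fine. The problem is that Step 3 only ever outputs a subset of the sample $W$, and in general no subset of $W$ is both $(k,k)$-unbreakable and a $\frac12$-balanced $\sigma$-origin. Take a subdivided star with center $c$ and $100$ legs of length about $n/100$, with $k=10$ and $\sigma=50$. Suppose $c\notin Y$ and $|Y|\ge 32$. Then a single cut vertex splits $Y$ with at least $11$ points on each side: either $c$ itself, or, if one leg carries at least $22$ points of $Y$, a vertex on that leg. So any $(10,10)$-unbreakable $Y\not\ni c$ meets at most $31$ legs. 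But then $C=\{c\}\cup\{\text{legs disjoint from }Y\}$ is connected, avoids $Y$, has $|N(C)|\le 31\le\sigma$, and has $|C|>n/2$. Hence every valid output must contain $c$, which lies in $W$ only with probability $O(\sigma/n)$. This defeats both the ``keep the side with more samples'' recursion and the subset-enumeration fallback. Re-running the Step 2 bound on the retained samples cannot help either: the retained samples are chosen adaptively, and the large set $C$ may contain every discarded sample.

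\textbf{Missing idea for Step 3.} Absorb the witness separator, and choose the side by vertex count rather than sample count. Given a breakable witness $(L,R)$ of the current $X$, relabel so that $|L\setminus R|\le n/2$, and set $X\gets (X\cap R)\cup (L\cap R)$. A connected $C$ with $|C|>n/2$ that avoids the new set misses $L\cap R$. So it lies inside $L\setminus R$, which is impossible by size, or inside $R\setminus L$, in which case it avoids the old $X$. Thus the origin property is preserved deterministically. Moreover $|X|$ drops by at least one, since $|X\cap L|\ge k+1>|L\cap R|$. Hence at most $|W|=O(\sigma)$ calls to \Cref{lemma:CheckUnbreakable} are needed, each costing $2^{O(\sigma)}km$.

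\textbf{Gap in Step 2.} Splitting $W'$ into a sink half $W_1$ and a test half $W_2$ does not remove the conditioning problem. The important $(w_0,W_1)$-separator whose source side $C^\star$ dominates $C$ may contain points of $W_2$ even though $C$ avoids them. So the bad event is not contained in ``$W_2$ avoids one of at most $4^\sigma$ fixed sets''.

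One way to make your approach rigorous is a small certificate. Suppose $N(C^\star)$ is an important $(w_0,W')$-separator of size $j\le\sigma$. Then:
\begin{itemize}
\item it is a minimum separator between $C^\star$ and $W'$, since a smaller one would reach strictly beyond $C^\star$;
\item for each $x\in N(C^\star)$, every separator between $C^\star\cup\{x\}$ and $W'$ has size at least $j+1$.
\end{itemize}
Take the $j$ endpoints of a maximum flow out of $C^\star$. For each $x$, add one further endpoint obtained by augmentation in the gammoid with source $C^\star\cup\{x\}$. Then $N(C^\star)$ is already important with respect to this sub-multiset $W'_0$ of at most $2\sigma$ samples.

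Now take a union bound over three things: at most $\sum_{i\le 2\sigma}\binom{s}{i}$ index sets for $W'_0$; at most $4^\sigma$ important $(w_0,W'_0)$-separators for each; and probability at most $2^{-(s-2\sigma)}$ that the remaining independent samples all avoid a fixed set of size more than $n/2$. This gives a small failure probability once $s=C\sigma$ for a sufficiently large constant $C$.
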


The following lemma combines our result in~\Cref{thm:reducing_adhesion} with the previous lemma, and shows how to obtain Unbreakable Balanced Low-Adhesion Sets, an analogue of Theorem 5.2 of~\cite{anand2025unbreakable}.

\begin{lemma}[Unbreakable Balanced Low-Adhesion Sets]%
\label{coro:UnbreakLowAdhesion}%
Given an $m$-edge graph $G$ with parameter $k\geq 1$, there is a randomized algorithm that with high probability computes a set $X\subseteq V(G)$ such that $X$ is $(2k, k)$-unbreakable, $1/2$-balanced and has adhesion $2k$.
The running time is $k^{O(k)}m\,\polylog(n)$. 
\end{lemma}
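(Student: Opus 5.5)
We run the randomized procedure of \Cref{lemma:SepBySampleSet} with parameters $k$ and $\sigma=2k$, and then feed its output into \Cref{thm:reducing_adhesion} with $c=1$. This produces a set $X$ that is always $(k,k)$-unbreakable, and with constant probability is also a $\frac12$-balanced $2k$-origin; that step costs $2^{O(k)}m$ time. The two lemmas are then combined as follows.

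\textbf{Step 1 (small $X$).} If $|X|\le 2k$, the input hypothesis $|X|\ge (c+1)k+1$ of \Cref{thm:reducing_adhesion} fails. In that case I would enlarge $X$ only where needed. Any set of size at most $2k$ is trivially $(2k,k)$-unbreakable, so the remaining issues are adhesion and balance. One option is to take $T=X$ itself if its adhesion is at most $2k$; since $X$ is a $2k$-origin, $X$ is then automatically $\frac12$-balanced. Otherwise, adding a few vertices to $X$ (while keeping $|X|\le 2k$ or passing the size threshold) lets us fall into Step 2. I expect this corner case to be handled by a simple padding argument, or by observing that the proof of \Cref{thm:reducing_adhesion} only uses $|X|\ge k'+1$ to rule out degenerate witnesses.

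\textbf{Step 2 (main case).} Assume $|X|\ge 2k+1$. Applying \Cref{thm:reducing_adhesion} with $c=1$ gives, in time $\Otil(mk^{O(k)})$, a set $T\supseteq X$ that is $(2k,k)$-unbreakable and has adhesion at most $2k$ in $G$. The origin property now applies directly: $T\supseteq X$ has adhesion at most $\sigma=2k$, so $T$ is $\frac12$-balanced. Hence $T$ is the desired set, and the bound $\sigma=2k$ in \Cref{lemma:SepBySampleSet} is exactly what makes this work.

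\textbf{Step 3 (probability amplification).} Unbreakability and adhesion hold with high probability regardless of how the sampling turns out. Balance, however, holds only with constant probability, and only when $X$ really is an origin. But balance is checkable in $O(m)$ time: compute the components of $G\setminus T$ and compare their sizes with $n/2$. So I repeat the whole procedure $O(\log n)$ times and output the first $T$ that passes this check. This yields high probability at a cost of a $\log n$ factor, for a total running time of $k^{O(k)}m\,\polylog(n)$.

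I expect the main obstacle to be the corner case of Step 1, namely guaranteeing the size hypothesis $|X|\ge (c+1)k+1$. I would first try to show directly that when $|X|\le 2k$, the set $X$ together with the degenerate situation already gives a valid answer. Failing that, I would relax the size hypothesis in \Cref{thm:reducing_adhesion}, which is only used via \Cref{lemma:witness} to exclude $L=V(G)$.
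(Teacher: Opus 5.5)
Your proposal follows the paper's proof. You sample $X_0$ via \Cref{lemma:SepBySampleSet} with $\sigma=2k$, apply \Cref{thm:reducing_adhesion} with $c=1$ when $|X_0|\ge 2k+1$, get balance from the origin property, and amplify by checking balance over $O(\log n)$ repetitions.

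The corner case you flag as the main obstacle is trivial and needs neither padding nor relaxed hypotheses. If $|X_0|\le 2k$, every component $C$ of $G\setminus X_0$ has $N_G(C)\subseteq X_0$, so the adhesion of $X_0$ is at most $|X_0|\le 2k$. Hence your ``first option'' always applies: $X_0$ is trivially $(2k,k)$-unbreakable, has adhesion at most $\sigma=2k$, and is therefore $\tfrac12$-balanced whenever it is an origin. Your ``otherwise'' branch is vacuous. This is exactly what the paper does: it checks balance of $X_0$ directly and returns it.

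It is good that padding is unnecessary. Adding arbitrary vertices to push $X_0$ past size $2k$ could destroy the $(k,k)$-unbreakability hypothesis that \Cref{thm:reducing_adhesion} requires.
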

\begin{proof}
We do the following $O(\log n)$ times. First, let $X_{0}$ be an initial set by applying \Cref{lemma:SepBySampleSet} on $G$ with parameters $k$ and $\sigma = 2k$. If $|X_0| \leq 2k$, check if $X_0$ is $\frac{1}{2}$ balanced, if so return it, else continue to the next iteration. Now we can assume that $|X_0| \geq 2k + 1$, and we compute $X$ by applying \Cref{thm:reducing_adhesion} on graph $G$ with $X_{0}$ as the initial set. If $X$ is $1/2$-balanced, we return $X$ and terminate the whole algorithm, otherwise proceed to the next iteration.

To see the correctness, consider each of the $O(\log n)$ iterations. \Cref{lemma:SepBySampleSet} guarantees that $X_{0}$ is always $(k,k)$-unbreakable and with constant probability, $X_{0}$ is a $1/2$-balanced $\sigma$-origin. Therefore, \Cref{thm:reducing_adhesion} guarantees that, with high probability, $X\supseteq X_{0}$ is $(2k,k)$-unbreakable and has adhesion at most $\sigma = 2k$. Furthermore, because $X_{0}$ is a $1/2$-balanced $\sigma$-origin with constant probability, we have $X$ is $1/2$-balanced with constant probability. It follows that with high probability, at least one of the $O(\log n)$ iterations will give a $1/2$-balanced $X$.
\end{proof}

The following theorem is the main theorem of this section which constructs an unbreakable decomposition. 

\begin{theorem}
\label{thm:DetailedMain}
Given an $n$-vertex $m$-edge undirected graph $G$ with a parameter $k$, there is a randomized algorithm that, with high probability,\footnote{In recursive calls on subgraphs, all random sampling and repetition counts use $n=|V(G)|$ of the input graph, so a union bound over all calls applies.} computes a $(5k,k)$-unbreakable decomposition with adhesion $4k$, which further has the following properties.  
\begin{itemize}
\item The unbreakable decomposition is compact and admits subtree unbreakability.
\item The depth of the decomposition is $O(k\log n)$.
\item The number of tree nodes is $O(n)$ and the total bag size is $O(kn)$.
\end{itemize}
The running time is $k^{O(k)}\Otil(m)$. Alternatively, the algorithm can compute an $(O(k),k)$-unbreakable decomposition with adhesion $O(k)$, depth $O(\log n)$, and all the other properties as above in asymptotically the same running time.
\end{theorem}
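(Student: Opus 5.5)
We build the decomposition top-down, following the recursive scheme of \cite{anand2025unbreakable}, with \Cref{coro:UnbreakLowAdhesion} and \Cref{thm:reducing_adhesion} as the only heavy subroutines. The recursive procedure $\Decomp(G', \sigma_0)$ receives a subgraph $G' = G_t$ together with its adhesion set $\sigma_0$, where $|\sigma_0| \le 4k$, $G'[\alpha]$ is connected and $N(\alpha)=\sigma_0$. It proceeds in four steps.

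\textbf{Step 1 (central set).} If $G'$ is small, the whole vertex set is returned as a single bag. Otherwise we apply \Cref{coro:UnbreakLowAdhesion} to $G'$ to obtain a $(2k,k)$-unbreakable, $1/2$-balanced set $X_1$ of adhesion $2k$.

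\textbf{Step 2 (absorbing the adhesion).} The bag must also contain $\sigma_0$, which has size up to $4k$. We therefore set $X_0 = X_1 \cup \sigma_0$. This set is $(2k+4k,k)$-unbreakable, but that is too weak. Instead we use the argument of \Cref{lemma:unbreakabletowitness}: unbreakability is only lost additively in the number of added vertices, so we run a variant of \Cref{thm:reducing_adhesion}. The variant uses the witness parameter $k'=4k$ and adds $\sigma_0$ to the protected set $X$, so that $X \subseteq R$ for every witness. It outputs $T \supseteq X_1\cup\sigma_0$ with no $(X,v,4k,4k)$ witness for $T$ in $G'$. The same proof as \Cref{lemma:unbreakabletowitness}, with $c$ chosen so that the total is $5k$, then gives that $T$ is $(5k,k)$-unbreakable in $G'=G_t$; this is exactly subtree unbreakability. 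The adhesion of $T$ is at most $4k$, since the components of $G'\setminus T$ have neighbourhoods of size at most $k'=4k$ by \Cref{lemma:UpdateAdhesion} and \Cref{lemma:unrolladhesionsequence}.

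\textbf{Step 3 (recursion).} We make $T$ the bag of the current node. For each component $C$ of $G'\setminus T$ we recurse on $G'[C\cup N(C)]$ minus the edges inside $N(C)$, with adhesion $N(C)$, which has size at most $4k$. Compactness holds by construction, because $C$ is connected and $N(C)$ is exactly its boundary.

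\textbf{Step 4 (balance and depth).} Since $X_1\subseteq T$ and $X_1$ is $1/2$-balanced in $G'$, every component has at most half the vertices of $G'$. Hence the depth is $O(\log n)$ in the $(O(k),k)$ variant. For the stated $(5k,k)$ version with adhesion exactly $4k$, we alternate a balancing level with an adhesion-reducing level as in \cite{anand2025unbreakable}, giving depth $O(k\log n)$. Tree nodes can be charged injectively, and bag sizes bounded by $O(kn)$ in total, by the standard argument: each non-leaf bag contains a vertex not in its adhesion, and bag size exceeds this by at most the adhesion size, which is $O(k)$.

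\textbf{Main obstacle.} The hardest part is the running time and probability accounting. On a fixed level of the recursion the subgraphs $G_t$ overlap only on adhesion vertices, and the adhesion cliques add $O(k^2)$ edges per child. We must show that the total edge count per level is $O(m)$ plus $k^{O(1)}$ times the number of nodes, which is $O(kn)$. Summed over the $O(k\log n)$ depth, this gives $k^{O(k)}\tilde O(m)$. I would bound the number of edges inside adhesions via the compactness and the $O(n)$ node bound. For the probability, all calls use the global $n$, so a union bound over polynomially many calls suffices, as the footnote indicates.
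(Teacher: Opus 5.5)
\textbf{Step 2 has a genuine gap, and it is the core of the construction.} You apply \Cref{lemma:unbreakabletowitness} with protected set $X=X_1\cup\sigma_0$. For that lemma to give $(q',k)$-unbreakability from the absence of $(X,v,k',k')$-witnesses, $X$ must be $(q,k)$-unbreakable with $q+k\le k'$. But $X_1$ and $\sigma_0$ can be split on opposite sides of a $k$-separation. So in general $X$ is only $(2k+|\sigma_0|,k)=(6k,k)$-unbreakable, and moving $X\cap L$ into the separator can cost $7k>k'=4k$. Your constants are also inconsistent: $(c+1)k=5k$ would need $k'=5k$ and a $(4k,k)$-unbreakable origin.

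The implication you use is in fact false. Build $G'$ as follows: join $\sigma_0$ ($4k$ independent vertices) and a $k$-set $S$ completely to a large clique $P$. Then join a $10k$-clique $Y$ and a clique $Q$ with $|Q|=|P|$ completely to $S$. The preconditions hold: $G'\setminus\sigma_0$ is connected and $N(\alpha)=\sigma_0$. The set $X_1=S\cup Y$ is $(2k,k)$-unbreakable, $1/2$-balanced, and has adhesion $k$. Any separation whose left side meets $P$ and whose right side contains $X$ must have all of $\sigma_0\cup S$ in the cut, so its order is at least $5k>4k$. Hence no witness ever removes $P$, and $T\supseteq \sigma_0\cup P\cup S\cup Y$. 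Yet $(\sigma_0\cup P\cup S,\,S\cup Y\cup Q)$ has order $k$ and at least $11k$ terminals on each side.

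Raising $k'$ to $7k$ would restore unbreakability, but then children get adhesion up to $7k$. In the example, $P$ would be cut off with boundary $\sigma_0\cup S$ of size $5k$. So the invariant $|\sigma_0|\le 4k$ breaks: nothing in your scheme ever shrinks a boundary.

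\textbf{The missing idea is the paper's adhesion-reducing case.} It never puts a large boundary into the protected set together with a balanced set. When $|B|>2k$, the paper tests $(k,k)$-unbreakability of $B$ with \Cref{lemma:CheckUnbreakable}, which takes $2^{|B|}km=2^{O(k)}m$ time.
\begin{itemize}
\item If a breakable witness $(L,R)$ is returned, it sets $X=B\cup(L\cap R)$. This set has at most $5k$ vertices, which is where $5k$ comes from. Its adhesion is at most $k+(|B|-k-1)=|B|-1$.
\item Otherwise $B$ is itself a $(k,k)$-unbreakable origin. Then \Cref{thm:reducing_adhesion} with $c=1$ gives a $(2k,k)$-unbreakable $X\supseteq B$ of adhesion $2k<|B|$.
\end{itemize}
Balancing via \Cref{coro:UnbreakLowAdhesion} is used only when $|B|\le 2k$. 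In that case $X_1\cup B$ is trivially $(4k,k)$-unbreakable with adhesion at most $4k$. At most $2k$ consecutive reducing steps occur, which gives depth $O(k\log n)$. The $O(\log n)$-depth variant takes $\sigma=5k$ and tests $(2k,k)$-unbreakability, so each reducing step cuts the boundary by $k$.

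Your Step 4 mentions alternating with an adhesion-reducing level but gives no mechanism, and it conflicts with Step 2. If Step 2 were valid, you would directly get $(5k,k)$-unbreakability, adhesion $4k$ and depth $O(\log n)$, which is stronger than the theorem.

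A minor point on running time: edges inside $N(C)$ are deleted, not replaced by cliques. So the subproblems on each level are edge-disjoint, and the per-level size is $O(m)$ without the extra accounting you propose.
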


We now show the algorithm for computing an unbreakable decomposition with $O(k\log(n))$ depth. Later in \Cref{sect:DepthReduction}, we will discuss how to reduce the depth to $O(\log n)$ by modifying the algorithm slightly.

\begin{proof}[Proof of \Cref{thm:DetailedMain}]
We fix a parameter $\sigma = 2k$ (which is an upper bound of the adhesion of the set returned by \Cref{thm:reducing_adhesion}). We will design a subroutine $\Decomp(H,B)$, which will receive as input an undirected graph $H$ with a boundary vertex set $B\subseteq V(H)$ such that
\begin{enumerate}
\item\label{cond:Decomp1} $|B| \leq 2\sigma$,
\item\label{cond:Decomp2} $H\setminus B$ is connected, and
\item\label{cond:Decomp3} $N_{H}(V(H)\setminus B) = B$.
\end{enumerate}
We remark that condition \ref{cond:Decomp1} is relatively important, while conditions 2 and 3 are only for avoiding some corner cases (so the reader may ignore them for now and come back when they are used). 
The output of $\Decomp(H,B)$ is an unbreakable decomposition $(T,\beta)$ of $H$ in which the root bag $\beta(r)$ contains $B$. Then the desired unbreakable decomposition of $G$ can be obtained by invoking $\Decomp(G,\emptyset)$. Without loss of generality, we assume $G$ is connected (otherwise, just work on each connected component separately). Thus it satisfies conditions \ref{cond:Decomp2} and \ref{cond:Decomp3}. 

In what follows, we first describe $\Decomp(H,B)$ and then show the output guarantees of this subroutine.

\begin{algorithm}[h]
\caption{$\Decomp(H,B)$}
\label{alg:Decomp}
\begin{algorithmic}[1]
\Require Parameters $k\geq 1$ and $\sigma = 2k$. A graph $H$ with a boundary vertex set $B\subset V(H)$ such that $|B|\leq 2\sigma$.
\Ensure A rooted tree decomposition $(T,\beta)$ of $H$ such that $B$ is contained in the root bag.
\If{$\sigma + 1\leq |B|\leq 2\sigma$}
\State\label{line:CheckUnbreakable}Apply \Cref{lemma:CheckUnbreakable} on $B$ and $H$ to check if $B$ is $(k,k)$-unbreakable in $H$.
\If{\Cref{lemma:CheckUnbreakable} returns a $(k,k)$-breakable witness $(L,R)$ of $B$ in $H$}
\State $X_{1}\gets L\cap R$ and $X\gets B\cup X_{1}$.
\Else \Comment{$B$ is $(k,k)$-unbreakable in $H$}
\State $X\gets $ apply \Cref{thm:reducing_adhesion} on graph $H$ with $B$ as the initial set.
\EndIf
\Else \Comment{$|B|\leq \sigma$}
\State $X_{1}\gets$ apply \Cref{coro:UnbreakLowAdhesion} on graph $H$ with parameter $k$.
\State $X\gets B\cup X_{1}$.
\EndIf
\If{$X=B$}
\State\label{line:corner}Return a trivial decomposition $(T,\beta)$ with a single bag $V(H)$.
\Else \Comment{$X\supset B$.}
\State Initialize $(T,\beta)$ with only the root node $r$ whose bag is $\beta(r) = X$.
\For{each connected component $D$ of $H\setminus X$}
\State $(T_{D},\beta_{D})\gets \Decomp(H[D\cup N_{H}(D)]\setminus E(H[N_{H}(D)]), N_{H}(D))$
\State Update $(T,\beta)$ by appending $(T_{D},\beta_{D})$ to the root node $r$.
\EndFor
\State Return $(T,\beta)$.
\EndIf
\end{algorithmic}
\end{algorithm}

\paragraph{Description of $\Decomp(H,B)$.} See \Cref{alg:Decomp} for a pseudocode. We now explain the algorithm below. 

\medskip

\noindent{\underline{Compute a ``separator'' $X$.}} We first compute a set $X\subseteq V(H)$ according to the following two cases.

\medskip

\noindent{\textbf{Case 1.}} If $\sigma+1\leq |B|\leq 2\sigma$, we will compute an unbreakable set $X\supseteq B$ with adhesion smaller than $|B|$ to reduce the boundary size of the subproblems.
\begin{itemize}
\item[(a)] If $B$ has a $(k,k)$-breakable witness in $H$, find such a witness $(L,R)$ using \Cref{lemma:CheckUnbreakable} (let $X_{1} = L\cap R$), and we let $X = X_{1}\cup B$.
\item[(b)] Otherwise, $B$ is $(k,k)$-unbreakable in $H$. We apply \Cref{thm:reducing_adhesion} on graph $H$ with $B$ as the initial set, and let $X$ be the output set.
\end{itemize}

\begin{claim}
\label{claim:DecompCase1}
In Case 1, the set $X$ is $(5k,k)$-unbreakable and has adhesion at most $|B|-1$.
\end{claim}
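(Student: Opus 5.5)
We prove \Cref{claim:DecompCase1} by treating the two subcases (a) and (b) separately; recall $\sigma=2k$, so $2k+1\le |B|\le 4k$.

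\textbf{Subcase (b).} Here $B$ is $(k,k)$-unbreakable in $H$ and $|B|\ge 2k+1$. We apply \Cref{thm:reducing_adhesion} with $c=1$; its hypothesis $|X|\ge (c+1)k+1=2k+1$ holds. It returns $X\supseteq B$ that is $(2k,k)$-unbreakable, hence $(5k,k)$-unbreakable, with adhesion at most $2k$ in $H$. Since $|B|\ge 2k+1$, we get $2k\le |B|-1$, which is the adhesion bound.

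\textbf{Subcase (a).} Let $(L,R)$ be a $(k,k)$-breakable witness of $B$, so $|L\cap R|\le k$, $|L\cap B|>k$ and $|R\cap B|>k$, and set $X=B\cup(L\cap R)$.

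For unbreakability, note $|X|\le |B|+k\le 5k$. Any set of size at most $5k$ is vacuously $(5k,k)$-unbreakable, by the remark after the definition.

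For adhesion, let $D$ be a component of $H\setminus X$. Since $L\cap R\subseteq X$ and there are no edges between $L\setminus R$ and $R\setminus L$, the component $D$ lies entirely in $L\setminus R$ or entirely in $R\setminus L$; say $D\subseteq L\setminus R$. Then $N_H(D)\subseteq X\cap L=(B\cap L)\cup(L\cap R)$, and in fact $N_H(D)\subseteq (B\cap(L\setminus R))\cup(L\cap R)$. Its size is at most
\[
|B\cap(L\setminus R)|+|L\cap R| = |B|-|B\cap (R\setminus L)|-|B\cap L\cap R|+|L\cap R| .
\]
This is the step that needs care, because $L\cap R$ may contain non-$B$ vertices. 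We bound $|B\cap(R\setminus L)|=|B\cap R|-|B\cap L\cap R|\ge k+1-|B\cap L\cap R|$. Substituting gives $|N_H(D)|\le |B|-(k+1)+|L\cap R|\le |B|-(k+1)+k=|B|-1$. The case $D\subseteq R\setminus L$ is symmetric, using $|B\cap L|\ge k+1$. This proves adhesion at most $|B|-1$.

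The main obstacle is this adhesion count in subcase (a). The inequality must use $|B\cap R|\ge k+1$ rather than $|B\cap(R\setminus L)|\ge k+1$, and the $B\cap L\cap R$ terms have to cancel correctly.
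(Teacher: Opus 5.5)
Your proof is correct and essentially matches the paper's. Subcase (b) follows directly from \Cref{thm:reducing_adhesion} since $2k\le |B|-1$, and in subcase (a) unbreakability comes from $|X|\le k+2\sigma\le 5k$. For the adhesion in (a), you place each component $D$ on one side of $(L,R)$ and use $|B\cap R|\ge k+1$ directly; this is the same count the paper performs by passing to the component $D'\supseteq D$ of $H\setminus X_1$ and bounding $|D'\cap B|\le |B|-(k+1)$.
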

\begin{proof}
We first focus on case (a). $X$ is trivially $(5k,k)$-unbreakable because $|X|\leq |X_{1}| + |B|\leq k + 2\sigma  \leq 5k$. To see that $X$ has adhesion $|B|-1$, consider an arbitrary connected component $D$ of $H\setminus X$. Recall that $X = X_{1}\cup B$, so
\[
|N_{H}(D)| = |N_{H}(D)\cap X_{1}| + |N_{H}(D)\cap (B\setminus X_{1})|,
\]  
and we will bound $|N_{H}(D)\cap X_{1}|$ and $|N_{H}(D)\cap (B\setminus X_{1})|$ separately. Observe that $|N_{H}(D)\cap X_{1}|\leq |X_{1}|\leq k$, where $|X_{1}|\leq k$ is trivially because $X_{1}$ is a $(k,k)$-breakable witness. To bound $|N_{H}(D)\cap (B\setminus X_{1})|$, we look at the connected component $D'\supseteq D$ of $H\setminus X_{1}$ (note that $D'$ is unique since $X_{1}\subseteq X$). We have $N_{H}(D)\cap (B\setminus X_{1})\subseteq D'\cap B$ because $N_{H}(D')\subseteq X_{1}$. Furthermore, $|D'\cap B|\leq |B|-(k+1)$ because $X_{1}$ is a $(k,k)$-breakable witness of $B$ in $H$. Therefore, $|N_{H}(D)\cap (B\setminus X_{1})|\leq |B|-(k+1)$. Putting it all together, $|N_{H}(D)| \leq k + |B|-(k+1)\leq |B|-1$ as desired.

For (b), the claim holds simply by the guarantees of \Cref{thm:reducing_adhesion}, which guarantees that $X$ is $(2k,k)$-unbreakable and has $\sigma(X) \leq 2k$ (which implies $\sigma(X)\leq \sigma\leq |B|-1$). 
\end{proof}

\medskip

\noindent{\textbf{Case 2.}} If $|B|\leq \sigma$, we will compute a balanced unbreakable set $X\supseteq B$ (with adhesion at most $2\sigma$) to significantly reduce the graph size of the subproblems. We first compute $X_{1}$ by applying \Cref{coro:UnbreakLowAdhesion} on graph $H$ with parameter $k$, and then let $X$ be $B\cup X_{1}$.

\begin{claim}
In Case 2, the set $X$ of $H$ is $(4k,k)$-unbreakable and $1/2$-balanced, and has adhesion at most $4k$.
\label{claim:DecompCase2}
\end{claim}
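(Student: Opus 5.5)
We prove the three properties of $X=B\cup X_1$ one at a time, using only the guarantees of \Cref{coro:UnbreakLowAdhesion} for $X_1$ and the bound $|B|\le\sigma=2k$.

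\emph{Unbreakability.} Let $(L,R)$ be any separator of $H$ with $|L\cap R|\le k$. Since $X_1$ is $(2k,k)$-unbreakable, one side, say $L$, satisfies $|L\cap X_1|\le 2k$. Then
\[
|L\cap X|\le |L\cap X_1|+|B|\le 2k+2k=4k.
\]
So every such separator has at most $4k$ vertices of $X$ on one side, and $X$ is $(4k,k)$-unbreakable. This step is routine.

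\emph{Balance.} Every connected component $D$ of $H\setminus X$ lies inside some component $D'$ of $H\setminus X_1$, because $X_1\subseteq X$. By \Cref{coro:UnbreakLowAdhesion}, $X_1$ is $1/2$-balanced, so $|D|\le|D'|\le |V(H)|/2$. Hence $X$ is $1/2$-balanced.

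\emph{Adhesion.} This is the step that needs care. Fix a component $D$ of $H\setminus X$ and let $D'\supseteq D$ be the component of $H\setminus X_1$ containing it. Split the boundary of $D$ into its part in $X_1$ and its part in $B\setminus X_1$, as in the proof of \Cref{claim:DecompCase1}.
\begin{itemize}
\item A vertex of $N_H(D)\cap X_1$ is adjacent to $D\subseteq D'$ and lies outside $D'$, so it belongs to $N_H(D')$. Hence $|N_H(D)\cap X_1|\le|N_H(D')|\le 2k$, using the adhesion bound $2k$ for $X_1$.
\item Trivially, $|N_H(D)\cap (B\setminus X_1)|\le |B|\le\sigma=2k$.
\end{itemize}
Adding the two parts gives $|N_H(D)|\le 4k$, which is the claimed adhesion bound.

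The only potential subtlety is making sure each boundary vertex of $D$ is counted in exactly one of the two parts, which holds because $X=X_1\cup(B\setminus X_1)$ is a disjoint union. The claim also relies on \Cref{coro:UnbreakLowAdhesion} succeeding, which happens with high probability.
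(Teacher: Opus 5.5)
Your proof is correct and follows the same route as the paper, which simply notes that adding $B$ to $X_1$ raises the unbreakability threshold and the adhesion by at most $|B|\le \sigma=2k$ while preserving $1/2$-balance. You spell out those three steps in more detail, including the observation that $N_H(D)\cap X_1\subseteq N_H(D')$.
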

\begin{proof}

\Cref{coro:UnbreakLowAdhesion} guarantees that, with high probability, $X_{1}$ is $(2k,k )$-unbreakable with adhesion at most $2k$.
Because $X = X_{1}\cup B$, $X$ is a $(2k + |B|,k)$-unbreakable and $1/2$-balanced set in $H$ with adhesion at most $2k + |B|$. Recall that $|B|\leq \sigma = 2k$, so the claim holds.
\end{proof}

\noindent{\underline{Construct the decomposition.}} After computing the set $X$, we construct the decomposition $(T,\beta)$ as follows. We first consider the corner case that $X = B$, in which we return a trivial decomposition with $V(H)$ as the only bag. 

\begin{claim}
In the corner case $X = B$, we have $|V(H)|\leq 2\sigma$, which implies $V(H)$ is $(4k,k)$-unbreakable in $H$.
\label{claim:CornerCase}
\end{claim}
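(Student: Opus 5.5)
In the corner case the algorithm outputs the single bag $V(H)$, so the plan is to trace how $X = B$ can happen and show that in every such branch $V(H)$ is already tiny, namely $V(H)\subseteq B\cup(\text{few vertices})$. The claim as stated asserts $|V(H)|\leq 2\sigma$. Since $|B|\leq 2\sigma$, it suffices to show $V(H)=B$. Then $(4k,k)$-unbreakability is immediate: any set of size at most $2\sigma = 4k$ is vacuously $(4k,k)$-unbreakable, as noted after the definition of unbreakability.

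\textbf{Case 1(a).} Here $X = B\cup X_1$ with $X_1 = L\cap R$ coming from a $(k,k)$-breakable witness of $B$. If $X = B$, then $X_1\subseteq B$. The separator $(L,R)$ has $L\setminus R$ and $R\setminus L$ nonempty and each containing vertices of $B$, since $|L\cap B|>k\geq|L\cap R|$, and similarly for $R$. So $H\setminus X_1$ is disconnected, with vertices of $B$ on both sides. We then use conditions~\ref{cond:Decomp2} and~\ref{cond:Decomp3}: $H\setminus B$ is connected and every vertex of $B$ has a neighbor in $V(H)\setminus B$. If $V(H)\setminus B\neq\emptyset$, then $H\setminus X_1\supseteq (H\setminus B)\cup (B\setminus X_1)$ is connected, because each vertex of $B\setminus X_1$ attaches to the connected set $V(H)\setminus B$. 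This contradicts the separation. Hence $V(H)=B$.

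\textbf{Case 1(b).} $X\supseteq B$ is returned by \Cref{thm:reducing_adhesion} with adhesion at most $2k=\sigma<|B|$. If $X=B$ and $V(H)\setminus B\neq\emptyset$, then by condition~\ref{cond:Decomp2} the set $V(H)\setminus B$ is a single component of $H\setminus X$. By condition~\ref{cond:Decomp3} its neighborhood is all of $B$, giving adhesion $|B|\geq\sigma+1$, a contradiction. So again $V(H)=B$.

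\textbf{Case 2.} $|B|\leq\sigma$ and $X = B\cup X_1$. If $X=B$, then $|X|\leq\sigma$, and the component argument above shows that either $V(H)=B$ or the unique component $V(H)\setminus B$ of $H\setminus X$ must have size at most $|V(H)|/2$, since $X$ is $1/2$-balanced. The latter forces $|V(H)\setminus B|\leq (|B|+|V(H)\setminus B|)/2$, i.e.\ $|V(H)\setminus B|\leq|B|$. Hence $|V(H)|\leq 2|B|\leq 2\sigma$. This case is the only place where the bound is not simply $V(H)=B$, and I expect it to be the main subtle step. The balance guarantee holds only with high probability, so I will state the claim conditioned on the high-probability event of \Cref{coro:UnbreakLowAdhesion}. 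In all cases $|V(H)|\leq 2\sigma=4k$, which gives the unbreakability conclusion.
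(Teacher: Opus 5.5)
Your proof is correct and takes the same route as the paper. In Case~1, $X=B$ would force the single component $V(H)\setminus B$ to have neighborhood all of $B$; you argue Case~1(a) directly via connectivity instead of citing the $|B|-1$ adhesion bound of \Cref{claim:DecompCase1}, but it is the same idea. In Case~2, $1/2$-balance together with the uniqueness of that component gives $|V(H)|\le 2|B|\le 2\sigma$. You also handle the degenerate case $V(H)=B$ explicitly, which the paper leaves implicit.
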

\begin{proof}
Note that only case 2 can lead to this corner case. Because from the input conditions \ref{cond:Decomp2} and \ref{cond:Decomp3}, the adhesion of $X$ is exactly $|B|$, contradicting \Cref{claim:DecompCase1} for case 1. Therefore, $|X| = |B|\leq \sigma$ and \Cref{claim:DecompCase2} tells that $X$ is $1/2$-balanced. Combining that there is only one connected component $D$ in $H\setminus X$ (by input condition \ref{cond:Decomp2}), we have $|V(H)|\leq |D| + |X|\leq |V(H)|/2 + \sigma$, which means $|V(H)| \leq 2\sigma$. Hence $V(H)$ is trivially $(2\sigma,k)$-unbreakable in $H$.
\end{proof}

From now we assume $X\supset B$. We consider each connected component $D$ of $H\setminus X$, and solve a subproblem $\Decomp(H[D\cup N_{H}(D)]\setminus E(H[N_{H}(D)]),N_{H}(D))$\footnote{In other words, $H[D\cup N_{H}(D)]\setminus E(H[N_{H}(D)])$ is the subgraph of $H$ induced by vertices $D\cup N_{H}(D)$ excluding edges with both endpoints in $N_{H}(D)$.} recursively which gives output $(T_{D},\beta_{D})$ (note that this will ensure the graphs of the subproblems are edge-disjoint). After solving all the subproblems, we create a node $r$ as the root of $T$ corresponding to bag $\beta(r) = X$, and then append all $(T_{D},\beta_{D})$ to $r$ as \emph{child-subtrees}.

\paragraph{Correctness.} Let $(T,\beta)$ be the output of $\Decomp(H,B)$.

\medskip

\noindent{\underline{$(T,\beta)$ is an unbreakable decomposition.}} We now prove by induction that $(T,\beta)$ from $\Decomp(H,B)$ is a $(
5k,k)$-unbreakable decomposition of $H$ with adhesion $4k$, further satisfying that the root bag contains $B$.

We first consider the base case that $\Decomp(H,B)$ causes no further recursion, which means the set $X$ is trivially the whole $V(H)$. Then trivially the output $(T,\beta)$ is a tree decomposition with only a single bag, and this bag is $(5k,k)$-unbreakable in $H$ by \Cref{claim:DecompCase1} and \Cref{claim:DecompCase2}.

Next, we consider the recursive subproblems that $\Decomp(H,B)$ invokes. Recall that each recursion $\Decomp(H[D\cup N_{H}(D)]\setminus E(H[N_{H}(D)]), N_{H}(D))$ relates to a connected component $D$ of $H\setminus X$, and we assume all these recursions output child-subtrees $(T_{D},\beta_{D})$ correctly. We verify the properties in \Cref{def:TreeDecomp,def:UnbreakableDecomp} one by one to show the correctness of $\Decomp(H,B)$.
\begin{itemize}
\item Consider a vertex $v\in V(H)$. If $v\notin X$, then $v$ is inside a unique connected component $D_{v}$ of $H\setminus X$, so only the child-subtree $(T_{D_{v}}, \beta_{D_{v}})$ may have bags containing $v$, and by the correctness of recursions, bags containing $v$ induce a connected subtree. If $v\in X$, any child-subtree $(T_{D},\beta_{D})$ with bags containing $v$ must relate to a component $D$ s.t. $v\in N_{H}(D)\subseteq X$. Hence, $v$ is inside the root bag of such child-subtrees $(T_{D},\beta_{D})$ (since $N_{H}(D)$ are the boundary vertices of the next recursion). By the construction of $(T,\beta)$, bags containing $v$ induce a connected subtree.

\item Each edge in $E(H)$ will be covered by at least a bag in $(T,\beta)$, because edges in $H[X]$ are inside the root bag of $(T,\beta)$, and each of the rest of edges is inside some $H[D\cup N_{H}(D)]\setminus E(H[N_{H}(D)])$.

\item All bags of $(T,\beta)$ are $(5k,k)$-unbreakable in $H$ by the following reasons. The new bag $X$ is $(5k,k)$-unbreakable in $H$ by \Cref{claim:DecompCase1} and \Cref{claim:DecompCase2}. Each old bag is even $(5k,k)$-unbreakable in some subgraph of $H$.

\item To see that $(T,\beta)$ has adhesion $2\sigma = 4k$, it suffices to show that the root bag $X$ of $(T,\beta)$ has an intersection of size $2\sigma$ with the root bag of each child-subtree $(T_{D},\beta_{D})$. Indeed, observe that the intersection is exactly $N_{H}(D)$, so it has size at most $2\sigma$ (since $X$ has adhesion $2\sigma$ in $H$ by \Cref{claim:DecompCase1} and \Cref{claim:DecompCase2}).
\end{itemize}

\medskip

\noindent\underline{The tree size and the total bag size.} Next, we prove that $(T,\beta)$ from $\Decomp(H,B)$ satisfies (1) $|V(T)|\leq |V(H)|$, (2) the total bag size is at most $O(|V(H)|\sigma) = O(|V(H)|k)$. Recall that the definition of tree decomposition guarantees the $\beta(t)\setminus \sigma(t)$ for all $t\in V(T)$ are disjoint. Hence (1) holds because each $t\in V(T)$ has non-empty $\beta(t)\setminus \sigma(t)$ by \Cref{ob:TreeSize}. Next, (2) is because $\sum_{t\in V(T)}|\beta(t)| = \sum_{t\in V(T)}|\beta(t)\setminus \sigma(t)| + \sum_{t\in V(T)}|\sigma(t)| \leq |V(H)| + |V(T)|\cdot (2\sigma) = O(|V(H)|\sigma)$.

\begin{observation}
For each tree node $t\in V(T)$, its bag $\beta(t)$ is a strict superset of its adhesion $\sigma(t)$, i.e. $\beta(t)\supset \sigma(t)$.
\label{ob:TreeSize}
\end{observation}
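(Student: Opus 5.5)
The plan is to prove the observation by induction along the recursion of $\Decomp$. The point is that every tree node $t$ is the root node of some (sub)call $\Decomp(H',B')$, and its adhesion $\sigma(t)$ is exactly that call's boundary $B'$. It then suffices to show that the root bag produced by every call is a strict superset of its boundary.

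First I identify $\sigma(t)$. For the global root, produced by $\Decomp(G,\emptyset)$, $\sigma(r)=\emptyset$ by definition. The root bag is nonempty because $G$ is nonempty, so the strict containment holds there. Now let $t$ be the root of a child-subtree $(T_D,\beta_D)$ appended under a node with bag $X$, where $D$ is a connected component of $H\setminus X$. The recursive call is on the graph $H_D:=H[D\cup N_H(D)]\setminus E(H[N_H(D)])$ with boundary $N_H(D)$, so $\beta(t)\subseteq D\cup N_H(D)$. Since $D\cap X=\emptyset$ and $N_H(D)\subseteq X$, we get $\beta(t)\cap X\subseteq N_H(D)$. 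Conversely, $N_H(D)\subseteq\beta(t)$ because the root bag of a call always contains its boundary, and $N_H(D)\subseteq X$. Hence $\sigma(t)=\beta(t)\cap X=N_H(D)$, which is the boundary of that call.

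Next I check that the input conditions of $\Decomp$ are preserved, since they are needed below. For the call on $H_D$ with boundary $N_H(D)$, the graph $H_D\setminus N_H(D)=H[D]$ is connected and nonempty because $D$ is a component. Also $N_{H_D}(D)=N_H(D)$, since only edges inside $N_H(D)$ were deleted. So conditions 2 and 3 hold for the recursive call. Condition 1 ($|N_H(D)|\le 2\sigma$) follows from the adhesion bounds in \Cref{claim:DecompCase1} and \Cref{claim:DecompCase2}. At the top level, these conditions hold because $G$ is connected.

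Finally I show the root bag of any call $\Decomp(H',B')$ strictly contains $B'$, treating the two exits separately. In the non-corner exit the algorithm explicitly has $X\supset B'$, and the root bag is $X$. In the corner exit ($X=B'$) the single bag is $V(H')$. By condition 2, $H'\setminus B'$ is connected, and it is nonempty: at the top level $V(G)\neq\emptyset=B'$, and otherwise $H'\setminus B'=H[D]$ with $D\neq\emptyset$. Hence $V(H')\supsetneq B'$. Combined with $\sigma(t)=B'$ this gives $\beta(t)\supsetneq\sigma(t)$ for every node.

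The only delicate step is the identification $\sigma(t)=N_H(D)$, namely that the child's root bag meets $X$ in no more than the boundary. This is handled by the containment $\beta(t)\subseteq D\cup N_H(D)$ together with $D\cap X=\emptyset$.
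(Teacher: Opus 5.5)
Your proof is correct and follows the paper's argument. You identify $\sigma(t)$ with the boundary $B$ of the call that creates $t$, and note that the root bag is either $X\supsetneq B$ or $V(H)$. You also supply the two justifications the paper leaves implicit: $\sigma(t)=N_H(D)$, via $\beta(t)\subseteq D\cup N_H(D)$, and $V(H)\neq B$ in the corner case, because $D\neq\emptyset$.
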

\begin{proof}
To see this, in the recursion step creating $t$, we have $\sigma(t)= B$. Furthermore, $\beta(t) = V(H)$ if $X = B$ (the corner case) and $\beta(t) = X$ if $X \supset B$, so always $\beta(t)\supset \sigma(t)$. 
\end{proof}

\medskip

\noindent{\underline{The tree depth.}} We are now ready to show the depth of $T$ is $O(k\log|V(H)|)$. Consider a path from the root (upside) to a leaf (downside) in the recursion tree. We divide this path into two parts, the upper part and the lower part, at the uppermost step $\Decomp(H',B')$ such that $|V(H')|\leq 10k$. 

For the lower part, i.e. from the step $\Decomp(H',B')$ to the leaf, trivially there are at most $|V(H')| = O(k)$ steps (note that the tree decomposition of $\Decomp(H',B')$ has at most $|V(H')|$ nodes).

Consider the upper part. Each step $\Decomp(H',B')$ on this part has $|V(H')|>10k$. The number of case-2 steps on the upper part is at most $O(\log n)$ by the following reasons. Let $\Decomp(H',B')$ be a case-2 step and $\Decomp(H'',B'')$ be its sub-step on this path. Because $X$ is $1/2$-balanced and has adhesion $2\sigma$ by \Cref{claim:DecompCase2}, we have $|V(H'')|\leq |V(H')|/2 + 2\sigma$. Moreover, $2\sigma = 4k < 2|V(H')|/5$, so $|V(H'')| < 9|V(H')|/10$. Next, the number of \emph{consecutive} case-1 recursive steps is at most $\sigma$, because the earliest case-1 step has boundary size at most $2\sigma$ and the subsequent case-1 steps will keep reducing the boundary size by \Cref{claim:DecompCase1}, until the boundary size drops below $\sigma + 1$. Therefore, the total length of the upper part is at most $O(\log n)\cdot \sigma = O(k\log n)$.

In summary the recursion depth is at most $O(k\log n) + O(k) = O(k\log n)$.

\medskip

\noindent{\underline{Compactness and subtree unbreakability}.} Finally, the compactness and subtree unbreakability of $(T,\beta)$ simply follow from the algorithm.

\paragraph{Running Time of $\Decomp(G,\emptyset)$.} 

Observe that, at each level $i$, the total graph size of level-$i$ steps is $O(m)$, because the recursive steps $\Decomp(H_{i}, B_{i})$ at a level $i$ are on edge-disjoint graphs $H_{i}$. Because there are $O(k\log n)$ levels, the total graph size across all levels is $O(km\log n)$. In each recursive step $\Decomp(H,B)$, invoking \Cref{lemma:CheckUnbreakable,lemma:SepBySampleSet,thm:reducing_adhesion} takes time $k^{O(k)} \Otil(m)$. In summary, the total running time of $\Decomp(G,\emptyset)$ is $k^{O(k)} \Otil(m)$.
\end{proof}

\subsection{Depth Reduction.}
\label{sect:DepthReduction}

Finally, we discuss how to reduce the depth of the unbreakable decomposition from $O(k\log n)$ to $O(\log n)$, at a cost of bringing constant factors to the unbreakability and adhesion. Precisely, we will show how to compute an $(O(k),k)$-unbreakable decomposition with adhesion $O(k)$ and depth $O(\log n)$.

We will change \Cref{alg:Decomp} slightly as follows.
\begin{enumerate}
\item We will set $\sigma = 5k$ (instead of $\sigma = 2k$).
\item Before Line 1, we add a termination condition: if $|V(H)|\leq 40k$, we return a trivial decomposition $(T,\beta)$ with a single bag $V(H)$. We point out that after adding this new termination condition, the algorithm will never go into \Cref{line:corner}, because by \Cref{claim:CornerCase}, there should be $|V(H)|\leq 2\sigma = 10k$ when we reach \Cref{line:corner}, which is impossible because of the new termination condition.
\item In \Cref{line:CheckUnbreakable}, we will check if $B$ is $(2k, k)$-unbreakable (instead of $(k,k)$-unbreakable). This will lead to a new \Cref{claim:DecompCase1}:
\begin{itemize}
\item In Case 1, the set $X$ is $(O(k),k)$-unbreakable and has adhesion at most $|B| - k$.
\end{itemize}
The new claim can be shown in the same way as \Cref{claim:DecompCase1}. Roughly speaking, in case (a), $X$ is $(k+2\sigma,k)$-unbreakable and has adhesion at most $|B| - k$. In case (b), $X$ is $(3k, k)$-unbreakable and has adhesion $3k < |B| - k$ (as $|B|\geq \sigma+1 = 5k+1$).
\end{enumerate}

Following the same argument in the proof of \Cref{thm:DetailedMain}, it is not hard to see that the resulting decomposition is $(O(k),k)$-unbreakable and has $O(k)$ adhesion. It also has all the other properties stated in \Cref{thm:DetailedMain}. In particular, the depth reduces to $O(\log n)$ by the following reasons. Consider a path from the root to an arbitrary leaf. The number of case-2 steps is at most $O(\log n)$, since each case-2 step gives $|V(H'')|\leq |V(H')|/2+2\sigma<3|V(H')|/4$ as $2\sigma=10k<|V(H')|/4$ (note that there is no lower part because of the new termination condition). The number of consecutive case-1 steps is at most $O(\sigma/k) = O(1)$, because every case-1 step will reduce the adhesion by $k$ (from modification 3).

\section{Single Sink Important Separators in Single Exponential Near-Linear Time.}\label{sec:singlesink}

In this section, we prove~\Cref{thm:singlesink}. We use the following algorithm.

\begin{algorithm}
\begin{algorithmic}
\caption{Single Sink Important Separators}\label{algo:singlesink}
\State Initialize $\mathcal{C} = \emptyset$
\For{$\nu = m, m/2 \ldots 1$}
\State Sample $R_{\nu} = O(\frac{mk \log n}{\nu})$ vertices at random, proportional to degree.
\For{each $v \in R_{\nu}$}
\State Call~\Cref{thm:breakable} with $G, s = v, t = t, \nu = 2\nu$ and $k = k$. Add the output separators to $\mathcal{C}$.

\EndFor
\EndFor

\State For each $(L,R)$ in $\mathcal{C}$ if there is some $(L',R')$ with $|L' \cap R'| \leq |L \cap R|$ such that $L\setminus R \subset (L' \setminus R')$, then remove $(L,R)$ from $\mathcal{C}$.\label{line:clean}
\Return $\mathcal{C}$.
\end{algorithmic}
\end{algorithm}

\begin{lemma}
The preceding algorithm correctly computes the set of all $(v,t)$ important separators of size at most $k$ across all $v \in V(G)$, $v \neq t$. 
\end{lemma}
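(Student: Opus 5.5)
We must show two things: \emph{completeness}, meaning every important $(v,t)$-separator of size at most $k$ (for every $v\neq t$) is in $\mathcal{C}$ at the end; and \emph{soundness}, meaning every separator that survives the cleaning step of Line~\ref{line:clean} is an important $(v,t)$-separator for some $v$. The plan is to argue completeness through the $(k,\nu)$-net property of the sampled sets $R_\nu$ (\Cref{lemma:sampleset}), and soundness through the pruning rule together with the fact that every separator returned by \Cref{thm:breakable} is a genuine $(s,t)$-separator of size at most $k$.

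For completeness, fix $v$ and an important $(v,t)$-separator $(L,R)$ with $|L\cap R|\le k$. Let $C=L\setminus R$; it is connected and contains $v$, and it is a connected component of $G\setminus(L\cap R)$. Let $\nu$ be the unique phase with $\nu\le \deg(C)<2\nu$; note $\deg(C)\ge 1$ since $G$ is connected. With high probability $R_\nu$ is a $(k,\nu)$-net by \Cref{lemma:sampleset}, so $R_\nu$ contains some $u\in C$. The key structural step is the following claim: $(L,R)$ is also an important $(u,t)$-separator. It separates $u$ from $t$, and $u$ lies in the connected component $C$. Suppose some $(u,t)$-separator $(L',R')$ had $L'\setminus R'\supsetneq C$ and $|L'\cap R'|\le|L\cap R|$. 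Taking the component of $G\setminus(L'\cap R')$ containing $u$, which contains $C$ and hence $v$, would give a $(v,t)$-separator dominating $(L,R)$, contradicting importance. The call made for this $u$ uses volume bound $2\nu>\deg(C)$, so \Cref{thm:breakable} includes $(L,R)$ in its output with high probability. A union bound over the $\poly(n)$ calls and $O(\log n)$ nets handles all failure probabilities.

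For soundness and the pruning step, we need two facts. First, an important separator in $\mathcal{C}$ is never removed. If $(L',R')\in\mathcal{C}$ has $L'\setminus R'\supsetneq L\setminus R$ and size at most $|L\cap R|$, then $(L',R')$ separates $v$ from $t$ and dominates $(L,R)$, contradicting importance. Second, every survivor is important. If a survivor $(L,R)$ is a $(v,t)$-separator but not important, there is a dominating important $(v,t)$-separator of size at most $k$, and by completeness it lies in $\mathcal{C}$, so $(L,R)$ would have been removed. Here we must be careful that the dominating separator is itself important; we take a maximal one with respect to $\succ$ among those of size at most $|L\cap R|$, which is important by definition. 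The intended interpretation is that the output is the family of connected separators $(L,R)$, each of which is important for every $v\in L\setminus R$, matching how the statement lists them across all $v$.

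For the running time, phase $\nu$ makes $O(mk\log n/\nu)$ calls, each costing $\Otil(\nu 4^k\poly(k))$. That is $\Otil(m4^k\poly(k))$ per phase and $\Otil(2^{O(k)}m)$ overall. The list $\mathcal{C}$ then has total description size $\Otil(2^{O(k)}m)$.

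The main obstacle is implementing the cleaning of Line~\ref{line:clean} in near-linear time rather than quadratically in $|\mathcal{C}|$. I would group separators by their separator set $L\cap R$ (hashed) and by component. For a candidate $(L,R)$ and a representative vertex $u\in L\setminus R$, it suffices to check whether some separator in $\mathcal{C}$ returned for source $u$ strictly extends it with no larger size. By the completeness claim, a dominating important separator containing $L\setminus R$ is important for $u$ too, so it appears in the $\le 4^k$-size list computed for the sample vertex hitting $L\setminus R$. The comparison can therefore be restricted to lists of size $4^k$ attached to each sampled source, which keeps the total cost at $\Otil(2^{O(k)}m)$.
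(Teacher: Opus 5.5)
Your correctness argument is right and is essentially the paper's. The net $R_\nu$ hits $L\setminus R$, importance transfers to the sampled vertex $u$, and the call with volume cap $2\nu$ lists $(L,R)$. The cleaning step then removes exactly the non-important separators, because a $\succ$-maximal dominating separator is important and hence in $\mathcal{C}$. You also supply the check, left implicit in the paper, that important separators are never pruned.

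Your final paragraph is not needed for this lemma, and as written it does not work. A dominating important separator need not appear in the output of the call for the vertex sampled in $L\setminus R$, because that call's cap $2\nu$ can be smaller than the dominating separator's volume. This is why the paper instead processes $\mathcal{C}$ in decreasing order of $|L\setminus R|$ and keeps, for each vertex $v$, the list $I_v$ of surviving separators containing $v$. All of these are important for $v$, so there are at most $4^k$ of them.
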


\begin{proof}
Let us first show that every important separator of size at most $k$ is listed by the algorithm at some point.
 Fix a $(v,t)$ important separator $(L,R)$ with volume $\nu'$, such that $\nu \leq \nu' \leq 2\nu$, for some value $\nu$ considered by the algorithm. First, note that by~\Cref{lemma:sampleset}, $R_{\nu}$ hits $L \setminus R$ for every such $(v,t)$ important separator of size at most $k$ and volume at least $\nu$. Let $u \in R_{\nu} \cap (L \setminus R)$. Then, by a standard argument $(L,R)$ is also a $(u,t)$ important separator.

It follows that $(L,R)$ must be listed by~\Cref{thm:breakable}.

Finally, if $(L,R)$ is not important, then there exists another important separator $(L',R')$ such that $L \setminus R \subset L' \setminus R'$. It follows that $(L,R)$ would be removed by the condition of~\Cref{line:clean}. 

\end{proof}

\begin{lemma}
\Cref{algo:singlesink} runs in time $2^{O(k)}m\,\polylog(n)$.
\end{lemma}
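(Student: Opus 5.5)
The plan is to bound the two parts of \Cref{algo:singlesink} separately: the enumeration phase (the nested loops over $\nu$ and $v\in R_\nu$) and the cleaning step in \Cref{line:clean}.

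\textbf{Enumeration phase.} For a fixed scale $\nu$, the algorithm samples $|R_\nu| = O(mk\log n/\nu)$ vertices. For each sampled vertex it calls \Cref{thm:breakable} with volume parameter $2\nu$, which costs $\Otil(2\nu\cdot 4^k\poly(k))$. Multiplying, the $\nu$ factors cancel, so each scale costs $\Otil(m\cdot 4^k\poly(k)) = 2^{O(k)}m\,\polylog(n)$. There are $O(\log m) = O(\log n)$ scales, so the whole phase costs $2^{O(k)}m\,\polylog(n)$.

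The same calculation bounds the output. Each call returns at most $4^k$ separators. Each separator is stored as $(L\setminus R, L\cap R)$, and its size is bounded by its $A$-volume ($\le k\cdot 2\nu$) plus $k$. Writing the output therefore fits within the same per-call budget. It follows that $|\mathcal{C}| = O(\sum_\nu (mk\log n/\nu)\,4^k) = 2^{O(k)}m\log n$. The total description length is $\sum_{(L,R)\in\mathcal{C}}(|L\setminus R|+k) = 2^{O(k)}m\,\polylog(n)$.

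\textbf{Cleaning step (main obstacle).} Comparing every pair of separators in $\mathcal{C}$ would take quadratic time, so I need a smarter way to remove non-important separators. The plan is to group separators by a canonical key: for each $(L,R)$, take the minimum-id vertex $u$ of $L\setminus R$.

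The key observation is this. If $L\setminus R\subsetneq L'\setminus R'$ with $|L'\cap R'|\le|L\cap R|$, then the witness $(L',R')$ is a $(u,t)$-separator dominating $(L,R)$. So $(L,R)$ is non-important exactly when it is not an important $(u,t)$-separator.

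By the standard fact that an important $(v,t)$-separator is also an important $(u,t)$-separator for every $u\in L\setminus R$, it suffices to decide importance with respect to $u$. I would check this locally, without comparing against other members of $\mathcal{C}$. Contract $L\setminus R$ into a single source $s'$. Then $(L,R)$ is important iff, for every $x\in L\cap R$, the graph obtained by also contracting $x$ into $s'$ has no $(s',t)$-separator of size $\le|L\cap R|$. (This follows because any strictly larger connected separator must absorb some vertex of the current boundary.)

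Each such test is a local cut query of size $\le k$. I would answer it with \Cref{lemma:LocalCuts} using volume parameter $2\nu$ for the scale at which $(L,R)$ was produced. This only certifies non-existence of dominating separators of bounded volume. That restriction is harmless, because any dominating separator of volume $>2\nu$ and size $\le k$ can be replaced by an important one, which is itself in $\mathcal{C}$ at a larger scale. To handle this, I would process separators scale by scale from the largest $\nu$ downward, and also run one query per larger scale. That adds only an $O(\log n)$ factor.

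Each test therefore costs $\Otil(k\cdot\nu k^2)$. Summing over the $2^{O(k)}m\log n/\nu$ separators produced at scale $\nu$ gives $2^{O(k)}m\,\polylog(n)$ per scale, and hence the claimed total.

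Making this cleaning step rigorous is where I expect the difficulty, particularly correctly handling dominating separators of larger volume. If the local-cut test proves awkward, my fallback is a simpler rule: for each key vertex $u$, keep only the separators from the call whose source $v$ equals $u$ and whose output was already filtered as important. This relies on \Cref{thm:breakable} essentially returning the important separators.
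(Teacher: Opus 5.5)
\textbf{The gap is in the cleaning step.} Your enumeration bound, and your bound on the total description length of $\mathcal{C}$, are correct and match the paper's argument.

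Your local test is a correct characterization of importance: contract $L\setminus R\cup\{x\}$ for each $x\in L\cap R$ and ask for a separator of size at most $|L\cap R|$. But with \LocalCut{} at volume parameter $2\nu$, it only certifies that no dominating separator of volume about $2\nu$ exists. \Cref{thm:breakable} only guarantees that its list contains the important separators of volume at most $2\nu$. An entry of that list may be dominated solely by separators of volume far above $2\nu$, for instance when the recursion cannot afford to push past a cut vertex within the budget.

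Your remedy, ``one query per larger scale'', does not cost only an $O(\log n)$ factor. A \LocalCut{} call with volume parameter $2\nu''$ takes $\Theta(\nu''\poly(k)\log n)$ time. Testing a single separator against all larger scales therefore costs $\Omega(m)$. At scale $\nu$ there can be $4^k\cdot O(mk\log n/\nu)$ separators, so the total is $\Omega(m^2/\nu)$, which is quadratic at small scales.

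The fallback also fails, for two reasons. The outputs of \Cref{thm:breakable} are not guaranteed to be important; that is exactly why \Cref{line:clean} exists. And the minimum-id vertex of $L\setminus R$ need not be a sampled source at all.

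\textbf{The missing idea.} You already observed that the large-volume dominator is itself in $\mathcal{C}$. So compare against members of $\mathcal{C}$, indexed by vertex, rather than issuing new cut queries.
\begin{enumerate}
\item Sort $\mathcal{C}$ by decreasing $|L\setminus R|$.
\item For every vertex $v$, maintain a list $I_v$ of the already-kept separators whose source side contains $v$.
\item A strict dominator of $(L_i,R_i)$ has a strictly larger source side, so it was processed earlier. Its source side also contains every vertex of $L_i\setminus R_i$. Hence it suffices to pick any single $v\in L_i\setminus R_i$ and scan $I_v$, with each containment check taking $O(|L_i\setminus R_i|)$ time.
\item Inductively, every kept separator in $I_v$ is an important $(v,t)$-separator of size at most $k$, so $|I_v|\le 4^k$.
\end{enumerate}
The cleaning step then costs $4^k\sum_{(L,R)\in\mathcal{C}}|L\setminus R|$. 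By your own output-size bound this is $\Otil(16^k m)$, and no further local-cut queries are needed.
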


\begin{proof}
Clearly, there are at most $O(mk\log n/\nu)$ invocations of \Cref{thm:breakable} in iteration $\nu$, each of which takes time $\Otil(\nu 2^{O(k)})$ and there are at most $O(\log n)$ iterations. Thus the total time across iterations for invocations of \Cref{thm:breakable} is bounded by $\Otil(m2^{O(k)})$.

It remains only to implement and bound the time of the step that checks if there is some $(L',R')$ with $|L' \cap R'| \leq |L \cap R|$ such that $L \setminus R \subseteq L' \setminus R'$.

This can be implemented easily. We sort the $(L,R)$ in decreasing order of size of $L \setminus R$. Let $(L_i, R_i)$ denote the cut at index $i$ after sorting. Then for each $v$, we keep track of the index set $I_v$, such that $v \in L_i \setminus R_i$ for each $i \in I_v$. At all times, we will ensure the invariant that every member of $I_v$ is a $(v,t)$ important separator and hence $|I_v| \leq 4^k$.

We process $(L_i,R_i)$ in increasing order of $i$. When processing $(L_i, R_i)$, we pick an arbitrary $v \in L_i \setminus R_i$, and check if there is some index $j$ in $I_v$ such that $L_i \setminus R_i \subset L_j \setminus R_j$, and $|L_j \cap R_j| \leq |L_i \cap R_i|$. For a given $j$ this can be accomplished in $|L_i \setminus R_i|$ time, and there are at most $4^k$ indices in $I_v$. If there is such a $j$, we do nothing. If there is no such $j$, we add $i$ to $I_v$ for every $v \in L_i \setminus R_i$.

It follows that the whole process can be accomplished with the running time of $$\sum_{X: X = L \setminus R, (L,R) \in \mathcal{C}} |X|4^k \leq \Otil(m \cdot 16^k)$$, since the total size of enumerated $(L,R)$ is bounded by $\Otil(2^{O(k)}m)$, which in turn follows from the fact that the running time of all the invocations of~\Cref{thm:breakable} is bounded by $\Otil(m \cdot 4^k)$.

Finally, we need to show the invariant that $I_v$ consists of exactly those indices $i$ such that each $(L_i, R_i)$ is an important $v$-$t$ separator indeed holds at all times. For each $v$, at the beginning, the invariant is clearly true since $I_v$ is empty. Inductively, assume that the invariant holds for all $v \in V(G)$ after $(L_{i-1}, R_{i-1})$ is processed.

When $(L_i, R_i)$ is processed, suppose the algorithm adds $i$ to $I_v$. We claim that $(L_i, R_i)$ must be a $v$-$t$ important separator. Otherwise, there exists some $j \in I_v$ such that $(L_j, R_j)$ is an important $v$-$t$ separator and both $L_i \setminus R_i \subset L_j \setminus R_j$ and $|L_j \cap R_j| \leq |L_i \cap R_i|$ . Notice that in this case, $(L_j,R_j)$ is also an $(x,t)$ important separator for every $x \in L_j \setminus R_j$, which in particular means that $(L_j,R_j)$ is also an $(x,t)$ important separator for every $x \in L_i \setminus R_i$. But in this case, the algorithm picked some $u \in L_i \setminus R_i$, and we must have $j \in I_u$, and further $L_i \setminus R_i \subset L_j \setminus R_j$. But this means the algorithm cannot have added $i$ to $I_v$, a contradiction.

 Finally, since $I_v$ only consists of important $(v,t)$ separators of size at most $k$, we must have $|I_v| \leq 4^k$ at all times.

\end{proof}

\section{Almost-linear time FPT algorithm for Edge Multicut.}\label{sec:multicut}

On a high level, our algorithm follows the approach of~\cite{chu2026faster}. For the ease of readability, we provide an overview of our algorithm. The key innovation of our algorithm is faster shadow removal, which we can now accomplish in near-linear FPT time thanks to our result on single-sink important separators. Another key difference is that we avoid iterative compression and give a new approximation algorithm instead. 

\paragraph{Overview of the algorithm.}
In the \textsc{Edge Multicut} problem, the input is a graph
\(G=(V,E)\), a set of terminal pairs
\[
    \mathcal T=\{(s_i,t_i): i\in[q]\},
\]
and an integer \(k\).  The goal is to decide whether there is an edge set
\(X\subseteq E(G)\), \(|X|\le k\), such that, in \(G-X\), no terminal pair \(s_i,t_i\)
remains connected.  Let \(N=|V|+|E|+|\mathcal T|\).  We describe a reduction chain in which
each step creates only \(f(k)\) many instances and preserves near-linear size, so that the total
running time is \(f(k)N^{1+o(1)}\).

\smallskip
\noindent\textbf{Step 1: approximate solution and compression.}
We start by obtaining an approximate solution. Consider the standard LP relaxation for multicut. 
\[
\begin{array}{ll}
\min & \displaystyle \sum_{e\in E} x_e \\[0.5em]
\text{s.t.} & \operatorname{dist}_x(s_i,t_i) \ge 1
\qquad \forall i\in[q], \\[0.3em]
& x_e \ge 0
\qquad \forall e\in E .
\end{array}
\]

Here $\mathrm{dist}_x$ means the distance according to the lengths $x_e$ on the edges. We first solve this LP approximately using the result of~\cite{haeupler2024low}. We include a proof of the following \Cref{thm:FastMultiCutLPSolver} in \Cref{sect:FastMulticut}.

\begin{theorem}[\cite{haeupler2024low}]%
\label{thm:FastMultiCutLPSolver}%
For any $\epsilon > 0$, assuming $\OPT^{\mathrm{MC}}_{\mathrm{LP}}\le k$, one can obtain a  $2^{O(1/\epsilon)}$ approximation to the standard LP in time $k\cdot N^{1 + \mathrm{poly}(\epsilon)}$.
\end{theorem}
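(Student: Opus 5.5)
Since \Cref{thm:FastMultiCutLPSolver} is attributed to \cite{haeupler2024low}, I will derive it from their low-step multicommodity flow / length-constrained expander machinery by LP duality. The dual of the multicut LP is a maximum multicommodity flow with one commodity per pair $(s_i,t_i)$ and unit edge capacities; \cite{haeupler2024low} give, for any $\epsilon>0$, an algorithm running in $N^{1+\poly(\epsilon)}$ time that returns a feasible multicommodity flow $f$ together with a feasible length function $x$ (a moving cut) whose values agree up to a $2^{O(1/\epsilon)}$ factor. The length bound of the paths used in this primal--dual pair is bicriteria in their statement, so I will first set the step/length parameter large enough that the demand pairs are at infinite length, i.e.\ ordinary multicut. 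If their guarantee only ensures $\dist_x(s_i,t_i)\ge 1/h$ for some $h=2^{O(1/\epsilon)}$, I will rescale $x\gets h x$. This preserves feasibility and costs only another $2^{O(1/\epsilon)}$ factor in the objective.

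The key steps, in order, are as follows. (1) Form the multicommodity demand instance: each pair $(s_i,t_i)$ becomes a commodity, and I reduce the possibly large terminal set to an instance of size $O(N)$ by attaching each pair through auxiliary vertices. (2) Invoke the approximate max-multicommodity-flow / min-moving-cut routine of \cite{haeupler2024low}, obtaining $x$ with $\sum_e x_e\le 2^{O(1/\epsilon)}\cdot \val(f)\le 2^{O(1/\epsilon)}\OPT^{\mathrm{MC}}_{\mathrm{LP}}$, using weak duality $\val(f)\le \OPT_{\mathrm{LP}}$. (3) Check the constraint $\dist_x(s_i,t_i)\ge 1$ after scaling, as described above.

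The factor $k$ in the running time comes from the assumption $\OPT^{\mathrm{MC}}_{\mathrm{LP}}\le k$. Their solver runs in time proportional to (flow value)$\,\times N^{1+\poly(\epsilon)}$ in the unit-capacity regime, or equivalently uses a multiplicative-weights scheme with $O(k)$ phases each costing $N^{1+\poly(\epsilon)}$. Since the flow value is at most $k$, this yields the claimed $k\cdot N^{1+\poly(\epsilon)}$ bound.

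The main obstacle is step (2)--(3): converting the length-constrained (``$h$-hop'') guarantee of \cite{haeupler2024low} into a genuine, unconstrained multicut LP solution. I will first try to choose the length bound $h=N$, so that the length constraint is vacuous. If the running time degrades with $h$, I will instead use their duality with slack $h\to h\cdot N^{\poly(\epsilon)}$ and argue that paths longer than this carry negligible flow in an optimal solution, since total capacity is bounded.
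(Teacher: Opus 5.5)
There is a genuine gap in steps (2)--(3). You assume the cited algorithm returns a primal--dual pair, i.e.\ a length function $x$ with $\sum_e x_e\le 2^{O(1/\epsilon)}\val(f)$ and $\dist_x(s_i,t_i)$ bounded below. You never state such a result. The form in which the paper uses \cite{haeupler2024low} (\Cref{thm:LowStepFlowEmulator}) is only a flow oracle: given capacities, edge costs $b_e\ge 0$ and a budget $B$, it returns the aggregate loads $H(e)$ of a capacity-respecting non-concurrent flow with $b$-cost at most $B$ and value at least $\OPT_{\mathrm{flow}}(B)/A_\eps$. That output contains no fractional multicut, so step (3) has nothing to rescale. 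Weak duality only bounds values; it does not produce $x$.

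For the same reason your account of the factor $k$ does not hold up: one oracle call costs $\widetilde O(N^{1+\poly(\eps)})$ regardless of the flow value. Your handling of lengths is also misdirected. The LP lengths should enter as the edge costs $b$ under a budget, not as a hop bound. Moreover, the fallback claim that paths with more than $hN^{\poly(\epsilon)}$ edges carry negligible flow is false in general: if $G$ is a path with $s_1,t_1$ at its ends, every unit of flow uses all $|V|-1$ edges.

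What is missing is a primal--dual wrapper that extracts $x$ from a flow-only oracle. The paper builds it in two stages.

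\textbf{Approximate shortest-path oracle.} For a length vector $\ell$, let $\alpha(\ell)=\min_i\dist_\ell(s_i,t_i)$. Call the oracle with costs $b=\ell$ and geometrically increasing budgets until the returned value is at least $1/A_\eps$. Because a shortest terminal path supports one unit of flow, this happens at some $B\le 2\alpha(\ell)$. Rescaling the returned flow yields a value-one flow $F$ with $\ell\cdot F\le\rho\,\alpha(\ell)$, where $\rho=2A_\eps$, and $F(e)\le 1\le c_e$.

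\textbf{Garg--K\"onemann wrapper.} Start from $\ell_e=N^{-4}/c_e$ and update $\ell_e\gets\ell_e\bigl(1+F(e)/(2c_e)\bigr)$ until $\sum_e c_e\ell_e\ge 1$. Set $C_r=\ell^{(r-1)}\cdot F^{(r)}$ and $D_{r-1}=\sum_e c_e\ell^{(r-1)}_e$, let $r^*$ minimize $D_{r-1}/C_r$, and output $x=\rho\,\ell^{(r^*-1)}/C_{r^*}$.
\begin{itemize}
\item Feasibility of $x$ is immediate from $C_{r^*}\le\rho\,\alpha(\ell^{(r^*-1)})$.
\item Summing the multiplicative updates shows that the averaged flows, scaled by $1/O(\log N)$, are capacity-respecting. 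Hence there are only $O(\OPT^{\mathrm{MC}}_{\mathrm{LP}}\log N)$ iterations.
\item The identity $D_r=D_{r-1}+C_r/2$ gives $\sum_e c_ex_e=\rho\,D_{r^*-1}/C_{r^*}=O(\rho)\,\OPT^{\mathrm{MC}}_{\mathrm{LP}}$.
\end{itemize}
That iteration count, together with $\OPT^{\mathrm{MC}}_{\mathrm{LP}}\le k$, is where the $k\cdot N^{1+\poly(\epsilon)}$ running time comes from.
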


 Next, we show how to round this LP by incurring only an $O(\log k)$ approximation factor. To the best of our knowledge such a rounding algorithm is not previously known. Our framework essentially follows directly the region growing method of Lee and Wahlström~\cite{lee2016lp}. This gives us a rounding algorithm with $O(\log k)$ loss, which returns an edge multicut $F$. Since the rest of the problem will be solved using the vertex version, we employ a simple reduction: we make $k + 1$ copies for every vertex, and subdivide edges and add a new vertex. $F$ then becomes a vertex multicut for this instance of the same size. For simplicity, let us abuse the notation and still denote the resulting graph by $G$, and let $W_0$ be the vertex multicut approximate solution corresponding to $F$ in $G$.

 Suppose $X$ is the hidden optimal vertex multicut of size at most $k$ in $G$. We guess the vertices in $X \cap W_0$, and guess the partition of $W_0$ in $G \setminus X$, and contract each partition into a single vertex. This gives a set of contracted vertices $W$. Henceforth, our goal is to find a  vertex multicut solution $X$, with $|X| \leq k$ disjoint from $W$ which is also a multiway cut for $W$.

 Thus the problem reduces to \textsc{Multicut Compression}. An instance is of the form $(G, \mathcal{T},W,k)$, the goal is to find a set of vertices $X$ of size at most $k$ that is a multicut for pairs in $\mathcal{T}$ and a multiway cut for $W$. Note that the size of $W$ is at most the size of our approximate solution, which is of size at most $O(k\log k)2^{O(1/\epsilon)}$ (we will set $\epsilon$ later in the formal analysis. Essentially, $\epsilon$ is chosen so that $\frac{1}{\epsilon}$ is a very small growing function of $n$, this ensures that the running time is $m^{1 + \mathrm{poly}(\epsilon)} = m^{1 + o(1)}$).

\smallskip
\paragraph{Step 2: reduction to a shadowless instance.}For a solution $X\subseteq V(G)\setminus W$, its shadow is
\[
\operatorname{sh}_G(X)
=
\left\{
v\in V(G)\setminus X :
v \text{ is not connected to any vertex of } W \text{ in } G-X
\right\}.
\]
Thus, $X$ is shadowless if every vertex surviving after deleting $X$
belongs to a component containing a terminal of $W$.

Following the shadow-removal framework of~\cite{marx2011fixed,chu2026faster},
we may choose a solution $X$ of size at most $k$ that is
$k$-shadow-removable. In particular, for every
$v\in\operatorname{sh}_G(X)$, there is an important $(v,W)$-separator
of size at most $k$ that is contained in $X$. Moreover, for every
$x\in X$, every important $(x,W)$-separator of size at most $k$
contains a vertex outside $X$.

We use color coding to construct a family of red-blue colorings.
For the correct coloring, every vertex of $X$ is red, while a set of
blue vertices intersects every important $(x,W)$-separator of size at
most $k$, for every $x\in X$. For such a coloring, let
\[
Z
=
\left\{
v\in V(G)\setminus W :
\text{some important $(v,W)$-separator of size at most $k$ is entirely red}
\right\}.
\]
The first property of $X$ implies that
\[
\operatorname{sh}_G(X)\subseteq Z,
\]
whereas the blue hitting sets imply that
\[
Z\cap X=\emptyset.
\]
We then take the torso over $Z$. Since $Z$ contains the shadow of $X$
and is disjoint from $X$, the same solution $X$ becomes shadowless in
the resulting instance, while every solution of the torso instance
lifts to a solution of the original instance.

Computing $Z$ for a fixed coloring amounts to enumerating all
$(v,W)$-important separators of size at most $k$. After contracting
$W$ to a single sink, this is exactly the single-sink important
separator problem, and hence can be done in
$\widetilde O(2^{O(k)}m)$ time using our algorithm.

\noindent\textbf{Step 3: reduction to a bipedal instance.}
We next reduce the shadowless compression instance to a \emph{bipedal} one.  A compression
instance is called bipedal if every connected component \(C\) of \(G-W\) has at most two neighbors
in \(W\), that is,
\[
    |N_G(C)\cap W|\le 2.
\]
Intuitively, once the instance is bipedal, every surviving nonterminal component can attach to
only one or two terminals, which makes the remaining structure essentially Boolean.  To obtain
bipedality, we follow exactly the branching framework of~\cite{chu2026faster}. The only technical change is that we cannot afford to make cliques when we torso a set of vertices. Hence we use a sparse torso instead, that replaces cliques by $k$-connected subgraphs, which are equivalent for any cuts of size $k$.

\smallskip
\noindent\textbf{Step 4: solving bipedal instances by \textsc{Almost 2-SAT}.}
It remains to solve a bipedal compression instance. For this, we follow the algorithm of~\cite{marx2011fixed,chu2026faster} that reduces to \textsc{Almost 2-SAT}. It is easy to show that this formula is of near-linear size, and we can use existing fast algorithms for \textsc{Almost 2-SAT}~\cite{iwata2014linear} to solve the resulting instance.

Since the Steps $1$ and $2$ are the only new steps, whereas $3$ and $4$ mostly follow~\cite{chu2026faster} directly with the only difference being that we replace torsos by a sparse version of torsos, we defer them to the appendix. 

\subsection{Almost-linear time approximate solution.}

We start with the LP for multicut.

\[
\begin{array}{rll}
\min & \displaystyle \sum_{e\in E} x_e \\[1ex]
\mathrm{s.t.}
    & \operatorname{dist}_x(s_i,t_i) \ge 1
    & \forall i\in[q], \\[0.5ex]
    & x_e \ge 0
    & \forall e\in E .
\end{array}
\]
Denote the optimum value of this LP by $\OPT^{\mathrm{MC}}_{\mathrm{LP}}(G,\mathcal T)$. Using \Cref{thm:FastMultiCutLPSolver} we can obtain a $2^{O(1/\epsilon)}$ approximate solution to this LP in time $m^{1 + \mathrm{poly}(\epsilon)}$. Next, we describe how to round this LP.

Let \(x:E\to \mathbb R_{\ge 0}\) be any feasible solution
of the LP, and let
\[
    L=\sum_{e\in E}x_e,
    \qquad
    \Lambda=\max\{2,\lceil L\rceil\}.
\]
For a graph \(H\), a center \(v\), and a radius \(r\), let
\[
    S_H(v,r)=\{u\in V(H): \operatorname{dist}^H_x(v,u)<r\}.
\]
For an edge \(e=ab\in E(H)\), write \(d(a)=\operatorname{dist}^H_x(v,a)\), and define
\[
\theta_e(r)=
\begin{cases}
x_e, & a,b\in S_H(v,r),\\
r-d(a), & a\in S_H(v,r),\ b\notin S_H(v,r),\\
r-d(b), & b\in S_H(v,r),\ a\notin S_H(v,r),\\
0, & a,b\notin S_H(v,r).
\end{cases}
\]
The fractional volume is
\[
    \operatorname{vol}_H(v,r)=\sum_{e\in E(H)}\theta_e(r).
\]
At every regular radius,
\[
    \frac{d}{dr}\operatorname{vol}_H(v,r)
    =
    |\delta_H(S_H(v,r))|.
\]

\paragraph{Rounding algorithm.}
Initialize \(F=\emptyset\) and \(H=G\). While some terminal pair \((s_i,t_i)\) is connected in
\(H\), choose such a pair, set \(v=s_i\), choose a radius \(r\in[1/8,1/4]\) satisfying
\[
    |\delta_H(S_H(v,r))|
    \le
    16\ln\Lambda\left(\operatorname{vol}_H(v,r)+\frac{L}{\Lambda}\right),
\]
add \(\delta_H(S_H(v,r))\) to \(F\), and delete \(S_H(v,r)\) from \(H\). Return \(F\).

\begin{lemma}[Existence of a good radius]\label{lemma:regiongrowing}%
In every iteration of the rounding algorithm, such a radius \(r\in[1/8,1/4]\) exists.
\end{lemma}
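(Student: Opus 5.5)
This is the standard region-growing argument of Garg--Vazirani--Yannakakis, adapted to the additive seed $L/\Lambda$. I will argue by contradiction and integrate the derivative identity $\frac{d}{dr}\operatorname{vol}_H(v,r)=|\delta_H(S_H(v,r))|$ over $r\in[1/8,1/4]$.

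Let $f(r)=\operatorname{vol}_H(v,r)+L/\Lambda$ and $c=16\ln\Lambda$. Suppose every $r\in[1/8,1/4]$ satisfies $|\delta_H(S_H(v,r))|>c\,f(r)$. The function $f$ is continuous and piecewise linear in $r$; its breakpoints are the finitely many distances $\operatorname{dist}^H_x(v,u)$. At every regular radius we then have $f'(r)>c f(r)$. Since $f>0$ (because $L/\Lambda>0$ when $L>0$), $\ln f$ is absolutely continuous, and integrating gives
\[
\ln\frac{f(1/4)}{f(1/8)}>\frac{c}{8}=2\ln\Lambda, \qquad\text{so}\qquad f(1/4)>\Lambda^2 f(1/8)\ge \Lambda^2\cdot\frac{L}{\Lambda}=\Lambda L.
\]

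For the upper bound, $\theta_e(r)\le x_e$ for every edge, so $\operatorname{vol}_H(v,r)\le\sum_{e\in E(H)}x_e\le L$. Hence $f(1/4)\le L+L/\Lambda$. Combining the two bounds gives $\Lambda L<L(1+1/\Lambda)$, which is impossible because $\Lambda\ge 2$.

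It remains to handle the corner case $L=0$, where the seed $L/\Lambda$ vanishes. Feasibility forces $\operatorname{dist}_x(s_i,t_i)\ge 1$, so $L\ge 1$ whenever some pair is connected in $G$. If no pair is connected, the loop never runs. So $L>0$ in every iteration, and $f\ge L/\Lambda>0$ as needed.

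One subtle point is that $x$ restricted to $H$ need not remain feasible for the remaining pairs, since paths through deleted vertices are gone. This does not matter: distances in the subgraph $H$ only increase, and the argument uses only $\sum_{e\in E(H)}x_e\le L$. The main technical care is at the breakpoints of $f$, where the derivative identity fails. I would handle this by noting that $f$ is continuous and piecewise linear, so integrating $(\ln f)'$ over the regular intervals is valid. Moreover, if a strict inequality held at all regular points but failed at a breakpoint, I would choose a regular radius instead, since regular radii form a set of full measure.
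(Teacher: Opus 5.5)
Your argument is the same as the paper's: assume no good radius exists, integrate $y'(r)>16\ln\Lambda\cdot y(r)$ over $[1/8,1/4]$ to get $y(1/4)>\Lambda L$, and contradict $y(1/4)\le L+L/\Lambda$ using $\Lambda\ge 2$. Your explicit check that $L\ge 1>0$ in every iteration is a welcome addition that the paper leaves implicit.

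One small correction: $\operatorname{vol}_H(v,r)$ is not continuous in $r$. When the far endpoint $b$ of an edge $ab$ enters the ball, $\theta_e$ jumps from $d(b)-d(a)$ to $x_e$. Since $d(b)-d(a)\le x_e$, these jumps are nonnegative. So the increase of $\ln f$ over $[1/8,1/4]$ is still at least the integral of $f'/f$ over regular radii, and your conclusion stands.
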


\begin{proof}
Let
\[
    y(r)=\operatorname{vol}_H(v,r)+\frac{L}{\Lambda}.
\]
If no radius satisfies the displayed inequality, then at all regular radii in \([1/8,1/4]\),
\[
    y'(r)>16\ln\Lambda\cdot y(r).
\]
Integrating gives
\[
    \ln y(1/4)-\ln y(1/8)
    >
    16\ln\Lambda\cdot (1/8)
    =
    2\ln\Lambda.
\]
Thus
\[
    y(1/4)>\Lambda^2 y(1/8).
\]
Since \(y(1/8)\ge L/\Lambda\), this implies \(y(1/4)>\Lambda L\). On the other hand,
\[
    \operatorname{vol}_H(v,1/4)\le L,
\]
and hence
\[
    y(1/4)\le L+\frac{L}{\Lambda}\le \frac{3L}{2},
\]
contradicting \(\Lambda\ge 2\). Therefore a good radius exists.
\end{proof}

\begin{lemma}[Correctness]%
The rounding algorithm returns an edge multicut.
\end{lemma}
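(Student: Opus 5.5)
We need to show the rounding algorithm terminates and the returned $F$ is an edge multicut. The plan is to track the invariant ``$H$ is exactly the graph $G - F$ restricted to the vertices not yet deleted,'' and to argue that every deleted ball $S_H(v,r)$ is a union of connected components of $G-F$ that contains no complete terminal pair. Termination is immediate, since each iteration deletes $S_H(v,r)\ni v$, which is nonempty because $\operatorname{dist}^H_x(v,v)=0<r$. By \Cref{lemma:regiongrowing}, each iteration is well defined. The loop stops only when no pair $(s_i,t_i)$ is connected in the final $H$.

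The key step is to show that no terminal pair is separated only ``inside'' a deleted ball, i.e.\ that no ball contains both $s_j$ and $t_j$ for any $j$. Fix an iteration with current graph $H\subseteq G$, center $v$ and radius $r\le 1/4$. Suppose $s_j,t_j\in S_H(v,r)$. Then $\operatorname{dist}^H_x(v,s_j)<1/4$ and $\operatorname{dist}^H_x(v,t_j)<1/4$. The triangle inequality gives $\operatorname{dist}^H_x(s_j,t_j)<1/2$. Since $H$ is a subgraph of $G$ with the same edge lengths, we get $\operatorname{dist}^G_x(s_j,t_j)\le \operatorname{dist}^H_x(s_j,t_j)<1/2<1$. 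This contradicts LP feasibility. The same computation with $s_j=v$ shows that the chosen $t_i$ is not in the ball, so each iteration genuinely separates the chosen pair.

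Now I would conclude as follows. Removing $\delta_H(S_H(v,r))$ disconnects $S_H(v,r)$ from the rest of $H$. Hence in $G-F$, every vertex of a deleted ball can reach only vertices of that same ball: its edges to vertices deleted earlier were cut when those earlier balls were removed, and its edges to later vertices are in the current cut. Therefore every connected component of $G-F$ lies either inside a single deleted ball or inside the final $H$. A connected pair $(s_j,t_j)$ in $G-F$ would then lie in a common ball, which the previous paragraph rules out, or in a common component of the final $H$, which the stopping condition rules out. Hence $F$ is an edge multicut.

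The only mildly delicate point is this bookkeeping: one must justify that the components of $G-F$ are exactly confined to the balls and the final $H$. Edges between a ball and vertices deleted earlier were already placed in $F$ at that earlier step, since those edges lay in $\delta$ of the earlier ball in the then-current graph. I expect no real obstacle beyond stating this invariant carefully.
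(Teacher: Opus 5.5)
Your proof is correct and takes the paper's approach: the key step is the same triangle-inequality argument, which would give $\operatorname{dist}^G_x(s_j,t_j)\le\operatorname{dist}^H_x(s_j,t_j)<1$ if a ball contained both endpoints of a pair, contradicting LP feasibility. Beyond that, you make explicit two things the paper leaves implicit: termination, and the fact that each deleted ball and the final $H$ are unions of components of $G-F$, so that ``all remaining terminal pairs are handled in later iterations'' is justified.
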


\begin{proof}
For every \(r<1/2\), the ball \(S_H(v,r)\) contains at most one endpoint of every terminal pair.
Indeed, otherwise some terminal pair \((s_j,t_j)\) would have both endpoints in \(S_H(v,r)\), and
therefore
\[
    \operatorname{dist}^G_x(s_j,t_j)
    \le
    \operatorname{dist}^H_x(s_j,t_j)
    <
    2r
    <
    1,
\]
contradicting feasibility of the distance LP.

Hence, when a region \(S_H(v,r)\) is deleted, every terminal pair crossing the boundary of this
region is separated by the added edge cut \(\delta_H(S_H(v,r))\), no terminal pair has both
endpoints inside the deleted region, and all remaining terminal pairs are handled in later
iterations. Therefore the final set \(F\) separates every terminal pair.
\end{proof}

\begin{lemma}[Iteration bound]%
The number of iterations is at most \(8L\).
\end{lemma}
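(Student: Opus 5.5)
The plan is a potential argument: each iteration deletes at least $1/8$ of LP mass from $H$, and the total mass is only $L$. For each iteration, let $v=s_i$ be the chosen center and $r\in[1/8,1/4]$ the radius picked, as guaranteed by \Cref{lemma:regiongrowing}. The iteration starts only because $s_i$ and $t_i$ are still connected in $H$. So the aim is to show that the deleted region $S_H(v,r)$, together with its boundary edges, accounts for at least $1/8$ of $x$-mass that is then removed from $H$ for good.

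First I show that the current ball reaches out to distance at least $r$. Since $s_i$ and $t_i$ are connected in $H$, there is an $s_i$-$t_i$ path $P$ in $H$. Feasibility of the LP gives $\sum_{e\in P}x_e\ge \operatorname{dist}^G_x(s_i,t_i)\ge 1>r$. Hence $P$ leaves the ball $S_H(v,r)$, which contains $v$ (distance $0<r$). Let $e=ab$ be the first edge of $P$ with $a\in S_H(v,r)$ and $b\notin S_H(v,r)$.

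Next I lower-bound the mass $\sum_{e'}\theta_{e'}(r)$ counted in $\operatorname{vol}_H(v,r)$ on the prefix of $P$ up to and including $ab$. The edges of the prefix before $ab$ have both endpoints in the ball, so each contributes its full $x$-value. The edge $ab$ contributes $r-d(a)$. Because $P$ is a walk from $v$, the prefix length up to $a$ is at least $d(a)$, so the prefix mass is at least $d(a)+(r-d(a))=r\ge 1/8$. A cleaner route is to take $P$ to be a shortest path in $H$ from $v$ to the first vertex outside the ball; this makes the telescoping exact.

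Then I charge this mass. Every edge contributing to $\operatorname{vol}_H(v,r)$ either lies inside $S_H(v,r)$ or lies in $\delta_H(S_H(v,r))$. All such edges are removed from $H$ when $S_H(v,r)$ is deleted, and removed edges never reappear in later iterations. Moreover $\theta_{e'}(r)\le x_{e'}$ for every edge. So if $H_j$ denotes the graph after iteration $j$, the quantity $\sum_{e\in E(H_j)}x_e$ drops by at least $\operatorname{vol}_{H}(v,r)\ge 1/8$ in each iteration. Since it starts at $L$ and stays nonnegative, there are at most $8L$ iterations.

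The main obstacle is the middle step: justifying that the path mass inside the ball really telescopes to at least $r$ in $H$ rather than in $G$, so that distances to intermediate vertices are measured consistently. I would handle it using $\operatorname{dist}^H_x$ throughout together with the shortest-path choice. As a separate point, the case $L<1/8$ needs no special treatment, since then the bound correctly yields zero iterations. Feasibility forces $L\ge 1$ whenever some pair is connected, because any connecting path has length at least $1$.
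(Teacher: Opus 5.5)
Your proposal is correct and is essentially the paper's argument: each iteration captures fractional volume at least $r\ge 1/8$ along a $v$-$t_i$ path in $H$. Since every edge with $\theta_e(r)>0$ is incident to the deleted ball and $\theta_e(r)\le x_e$, the remaining mass $\sum_{e\in E(H)}x_e$, which starts at $L$, drops by at least $1/8$ per iteration; your potential-function write-up just makes the paper's ``captured mass is never counted again'' step explicit.

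The step you flag as the main obstacle is already settled by your own argument: the prefix is a walk in $H$, so its length is at least $d(a)=\operatorname{dist}^H_x(v,a)$. The one thing to restate is why $P$ exits the ball. The reason is $\operatorname{dist}^H_x(v,t_i)\ge\operatorname{dist}^G_x(s_i,t_i)\ge 1>r$, so $t_i\notin S_H(v,r)$, not merely that $P$ is long.
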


\begin{proof}
Consider an iteration centered at \(v=s_i\). The other endpoint \(t_i\) is connected to \(v\) in
the current graph \(H\), and every \(v\)-\(t_i\) path has \(x\)-length at least \(1\). Therefore the
fractional ball of radius \(r\ge 1/8\) captures fractional mass at least \(1/8\) along a shortest
\(v\)-\(t_i\) path.

The deleted regions are disjoint in the active graph. Moreover, once an edge contributes to the
fractional volume of a deleted region, it is removed from the active graph and never contributes
again. Thus the total captured fractional mass over all iterations is at most
\[
    \sum_{e\in E}x_e=L.
\]
Since each iteration captures mass at least \(1/8\), there are at most \(8L\) iterations.
\end{proof}

\begin{theorem}[Region-growing approximation]%
Given a feasible LP solution of value \(L>0\), the rounding algorithm returns a multicut \(F\)
with
\[
    |F|\le 144L\ln\Lambda.
\]
\end{theorem}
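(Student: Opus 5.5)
The bound follows by combining the three lemmas just proved: there are few iterations, each iteration's cut is controlled by the good-radius inequality of \Cref{lemma:regiongrowing}, and the fractional volumes of the deleted regions are disjoint. Let the iterations be indexed by $j=1,\dots,J$, with active graph $H_j$, center $v_j$, radius $r_j\in[1/8,1/4]$ and region $S_j=S_{H_j}(v_j,r_j)$. Then
\[
|F|\le \sum_{j=1}^{J}|\delta_{H_j}(S_j)|\le 16\ln\Lambda\sum_{j=1}^{J}\Bigl(\operatorname{vol}_{H_j}(v_j,r_j)+\frac{L}{\Lambda}\Bigr).
\]
I will bound the two parts of the right-hand sum separately.

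For the volume part, I will show $\sum_j \operatorname{vol}_{H_j}(v_j,r_j)\le L$. An edge contributes a nonzero $\theta_e(r_j)$ in iteration $j$ only if at least one of its endpoints lies in $S_j$. Since $S_j$ is deleted from the active graph, that edge is absent from every later $H_{j'}$. Each edge therefore contributes in at most one iteration, and there it contributes at most $\theta_e\le x_e$. This is the same disjointness argument used in the iteration-bound lemma, and summing over edges gives the claim.

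For the additive part, the iteration bound gives $J\le 8L$, so $\sum_j L/\Lambda\le 8L\cdot L/\Lambda$. Since $\Lambda\ge \lceil L\rceil\ge L$, we have $L/\Lambda\le 1$, and hence this sum is at most $8L$. Putting the two parts together,
\[
|F|\le 16\ln\Lambda\,(L+8L)=144L\ln\Lambda,
\]
which is exactly the stated constant.

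The main obstacle is making sure the volume-disjointness step is airtight. Boundary edges have one endpoint inside $S_j$, so they are removed along with $S_j$ and cannot be charged again. One might worry that the fractional part $r-d(a)$ of such an edge could be counted twice, but that would require the edge to survive into a later iteration, which it does not. I also need that distances in $H_j$ are at least distances in $G$, which holds because $H_j$ is a subgraph of $G$; this is what keeps $\theta_e\le x_e$ meaningful and bounded. A final minor point is the case $L<1$. There $\Lambda=2$ and the bound $J\le 8L$ still holds, so no iteration is possible unless $8L\ge 1$, and the inequality above goes through unchanged.
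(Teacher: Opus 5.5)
Your proof is correct and follows the paper's argument: the per-iteration bound from the good-radius lemma, the disjoint-volume bound $\sum_j \operatorname{vol}_{H_j}(v_j,r_j)\le L$, and the iteration count $J\le 8L$ combined with $L\le\Lambda$ to control the additive term. One small correction: $\theta_e\le x_e$ for a boundary edge $e=ab$ with $a\in S_j$ holds because $\theta_e=r-d(a)\le d(b)-d(a)\le x_e$, using $d(b)\ge r$ and the triangle inequality in $H_j$. It does not come from comparing distances in $H_j$ with distances in $G$; that comparison is what the correctness and iteration-bound lemmas rely on.
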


\begin{proof}
Let \(V_j\) be the fractional volume of the region chosen at iteration \(j\), and let
\(\delta_j\) be its boundary in the current graph. By the choice of the radius in \Cref{lemma:regiongrowing},
\[
    |\delta_j|
    \le
    16\ln\Lambda\left(V_j+\frac{L}{\Lambda}\right).
\]
The region volumes sum to at most \(L\). Also, by the iteration bound, the number of iterations
is at most \(8L\le 8\Lambda\). Therefore
\[
\begin{aligned}
    |F|
    &\le
    16\ln\Lambda
    \left(
        \sum_j V_j
        +
        \sum_j \frac{L}{\Lambda}
    \right)  \\
    &\le
    16\ln\Lambda
    \left(
        L+8\Lambda\cdot\frac{L}{\Lambda}
    \right)  \\
    &=
    144L\ln\Lambda.
\end{aligned}
\]
\end{proof}
\subsection{Compression from an Approximate Edge Cut.}
\label{subsec:multicut-compression}

For the compression problem, we will work with Vertex Multicut Compression so that we can follow the approach of~\cite{chu2026faster}. We therefore use a
protected-subdivision gadget to transform edge cuts to vertex cuts. For every original vertex $v\in V(G)$, create
$k+1$ clones
\[
v^1,\ldots,v^{k+1}.
\]
For every original edge $e=uv$, create one edge-vertex $x_e$ adjacent to every
clone of $u$ and every clone of $v$. For every original terminal pair $(s,t)$,
create all $(k+1)^2$ terminal pairs
\[
(s^a,t^b), \qquad a,b\in[k+1].
\]
All non-compression vertices are formally deletable in the vertex instance.
However, deleting original vertices is useless as an edge operation: deleting
all clones of a single original vertex costs $k+1$, which is impossible within
budget $k$.

\begin{lemma}[protected-subdivision equivalence]%
\label{lem:protected-subdivision-equivalence}%
The original Edge Multicut instance has an edge multicut of size at most $k$ if
and only if the protected-subdivision vertex instance has a vertex multicut of
size at most $k$.
\end{lemma}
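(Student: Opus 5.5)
We prove the two directions separately, mapping solutions explicitly in each. Write $G'$ for the protected-subdivision instance and $\mathcal T'$ for its terminal pairs.

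\emph{Forward direction.} Let $F\subseteq E(G)$ be an edge multicut with $|F|\le k$, and set $Y=\{x_e : e\in F\}$, so $|Y|=|F|\le k$. Take any pair $(s^a,t^b)\in\mathcal T'$ and suppose a path $P$ joins $s^a$ to $t^b$ in $G'-Y$. Its vertices alternate between clones of original vertices and edge-vertices, since clones are adjacent only to edge-vertices. Each step $v^i, x_e, u^j$ uses an edge-vertex with $e=vu\notin F$. Collapsing every clone $v^i$ to $v$ and every $x_e$ to $e$ therefore yields an $s$-$t$ walk in $G-F$. This contradicts the assumption that $F$ is a multicut.

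\emph{Backward direction.} Let $Y$ be a vertex multicut of $G'$ with $|Y|\le k$. Because every original vertex has $k+1$ clones and $|Y|\le k$, each original vertex $v$ keeps at least one clone $v^{i(v)}\notin Y$. Define $F=\{e\in E(G): x_e\in Y\}$; then $|F|\le|Y|\le k$. Suppose, for contradiction, that some pair $(s,t)$ is joined by a path $s=v_0,v_1,\dots,v_\ell=t$ in $G-F$. Lift this path to $G'$ as follows.
\begin{itemize}
\item Replace each $v_j$ by its surviving clone $v_j^{i(v_j)}$.
\item Replace each edge $e_j=v_{j-1}v_j$ by the edge-vertex $x_{e_j}$. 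Since $e_j\notin F$, we have $x_{e_j}\notin Y$.
\item Observe that $x_{e_j}$ is adjacent to every clone of both endpoints of $e_j$, so consecutive lifted vertices are adjacent in $G'$.
\end{itemize}
The lifted sequence is thus a path in $G'-Y$ from $s^{i(s)}$ to $t^{i(t)}$. The pair $(s^{i(s)},t^{i(t)})$ lies in $\mathcal T'$, because all $(k+1)^2$ clone pairs were added. This contradicts $Y$ being a multicut, so $F$ is an edge multicut of size at most $k$.

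\emph{Where the care is needed.} The main subtlety is in the backward direction: $Y$ may contain clone vertices and not only edge-vertices. Deleted clones are harmless for two reasons. First, every edge-vertex is adjacent to all clones of both endpoints, so a surviving clone can always be routed through. Second, terminal pairs range over all clone combinations, so we may start and end the lifted path at surviving clones. The budget argument, that not all $k+1$ clones of a vertex can be deleted with $|Y|\le k$, is exactly what guarantees a surviving clone exists. A minor point: if a terminal $s$ equals $t$, the instance is trivially infeasible in both versions, and we assume $s\neq t$ as usual.
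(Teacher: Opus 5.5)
Your proposal is correct and follows the paper's proof essentially step for step. In the forward direction you delete the edge-vertices $\{x_e : e\in F\}$ and project surviving clone-to-clone paths back to $G-F$. In the backward direction you use the fact that some clone of each vertex survives because $|Y|\le k$, set $F=\{e : x_e\in Y\}$, and lift an $s$-$t$ path in $G-F$ through surviving clones and undeleted edge-vertices to obtain a surviving pair $(s^{a},t^{b})$.
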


\begin{proof}
Given an edge multicut $F$, delete the corresponding vertices
$\{x_e:e\in F\}$. Every surviving path between clones of $s$ and $t$ would
project to an $s$-$t$ path in $G-F$.

Conversely, let $X$ be any vertex multicut of size at most $k$ in the protected
instance. For every original vertex $v$, at least one clone $v^a$ survives. Let
\[
F_X=\{e:x_e\in X\}.
\]
If some original terminal pair $s,t$ is connected in $G-F_X$, take an $s$-$t$
path. For every original vertex on the path choose a surviving clone. Every
surviving original edge $e=uv$ has $x_e\notin X$, and $x_e$ is adjacent to every
surviving clone of $u$ and $v$, so the original path lifts to a path between
some surviving $s^a$ and $t^b$ in the protected graph minus $X$. This
contradicts the vertex multicut property. Hence $F_X$ is an edge multicut and
$|F_X|\le |X|\le k$.
\end{proof}

Let $F_0$ be the reference edge multicut returned by the region-growing
rounding, with $|F_0| \leq O(k \log k \cdot 2^{O(1/\epsilon)})$. In the protected-subdivision graph, let
\[
W_0=\{x_e:e\in F_0\}.
\]
Then $W_0$ is a known vertex multicut in the protected-subdivision graph.

Now, the target solution $X$ may reuse some vertices of $W_0$ and may connect other
vertices of $W_0$ to each other. We guess both choices.

\begin{definition}[compression guess]%
\label{def:compression-guess}%
A compression guess consists of a subset $Y\subseteq W_0$ and a partition
$\Pi$ of $W_0\setminus Y$. We delete $Y$, reduce the remaining budget to
$k-|Y|$, and contract every block $B\in\Pi$ to a new terminal vertex $w_B$.
The new terminal set is
\[
W=\{w_B:B\in\Pi\}.
\]
Every terminal pair is updated under these contractions; pairs containing a
vertex of $Y$ are already separated by the forced deletion and are removed.
\end{definition}

\begin{lemma}[compression correctness]%
\label{lem:compression-correctness}%
The original Edge Multicut instance has a solution of size at most $k$ if and
only if at least one compression guess produces a Vertex Multicut Compression
instance with a solution of size at most its remaining budget.
\end{lemma}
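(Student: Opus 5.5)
Both directions go through \Cref{lem:protected-subdivision-equivalence}, so it suffices to show: the protected-subdivision vertex instance $(G',\mathcal T',k)$ has a vertex multicut of size at most $k$ if and only if some compression guess $(Y,\Pi)$ yields a Vertex Multicut Compression instance with a solution of size at most $k-|Y|$. Recall that such a solution must be a multicut for the updated pairs and a multiway cut for $W=\{w_B:B\in\Pi\}$, and it avoids $W$.

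\textbf{Forward direction.} Let $X$ be a vertex multicut of $G'$ with $|X|\le k$. I would take $Y:=X\cap W_0$. For $\Pi$, I would use the partition of $W_0\setminus Y$ by connected components of $G'-X$. Set $X':=X\setminus Y$, so $|X'|= |X|-|Y|\le k-|Y|$ and $X'$ is disjoint from $W_0\setminus Y$, hence from $W$ after contraction. Then $X'$ separates the terminals of $W$, since distinct blocks lie in distinct components of $G'-X$, and contracting a block $B$ only merges vertices already in one component of $G'-X$. The same fact shows that connectivity in the contracted graph minus $X'$ coincides with connectivity in $G'-X$ after identifying the vertices of each block. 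Each surviving updated pair therefore stays separated, and pairs meeting $Y$ were removed.

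\textbf{Backward direction.} Let $(Y,\Pi)$ be a guess and $X'$ a solution of the resulting instance. I would set $X:=X'\cup Y$, so $|X|\le k$, and suppose for contradiction that some pair $(s,t)$ of $\mathcal T'$ is joined by a path $P$ in $G'-X$. The path $P$ avoids $Y$. Now use that $W_0$ is a vertex multicut of $G'$: neither endpoint of $(s,t)$ lies in $Y$, since otherwise $X$ would already contain it, so $P$ must pass through some vertex of $W_0\setminus Y$.

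The main obstacle is the next step. The image of $P$ in the contracted graph is a walk that avoids $X'$, and it visits at least one terminal $w_B$. Because $X'$ is a multiway cut for $W$, all $W$-vertices on this walk lie in a single component of the contracted graph minus $X'$. Hence they all equal one $w_B$, since distinct terminals are separated. It follows that the image of $P$ is connected in the contracted instance minus $X'$, joining the images of $s$ and $t$. That updated pair survived the guess, so this contradicts $X'$ being a multicut there.

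The remaining care point is the case where $s$ or $t$ itself lies in $W_0\setminus Y$ and is contracted. This is harmless: the pair is then updated to involve $w_B$, exactly as \Cref{def:compression-guess} specifies, so the same argument applies.
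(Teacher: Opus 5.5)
Your proof is correct and follows the paper's argument. The forward direction takes $Y=X\cap W_0$ and lets $\Pi$ be the partition induced by the components of $G'-X$; the backward direction lifts $X'\cup Y$ because contraction only adds connections. Your detour through $W_0$ being a multicut and $X'$ being a multiway cut is unnecessary: the image of any path under contraction is already a walk joining the images of its endpoints, and those endpoints are clones, so they never lie in $W_0$.
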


\begin{proof}
Suppose $X$ is a solution of size at most $k$ in the protected-subdivision
graph. Let $Y=X\cap W_0$, and partition $W_0\setminus Y$ according to connected
components of the graph after deleting $X$. In the guessed instance,
$X\setminus Y$ is disjoint from the new terminal set, separates every updated
terminal pair, and separates every two distinct terminals $w_B,w_{B'}$:
distinct blocks were chosen from distinct components of the graph after
deleting $X$.

Conversely, let a guessed compression instance have a solution $X'$. Then
$X'\cup Y$ has size at most $k$. Contraction only adds possible connections, so
if all updated terminal pairs are separated after contraction, then the
corresponding clone terminal pairs are separated before contraction. Thus
$X'\cup Y$ is a vertex multicut in the protected-subdivision graph, and
Lemma~\ref{lem:protected-subdivision-equivalence} turns it into an edge
multicut in $G$ of size at most $k$.
\end{proof}

The number of guesses we make is at most $|W_0|^{O(|W_0|)}$. Note that at this point, if our guess is correct, any vertex multicut of the resulting instance is a multiway cut for $W$, and $W$ itself is a vertex multicut. We denote such a compression instance by $(G, \mathcal{T}, W,k)$. Throughout our future transformations, we will maintain that $W$ is a vertex multicut. 

\begin{lemma}[Cleaning equal-endpoint terminal pairs]%
\label{lem:clean-equal-endpoint-pairs}%
Suppose a current compression instance $(G, \mathcal{T}, W, k)$ has a terminal pair $(a,a)$. A solution
separates this pair if and only if it deletes $a$. Therefore the following
normalization is exact.
\begin{itemize}
    \item If $a\in W$, discard the branch as a fixed no-instance.
    \item If $a\notin W$ and the remaining budget is zero, discard the branch
    as a fixed no-instance.
    \item If $a\notin W$ and the remaining budget is positive, delete $a$ from
    the graph, decrease the budget by one, and remove all terminal pairs
    containing $a$.
\end{itemize}
Repeating this rule terminates. Every solution of the original current instance
contains all force-deleted vertices and restricts to a solution of the cleaned
instance; every solution of the cleaned instance together with the
force-deleted vertices solves the original current instance. If a shadowless
solution is preserved into the rule, then the restricted solution remains
shadowless after the forced deletion. If the input to the rule is bipedal, the
cleaned instance is bipedal.
\end{lemma}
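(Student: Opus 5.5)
The proof is a direct case analysis on where the repeated vertex $a$ lies. The central observation is that a vertex set $X$ separates the pair $(a,a)$ exactly when $a\in X$: if $a\notin X$ then $a$ survives in $G-X$ and is trivially connected to itself, and if $a\in X$ the pair has no surviving endpoint and is vacuously separated. This holds under the convention used throughout this section, where pairs with a deleted endpoint count as separated, as in \Cref{def:compression-guess}.

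With this observation the three bullets follow in turn. If $a\in W$, every admissible solution must avoid $W$, since compression solutions are disjoint from $W$, so no solution separates $(a,a)$ and the branch is a no-instance. If $a\notin W$ and the budget is zero, the only candidate is $X=\emptyset$, which does not contain $a$, so again there is no solution. If $a\notin W$ and the budget is positive, every solution $X$ contains $a$. Then $X\setminus\{a\}$ in $G-a$ with budget $k-1$ still separates all remaining pairs, because connectivity in $(G-a)-(X\setminus\{a\})$ is identical to connectivity in $G-X$. Pairs containing $a$ are already separated and may be dropped, and the multiway-cut condition on $W$ is preserved because $a\notin W$. Conversely, if $X'$ solves the cleaned instance, then $X'\cup\{a\}$ solves the original instance, by the same identity of graphs $G-(X'\cup\{a\})=(G-a)-X'$. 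Termination follows because each application either discards the branch or strictly decreases $|V(G)|$ and the budget.

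The remaining claims also rest on the identity $(G-a)-X' = G-(X'\cup\{a\})$. For shadowlessness, every surviving vertex has the same component in both graphs, and $W$ is unchanged, so $\operatorname{sh}_{G-a}(X\setminus\{a\})=\operatorname{sh}_G(X)=\emptyset$. For bipedality, I expect the main subtlety to be that components of $(G-a)-W$ are subsets of components of $G-W$. A component $C'$ of $(G-a)-W$ lies in some component $C$ of $G-W$ and satisfies $N_{G-a}(C')\cap W\subseteq N_G(C)\cap W$. This requires a short argument: any $W$-neighbor of $C'$ is a $W$-neighbor of the containing component $C$. It follows that $|N_{G-a}(C')\cap W|\le 2$, so the cleaned instance is bipedal. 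Finally, iterating the rule composes these equivalences by induction on the number of applications.
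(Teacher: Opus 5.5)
Your proposal is correct and follows the paper's proof essentially step for step: the observation that $(a,a)$ is separated iff $a\in X$, the same three-way case split, the identity $(G-a)-(X\setminus\{a\})=G-X$ used for both the solution correspondence and shadowlessness, the fact that deleting a nonterminal cannot enlarge the $W$-neighbourhood of any component of $G-W$, and induction over the sequence of forced deletions. The only cosmetic difference is that you measure termination by the decrease of $|V(G)|$ and the budget rather than $|V|+|\mathcal{T}|$, which is equally valid.
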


\begin{proof}
The length-zero path from $a$ to itself exists in $G-X$ exactly when
$a\notin X$. Hence every feasible solution must delete $a$. If $a\in W$,
deletion is forbidden. If the remaining budget is zero, deletion is impossible.
Otherwise deletion is forced; after deleting $a$, every terminal pair containing
$a$ is already separated and may be removed. The rule strictly reduces
$|V|+|\mathcal{T}|$ or discards the branch, so repetition terminates.

For several forced deletions, apply the one-pair argument in the order in which
the cleaning rule performs them. At step $i$, every feasible solution of the
pre-cleaned instance must contain the vertex deleted at that step, and after
deleting it the remaining constraints are exactly the constraints of the next
instance in the cleaning sequence. This proves both directions by induction
over the sequence. Shadowlessness is preserved because if $a\in X$, then
\[
(G-a)-(X\setminus\{a\})=G-X.
\]
Deleting nonterminal vertices cannot create a component of $G-W$ with more
neighbors in $W$, so bipedality is also preserved.
\end{proof}

\subsection{Reduction to shadowless instances.}\label{sec:shadowremoval}
In this section we reduce a general compression instance to a small family of
instances in which the solution may be assumed to be shadowless.  Recall that a
compression instance is a tuple
\[
    I=(G,\mathcal T,W,k),
\]
where \(W\subseteq V(G)\) is the set of undeletable compression terminals, every
vertex in \(V(G)\setminus W\) is deletable, and a solution is a set
\(X\subseteq V(G)\setminus W\), \(|X|\le k\), that separates all pairs in
\(\mathcal T\) and also separates every two distinct vertices of \(W\).  For such
a set \(X\), its shadow is
\[
    \operatorname{sh}_G(X)
    =
    \{v\in V(G)\setminus X :
      v \text{ is not connected to any vertex of } W \text{ in } G-X\}.
\]
We say that \(X\) is shadowless if \(\operatorname{sh}_G(X)=\emptyset\), that is,
every vertex surviving after the deletion of \(X\) lies in a connected component
that contains a terminal of \(W\).  The goal of this section is to construct, in
time
\[
    2^{O(k\log k)}\widetilde O(|V(G)|+|E(G)|+|\mathcal T|),
\]
a family of at most \(2^{O(k\log k)}\log |V(G)|\) compression instances
\[
    I_1,\ldots,I_r
\]
such that the following two properties hold.  First, if the original instance
\(I\) has a solution of size at most \(k\), then, with high probability, at least
one instance \(I_j\) has a shadowless solution of size at most \(k\).  Second,
the reduction is sound in the reverse direction: if any output instance \(I_j\)
has a solution of size at most \(k\), then the original instance \(I\) also has a
solution of size at most \(k\).  Thus, after this section, it suffices to solve
only instances that admit a shadowless solution in the yes-case.

We now give the reduction promised above. Let us recall the definition of an important separator, but now specialized to the setting where $W$ is undeletable.  

\begin{definition}[Important separator]%
Let \(s\in V(G)\setminus W\).  An \((s,W)\)-separator
\(S\subseteq V(G)\setminus (W\cup\{s\})\) is important if there is no
\((s,W)\)-separator \(S'\) with \(|S'|\le |S|\) and
\[
    R_G(s,S)\subsetneq R_G(s,S'),
\]
where \(R_G(s,S)\) denotes the set of vertices reachable from \(s\) in \(G-S\).
\end{definition}

\begin{definition}[\(k\)-shadow-removable set]%
A set \(Y\subseteq V(G)\setminus W\) is \(k\)-shadow-removable with respect to
\(W\) if \(|Y|\le k\) and the following two conditions hold:
\begin{enumerate}
    \item for every \(v\in \operatorname{sh}_G(Y)\), there is an important
    \((v,W)\)-separator \(S\) of size at most \(k\) with \(S\subseteq Y\);
    \item for every \(y\in Y\), every important \((y,W)\)-separator of size
    at most \(k\) contains at least one vertex outside \(Y\).
\end{enumerate}
\end{definition}

We use the following imported fact from the shadow-removal framework of~\cite{marx2011fixed, chu2026faster}.

\begin{theorem}\cite{marx2011fixed, chu2026faster}
If a multicut compression instance $(G, \mathcal{T}, W, k)$ is a yes-instance, then it has a
solution \(X\) of size at most \(k\) that is \(k\)-shadow-removable with respect
to \(W\).
\end{theorem}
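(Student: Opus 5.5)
The plan is an extremal choice of solution, followed by an exchange argument for each of the two conditions. Throughout I use that, in the current compression instance, $W$ is itself a vertex multicut, and that a solution must also separate every two distinct vertices of $W$. Hence any path in $G-X$ that would violate feasibility, whether it joins some $s_i$ to $t_i$ or two distinct vertices of $W$, must contain a vertex of $W$. So feasibility of $X$ depends only on the connected components of $G-X$ that meet $W$.

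Let $A(X)=V(G)\setminus(X\cup\operatorname{sh}_G(X))$ be the set of vertices connected to $W$ in $G-X$. I choose a solution $X$ lexicographically: first minimize $|X|\le k$, then minimize $|A(X)|$, then maximize a potential $\Phi(X)$ that rewards large shadow components. A candidate is $\Phi(X)=\sum_{v\in\operatorname{sh}_G(X)}|R_G(v,X)|$.

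Condition 2 should follow from minimality of $|X|$ alone. Suppose some $y\in X$ has an important $(y,W)$-separator $S$ of size at most $k$ with $S\subseteq X$. Then $S\subseteq X\setminus\{y\}$ because $y\notin S$, so $y$ is not connected to $W$ in $G-(X\setminus\{y\})$. The component of $y$ in that graph therefore contains no vertex of $W$. Every other component of $G-(X\setminus\{y\})$ is a component of $G-X$. Since violating paths must pass through $W$, the set $X\setminus\{y\}$ is still a solution, which contradicts minimality.

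For condition 1, fix $v\in\operatorname{sh}_G(X)$, let $C$ be its component in $G-X$, and let $Z=N_G(C)\subseteq X$. Then $Z$ is a $(v,W)$-separator, so there is an important $(v,W)$-separator $S$ with $|S|\le|Z|$ and $R_G(v,S)\supseteq C$. If $S\subseteq X$ there is nothing to prove. Otherwise set $X'=(X\setminus Z)\cup S$, which has $|X'|\le |X|$, and prove the following.

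\emph{(i)} Every vertex of $Z\setminus S$ lies in $R_G(v,S)$, since it is adjacent to $C\subseteq R_G(v,S)$. Hence all vertices freed by the exchange lie in the $W$-free component $R_G(v,S)$ of $G-X'$.

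\emph{(ii)} Every component of $G-X'$ meeting $W$ avoids $R_G(v,S)\cup Z$, so it is contained in a component of $G-X$. Consequently $X'$ is a solution and $A(X')\subseteq A(X)$.

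By the first two tie-breakers, $|X'|=|X|$ and $A(X')=A(X)$. The remaining task is to show that $\Phi(X')>\Phi(X)$, which contradicts the choice of $X$ and establishes condition 1.

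I expect this potential step to be the main obstacle. The vertices of $S\setminus Z$ may be removed from other shadow components, and shadow components may merge, so the accounting is delicate. My first attempt is as follows. Use $R_G(v,S)\supsetneq C$, which holds whenever $S\ne Z$, because both are important-type separators of size at most $|Z|$. Then argue that every shadow component of $G-X$ other than $C$ is either untouched by the exchange or absorbed into $R_G(v,S)$, so the reach sets of shadow vertices can only grow, and the reach set of $v$ grows strictly. If this sum-based $\Phi$ fails, I would fall back to the standard Marx--Razgon ordering: take the inclusion-maximal family of shadow components, equivalently minimize $X\cup A(X)$ under the partial order ``$X'$ is obtained by pushing a boundary of $X$ toward $W$''. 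Finiteness of that order gives termination.
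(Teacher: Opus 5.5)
Your extremal choice (minimum $|X|$, then minimum $|A(X)|$, which is the same as maximum $|\operatorname{sh}_G(X)|$ because $|\operatorname{sh}_G(X)|=|V(G)|-|X|-|A(X)|$) is the right one. Your proof of condition~2 is correct, and so are the exchange facts (i) and (ii) and the consequences $|X'|=|X|$ and $A(X')=A(X)$. The gap is condition~1. You reduce it to $\Phi(X')>\Phi(X)$ and never prove that inequality. Your heuristic, that every other shadow component is untouched or absorbed into $R_G(v,S)$, is not supported by anything you have established. After the first two tie-breakers you only know $S\setminus Z\subseteq\operatorname{sh}_G(X)$. Using only (i) and (ii), those vertices could cut another shadow component of $G-X$ into pieces outside $R_G(v,S)$, and that would lower a sum-of-squares potential. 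The fallback to ``the standard Marx--Razgon ordering'' is not an argument. As written, condition~1 is not proved.

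The missing idea is that you do not need $\Phi$ at all: the first two tie-breakers already force $S=Z$. The argument you used for condition~2 shows that minimality of $|X|$ gives every $z\in X$ a neighbour in $A(X)$. Otherwise the component of $z$ in $G-(X\setminus\{z\})$ avoids $W$, all other components are unchanged, and $X\setminus\{z\}$ would be a smaller solution. Now suppose $z\in Z\setminus S$, and let $a\in A(X)$ be a neighbour of $z$. Since $A(X')=A(X)$, we have $a\notin X'\supseteq S$. By (i), $z\in R_G(v,S)$, and therefore $a\in R_G(v,S)$. But $a$ lies in a component of $G-X'$ that meets $W$, and by (ii) such components avoid $R_G(v,S)$, a contradiction. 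Hence $Z\subseteq S$, and $|S|\le|Z|$ gives $S=Z$. So the boundary $N_G(C)$ is itself an important $(v,W)$-separator of size at most $|X|\le k$ contained in $X$, which is condition~1. The paper does not reprove this theorem but cites it; this minimum-size, maximum-shadow exchange is the standard argument behind it.
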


\begin{lemma}[One coloring produces one shadow cover]%
Let \(c:V(G)\setminus W\to\{\mathrm{red},\mathrm{blue}\}\) be a two-coloring.
Define
\[
    Z(c)
    =
    \{v\in V(G)\setminus W :
      \text{some important } (v,W)\text{-separator of size at most }k
      \text{ is entirely red}\}.
\]
Then

\begin{enumerate}

\item For every \(k\)-shadow-removable set \(Y\), there are disjoint sets
\(R,B\subseteq V(G)\setminus W\) with
\[
    |R|\le k,
    \qquad
    |B|\le k^2,
\]
such that if all vertices of \(R\) are red and all vertices of \(B\) are blue,
then
\[
    \operatorname{sh}_G(Y)\subseteq Z(c),
    \qquad
    Z(c)\cap Y=\emptyset .
\]

\item Given $c$, \(Z(c)\) can be computed in
\[
    2^{O(k)}\widetilde O(|V(G)|+|E(G)|)
\]

\end{enumerate}
\end{lemma}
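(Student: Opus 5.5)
For part 1, take $R:=Y$, so $|R|\le k$. For every $y\in Y$, condition 2 of $k$-shadow-removability says each important $(y,W)$-separator of size at most $k$ contains a vertex outside $Y$. There are at most $4^k$ such separators, so taking one such vertex from each would give $|B|$ up to $k\cdot 4^k$, which is too many. To reach $|B|\le k^2$, I will instead take, for each $y\in Y$, a set $B_y\subseteq V(G)\setminus(W\cup Y)$ of size at most $k$ that hits every important $(y,W)$-separator of size at most $k$, and set $B:=\bigcup_{y\in Y}B_y$.

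To get $B_y$, I plan to use the structure of important separators. I would pick greedily: start from the unique furthest minimum $(y,W)$-cut, whose vertices not in $Y$ give a candidate. Then argue, via the branching structure of \LocalImpSep, that $k$ vertices suffice. This is the step I expect to be the main obstacle. If it fails, the fallback is to accept $|B|\le k4^k$; the color-coding count would then still be $2^{O(k\log k)}$ only if handled differently. So I will try hard to make the $k^2$ bound work, possibly by using a bound on $|B|$ stated in \cite{chu2026faster} and importing it.

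Given $R$ and $B$, the two inclusions follow directly. First take $v\in\operatorname{sh}_G(Y)$. By condition 1 there is an important $(v,W)$-separator $S\subseteq Y=R$ of size at most $k$. Since $S$ is entirely red, $v\in Z(c)$. Second take $y\in Y$. Every important $(y,W)$-separator of size at most $k$ meets $B_y$, which is blue, so none is entirely red, and hence $y\notin Z(c)$. Disjointness of $R$ and $B$ holds because each $B_y$ avoids $Y$.

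For part 2, I reduce to single-sink important separators. Contract $W$ into a single sink $t$; the $(v,W)$-separators avoiding $W$ correspond exactly to $(v,t)$-separators. Apply \Cref{thm:singlesink} to list, in time $2^{O(k)}m\,\polylog(n)$, all $(v,t)$ important separators of size at most $k$ over all $v$. These are produced as the sets $\mathcal C$ together with the index lists $I_v$ from the proof of \Cref{algo:singlesink}, and the total description size is $\Otil(2^{O(k)}m)$. For each listed separator $(L,R)$, test in $O(k)$ time whether $L\cap R$ is entirely red. If it is, every $v\in L\setminus R$ for which $(L,R)$ is important belongs to $Z(c)$. By the invariant in that proof, this is every $v\in L\setminus R$, since a $(v,t)$ important separator is also important for every vertex on its source side. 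Marking these vertices takes time linear in the total listed size, which gives the claimed bound.
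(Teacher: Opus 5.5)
Your skeleton is the same as the paper's. You take $R:=Y$ and $B:=\bigcup_{y\in Y}H_y$, where $H_y\subseteq V(G)\setminus Y$ hits every important $(y,W)$-separator of size at most $k$. You argue the two inclusions just as the paper does. For part~2 you contract $W$ to one sink and use \Cref{thm:singlesink}. The paper additionally notes that vertices with no path to $W$ lie in $Z(c)$ via the empty separator, so it runs the single-sink routine only on the sink's component.

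The genuine gap is the step you flag yourself: you never establish $|H_y|\le k$. Your fallback $|B|\le k4^k$ does not prove the statement, which requires $|B|\le k^2$. The paper does not prove this step either; it imports it as the hitting lemma for important separators (Lemma~8 of \cite{chu2026faster}). So either cite that lemma, or complete your greedy idea. The greedy idea works provided you take \emph{one} vertex per round, rather than all non-$Y$ vertices of the cut.

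Concretely, if no $(y,W)$-separator exists, take $H_y=\emptyset$. Otherwise, proceed as follows.
\begin{enumerate}
\item Let $S_0$ be the furthest minimum $(y,W)$-separator, of size $\lambda$. Condition~2 forces $\lambda\ge1$, since otherwise $\emptyset$ is an important separator contained in $Y$.
\item $S_0$ is itself important of size at most $k$, so condition~2 gives some $x_1\in S_0\setminus Y$.
\item By the usual submodularity argument (cf.\ \Cref{claim:UseSubmodularity}), every important separator $S$ of size at most $k$ has $R(y,S)\supseteq R(y,S_0)$. Hence either $x_1\in S$, or $x_1\in R(y,S)$. In the latter case $S$ is an important separator in the graph $G_1$ obtained by contracting $R(y,S_0)\cup\{x_1\}$ into the source.
\item Because $S_0$ was furthest, the min-cut value in $G_1$ is at least $\lambda+1$.
\item Every important separator of $G_1$ is important in $G$: a separator dominating it in $G$ has reachable set containing $R(y,S_0)\cup\{x_1\}$, hence also dominates it in $G_1$. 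So condition~2 still applies to the furthest min cut of $G_1$, and you can repeat.
\end{enumerate}
This follows only the cut-advanced branch of \LocalImpSep, a path rather than a tree. It ends once the min-cut value exceeds $k$, i.e.\ after at most $k-\lambda+1\le k$ rounds. The chosen vertices avoid $Y$ and hit every important $(y,W)$-separator of size at most $k$.
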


\begin{proof}
Set \(R=Y\).  Since \(|Y|\le k\), this satisfies \(|R|\le k\).  If
\(v\in \operatorname{sh}_G(Y)\), then by \(k\)-shadow-removability there is an
important \((v,W)\)-separator \(S\subseteq Y\) of size at most \(k\).  When
all vertices of \(Y\) are red, the separator \(S\) is entirely red, and hence
\(v\in Z(c)\).  Thus
\[
    \operatorname{sh}_G(Y)\subseteq Z(c).
\]

It remains to ensure that no vertex of \(Y\) enters \(Z(c)\).  Fix \(y\in Y\).
By the second condition in the definition of \(k\)-shadow-removability, every
important \((y,W)\)-separator of size at most \(k\) contains a vertex outside
\(Y\).  Applying the hitting lemma for important separators (Lemma 8 of ~\cite{chu2026faster}), there is
a set
\[
    H_y\subseteq V(G)\setminus Y,
    \qquad
    |H_y|\le k,
\]
that intersects every important \((y,W)\)-separator of size at most \(k\).
Let
\[
    B=\bigcup_{y\in Y} H_y .
\]
Then \(B\cap R=\emptyset\) and \(|B|\le k^2\).  If all vertices of \(B\) are
blue, then every important \((y,W)\)-separator of size at most \(k\) contains
a blue vertex, and therefore no such separator is entirely red.  Hence
\(y\notin Z(c)\) for every \(y\in Y\), proving \(Z(c)\cap Y=\emptyset\).

\end{proof}

\begin{lemma}[Finding shadows]\label{lemma:shadowremoval}%
One can compute a family
\[
    \mathcal Z=\{Z_1,\ldots,Z_r\},
    \qquad
    r\le 2^{O(k \log k)}\log |V(G)|,
\]
in time
\[
    2^{O(k \log k)}\widetilde O(|V(G)|+|E(G)|+|\mathcal T|)
\]
such that, for every \(k\)-shadow-removable set \(Y\), some \(Z_i\in\mathcal Z\)
satisfies
\[
    \operatorname{sh}_G(Y)\subseteq Z_i,
    \qquad
    Z_i\cap Y=\emptyset .
\]
\end{lemma}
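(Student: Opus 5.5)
We will derandomize the single-coloring lemma by a random-coloring family of size $2^{O(k\log k)}\log|V(G)|$, and then compute each $Z(c)$ through the single-sink important separator algorithm of \Cref{thm:singlesink}. The plan follows.

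\textbf{Step 1: the colorings.} We sample independent colorings $c_1,\ldots,c_r$ of $V(G)\setminus W$. In each, a vertex is red independently with probability $p=1/(k+1)$ and blue otherwise. Fix a $k$-shadow-removable set $Y$, and let $R,B$ be the sets given by the preceding lemma, with $|R|\le k$ and $|B|\le k^2$. A single coloring colors $R$ red and $B$ blue with probability
\[
p^{|R|}(1-p)^{|B|}\ge (k+1)^{-k}\Bigl(1-\tfrac{1}{k+1}\Bigr)^{k^2}\ge (k+1)^{-k}e^{-k}=2^{-O(k\log k)}.
\]
Taking $r=2^{O(k\log k)}\log|V(G)|$ repetitions makes the failure probability for this fixed $Y$ polynomially small. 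We then set $Z_i=Z(c_i)$, and the preceding lemma gives $\operatorname{sh}_G(Y)\subseteq Z_i$ and $Z_i\cap Y=\emptyset$.

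\textbf{The main obstacle: quantifying over all $Y$.} The statement asks for every $k$-shadow-removable $Y$, and there can be $n^{k}$ such sets. A naive union bound would therefore need $r\approx k\log n\cdot 2^{O(k\log k)}$ colorings, which still fits the bound $2^{O(k\log k)}\log|V(G)|$ since $k$ is absorbed into the exponent. So we take $r= c\,k\,(k+1)^k e^{k}\ln n$ for a suitable constant $c$. This union bound over at most $n^{k}$ sets $Y$ then gives correctness with high probability. If a deterministic guarantee is needed, we would instead use an $(n,k+k^2)$-universal-style splitter family, but the randomized version matches the stated bound. In the application only one $Y$ (the solution $X$) matters, so even the fixed-$Y$ bound suffices.

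\textbf{Step 2: computing each $Z(c)$.} For a fixed coloring $c$, we contract $W$ into a single sink $t$ and run \Cref{thm:singlesink}. This lists all important $(v,t)$-separators of size at most $k$ for all $v$ in time $2^{O(k)}\widetilde O(m)$. Contracting $W$ preserves important $(v,W)$-separators, because separators avoid $W$ by definition and the sink side is the same. The total size of the listed separators is bounded by the running time. We scan each listed separator, and if all of its cut vertices are red we add its source $v$ to $Z(c)$.

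One point needs care. The output of \Cref{thm:singlesink} is a global list; its deduplication step (\Cref{line:clean}) removes separators that are dominated. We need the list for each source $v$, that is, the sets $I_v$ maintained in that proof. By the invariant proved there, $I_v$ consists exactly of the important $(v,t)$-separators. So we place $v$ in $Z(c)$ iff some index in $I_v$ has an all-red cut, which takes $O(4^k k)$ time per $v$. Multiplying by $r$ colorings and adding $O(|\mathcal T|)$ for bookkeeping gives the claimed running time $2^{O(k\log k)}\widetilde O(|V(G)|+|E(G)|+|\mathcal T|)$.
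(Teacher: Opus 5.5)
Your proof is correct and has the same two-stage structure as the paper's. First, a family of $2^{O(k\log k)}\log n$ red/blue colorings such that, for the sets $R,B$ of the preceding lemma, some coloring makes $R$ red and $B$ blue. Second, each $Z(c)$ is computed with the single-sink algorithm of \Cref{thm:singlesink} after contracting $W$ into one sink.

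The real difference is how the coloring family is obtained. The paper uses an explicitly constructed splitter-type family (Lemma~2.1 of Cygan et al.) that handles every disjoint pair $(R,B)$ with $|R|\le k$, $|B|\le k^2$ simultaneously. So no union bound over $Y$ is needed, and the only randomness is inside the single-sink oracle. Your biased sampling with $p=1/(k+1)$ is self-contained. Your bound $p^{k}(1-p)^{k^2}\ge (k+1)^{-k}e^{-k}$, together with a union bound over the at most $n^k$ candidate sets $Y$, correctly gives the same family size. The price is an extra failure probability, which is harmless because the lemma is only used with high probability anyway.

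Two small remarks on Step 2:
\begin{itemize}
\item You need not reach into the internal sets $I_v$ of the single-sink proof. An important $(v,t)$-separator $(L,R)$ is important for every vertex of $L\setminus R$, so $Z(c)$ is simply the union of $L\setminus R$ over the cleaned output separators whose cut is entirely red. This is computable in time linear in the output size.
\item As the paper does in a footnote, place vertices not connected to $W$ (and all vertices when $W=\emptyset$) into $Z(c)$ directly. Such vertices lie in the shadow of every $k$-shadow-removable $Y$, and belong to $Z(c)$ via the empty separator. The single-sink routine, however, should only be run on the component containing the contracted sink, so it will not report them.
\end{itemize}
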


\begin{proof}

Fix a \(k\)-shadow-removable set \(Y\).  By the previous lemma, there are
disjoint sets \(R,B\subseteq V(G)\setminus W\) with \(|R|\le k\), \(|B|\le k^2\),
such that coloring \(R\) red and \(B\) blue gives the desired shadow cover. Using, for example, Lemma 2.1 from \cite{cygan2020randomized}, we compute a family of sets $\mathcal{C}$ of size $2^{O(k \log k)}\log n$ in time $\Otil(n2^{O(k \log k)})$, so that there is some $C \in \mathcal{C}$, so that  
\[
    R\subseteq C,
    \qquad
    C\cap B=\emptyset .
\]
For each $C \in \mathcal{C}$, we now color the vertices red and blue according to $C$ (vertices in $C$ are red and the rest are blue). If the choice of $C$ is correct, then $R$ is colored red and $B$ is colored blue.

Using our result on single-sink important separators \Cref{thm:singlesink}, it follows that for a given coloring $c$, $Z(c)$ can be computed in time \(
    2^{O(k)}\widetilde O(|V(G)|+|E(G)|)
\). We contract $W$ into a single vertex $\tau$, compute $(v, \tau)$ important separators for each $v$, and include $v \in Z(c)$ if and only if some important $(v,\tau)$-separator is all red.\footnote{Vertices already disconnected from $W$ belong to $Z(c)$ automatically. We run the single-sink algorithm only on the component containing $\tau$; if $W=\emptyset$, set $Z(c)=V(G)$.}
\end{proof}

\subsubsection{Sparse torso for a guessed shadow.}

For a finite set \(A\), define \(\operatorname{Conn}_{k+1}(A)\) as follows.  If
\(|A|\le k+1\), then \(\operatorname{Conn}_{k+1}(A)\) is the clique on \(A\).
Otherwise, order
\[
    A=\{a_0,a_1,\ldots,a_{r-1}\}
\]
cyclically, and join two vertices whenever their cyclic distance is at most $k + 1$. Clearly, the graph $\operatorname{Conn}_{k+1}(A)$ is $(k + 1)$-connected whenever $|A| > k + 1$.  

Let \(Z\subseteq V(G)\setminus W\).  Let
\[
    C_1,\ldots,C_p
\]
be the connected components of \(G[Z]\), and define
\[
    A_j=N_G(C_j) .
\]
The sparse torso \(G^Z\) is obtained from \(G-Z\) by adding
\(\operatorname{Conn}_{k+1}(A_j)\) for every \(j\in[p]\).

We also update terminal pairs.  For each component \(C_j\), choose a representative
set
\[
    R_j\subseteq A_j,
    \qquad
    |R_j|=\min\{k+1,|A_j|\}.
\]
If an endpoint of a terminal pair lies in \(C_j\), we replace that endpoint by all
vertices of \(R_j\).  Thus, if one endpoint lies in \(C_j\), the pair is replaced
by \(|R_j|\) pairs; if the two endpoints lie in distinct components \(C_j,C_{j'}\),
the pair is replaced by all pairs in \(R_j\times R_{j'}\).  If both endpoints of
an original terminal pair lie in the same component \(C_j\), we discard this
guess \(Z\) as a fixed no-instance.  Duplicate pairs are removed, while
equal-endpoint pairs are retained and handled by the cleaning rule \Cref{lem:clean-equal-endpoint-pairs}.

\begin{lemma}[Sparse shadow-torso equivalence]%
Let \(Y\subseteq V(G)\setminus (W\cup Z)\) with \(|Y|\le k\).  Suppose the torso
\(G^Z\) is retained.  Then:
\begin{enumerate}
    \item any connection between vertices of \(V(G)\setminus Z\) in \(G-Y\)
    gives a connection in \(G^Z-Y\);
    \item any connection in \(G^Z-Y\) expands to a connection in \(G-Y\);
    \item if \(Y\) solves the torso instance, then \(Y\) solves the original
    instance.
\end{enumerate}
Furthermore, if \(X\) is a solution of the original instance and
\[
    \operatorname{sh}_G(X)\subseteq Z,
    \qquad
    Z\cap X=\emptyset,
\]
then \(X\) is a shadowless solution of the torso instance.
\end{lemma}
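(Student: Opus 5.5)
The plan is to prove the three items and the final claim in order, since each uses the previous one. The basic tool is a component-by-component path replacement argument. The key structural fact is that each $A_j=N_G(C_j)$ has $A_j\cap Z=\emptyset$: since $C_j$ is a component of $G[Z]$, any neighbour of $C_j$ outside $C_j$ lies outside $Z$. Also $\operatorname{Conn}_{k+1}(A_j)$ minus any set of at most $k$ vertices stays connected. For $|A_j|\le k+1$ this is a clique. Otherwise the graph is $(k+1)$-connected.

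For item 1, I would take a path $P$ in $G-Y$ between two vertices of $V(G)\setminus Z$ and split it into maximal subpaths inside $Z$. Each such subpath lies in a single component $C_j$. Its two flanking vertices $a,b$ lie in $A_j\setminus Y$, because $P$ avoids $Y$. Since $|Y\cap A_j|\le k$, the vertices $a$ and $b$ are connected in $\operatorname{Conn}_{k+1}(A_j)-Y$. I would replace each excursion into $Z$ by such a path; the edges of $P$ outside $Z$ are kept, as they are edges of $G-Z$.

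For item 2, I would take a path in $G^Z-Y$ and expand every added edge $ab$ of some $\operatorname{Conn}_{k+1}(A_j)$ into $a$, a path through $C_j$, and $b$. This works because $C_j$ is connected, $a,b\in N_G(C_j)$, and $C_j\cap Y=\emptyset$ since $Y\cap Z=\emptyset$.

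For item 3, let $Y$ solve the torso instance and take any original terminal pair $(s,t)$. Suppose it were connected in $G-Y$. If both endpoints lie outside $Z$, item 1 gives a connection in $G^Z-Y$, contradicting that $Y$ separates the unchanged pair $(s,t)$. If $s\in C_j$, the path leaves $C_j$ through some vertex of $A_j\setminus Y$; if it never leaves, the two endpoints lie in the same component, and that case was discarded. That exit vertex connects in $G^Z-Y$ to every surviving vertex of $A_j$, hence to some vertex of $R_j\setminus Y$. I would then argue as follows.
\begin{itemize}
\item If $|R_j|=k+1$, some representative survives.
\item If $R_j=A_j$, the exit vertex itself is a surviving representative.
\end{itemize}
The same reasoning applies at $t$, so some updated pair in $R_j\times R_{j'}$ or $R_j\times\{t\}$ is connected in $G^Z-Y$, a contradiction. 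The multiway-cut condition on $W$ is handled exactly like a pair with both ends outside $Z$, since $W\cap Z=\emptyset$.

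For the final claim, let $X$ be a solution with $\operatorname{sh}_G(X)\subseteq Z$ and $Z\cap X=\emptyset$. Feasibility in the torso instance uses item 2. A connection in $G^Z-X$ between an updated pair $(a,b)$ with $a\in R_j$, say, expands to one in $G-X$. Since $a\in N_G(C_j)$, $C_j\cap X=\emptyset$ and $C_j$ is connected, $a$ connects to every vertex of $C_j$, in particular to the original endpoint $s$. This would yield an $s$–$t$ connection in $G-X$, which is impossible. Shadowlessness: take any $v\in V(G^Z)\setminus X$. Then $v\notin Z\supseteq\operatorname{sh}_G(X)$, so $v$ reaches $W$ in $G-X$, and $W$ lies outside $Z$. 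Item 1 then gives the connection in $G^Z-X$.

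The main obstacle I expect is the endpoint-replacement bookkeeping in item 3 and in the final claim. There I must make sure a surviving representative is always reachable, which needs both the size choice of $R_j$ and the connectivity of $\operatorname{Conn}_{k+1}(A_j)-Y$. I also need to check that the same-component discard never loses a yes-instance: under the hypotheses, a solution $X$ avoiding $Z$ cannot have both ends of a pair inside one connected $C_j$.
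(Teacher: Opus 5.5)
Your proposal is correct and follows essentially the same route as the paper. Item~1 replaces each excursion into a component $C_j$ by a path in $\operatorname{Conn}_{k+1}(A_j)-Y$, item~2 expands each connector edge through the connected $C_j$, terminal-pair soundness uses the surviving-representative argument based on $|R_j|=\min\{k+1,|A_j|\}$, and the final claim combines item~2 with attaching endpoints through $C_j$, plus item~1 for shadowlessness; your checks that $A_j\cap Z=\emptyset$ and that the same-component discard cannot fire when $X\cap Z=\emptyset$ also match the paper's reasoning.
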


\begin{proof}
For the first claim, consider a path in \(G-Y\) between two vertices of
\(V(G)\setminus Z\).  Whenever the path has a maximal subpath whose internal
vertices lie in some component \(C_j\subseteq Z\), its entry and exit vertices
belong to \(A_j\setminus Y\).  By the connector lemma, these two surviving
boundary vertices are connected in \(\operatorname{Conn}_{k+1}(A_j)-Y\).  Replacing
every such subpath gives a connection in \(G^Z-Y\).

For the second claim, expand every connector edge on \(A_j\) to a path through
the connected component \(C_j\).  Since \(C_j\subseteq Z\) and
\(Y\cap Z=\emptyset\), these expanded paths avoid \(Y\).  Thus every connection
in \(G^Z-Y\) expands to a connection in \(G-Y\).

We now prove soundness for terminal pairs.  Suppose some original terminal pair
\((s,t)\in\mathcal T\) is connected in \(G-Y\).  If neither endpoint lies in
\(Z\), the first claim gives a connected updated pair in \(G^Z-Y\).  Suppose
\(s\in C_j\subseteq Z\) and \(t\notin Z\).  The path from \(s\) to \(t\) exits
\(C_j\) through some vertex \(a\in A_j\setminus Y\).  If \(|A_j|\le k+1\), then
\(R_j=A_j\), so \(a\) itself is a representative.  If \(|A_j|>k+1\), then
\(|R_j|=k+1\), and since \(|Y|\le k\), some \(r\in R_j\setminus Y\) survives.
The connector lemma gives a path from \(r\) to \(a\) in
\(\operatorname{Conn}_{k+1}(A_j)-Y\), so the updated pair \((r,t)\) is connected
in \(G^Z-Y\).  The case in which both endpoints lie in distinct components of
\(G[Z]\) is identical at both ends.  The case in which both endpoints lie in the
same component of \(G[Z]\) was discarded.  Therefore, if \(Y\) separates all
updated terminal pairs in the torso, then \(Y\) separates all original terminal
pairs in \(G\).  The same expansion argument shows that separating distinct
vertices of \(W\) in the torso implies separating them in \(G\).  Hence \(Y\)
solves the original instance.

Finally, assume \(X\) is a solution of the original instance with
\(\operatorname{sh}_G(X)\subseteq Z\) and \(Z\cap X=\emptyset\).  The compatible
torso is not discarded: if both endpoints of a terminal pair lay in the same
component of \(G[Z]\), then they would be connected in \(G-X\), contradicting
that \(X\) solves the original instance.  If an updated terminal pair were
connected in \(G^Z-X\), then expanding connector edges through the corresponding
components of \(Z\), and attaching the endpoint-to-representative paths inside
\(Z\), would give a connection between the original terminal-pair endpoints in
\(G-X\), again a contradiction.  Distinct vertices of \(W\) are handled by the
same expansion argument.  Thus \(X\) is feasible for the torso instance.

It remains to prove shadowlessness.  Let \(u\in V(G^Z)\setminus X\).  Since
\(u\notin Z\), and \(\operatorname{sh}_G(X)\subseteq Z\), the vertex \(u\) has a
path to some vertex of \(W\) in \(G-X\).  Replacing every maximal subpath through
a component of \(G[Z]\) by the corresponding connector path gives a path from
\(u\) to \(W\) in \(G^Z-X\).  Hence every vertex of \(G^Z-X\) reaches \(W\), and
\(X\) is shadowless in the torso.
\end{proof}

\begin{algorithm}[t]
\caption{\(\textsc{ShadowRemoval}(G,\mathcal T,W,k)\)}
\label{alg:shadow-removal}
\begin{algorithmic}[1]
\State Apply the shadow covering lemma from \Cref{lemma:shadowremoval} to obtain a family of sets $\mathcal{Z}$.
\For{each \(Z\in\mathcal Z\)}
    \State Construct the sparse torso \(G^Z\).
    \State Update terminal pairs using the representative sets \(R_j\).
    \If{some terminal pair has both endpoints in the same component of \(G[Z]\)}
        \State discard this \(Z\)-branch.
    \Else
        \State remove duplicate terminal pairs.
        \State clean equal-endpoint terminal pairs by the cleaning rule \Cref{lem:clean-equal-endpoint-pairs}.
        \State output the resulting compression instance.
    \EndIf
\EndFor
\end{algorithmic}
\end{algorithm}

\begin{theorem}[Near-linear sparse shadow removal]%
Given a compression instance \(I=(G,\mathcal T,W,k)\) in which every vertex
outside \(W\) is deletable, Algorithm~\ref{alg:shadow-removal} outputs, with the
success probability of the single-sink important-separator oracle~\Cref{thm:singlesink}, at most
\[
    2^{O(k \log k)}\log |V(G)|
\]
compression instances in time
\[
    2^{O(k \log k)}\widetilde O(|V(G)|+|E(G)|+|\mathcal T|).
\]
The output family satisfies the following:
\begin{enumerate}
    \item if \(I\) has a solution of size at most \(k\), then at least one output
    instance has a shadowless solution of size at most \(k\);
    \item if any output instance has a solution of size at most \(k\), then \(I\)
    has a solution of size at most \(k\).
\end{enumerate}
\end{theorem}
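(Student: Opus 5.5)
The plan is to assemble the theorem from three ingredients: \Cref{lemma:shadowremoval}, the sparse shadow-torso equivalence lemma, and the cleaning rule \Cref{lem:clean-equal-endpoint-pairs}. The reasoning follows Algorithm~\ref{alg:shadow-removal} one step at a time.

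\textbf{Counting and running time.} \Cref{lemma:shadowremoval} produces a family $\mathcal Z$ with $r\le 2^{O(k\log k)}\log|V(G)|$. It does so in time $2^{O(k\log k)}\widetilde O(|V(G)|+|E(G)|+|\mathcal T|)$, calling the single-sink oracle of \Cref{thm:singlesink} once per coloring. Each $Z$ yields at most one output instance, so the count bound on the number of instances is immediate.

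For the time per $Z$, I would bound the work as follows.
\begin{itemize}
\item Finding the components $C_j$ of $G[Z]$ and their boundaries $A_j$ costs $O(|V|+|E|)$.
\item The boundaries satisfy $\sum_j|A_j|\le |E(G)|$, since each boundary vertex of $C_j$ is witnessed by a distinct edge into $C_j$.
\item Hence all the graphs $\operatorname{Conn}_{k+1}(A_j)$ together add $O(k|E|)$ edges.
\item Each terminal pair is replaced by at most $(k+1)^2$ pairs.
\item Duplicate removal (by sorting or hashing) and cleaning are linear in the resulting size.
\end{itemize}
Multiplying by $r$ keeps the total time within the stated bound.

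\textbf{Soundness (item 2).} If an output instance has a solution $Y'$ of size at most its budget, I first lift $Y'$ through the cleaning rule by adding back the force-deleted vertices, using \Cref{lem:clean-equal-endpoint-pairs}. This gives a solution $Y$ of the torso instance of size at most $k$ with $Y\subseteq V(G^Z)\setminus W$, so $Y\cap Z=\emptyset$. Item 3 of the sparse shadow-torso lemma then shows that $Y$ solves $I$. Discarded branches produce no outputs, so they do not affect soundness.

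\textbf{Completeness (item 1).} If $I$ is a yes-instance, the imported theorem of \cite{marx2011fixed,chu2026faster} gives a $k$-shadow-removable solution $X$. \Cref{lemma:shadowremoval} gives some $Z_i$ with $\operatorname{sh}_G(X)\subseteq Z_i$ and $Z_i\cap X=\emptyset$. The final part of the sparse torso lemma says that this branch is not discarded and that $X$ is a shadowless solution of $G^{Z_i}$ with the updated pairs.

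Removing duplicate pairs clearly preserves both feasibility and shadowlessness. The cleaning rule preserves a shadowless solution by \Cref{lem:clean-equal-endpoint-pairs}. It also never discards this branch: every equal-endpoint pair $(a,a)$ that is present must be separated by $X$, so $a\in X$, $a\notin W$, and there is budget left to delete $a$.

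\textbf{Main obstacle.} The delicate point is the success probability and the union bound. \Cref{lemma:shadowremoval} is only correct if every single-sink call succeeds. I would boost each call to high probability with respect to $n=|V(G)|$ and union bound over the $2^{O(k\log k)}\log n$ colorings, which gives the claimed ``success probability of the oracle''. The other point to verify is that computing $Z(c)$ after contracting $W$ to $\tau$ matches the definition with $W$ undeletable. I would rely on the footnote convention for vertices not connected to $\tau$.
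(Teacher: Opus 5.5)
Your proposal is correct and follows the paper's proof essentially step for step. The count and time bounds come from the shadow-cover lemma together with linear-size torso construction per $Z$. Soundness comes from undoing the cleaning and applying item~3 of the sparse shadow-torso lemma. Completeness comes from a $k$-shadow-removable solution $X$, a $Z$ with $\operatorname{sh}_G(X)\subseteq Z$ and $Z\cap X=\emptyset$, the final part of the torso lemma, and exactness of cleaning.

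The union bound you flag as the main obstacle is not actually needed. Soundness holds for an arbitrary $Z\subseteq V(G)\setminus W$, so completeness only requires the single oracle call for the coloring matching the fixed $X$ to succeed. That is exactly the ``success probability of the oracle'' in the statement.
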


\begin{proof}
Assume first that \(I\) is a yes-instance.  By the shadow-removal theorem of~\cite{chu2026faster}, \(I\) has a solution \(X\) of size at most \(k\) that is
\(k\)-shadow-removable with respect to \(W\).  By the shadow-cover
lemma (\Cref{lemma:shadowremoval}), there is a set \(Z\in\mathcal Z\) such that
\[
    \operatorname{sh}_G(X)\subseteq Z,
    \qquad
    Z\cap X=\emptyset .
\]
Build the sparse torso for this \(Z\).  By the sparse shadow-torso equivalence
lemma, before cleaning equal-endpoint pairs, the same set \(X\) is a shadowless
solution of the torso instance.  The cleaning rule is exact and preserves
shadowlessness, so the cleaned output instance has a shadowless solution of size
at most \(k\).

Conversely, suppose some output instance has a solution.  Undo the cleaning by
adding back the force-deleted vertices; by exactness of the cleaning rule, this
gives a solution of the corresponding uncleaned torso instance with total size at
most \(k\).  The sparse shadow-torso equivalence lemma then maps this solution
back to a solution of the original instance \(I\).

It remains to bound the number of outputs and the running time.  The number of
sets \(Z\in\mathcal Z\) is at most \(2^{O(k \log k)}\log |V(G)|\).  For a fixed
\(Z\), the components \(C_j\) of \(G[Z]\) and their boundaries \(A_j\) are found
by one graph scan.  The connectors add
\[
    \sum_j O(k|A_j|)
    =
    O(k|E(G)|)
\]
edges, because each boundary incidence can be charged to an original edge
incident with \(Z\).  Each terminal pair creates at most \((k+1)^2\) updated
pairs.  Thus each torso has size and construction time bounded by an
FPT-in-\(k\) factor times \(|V(G)|+|E(G)|+|\mathcal T|\).  Multiplying by the
size of \(\mathcal Z\), and including the time to compute \(\mathcal Z\), gives
\[
    2^{O(k \log k)}\widetilde O(|V(G)|+|E(G)|+|\mathcal T|).
\]
\end{proof}

\subsection{Putting the Pieces Together.}
\label{subsec:multicut-putting-together}

First reject if some terminal pair has equal endpoints. If $k=0$, the algorithm
directly checks whether every terminal pair is already disconnected and accepts
exactly in that case. Assume below that $k\ge 1$.

Solve the standard LP oracle with the tradeoff parameter $\epsilon$.
Reject if the solver exceeds the iteration bound guaranteed when
$\OPT^{\mathrm{MC}}_{\mathrm{LP}}\le k$. Let
\[
L=\sum_{e\in E}x_e
\]
be the returned LP value which is at most $\alpha(\epsilon) = 2^{O(1/\epsilon)}$ (the factor of \Cref{thm:FastMultiCutLPSolver}) times the optimal LP value. If
\[
L>\alpha(\epsilon)k,
\]
reject. This rejection is safe: if $\operatorname{OPT}(G,\mathcal{T})\le k$,
then $\OPT^{\mathrm{MC}}_{\mathrm{LP}}(G,\mathcal{T})\le k$, and the LP oracle would give
$L\le \alpha(\epsilon)k$.

If all terminal pairs are already disconnected, accept. Otherwise $L>0$, and
the region-growing theorem returns a reference edge multicut $F_0$ with
\[
|F_0|
\le 144L\ln\max\{2,\lceil L\rceil\}
\le 144\alpha(\epsilon)k
    \ln\max\{2,\lceil \alpha(\epsilon)k\rceil\}.
\]
Let
\[
K_0:=144\alpha(\epsilon)k
    \ln\max\{2,\lceil \alpha(\epsilon)k\rceil\}.
\]
Build the protected-subdivision instance and let
\[
W_0=\{x_e:e\in F_0\}.
\]
Enumerate all compression guesses $(Y,\Pi)$ from
Section~\ref{subsec:multicut-compression}. For every guessed instance, apply
sparse shadow removal from \Cref{sec:shadowremoval}, sparse bipedalization from \Cref{sec:bipedalization}, and the bipedal solver from
Theorem~\ref{thm:near-linear-bipedal-solver} of \Cref{sec:almost2sat}. Accept if and only if some final
bipedal solver accepts.

Soundness is as follows. If the algorithm accepts, then some bipedal solver
accepted. Theorem~\ref{thm:near-linear-bipedal-solver} gives an actual solution
for that bipedal compression instance. Sparse bipedalization lifts this solution
to the shadow-removal output instance. Sparse shadow removal lifts it to the
guessed compression instance. Lemma~\ref{lem:compression-correctness} turns the
guessed compression solution into a vertex multicut of the protected-subdivision
instance of size at most $k$. Lemma~\ref{lem:protected-subdivision-equivalence}
maps that vertex multicut to an edge multicut of the original graph of size at
most $k$. Therefore the original Edge Multicut instance is a yes-instance.

Completeness is the reverse chain. Suppose the original graph has an edge
multicut $X_E$ of size at most $k$. The LP rejection above does not occur, and
the region-growing step produces $F_0$ with $|W_0|=|F_0|\le K_0$. By
Lemma~\ref{lem:protected-subdivision-equivalence}, $X_E$ corresponds to a
vertex multicut $X$ in the protected-subdivision graph. Among the enumerated
compression guesses, choose $Y=X\cap W_0$ and let $\Pi$ be the partition of
$W_0\setminus Y$ induced by connected components of the protected graph minus
$X$. Lemma~\ref{lem:compression-correctness} states that $X\setminus Y$ is a
solution of that guessed compression instance. Sparse shadow removal gives an
output with a shadowless solution. Sparse bipedalization gives a bipedal output
with a shadowless solution. Theorem~\ref{thm:near-linear-bipedal-solver}
accepts every bipedal instance that has a shadowless solution. Hence at least
one branch accepts.

Recall that $N = |V(G)| +| E(G)| + |\mathcal{T}|$ is the input-size. Once we fix $K_0$, the parameterized part of the running time of the algorithm is bottlenecked by the guessing we perform for the compression instance, which takes time $|K_0|^{O(|K_0|)}$. There are at most $k^{O(k)}$ branches of the branching tree in the shadow-removal and bipedalization steps, and for each branch, the running time is at most $\Otil(k^{O(k)}|W_0|N)$. The first step where we solve the LP (\Cref{thm:FastMultiCutLPSolver}) takes time $kN^{1 + \poly(\epsilon)}$.

Thus the total running time is bounded by
\[
K_0^{O(K_0)}
\cdot
\widetilde{O}\!\left(k^{O(k)}N\right)
+
\widetilde{O}\!\left(kN^{1+\poly(\epsilon)}\right).
\]
Choose $\frac{1}{\epsilon} = \omega(1)$ to be a small growing function of $n$. Then,
\[
K_0^{O(K_0)}=f(k)N^{o(1)}.
\]
Consequently, the total running time is
\[
f(k)N^{1+o(1)}.
\]

Note that one can choose any growing function of $n$, as small as possible. For a concrete choice, observe that for $\epsilon = \Theta(\frac{1}{\log \log \log n})$ chosen so that $\alpha(\epsilon)\le \log\log n$, $K_0 = O(k\log\log N\cdot\log(k\log\log N))$, so $|K_0|^{O(|K_0|)}$ is bounded by $g(k)N^{o(1)}$ via a standard trick: either $\log \log N \leq k$, in which case we bound by $(k^2 \log k)^{O(k^2 \log k)}$, or $k \leq \log \log N$ in which case the whole expression is $N^{o(1)}$.

%% file: appendix_fast_multicut.tex
\section{An Efficient Primal Solver for the Edge Multicut LP.}
\label{sect:FastMulticut}

Let \(G=(V,E)\) be an undirected graph, let
\[
        \mathcal T=\{(s_i,t_i): i\in[q]\}
\]
be a set of terminal pairs, and let \(c\in \mathbb Z_{\ge 1}^E\) be
edge costs.  We write
\[
        N:=\max\{|E|+q,2\}.
\]
The standard fractional edge-multicut LP is
\[
\OPT^{\mathrm{MC}}_{\mathrm{LP}}
=
\min_{x\in\R_{\ge 0}^E}
\left\{
    \sum_{e\in E} c_e x_e
    :
    \dist_x(s_i,t_i)\ge 1
    \text{ for every } i\in[q]
\right\}.
\]

We use the convention that \(\dist_x(s_i,t_i)=+\infty\) if
\(s_i\) and \(t_i\) are disconnected.

\begin{theorem}[Cost-budgeted non-concurrent flow oracle, \cite{haeupler2024low}]%
\label{thm:LowStepFlowEmulator}%
Fix any constant \(\eps\in(0,1)\).  There is a constant
\[
        A_\eps=2^{1/\eps}
\]
and an algorithm with the following guarantee.

Given an undirected graph with capacities \(u_e\ge 1\), nonnegative edge
costs \(b_e\ge 0\), terminal pairs \(\mathcal T\), and a budget \(B\ge 0\),
the algorithm returns a capacity-respecting non-concurrent flow \(H\)
routing only between terminal pairs in \(\mathcal T\), satisfying
\[
        \sum_{e\in E} b_e H(e)\le B,
\]
and
\[
        \val(H)
        \ge
        \frac{1}{A_\eps}\,\OPT_{\mathrm{flow}}(B),
\]
where \(\OPT_{\mathrm{flow}}(B)\) is the maximum value of any
capacity-respecting non-concurrent flow of \(b\)-cost at most \(B\).

One call takes
\[
        \widetilde O(N^{1+\operatorname{poly}(\eps)})
\]
time. The oracle returns the flow $H$ in the sense that it gives the aggregate edge
loads \(H(e)\).
\end{theorem}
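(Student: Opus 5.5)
Since this statement is imported from \cite{haeupler2024low}, the plan is to recover it from their low-step flow emulator machinery rather than reprove that machinery. We treat the cost-budgeted non-concurrent flow problem as a packing LP over paths $P$ between terminal pairs of $\mathcal T$:
\[
\max \sum_P f_P \quad\text{s.t.}\quad \sum_{P\ni e} f_P\le u_e\ \ \forall e,\qquad \sum_P b(P)\, f_P\le B,\qquad f\ge 0 .
\]
This LP has $|E|+1$ packing constraints. The first step is to rescale it so that the single cost constraint looks like one extra ``edge'' with capacity $B$, and each path $P$ consumes $b(P)$ units of that edge.

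Next we run a Garg--K\"onemann / multiplicative-weights scheme on this LP. The scheme keeps a dual weight $y_e$ for every edge and a dual weight $\phi$ for the budget constraint. In each step we route flow along (approximately) shortest terminal-pair paths with respect to the combined length
\[
\ell(e)=\frac{y_e}{u_e}+\frac{\phi\, b_e}{B}.
\]
The standard analysis has one property we rely on: if every routed path has length within a factor $A_\eps$ of the true shortest terminal-pair path, then the output flow, after scaling down by the maximum congestion, is feasible for both the capacities and the budget $B$, and its value is at least $\OPT_{\mathrm{flow}}(B)/A_\eps$ (up to a $1+O(\eps)$ factor, which we absorb into $A_\eps$). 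So the correctness and the approximation guarantee reduce entirely to having an approximate shortest-path oracle for the changing length function $\ell$. Because we only need the aggregate loads $H(e)$, the flow never has to be stored path by path, and so feasibility of the output can be checked directly from the loads.

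The main obstacle is making this oracle fast enough: we need it under almost-linearly many multiplicative length updates, in total time $\widetilde O(N^{1+\poly(\eps)})$. For this I would invoke the length-constrained expander hierarchy and low-step emulator of \cite{haeupler2024low}. Their emulator has $N^{1+\poly(\eps)}$ size and $O(1/\eps)$-step paths, and it preserves distances up to a factor $2^{O(1/\eps)}$. This lets us batch the multiplicative-weights updates along emulator edges and recompute approximate distances only $N^{\poly(\eps)}$ times.

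I would first try to incorporate the cost $b$ by adding it directly into the edge lengths fed to their length-constrained non-concurrent flow routine, with a binary search on $\phi$ to enforce the budget. If instead the budget constraint has to be handled separately, the fallback is a Lagrangian search over $\phi$ with $O(\log N)$ oracle calls. Either way the approximation factor is $2^{O(1/\eps)}$, which we rename $A_\eps$, and the $\widetilde O$ absorbs the extra logarithmic factors.
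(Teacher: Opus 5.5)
The paper does not prove this statement. It is imported as a black box from \cite{haeupler2024low}, and a proof needs only a pointer to the cost-budgeted (min-cost) multicommodity flow result there, with $\eps$ rescaled by a constant so that a $2^{O(1/\eps)}$ factor becomes $2^{1/\eps}$ while the time stays $\widetilde O(N^{1+\poly(\eps)})$. Insofar as your plan is ``cite that machinery,'' it matches the paper.

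The derivation you build around the citation, however, does not work as written. The time bound is the only nontrivial part, and that is where the gap lies.

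\begin{itemize}
\item The Garg--K\"onemann reduction to an $A_\eps$-approximate shortest terminal-pair oracle is fine; the budget is just one more packing row.
\item You then need that oracle under the lengths $\ell=y/u+\phi b/B$. These are generated adaptively, they change after every augmentation, and there can be $\widetilde\Theta(|E|)$ augmentations.
\item A low-step emulator does not supply this. It is built once for the input capacities and costs. Its guarantee is that flows of $G$ reroute in the emulator with few steps and bounded congestion/cost slack, and that emulator flows embed back into $G$. It says nothing about distances under length functions produced later on $E(G)$.
\item Your assertion that one can ``batch updates along emulator edges and recompute distances only $N^{\poly(\eps)}$ times'' presupposes a batched routing primitive that you neither state nor justify. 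That primitive would route, per phase and in almost-linear time, a maximal capacity-respecting set of near-shortest paths; this is where the low step count actually matters.
\item The non-concurrent requirement of minimizing over all $|\mathcal T|$ pairs at once is not addressed.
\item The step mapping the emulator flow back to loads $H(e)$ on $E(G)$ is missing. Its embedding congestion and cost slack must be scaled away to get exact capacity feasibility and $\sum_e b_eH(e)\le B$.
\end{itemize}

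You also cannot fill this hole from within the paper. \Cref{lemma:SectAFirstLemma} runs in the opposite direction: it obtains the approximate shortest terminal-pair (unit-flow) oracle \emph{from} this theorem, so using it to supply your oracle would be circular. Either present the statement as the citation it is, or drop the ``distance-preserving'' step and use the emulator for what it is for. That means solving the cost-constrained flow problem on the emulator itself, where near-optimal flows use a number of steps depending only on $\eps$. Then embed the result back into $G$, charging the slack to $A_\eps$.
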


\begin{theorem}[Efficient multicut-LP solver]%
\label{thm:SectAmain}%
Fix any constant \(\eps\in(0,1)\).  Under the black-box theorem above,
there is an algorithm which outputs an explicit vector
\(x\in\R_{\ge 0}^E\) satisfying
\[
        \dist_x(s_i,t_i)\ge 1
        \qquad\forall i\in[q],
\]
and
\[
        \sum_{e\in E} c_e x_e
        \le
        C_\eps\,\OPT^{\mathrm{MC}}_{\mathrm{LP}},
\]
where
\[
        C_\eps=2^{O(1/\eps)}.
\]
The running time is
\[
        \widetilde O\!\left(
            \OPT^{\mathrm{MC}}_{\mathrm{LP}}\,N^{1+\operatorname{poly}(\eps)}
        \right),
\]
assuming polynomially bounded numerical precision.
\end{theorem}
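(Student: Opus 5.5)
This statement is the efficient multicut-LP solver built on the cost-budgeted flow oracle of \Cref{thm:LowStepFlowEmulator}. I plan to use a multiplicative-weights (Garg--K\"onemann style) scheme on the dual of the multicut LP, which is maximum multicommodity flow. Edge lengths will play the role of the primal variables $x_e$. Concretely, I maintain lengths $\ell_e$, initialized to $\delta/c_e$ for a small $\delta=N^{-O(1/\eps)}$.

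Each round proceeds as follows. I treat the current lengths as costs $b_e=\ell_e$ and set capacities $u_e=c_e$. I then call the oracle with a budget $B$ equal to a fixed fraction of the current potential $D=\sum_e c_e\ell_e$. The oracle returns a flow $H$ of $b$-cost at most $B$ and value at least $\OPT_{\mathrm{flow}}(B)/A_\eps$. I then update $\ell_e\gets \ell_e(1+\eta H(e)/c_e)$.

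The scheme rests on a duality observation. If the current $\ell$, scaled by $1/\min_i \dist_\ell(s_i,t_i)$, is not already a good primal solution, then there is a flow of small cost and large value. This is witnessed by routing $B/\dist_\ell(s_i,t_i)$ units along shortest paths scaled to respect capacities. So each round makes progress, multiplying the potential $D$ by at least $(1+\Omega(\eta/A_\eps))$ relative to the flow value routed.

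The standard potential argument then applies. The run stops once $D$ reaches $1$, or equivalently once every $\ell_e c_e$ becomes large. At that point, the best normalized length vector $x=\ell/\min_i\dist_\ell(s_i,t_i)$ seen during the run is feasible, since every terminal distance is at least $1$ by normalization. Its cost is within $A_\eps\cdot O(\log N/\eta)$ of the accumulated flow value, scaled down by the congestion, which is $O(\log N/\eta)$ by the usual argument. To turn this into the claimed $C_\eps=2^{O(1/\eps)}$ I must avoid the $\log N$ factor. My plan is to take $\eta$ constant and let the width bound come from the oracle's capacity-respecting property. The $\log_{1+\eta}(1/\delta)=O(\log N/\eps)$ term then cancels between numerator and denominator, as in Garg--K\"onemann, so the ratio is $O(A_\eps)=2^{O(1/\eps)}$.

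For the running time, each oracle call costs $\widetilde O(N^{1+\poly(\eps)})$. The number of calls is bounded by the total routed flow divided by the per-round progress, and I expect this to be $\widetilde O(\OPT^{\mathrm{MC}}_{\mathrm{LP}})$ because every call routes flow $\Omega(1)$ (after scaling) and the total flow is at most $\OPT\cdot\log N$. This gives the stated bound $\widetilde O(\OPT^{\mathrm{MC}}_{\mathrm{LP}}N^{1+\poly(\eps)})$. Computing $\min_i\dist_\ell(s_i,t_i)$ exactly each round would be too costly. Instead, I only evaluate the final candidate, and I guess the distance scale by doubling, using the budget $B$ as a proxy.

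\textbf{Main obstacle.} The hardest step is the per-round progress lemma in the presence of a budgeted, non-concurrent oracle that approximates only within $A_\eps$. I need to show that a returned flow with small cost and value $\ge\OPT_{\mathrm{flow}}(B)/A_\eps$ increases the potential enough, and that the normalization by shortest distances interacts correctly with the budget. My first attempt will be to set $B=D/2$ and show directly that $\OPT_{\mathrm{flow}}(B)\ge B/\min_i\dist_\ell$ by the shortest-path construction above. The progress inequality then follows by comparing $\sum_e H(e)\ell_e$, the increase in $D$, against this lower bound.
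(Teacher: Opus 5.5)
\textbf{There is a gap, at the step you flag as the main obstacle.} With $B=D/2$, the inequality $\OPT_{\mathrm{flow}}(B)\ge B/\alpha(\ell)$ is false, where $\alpha(\ell)=\min_i\dist_\ell(s_i,t_i)$. ``Scaling to respect capacities'' only lets you push $\min\{B/\alpha(\ell),\ \text{bottleneck capacity}\}$ units along a shortest path. That bottleneck can be $1$, and $B/\alpha(\ell)=D/(2\alpha(\ell))$ is large exactly when your current candidate is bad. So a budget tied to the potential $D$ does not force the oracle onto near-shortest paths. The cost per routed unit is then not $O(A_\eps)\,\alpha(\ell)$, and the potential argument cannot bound $\min_r D_r/\alpha_r$.

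Here is a concrete failure. Take one pair $s,t$, an edge $e_1=st$, and $k\ge 2$ internally disjoint $s$-$t$ paths with $L$ edges each, all of capacity $1$, so $\OPT^{\mathrm{MC}}_{\mathrm{LP}}=k+1$. All lengths start at $\delta$. Suppose the long-path edges currently have length $\lambda\ge\delta$. Then $B=D/2=(\delta+kL\lambda)/2$ and $\OPT_{\mathrm{flow}}(B)<1+k/2$. Sending $1/2$ unit along each long path (cost $kL\lambda/2\le B$, value $k/2$) is a legal oracle answer. Repeating this only lengthens the long paths, so $\alpha=\delta$ throughout. Every candidate then has cost $D/\alpha\ge 1+kL$, which is a factor about $L$ (polynomial in $N$) above optimum.

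\textbf{The missing idea is to set the budget at the scale of $\alpha(\ell)$, not $D$.} Your remark about guessing the distance scale by doubling points the right way, but it has to be the per-round rule. The paper runs the oracle with geometrically increasing $B$ and stops at the first budget whose flow has value at least $1/A_\eps$. Since $c_e\ge 1$, one unit fits on a shortest terminal path, so this happens at some $B\le 2\alpha(\ell)$. Rescaling then gives a value-one flow $F$ with $\alpha(\ell)\le C:=\ell\cdot F\le 2A_\eps\,\alpha(\ell)$ and $F(e)\le 1\le c_e$.

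The computable quantity $C_r$ then replaces $\alpha$ everywhere:
\begin{itemize}
\item The candidate $x=\rho\,\ell^{(r-1)}/C_r$ with $\rho=2A_\eps$ is feasible, and its cost is $\rho R_r$ with $R_r=D_{r-1}/C_r$.
\item The update $\ell_e\gets\ell_e(1+F(e)/(2c_e))$ gives $D_r=D_{r-1}+C_r/2$.
\item Combining this with $D_0\le N^{-3}$ and the congestion bound $t=O(\OPT^{\mathrm{MC}}_{\mathrm{LP}}\log N)$ gives $R_{r^\star}=O(\OPT^{\mathrm{MC}}_{\mathrm{LP}})$ for the round $r^\star$ minimizing $R_r$.
\end{itemize}
This also removes any need to compute $\alpha$. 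Your workaround of evaluating only the final iterate would not suffice anyway, since the Garg--K\"onemann analysis certifies only the best iterate, not the last one.
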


\begin{proof}
We first remove terminal pairs \((s_i,t_i)\) for which \(s_i\) and
\(t_i\) are disconnected, since they impose no constraint in the
multicut LP and cannot carry flow in the dual.  If no terminal pair
remains, then \(x=0\) is optimal.  Hence we may assume at least one
terminal pair is connected.  Since \(c_e\ge 1\), one unit of flow can be
sent along any terminal path, so
\[
        \OPT^{\mathrm{MC}}_{\mathrm{LP}}\ge 1.
\]
Set
\[
        \beta:=\OPT^{\mathrm{MC}}_{\mathrm{LP}}.
\]

The LP dual of the multicut LP is the maximum non-concurrent
multicommodity flow LP with edge capacities \(c_e\).  Therefore, by
strong LP duality, \(\beta\) is also the maximum value of a
capacity-respecting non-concurrent flow between the terminal pairs.

We now prove two lemmas and then combine them.

\begin{lemma}[Approximate shortest unit-flow oracle]%
\label{lemma:SectAFirstLemma}%
For every length vector \(\ell\in\R_{\ge 0}^E\), define
\[
        \alpha(\ell)
        :=
        \min_i \dist_\ell(s_i,t_i).
\]
There is an oracle which returns a value-one terminal-pair flow \(F\)
such that
\[
        \ell\cdot F
        \le
        \rho\,\alpha(\ell),
\]
where
\[
        \rho:=2A_\eps.
\]
Moreover,
\[
        F(e)\le 1\le c_e
        \qquad\forall e\in E.
\]
One call to this oracle takes \(\widetilde O(N^{1+\operatorname{poly}(\eps)})\) time.
\end{lemma}

\begin{proof}
Let \(\alpha=\alpha(\ell)\).  A shortest terminal path of
\(\ell\)-length \(\alpha\) supports one unit of capacity-respecting flow,
because \(c_e\ge 1\) for every edge.  Hence for every budget
\(B\ge \alpha\),
\[
        \OPT_{\mathrm{flow}}(B)\ge 1.
\]

Run the budgeted flow oracle for geometrically increasing budgets \(B\),
and take the first budget for which the returned flow has value at least
\(1/A_\eps\).  By the previous paragraph, such a budget is found with
\[
        B\le 2\alpha.
\]
Let \(H\) be the flow returned at this budget and let
\[
        v:=\val(H).
\]
Then \(v\ge 1/A_\eps\) and
\[
        \ell\cdot H\le B.
\]
Scale \(H\) by \(1/v\), and call the resulting value-one flow \(F\).
Then
\[
        \ell\cdot F
        =
        \frac{\ell\cdot H}{v}
        \le
        A_\eps B
        \le
        2A_\eps\,\alpha
        =
        \rho\,\alpha.
\]

After deleting cycles, \(F\) is a convex combination of simple terminal
paths with total weight \(1\).  Therefore no edge is used by more than
one unit of total path weight, so
\[
        F(e)\le 1\le c_e
        \qquad\forall e\in E.
\]
The geometric search contributes only logarithmic overhead, hidden in
\(\widetilde O(\cdot)\).
\end{proof}

\begin{lemma}[Garg--Könemann primal certificate, \cite{DBLP:journals/siamcomp/GargK07}]%
\label{lemma:SectASecondLemma}%
Suppose we have an oracle which, for every \(\ell\ge 0\), returns a
value-one terminal flow \(F\) satisfying
\[
        \ell\cdot F\le \rho\,\alpha(\ell),
        \qquad
        F(e)\le c_e
        \quad\forall e.
\]
Then one can output a feasible multicut-LP vector \(x\) of cost
\[
        \sum_e c_e x_e
        \le
        O(\rho)\,\OPT^{\mathrm{MC}}_{\mathrm{LP}}
\]
using
\[
        O(\OPT^{\mathrm{MC}}_{\mathrm{LP}}\log N)
\]
oracle calls.
\end{lemma}

\begin{proof}
Set
\[
        \Delta:=N^{-4}
\]
and initialize
\[
        \ell^{(0)}_e:=\frac{\Delta}{c_e}.
\]
Let
\[
        D(\ell):=\sum_e c_e\ell_e.
\]
Thus
\[
        D(\ell^{(0)})\le N^{-3}.
\]

At iteration \(r\), call the oracle on \(\ell^{(r-1)}\), obtaining
\(F^{(r)}\).  Define
\[
        C_r:=\ell^{(r-1)}\cdot F^{(r)},
        \qquad
        D_{r-1}:=D(\ell^{(r-1)}),
\]
and record the ratio
\[
        R_r:=\frac{D_{r-1}}{C_r}.
\]
Then update
\[
        \ell^{(r)}_e
        :=
        \ell^{(r-1)}_e
        \left(
            1+\frac{F^{(r)}(e)}{2c_e}
        \right).
\]
Stop at the first iteration \(t\) such that
\[
        D(\ell^{(t)})\ge 1.
\]

Let
\[
        r^*\in\arg\min_{1\le r\le t} R_r.
\]
We output
\[
        x_e
        :=
        \frac{\rho\,\ell^{(r^*-1)}_e}{C_{r^*}}.
\]

We first prove feasibility.  The oracle guarantee gives
\[
        C_{r^*}
        \le
        \rho\,\alpha(\ell^{(r^*-1)}),
\]
so
\[
        \alpha(\ell^{(r^*-1)})
        \ge
        \frac{C_{r^*}}{\rho}.
\]
Therefore, for every terminal pair,
\[
        \dist_x(s_i,t_i)
        =
        \frac{\rho}{C_{r^*}}
        \dist_{\ell^{(r^*-1)}}(s_i,t_i)
        \ge 1.
\]

It remains to bound the cost.  The cost of \(x\) is
\[
        \sum_e c_e x_e
        =
        \rho\,\frac{D_{r^*-1}}{C_{r^*}}
        =
        \rho R_{r^*}.
\]
Thus it is enough to show that
\[
        R_{r^*}=O(\beta).
\]

Let
\[
        L:=\log_{3/2}(1/\Delta)=O(\log N).
\]
Consider any prefix of iterations before the stopping time.  Since
\(D(\ell)<1\) throughout such a prefix, each edge satisfies
\[
        \ell_e < \frac{1}{c_e}.
\]
Since \(\ell_e^{(0)}=\Delta/c_e\), the total multiplicative increase of
any edge before stopping is less than \(1/\Delta\).

Also, because \(F^{(r)}(e)\le c_e\),
\[
        1+\frac{F^{(r)}(e)}{2c_e}
        \ge
        \left(\frac32\right)^{F^{(r)}(e)/c_e}.
\]
Hence, before stopping,
\[
        \sum_r F^{(r)}(e)
        \le
        Lc_e
        \qquad\forall e.
\]
Therefore the average of the first \(T\) oracle flows, scaled by \(1/L\),
is capacity-respecting and has value \(T/L\).  Since \(\beta\) is the
maximum feasible flow value,
\[
        \frac{T}{L}\le \beta.
\]
So the loop stops after
\[
        t=O(\beta L)
        =
        O(\beta\log N)
\]
iterations.

Finally, because
\[
        D_r
        =
        D_{r-1}
        +
        \frac12 C_r,
\]
and \(R_{r^*}\le R_r=D_{r-1}/C_r\) for every \(r\), we have
\[
        D_r
        \le
        D_{r-1}
        \left(
            1+\frac{1}{2R_{r^*}}
        \right).
\]
Multiplying over all \(t\) iterations,
\[
        1
        \le
        D(\ell^{(t)})
        \le
        D(\ell^{(0)})
        \exp\!\left(\frac{t}{2R_{r^*}}\right).
\]
Since \(D(\ell^{(0)})\le N^{-3}\), this implies
\[
        R_{r^*}
        \le
        O\!\left(\frac{t}{\log N}\right)
        =
        O(\beta).
\]
Therefore
\[
        \sum_e c_e x_e
        =
        \rho R_{r^*}
        \le
        O(\rho)\beta
        =
        O(\rho)\OPT^{\mathrm{MC}}_{\mathrm{LP}}.
\]
\end{proof}

We now finish the proof of \Cref{thm:SectAmain}.  By \Cref{lemma:SectAFirstLemma}, the budgeted flow oracle \Cref{thm:LowStepFlowEmulator} gives an approximate shortest
unit-flow oracle with
\[
        \rho=2A_\eps
        =
        2^{1+1/\eps}.
\]
By \Cref{lemma:SectASecondLemma}, this oracle gives an explicit feasible multicut-LP
vector \(x\) of cost
\[
        O(\rho)\OPT^{\mathrm{MC}}_{\mathrm{LP}}
        \le
        C_\eps
        \OPT^{\mathrm{MC}}_{\mathrm{LP}}.
\]

The number of outer iterations is
\[
        O(\OPT^{\mathrm{MC}}_{\mathrm{LP}}\log N).
\]
Each iteration makes one approximate shortest unit-flow oracle call,
which costs
\[
        \widetilde O(N^{1+\operatorname{poly}(\eps)}).
\]
Therefore the total running time is
\[
        \widetilde O\!\left(
            \OPT^{\mathrm{MC}}_{\mathrm{LP}}\,N^{1+\operatorname{poly}(\eps)}
        \right).
\]
The output vector is explicit and has one coordinate per edge.
\end{proof}
\section{Sparse Bipedalization.}\label{sec:bipedalization}

This section gives the sparse torso version of the bipedalization step~\cite{chu2026faster}. We keep the invariant
that every vertex outside \(W\) is deletable. A compression instance $(G, \mathcal{T}, W, k)$ is bipedal if every connected
component \(C\) of \(G-W\) has
\[
    |N_G(C)\cap W|\le 2.
\]
Starting from a shadowless compression instance, we will output a bipedal compression instance.

We use the Node Multiway Cut LP with terminal set \(W\):
\[
\begin{array}{rll}
\min & \displaystyle \sum_{v\in V(G)\setminus W} d_v \\[1ex]
\mathrm{s.t.}
    & \displaystyle \sum_{v\in V(P)\setminus W} d_v \ge 1
    & \text{for every path \(P\) connecting two distinct terminals of \(W\),} \\[1ex]
    & d_v\ge 0
    & \forall v\in V(G)\setminus W .
\end{array}
\]
Its optimum is denoted \(\operatorname{OPT}^{\mathrm{MW}}_{\mathrm{LP}}(G,W)\).

\begin{theorem}[Half-integral Node Multiway Cut LP
{\normalfont\cite{iwata2016half,iwata20180}}]
\label{thm:node-multiway-cut-lp}
The Node Multiway Cut LP admits a half-integral optimal solution.
Moreover, given a threshold $k$, in time
\[
O\!\left(\poly(k)(|V(G)|+|E(G)|)\right),
\]
one can either compute a half-integral optimal solution when
$\operatorname{OPT}^{\mathrm{MW}}_{\mathrm{LP}}(G,W)\leq k$, or correctly certify that
\[
\operatorname{OPT}^{\mathrm{MW}}_{\mathrm{LP}}(G,W)>k.
\]
\end{theorem}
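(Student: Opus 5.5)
We do not intend to reprove this statement from scratch. It is imported from \cite{iwata2016half,iwata20180}, and we use it only as a black box in the bipedalization step. Still, the plan we would follow to justify it has two parts: a structural half-integrality argument, and an augmenting-path algorithm whose number of rounds is bounded by the LP value. The min--max duality behind both parts is the Garg--Vazirani--Yannakakis one. The dual of the Node Multiway Cut LP is a fractional packing of terminal-to-terminal paths with unit vertex capacities on $V(G)\setminus W$. We would aim to show that both primal and dual have half-integral optima of equal value, and that these can be computed together.

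For half-integrality we would proceed as follows. Fix an optimal $d$ and let $\dist_d(w,\cdot)$ denote the vertex-weighted distance from terminal $w$. Since every $w$--$w'$ path has length at least $1$, the open balls $B_w=\{v:\dist_d(w,v)<1/2\}$ are pairwise disjoint, up to vertices straddling the radius. Define $d'_v=1$ for vertices adjacent to two different balls but in none, $d'_v=1/2$ for vertices on the boundary of exactly one ball, and $0$ otherwise. We then need feasibility and $\sum d'_v\le\sum d_v$. The cleanest route to the cost bound is an averaging argument: a uniformly random radius $r\in(0,1/2]$ gives an integral-looking cut whose expected cost is at most the LP cost up to factor $2$, and combining the rounding at radius $r$ with the one at $1-r$ yields a half-integral solution whose cost is exactly bounded by $\sum_v d_v$. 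Complementary slackness with a half-integral dual packing confirms optimality.

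For the algorithm we would follow Iwata--Wahlstr\"om--Yoshida. We maintain a half-integral path packing, i.e.\ an integral packing in the graph with all capacities doubled, together with its residual structure. Each round searches for an augmenting path of the kind used in skew-symmetric or bidirected flow, adapted to terminal-to-distinct-terminal paths. If one exists, we augment, which raises the packing value by $1/2$. Otherwise the set of vertices reachable in the residual structure yields a half-integral cut $d$ of the same value, by the min--max relation above. Since the value rises by $1/2$ per round, after at most $2k+1$ augmentations either we stop with matching primal and dual optima, or the packing exceeds $k$, which certifies $\operatorname{OPT}^{\mathrm{MW}}_{\mathrm{LP}}(G,W)>k$ by weak duality.

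We expect the main obstacle to be performing each augmentation in $O(\poly(k)(|V(G)|+|E(G)|))$ time. Paths must connect \emph{distinct} terminals, so the residual graph is not an ordinary flow network. The first thing we would try is the standard reduction: contract all of $W$ into a source--sink pair on a doubled, skew-symmetric graph, and run a linear-time alternating BFS for a regular augmenting path, which is possible thanks to Goldberg--Karzanov type results. If a search fails, we would extract the half-integral cut from the final reachable set in linear time.
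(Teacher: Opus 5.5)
Citing \cite{iwata2016half,iwata20180} is exactly what the paper does: it imports this theorem as a black box and gives no proof. The justification you sketch, however, would not go through as written.

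The serious problem is the algorithmic step. Contracting all of $W$ into one source--sink pair erases the requirement that paths join \emph{distinct} terminals, and doubling or skew-symmetry does not change this. After contraction, a route that leaves a terminal $w$ and returns to $w$ is an ordinary augmenting path, even though it imposes no LP constraint. For example, let $w$ be the hub of $k+1$ triangles $w u_i v_i$, and let the only other terminal $w'$ be attached to $w$ through a single vertex $x$. Then $\operatorname{OPT}^{\mathrm{MW}}_{\mathrm{LP}}(G,W)=1$. Yet the contracted network has $k+2$ internally vertex-disjoint source--sink routes, so your procedure would falsely certify $\operatorname{OPT}^{\mathrm{MW}}_{\mathrm{LP}}>k$. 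Relatedly, the claim that a failed search yields a half-integral primal of equal value does not follow from LP duality plus primal half-integrality. A residual-reachability certificate that respects the distinct-endpoint constraint is the nontrivial content of the cited linear-time half-integral $A$-path packing algorithm. A generic Goldberg--Karzanov search on a terminal-contracted graph cannot supply it.

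The half-integrality argument also does not work as stated. Averaging two cuts whose costs you bound only by $2\sum_v d_v$ still gives only $2\sum_v d_v$. For $r<1/2$, balls of radius $1-r$ need not be disjoint. And a random-radius bound would at best show that \emph{some} radius works, not your fixed radius-$1/2$ vector $d'$. Either of two fixes works.
\begin{enumerate}
\item Keep the fixed-radius $d'$ and use an optimal dual path packing $f$ (Garg--Vazirani--Yannakakis). By complementary slackness, every $v$ with $d_v>0$ carries one unit of flow, and every flow path $P$ is a shortest terminal path of $d$-length exactly $1$. Along such a $P$ one checks that $\sum_{v\in P}d'_v=1$: a vertex of $P$ on the boundary of a third ball is forced onto both end-boundaries. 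Since $d'_v>0$ implies $d_v>0$, this gives $\sum_v d'_v=\sum_P f_P=\sum_v d_v$.
\item Avoid duality. Take $r$ uniform in $(0,1/2]$ and set $h^r_v=\min\{1,\tfrac12|\{w:v\in\partial B_w(r)\}|\}$, which is feasible. Let $a_1\le a_2\le\cdots$ be the terminal distances to $v$ excluding $d_v$. If $a_2<1/2$, every relevant interval extends past $1/2$, and $\mathbb{E}[h^r_v]=1-a_1-a_2\le d_v$. Otherwise $\mathbb{E}[h^r_v]\le d_v$ directly. Hence some $r$ gives a half-integral feasible solution of cost at most $\sum_v d_v$.
\end{enumerate}
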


For a finite set \(A\), set \(q=4k+1\) and define \(\operatorname{Conn}_q(A)\) as in
the shadow removal section, with \(q\) in place of \(k+1\).

\(\operatorname{Conn}_q(A)\) has \(O(q|A|)\) edges and can be built in
\(O(q|A|)\) time. If \(|A|>q\), then \(\operatorname{Conn}_q(A)\) is
\(q\)-vertex-connected. In particular, after deleting any set of at most \(4k\)
vertices, all surviving vertices of \(A\) remain connected in
\(\operatorname{Conn}_{4k+1}(A)\).

For \(v\in V(G)\setminus W\), the sparse one-vertex torso is
\[
    \operatorname{storso}_k(G,v)
    =
    G-v+\operatorname{Conn}_{4k+1}(N_G(v)).
\]
Terminal pairs involving \(v\) are updated as follows. Let \(A=N_G(v)\) and choose
\[
    R_v\subseteq A,
    \qquad
    |R_v|=\min\{k+1,|A|\}.
\]
A pair \((v,t)\) is replaced by all \((r,t)\) with \(r\in R_v\), and a pair
\((s,v)\) is replaced by all \((s,r)\) with \(r\in R_v\). Pairs not involving
\(v\) are kept. Equal-endpoint pairs produced by this operation are handled by
Lemma~\ref{lem:clean-equal-endpoint-pairs}.

\begin{lemma}[LP preservation below the cutoff]%
\label{lem:sparse-torso-lp-preservation}%
Let \(F\) be the full one-vertex torso of \(v\), obtained by making \(N_G(v)\) a
clique, and let \(S=\operatorname{storso}_k(G,v)\). For the Node Multiway Cut LP:
\begin{enumerate}
    \item \(\operatorname{OPT}^{\mathrm{MW}}_{\mathrm{LP}}(S,W)
        \le \operatorname{OPT}^{\mathrm{MW}}_{\mathrm{LP}}(F,W)\);
    \item if \(\operatorname{OPT}^{\mathrm{MW}}_{\mathrm{LP}}(S,W)\le 2k\), then
    \[
        \operatorname{OPT}^{\mathrm{MW}}_{\mathrm{LP}}(S,W)
        =
        \operatorname{OPT}^{\mathrm{MW}}_{\mathrm{LP}}(F,W).
    \]
\end{enumerate}
\end{lemma}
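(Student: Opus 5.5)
We compare feasible LP solutions of $S=\operatorname{storso}_k(G,v)$ and of the full torso $F$ directly. Both graphs have the same vertex set $V(G)\setminus\{v\}$ and the same terminal set $W$. Moreover, $E(S)\subseteq E(F)$, since $\operatorname{Conn}_{4k+1}(N_G(v))$ is a subgraph of the clique on $N_G(v)$.

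\textbf{Part 1.} Every path in $S$ between two distinct terminals is also a path in $F$. So any $d$ feasible for $F$ puts weight at least $1$ on every terminal path of $S$ as well. Hence every feasible solution for $F$ is feasible for $S$, which gives $\operatorname{OPT}^{\mathrm{MW}}_{\mathrm{LP}}(S,W)\le\operatorname{OPT}^{\mathrm{MW}}_{\mathrm{LP}}(F,W)$.

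\textbf{Part 2.} Assume $\operatorname{OPT}^{\mathrm{MW}}_{\mathrm{LP}}(S,W)\le 2k$. By Theorem~\ref{thm:node-multiway-cut-lp} we may take an optimal solution $d$ for $S$ that is half-integral. Its support $D=\{u: d_u>0\}$ satisfies $|D|\le 2\cdot 2k=4k$, since each support vertex contributes at least $1/2$.

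We show that the same $d$ is feasible for $F$. Take any path $P$ in $F$ between distinct terminals. If $P$ uses no clique edge outside $\operatorname{Conn}_{4k+1}(N_G(v))$, then $P$ is already a path in $S$ and $d(P)\ge 1$. Otherwise, first shortcut $P$ so that it meets $N_G(v)$ in at most two vertices $a,b$ that are consecutive along $P$ via the clique edge $ab$. Shortcutting only lowers $d(P)$, and the shortcut path is still a terminal path in $F$, so it suffices to handle this case.

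\begin{itemize}
\item If $a\in D$ or $b\in D$, then $d(P)\ge 1/2$. This is not enough by itself, so we do not rely on it.
\item The cleaner route is to construct a path in $S$ whose weight is at most $d(P)$, so feasibility in $S$ transfers to $P$.
\end{itemize}

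For that construction, use $(4k+1)$-connectivity. Delete $D\setminus\{a,b\}$, which has at most $4k$ vertices. When $|N_G(v)|>4k+1$, the vertices $a$ and $b$ remain connected in $\operatorname{Conn}_{4k+1}(N_G(v))-(D\setminus\{a,b\})$. When $|N_G(v)|\le 4k+1$, $\operatorname{Conn}_{4k+1}$ is already the full clique and $ab\in E(S)$. Either way, replace the edge $ab$ by an $a$--$b$ path inside $\operatorname{Conn}_{4k+1}(N_G(v))$ whose internal vertices have $d$-value $0$. The resulting walk $P'$ lies in $S$, connects the same two distinct terminals, and has $d(P')\le d(P)$: it adds only zero-weight vertices, and vertices shared with $P$ are counted at most once after we extract a simple path. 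Feasibility of $d$ in $S$ then gives $d(P)\ge d(P')\ge 1$.

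So $d$ is feasible for $F$, and $\operatorname{OPT}^{\mathrm{MW}}_{\mathrm{LP}}(F,W)\le\operatorname{OPT}^{\mathrm{MW}}_{\mathrm{LP}}(S,W)$. Combined with Part 1, the two optima are equal.

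The main subtlety is the case where $P$ uses several clique edges, or where internal vertices of the replacement path collide with terminals of $W$ that lie in $N_G(v)$. The shortcutting step handles the first issue, since consecutive uses of $N_G(v)$ collapse into one clique edge. Terminals in the replacement path are harmless: reaching a terminal earlier only yields a shorter terminal path of weight at most $d(P)$, possibly between different terminals. That shorter path must still connect two distinct terminals, which holds because it begins at the original start terminal and the first other terminal it reaches is distinct. The half-integrality step is what makes the support bound $|D|\le 4k$ available and is the crux of the argument.
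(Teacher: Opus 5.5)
Your proof is correct and follows essentially the same route as the paper: take a half-integral optimum $d$ for $S$, bound its support by $4k$, and use the $(4k+1)$-connectivity of $\operatorname{Conn}_{4k+1}(N_G(v))$ to reroute the non-connector clique edge through zero-mass vertices; then extract a simple path in $S$ between the same two distinct terminals. Your preliminary shortcut to a single clique edge, and your use of deletion-connectivity instead of Menger's disjoint paths, are only cosmetic differences from the paper, which replaces every clique edge and then extracts a simple path from the resulting walk. Your extra discussion of terminals inside the replacement path is harmless but not needed, since the LP constrains all paths between distinct terminals, including those through intermediate terminals.
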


\begin{proof}
Since \(S\) is a subgraph of \(F\) on the same vertex set, every \(W\)-\(W\) path
of \(S\) is also a \(W\)-\(W\) path of \(F\). Thus the LP for \(S\) has no more
constraints than the LP for \(F\), proving (1).

For (2), take a half-integral optimum solution \(d\) for \(S\) of value at most
\(2k\) (\Cref{thm:node-multiway-cut-lp}). Let
\[
    Z_d=\{u\in V(S)\setminus W : d_u>0\}.
\]
Every positive entry is at least \(1/2\), so \(|Z_d|\le 4k\).

We prove that \(d\) is feasible for \(F\). Let \(P\) be any path in \(F\) connecting
two distinct terminals of \(W\). If an edge of \(P\) is not a clique edge added by
the full torso, it is already present in \(S\). Consider a full-torso clique edge
\(ab\) with \(a,b\in N_G(v)\). If \(ab\) is an edge of the sparse connector, keep
it. Otherwise \(|N_G(v)|>4k+1\). The connector is \((4k+1)\)-vertex-connected, so
by Menger's theorem it contains \(4k+1\) internally vertex-disjoint \(a\)-\(b\)
paths. Since \(|Z_d|\le 4k\), at least one of these paths has no internal vertex
in \(Z_d\). Replace \(ab\) by such a path. The replacement adds no LP mass,
because every newly inserted nonterminal vertex has \(d\)-value zero, and terminal
vertices have no LP variable.

After all replacements, \(P\) becomes a \(W\)-\(W\) walk in \(S\) of LP mass no
larger than that of \(P\). If \(P\) had LP mass below \(1\), then a subwalk between
two consecutive distinct terminals, after deleting cycles, would contain a simple
\(W\)-\(W\) path in \(S\) of LP mass below \(1\), contradicting feasibility of
\(d\) for \(S\). Hence every \(W\)-\(W\) path in \(F\) has LP mass at least \(1\),
so \(d\) is feasible for \(F\). Therefore
\[
    \operatorname{OPT}^{\mathrm{MW}}_{\mathrm{LP}}(F,W)
    \le
    \operatorname{OPT}^{\mathrm{MW}}_{\mathrm{LP}}(S,W).
\]
Combining this inequality with part (1) gives equality.
\end{proof}

\begin{lemma}[Budget-\(k\) sparse torso equivalence]%
\label{lem:sparse-torso-equivalence}%
Assume the current instance is cleaned, so \(\mathcal T\) has no equal-endpoint pair. Let
\(S=\operatorname{storso}_k(G,v)\), and let \(\mathcal T^\rho\) be the representative
terminal-pair update described above. For every
\[
    X\subseteq V(G)\setminus (W\cup\{v\})
    \qquad\text{with}\qquad
    |X|\le k,
\]
the following are equivalent:
\begin{enumerate}
    \item \(X\) separates all pairs of \(\mathcal T\) and all distinct vertices of \(W\)
    in \(G\), under the branch promise \(v\notin X\);
    \item \(X\) separates all pairs of \(\mathcal T^\rho\) and all distinct vertices of
    \(W\) in \(S\).
\end{enumerate}
Moreover, if \(X\) is shadowless in \(G\) and \(v\notin X\), then \(X\) is
shadowless in \(S\).
\end{lemma}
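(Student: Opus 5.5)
The plan is to prove the two directions of the equivalence by transferring paths between $G-X$ and $S-X$, then handle the representative update of terminal pairs and shadowlessness. Throughout, fix $X\subseteq V(G)\setminus(W\cup\{v\})$ with $|X|\le k$, and set $A=N_G(v)$. The basic tool is a connector fact: if $|A|\le 4k+1$ then $\operatorname{Conn}_{4k+1}(A)$ is a clique; otherwise it is $(4k+1)$-vertex-connected. In either case any two vertices of $A\setminus X$ are joined by a path in $\operatorname{Conn}_{4k+1}(A)-X$, since $|X|\le k<4k+1$. Conversely, every connector edge $ab$ expands to the path $a\,v\,b$ in $G$, and this path avoids $X$ because $v\notin X$.

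\textbf{Path correspondence.} I will first establish a two-sided path correspondence for vertices $x,y\in V(G)\setminus(X\cup\{v\})$.

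\emph{From $G-X$ to $S-X$.} Take a path from $x$ to $y$ in $G-X$. If it passes through $v$, it does so as a subpath $a\,v\,b$ with $a,b\in A\setminus X$. Replace $a\,v\,b$ by a connector path in $S-X$.

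\emph{From $S-X$ to $G-X$.} Take a path in $S-X$. Replace each connector edge $ab$ by the path $a\,v\,b$.

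This gives $x\sim_{G-X}y$ if and only if $x\sim_{S-X}y$. In particular, distinct vertices of $W$ are separated by $X$ in $G$ if and only if they are separated in $S$, since $v\notin W$.

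\textbf{Terminal pairs.} Pairs not involving $v$ are handled directly by the path correspondence. For a pair $(v,t)$ there are three steps.

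First, if $t$ is connected to $v$ in $G-X$, then the path reaches $v$ through some $a\in A\setminus X$. Choose a representative $r\in R_v\setminus X$. This exists because either $R_v=A\ni a$ (when $|A|\le k+1$), or $|R_v|=k+1>|X|$. The connector joins $r$ to $a$ in $S-X$, so $(r,t)$ is connected in $S-X$.

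Second, conversely, if some $(r,t)$ with $r\in R_v$ is connected in $S-X$, lift the path to $G-X$. Since $r\notin X$ and $r\in A$, the edge $rv$ gives a $v$–$t$ path in $G-X$.

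Third, if the update produces an equal-endpoint pair, meaning $t\in R_v$, there are two cases. If $t\notin X$, the pair $(v,t)$ was unseparated in $G$ via the edge $vt$. If $t\in X$, the pair $(t,t)$ is separated, which matches the case-by-case analysis. I expect this bookkeeping to be fine because we assumed the instance is cleaned before the update and equal-endpoint pairs are handled only afterwards by Lemma~\ref{lem:clean-equal-endpoint-pairs}. I will state the equivalence for $\mathcal T^\rho$ with the convention that a pair $(a,a)$ is separated exactly when $a\in X$.

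\textbf{Shadowlessness.} Suppose $X$ is shadowless in $G$ and $v\notin X$. Every $u\in V(S)\setminus X$ lies in $V(G)\setminus X$, so it has a path to $W$ in $G-X$. Transferring this path by the correspondence gives a path from $u$ to $W$ in $S-X$, so $X$ is shadowless in $S$.

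The main obstacle I expect is the representative argument in the terminal-pair step. It needs $|R_v|=k+1>|X|$ exactly when $R_v\ne A$, together with connectivity of $A\setminus X$ inside the connector. The connector fact gives this since the connector tolerates $4k$ deletions and $|X|\le k$.
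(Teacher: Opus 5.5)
Your proposal is correct and takes essentially the same route as the paper's proof. It uses the same path correspondence (replace each detour $a\,v\,b$ by a connector path in $S-X$, and expand each connector edge to $a\,v\,b$ in $G-X$), the same representative argument for pairs involving $v$ ($R_v=A$ when $|A|\le k+1$, and otherwise $|R_v|=k+1>|X|$), and the same path transfer for shadowlessness. You also make explicit the equal-endpoint pairs $(t,t)$ that arise when $t\in R_v$, treating them as separated exactly when $t\in X$; the paper leaves this case implicit, and your handling of it is correct.
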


\begin{proof}
First compare connectivity for vertices not equal to \(v\). Let
\(p,q\in V(G)\setminus\{v\}\). If \(p\) and \(q\) are connected in \(G-X\), take
a \(p\)-\(q\) path. Every maximal subpath using \(v\) has the form
\(a-v-b\), with \(a,b\in N_G(v)\setminus X\). If \(a=b\), remove the detour. If
\(a\neq b\), the connector \(\operatorname{Conn}_{4k+1}(N_G(v))-X\) connects
\(a\) and \(b\), because it remains connected after at most \(k\) deletions.
Replacing every \(a-v-b\) detour gives a \(p\)-\(q\) walk in \(S-X\).

Conversely, if \(p\) and \(q\) are connected in \(S-X\), expand each connector
edge \(ab\) to the length-two walk \(a-v-b\) in \(G-X\). Since \(v\notin X\),
this gives a \(p\)-\(q\) walk in \(G-X\). These two transformations prove
equivalence for all original terminal pairs not involving \(v\) and for all pairs
of distinct terminals in \(W\).

Now consider an original pair \((v,t)\in \mathcal T\); the case \((s,v)\) is symmetric.
If \(v\) is connected to \(t\) in \(G-X\), then some neighbor
\(a\in N_G(v)\setminus X\) is connected to \(t\) in \(G-v-X\). If
\(|N_G(v)|\le k+1\), then \(a\in R_v\). If \(|N_G(v)|>k+1\), then \(R_v\) has
\(k+1\) vertices and at least one \(r\in R_v\setminus X\) survives; the connector
minus \(X\) connects \(r\) to \(a\). Hence some updated pair \((r,t)\) is connected
in \(S-X\).

Conversely, if an updated pair \((r,t)\) is connected in \(S-X\), expanding
connector edges gives a connection from \(r\) to \(t\) in \(G-X\). Since \(r\) is
adjacent to \(v\) and \(v\notin X\), the original pair \((v,t)\) is connected in
\(G-X\). Thus the representative update is exact for all budget-\(k\) sets
respecting the branch promise.

For shadowlessness, let \(u\in V(S)\setminus X\). Since \(X\) is shadowless in
\(G\), \(u\) has a path in \(G-X\) to some terminal in \(W\). The first
connectivity transformation converts that path into a path in \(S-X\). Thus every
component of \(S-X\) contains a vertex of \(W\).
\end{proof}

\begin{lemma}[Full torso and nonzero vertices]%
\label{lem:full-torso-nonzero}%

Let
\[
    \lambda=\operatorname{OPT}^{\mathrm{MW}}_{\mathrm{LP}}(G,W)\le 2k,
\]
and let \(F\) be the full torso of \(v\).
\begin{enumerate}
    \item \(\operatorname{OPT}^{\mathrm{MW}}_{\mathrm{LP}}(F,W)\) equals the optimum of the LP
    on \((G,W)\) with the additional constraint \(d_v=0\).
    \item A vertex \(v\) is nonzero if and only if
    \[
        \operatorname{OPT}^{\mathrm{MW}}_{\mathrm{LP}}(F,W)>\lambda.
    \]
    Here nonzero means that every optimal solution has \(d_v>0\).
    \item If \(v\) is nonzero, then
    \[
        \operatorname{OPT}^{\mathrm{MW}}_{\mathrm{LP}}(F,W)\ge \lambda+\frac12 .
    \]
    \item If \(S=\operatorname{storso}_k(G,v)\) and the sparse-torso child is
    live, meaning \(\operatorname{OPT}^{\mathrm{MW}}_{\mathrm{LP}}(S,W)\le 2k\), then a nonzero
    \(v\) satisfies
    \[
        \operatorname{OPT}^{\mathrm{MW}}_{\mathrm{LP}}(S,W)\ge \lambda+\frac12 .
    \]
\end{enumerate}
\end{lemma}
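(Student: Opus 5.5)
Part (1) is the core of the argument; parts (2)--(4) follow from it and from half-integrality.

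For part (1), I will show that every feasible solution of one LP yields a feasible solution of the other with the same value. Write $P_0$ for the LP on $(G,W)$ with the extra constraint $d_v=0$; its variables are indexed by $V(G)\setminus(W\cup\{v\})$, which is exactly the variable set of the LP on $(F,W)$.

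First take $d$ feasible for $P_0$ and a $W$-$W$ path $P$ in $F$. Replace each clique edge $ab$ with $a,b\in N_G(v)$ by the walk $a-v-b$ in $G$. Since $d_v=0$, the resulting walk has the same mass. If some $W$-$W$ path in $F$ had mass below $1$, this walk would contain a simple $W$-$W$ path of $G$ of mass below $1$. For that step I reuse the argument from \Cref{lem:sparse-torso-lp-preservation}: take the subwalk between two consecutive distinct terminals and delete cycles.

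Conversely, take $d$ feasible for $F$ and extend it by $d_v=0$. Any $W$-$W$ path in $G$ through $v$ has the form $\dots a-v-b\dots$ with $a,b\in N_G(v)$. This segment shortcuts to the clique edge $ab$ of $F$ without changing the mass, because $v$ contributes $0$. Paths avoiding $v$ use only edges of $G-v\subseteq F$. Hence the two optima coincide.

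For part (2), note that $P_0$ is the original LP restricted to the face $d_v=0$. If some optimal solution has $d_v=0$, then $\mathrm{OPT}(P_0)=\lambda$. Otherwise every feasible point with $d_v=0$ is non-optimal, so $\mathrm{OPT}(P_0)>\lambda$. This uses that the optimum of $P_0$ is attained, which holds because the LP is bounded below and the feasible region is a polyhedron.

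For part (3), I need $P_0$ to have a half-integral optimum. By part (1), $P_0$ is itself a Node Multiway Cut LP, namely on $(F,W)$, so \Cref{thm:node-multiway-cut-lp} gives this, with the corner case of an infeasible or unbounded $F$ handled separately. The optimum $\lambda$ of the original LP is also half-integral. Therefore $\mathrm{OPT}(F,W)$ and $\lambda$ both lie in $\tfrac12\mathbb{Z}$, and the strict inequality from part (2) upgrades to $\ge\lambda+\tfrac12$.

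Part (4) combines these. If $\mathrm{OPT}(S,W)\le 2k$, then \Cref{lem:sparse-torso-lp-preservation}(2) gives $\mathrm{OPT}(S,W)=\mathrm{OPT}(F,W)$, and part (3) finishes.

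The main obstacle is making the path-to-walk conversion in part (1) precise, in particular when a terminal of $W$ lies in $N_G(v)$. I would handle it as in the earlier lemma: pass to the subwalk between consecutive distinct terminals and then remove cycles. The half-integrality bookkeeping in part (3) is the other delicate point.
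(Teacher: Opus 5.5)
Your proposal is correct and follows essentially the same route as the paper's proof. Part (1) uses the same path/walk conversion between $a-v-b$ and the clique edge $ab$; part (3) uses half-integrality, with the $+\infty$ case treated separately; part (4) applies \Cref{lem:sparse-torso-lp-preservation}(2). Your attainment argument in part (2) only spells out what the paper dismisses as ``the definition of nonzero.''
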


\begin{proof}
For (1), a feasible LP solution on \(F\) extends to \(G\) by setting \(d_v=0\):
every \(W\)-\(W\) path in \(G\) can be turned into a walk in \(F\) of the same LP
mass by replacing each subpath \(a-v-b\) by the clique edge \(ab\). Conversely, a
feasible LP solution on \(G\) with \(d_v=0\) restricts to a feasible solution on
\(F\): every clique edge \(ab\) of \(F\) expands to \(a-v-b\) with no extra LP
mass. The two feasible regions projected to variables outside \(W\cup\{v\}\) have
the same objective values.

Part (2) is now the definition of nonzero. For (3), the two finite optima are
half-integral by \Cref{thm:node-multiway-cut-lp}, so a strict increase is at least \(1/2\). If
\(\operatorname{OPT}^{\mathrm{MW}}_{\mathrm{LP}}(F,W)=+\infty\), the inequality is also true.
For (4), Lemma~\ref{lem:sparse-torso-lp-preservation} gives
\[
    \operatorname{OPT}^{\mathrm{MW}}_{\mathrm{LP}}(S,W)
    =
    \operatorname{OPT}^{\mathrm{MW}}_{\mathrm{LP}}(F,W)
\]
whenever the sparse child is live.
\end{proof}

\begin{lemma}[Cleaning does not hurt measure progress]%
\label{lem:cleaning-measure-progress}%
Suppose an operation produces a graph \(H\) with budget \(b\), and then cleaning
force-deletes a set \(Q\subseteq V(H)\setminus W\), producing \(H-Q\) with budget
\(b-|Q|\). Then
\[
    2(b-|Q|)-\operatorname{OPT}^{\mathrm{MW}}_{\mathrm{LP}}(H-Q,W)
    \le
    2b-\operatorname{OPT}^{\mathrm{MW}}_{\mathrm{LP}}(H,W)-|Q|.
\]
\end{lemma}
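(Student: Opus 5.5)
The inequality is equivalent to
\[
\operatorname{OPT}^{\mathrm{MW}}_{\mathrm{LP}}(H,W)\le \operatorname{OPT}^{\mathrm{MW}}_{\mathrm{LP}}(H-Q,W)+|Q|.
\]
Indeed, expanding the left-hand side as $2b-2|Q|-\operatorname{OPT}^{\mathrm{MW}}_{\mathrm{LP}}(H-Q,W)$ and cancelling $2b-|Q|$ leaves $-|Q|-\operatorname{OPT}^{\mathrm{MW}}_{\mathrm{LP}}(H-Q,W)\le -\operatorname{OPT}^{\mathrm{MW}}_{\mathrm{LP}}(H,W)$. So the plan is to show that deleting a set $Q$ of nonterminal vertices lowers the Node Multiway Cut LP optimum by at most $|Q|$. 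I will prove this by lifting an LP solution from $H-Q$ to $H$.

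Take any feasible solution $d'$ for the LP on $(H-Q,W)$. Define $d$ on $V(H)\setminus W$ by setting $d_u=d'_u$ for $u\notin Q$ and $d_u=1$ for $u\in Q$; this is well defined because $Q\subseteq V(H)\setminus W$. We then check feasibility on $H$. Let $P$ be a path in $H$ connecting two distinct terminals of $W$. If $P$ avoids $Q$, then it is a $W$-$W$ path in $H-Q$, so its $d$-mass equals its $d'$-mass, which is at least $1$. If $P$ meets $Q$, it contains a vertex $u\in Q\subseteq V(P)\setminus W$ with $d_u=1$, so its mass is again at least $1$. Hence $d$ is feasible for $H$ with value $\sum d'+|Q|$. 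Taking $d'$ optimal gives the displayed inequality.

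We also need to handle the infeasible/infinite case. If $\operatorname{OPT}^{\mathrm{MW}}_{\mathrm{LP}}(H-Q,W)=+\infty$, the inequality holds trivially under the paper's convention. (In fact, LP infeasibility only occurs when two terminals are adjacent, and that situation persists in $H$ as well.)

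No step is a real obstacle. The only care points are (i) remembering that the vertices of $Q$ are nonterminals, so they carry LP variables, and (ii) doing the sign bookkeeping in the rearrangement correctly.
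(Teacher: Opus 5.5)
Your proposal is correct and matches the paper's proof: both reduce the inequality to $\operatorname{OPT}^{\mathrm{MW}}_{\mathrm{LP}}(H,W)\le \operatorname{OPT}^{\mathrm{MW}}_{\mathrm{LP}}(H-Q,W)+|Q|$ and prove it by extending a feasible solution on $H-Q$ with value $1$ on every vertex of $Q$. Your explicit feasibility check and your remark on the infeasible case are small additions the paper leaves implicit.
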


\begin{proof}
Any feasible LP solution on \(H-Q\) extends to a feasible LP solution on \(H\) by
assigning value \(1\) to each vertex of \(Q\). Thus
\[
    \operatorname{OPT}^{\mathrm{MW}}_{\mathrm{LP}}(H,W)
    \le
    \operatorname{OPT}^{\mathrm{MW}}_{\mathrm{LP}}(H-Q,W)+|Q|.
\]
Rearranging and substituting into the displayed measure gives the claim.
\end{proof}

When no nonzero vertex exists, for every $w\in W$, let $FC_G(w)$ be
the source side of the furthest minimum
$\{w\}$--$(W\setminus\{w\})$ separator.

The following is Lemma~13 of~\cite{chu2026faster}.

\begin{lemma}[Isolating-cut identity]%
\label{lem:isolating-cut-identity}%
If there is no nonzero vertex, then
\[
2\operatorname{OPT}^{\mathrm{MW}}_{\mathrm{LP}}(G,W)
=
\sum_{w\in W}
\left|N_G(FC_G(w))\setminus W\right|.
\]
\end{lemma}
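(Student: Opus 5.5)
The plan is to prove the two inequalities separately: an explicit feasible solution gives the upper bound on $2\operatorname{OPT}^{\mathrm{MW}}_{\mathrm{LP}}(G,W)$, and the hypothesis that no vertex is nonzero is used for the lower bound. Throughout, write $\lambda_w:=|N_G(FC_G(w))\setminus W|$. This is the size of a minimum $\{w\}$--$(W\setminus\{w\})$ separator, since deletable vertices are exactly those outside $W$.

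\textbf{Upper bound.} Define $d_u:=\tfrac12\cdot|\{w\in W: u\in N_G(FC_G(w))\setminus W\}|$, so that $\sum_u d_u=\tfrac12\sum_w\lambda_w$. To check feasibility, take a path $P$ from $w$ to a different terminal $w'$. Its first vertex outside $FC_G(w)$ lies in $N_G(FC_G(w))$, and likewise when $P$ is traversed from the $w'$ end. Neither vertex can be a terminal, because $FC_G(w)$ is the source side of a separator from $W\setminus\{w\}$. Two cases arise. If the two vertices differ, each contributes at least $\tfrac12$. If they coincide, that vertex lies in both boundaries and so has $d_u\ge 1$. Either way $P$ has mass at least $1$, which gives $2\operatorname{OPT}^{\mathrm{MW}}_{\mathrm{LP}}\le\sum_w\lambda_w$.

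\textbf{Lower bound.} Take a half-integral optimum $d$ from \Cref{thm:node-multiway-cut-lp}. For each $w$, let $R_w$ be the set of vertices reachable from $w$ through vertices of value $0$. Then $N_G(R_w)\subseteq\{u:d_u>0\}$, and $N_G(R_w)$ separates $w$ from $W\setminus\{w\}$, so $|N_G(R_w)|\ge\lambda_w$. By feasibility, a vertex with $d_u=\tfrac12$ lies in at most one set $N_G(R_w)$; otherwise a $W$--$W$ path through it would have mass $\tfrac12$. It therefore suffices to find an optimum in which every vertex with $d_u=1$ lies in at most two of the sets $N_G(R_w)$. In such an optimum $\sum_w|N_G(R_w)|\le\sum_u 2d_u$, which yields $\sum_w\lambda_w\le 2\operatorname{OPT}^{\mathrm{MW}}_{\mathrm{LP}}$.

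\textbf{Main obstacle.} The hard step is exactly where the hypothesis enters. Suppose $d_u=1$ and $u$ lies in the boundaries of at least three regions $R_{w_1},R_{w_2},R_{w_3},\dots$. I would first try an exchange argument. Lower $d_u$ to $\tfrac12$, then repair feasibility by raising to $\tfrac12$ the boundary vertices of the furthest isolating cuts of the affected terminals, showing that the total change is nonpositive. This would either contradict optimality or show that every optimum keeps $d_u>0$, meaning $u$ is nonzero, contrary to the hypothesis.

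If that exchange does not close, the fallback is the known structure theorem behind \Cref{thm:node-multiway-cut-lp} (Iwata et al.). It states that the $\tfrac12$-on-farthest-isolating-cut-boundaries solution, with integral values forced only on nonzero vertices, is optimal. In the absence of nonzero vertices this reduces to the solution constructed in the upper bound, which immediately gives equality.
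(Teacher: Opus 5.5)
Your upper bound is correct and does not even need the hypothesis. Your reduction of the lower bound to the existence of a half-integral optimum in which every vertex of value $1$ lies on at most two of the boundaries $N_G(R_w)$ is also valid. The gap is that you never prove such an optimum exists, and the hypothesis is never actually used. The lower bound does depend on it: for $K_{1,3}$ with the leaves as terminals, $\operatorname{OPT}^{\mathrm{MW}}_{\mathrm{LP}}=1$ while $\sum_w\lambda_w=3$.

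Your exchange step also aims at a false dichotomy. It tries to show that an optimum $d$ with $d_u=1$ and $u$ on three boundaries $N_G(R_w)$ is either not optimal or has $u$ nonzero. Here is a counterexample. Take terminals $t_1,\dots,t_4$, nonterminals $a_1,a_2,a_3,u,y,x$, and edges $t_ia_i$, $a_iu$ ($i\le 3$), $a_1y$, $yx$, $xt_4$.
\begin{itemize}
\item The vertex-disjoint paths $t_2a_2ua_3t_3$ and $t_1a_1yxt_4$ give $\operatorname{OPT}^{\mathrm{MW}}_{\mathrm{LP}}\ge 2$, so $d_u=d_x=1$ (all else $0$) is optimal.
\item In this optimum, $u$ lies on the boundaries of the zero-regions $\{t_1,a_1,y\}$, $\{t_2,a_2\}$ and $\{t_3,a_3\}$.
\item Yet $d'_{a_1}=d'_{a_2}=d'_{a_3}=d'_x=\tfrac12$ is optimal with $d'_u=0$, and $d''_u=d''_y=1$ is optimal with $d''_x=0$. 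Together with $d$, this shows no vertex is nonzero.
\end{itemize}
Your repair step (``raise the furthest isolating boundaries'') is not well defined either. In $K_{1,3}$ those boundaries are all $\{u\}$ itself, so there is nothing to raise.

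So a proof cannot argue locally at an arbitrary optimum. It has to take the optimum with $d'_u=0$ that the hypothesis supplies (or an extremally chosen optimum) and use it to move to a good optimum, while controlling what happens at the other vertices. That idea is what is missing.

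Your fallback does not close the gap. \Cref{thm:node-multiway-cut-lp} only asserts half-integrality and gives an algorithm. The ``structure theorem'' you describe, when there are no nonzero vertices, is exactly the statement to be proved, so invoking it is circular. For comparison, the paper does not prove the identity either; it imports it as Lemma~13 of \cite{chu2026faster}. Citing that lemma would match the paper, but as a self-contained argument yours stops at the only nontrivial direction.
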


At a live node,
$\operatorname{OPT}^{\mathrm{MW}}_{\mathrm{LP}}(G,W)\leq 2k$, and hence
Lemma~\ref{lem:isolating-cut-identity} implies
\[
\sum_{w\in W}
\left|N_G(FC_G(w))\setminus W\right|
\leq 4k.
\]

\begin{definition}[Contractible solution]%
\label{def:contractible-solution}%
A solution \(X\) is contractible if
\begin{enumerate}
    \item \(X\) contains no vertex of
    \[
        \bigcup_{w\in W}\bigl(N(FC_G(w))\setminus W\bigr);
    \]
    \item for every \(w\in W\) and every
    \(v\in N(FC_G(w))\setminus W\), vertices \(v\) and \(w\) lie in different
    components of \(G-X\).
\end{enumerate}
\end{definition}

\begin{lemma}[Branching to contractible candidates]\label{lemma:branchingcontractible}%
There is a branching algorithm that produces a branching tree with at most \(k^{O(k)}\) nodes and can be implemented in
\[
    k^{O(k)}
\widetilde O\!\left(|W|(|V|+|E|+|\mathcal T|)
\right)
\]
time, using sparse torsos only. If the input has a shadowless solution, then some node of the branching tree has a shadowless contractible solution\footnote{As remarked in \cite{chu2026faster} we even preserve an optimal shadowless contractible solution}. If any tree node is a
yes-instance, then the original instance is a yes-instance.
\end{lemma}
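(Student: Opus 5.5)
The plan is to follow the branching scheme of \cite{chu2026faster} (their Lemmas 13--15) with a measure argument, replacing every full torso by $\operatorname{storso}_k$. Each node carries an instance $(G,\mathcal T,W,b)$ and the potential $\mu=2b-\operatorname{OPT}^{\mathrm{MW}}_{\mathrm{LP}}(G,W)$. A node is \emph{live} if $\operatorname{OPT}^{\mathrm{MW}}_{\mathrm{LP}}(G,W)\le 2b\le 2k$; otherwise it is discarded, since any solution of size $b$ is an integral multiway cut for $W$. Initially $\mu\le 2k$, and $\mu\ge 0$ at live nodes.

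At a live node I first run \Cref{thm:node-multiway-cut-lp} to get $\lambda$ and decide whether a nonzero vertex exists. This is done by testing candidates $v$ via $\operatorname{OPT}^{\mathrm{MW}}_{\mathrm{LP}}(\operatorname{storso}_k(G,v),W)$ using \Cref{lem:full-torso-nonzero}(2),(4). Since only vertices in the support of a half-integral optimum can be nonzero, only $O(k)$ candidates need to be checked.

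If a nonzero $v$ exists, I branch two ways. In the first branch $v\in X$: delete $v$ and set $b\gets b-1$. The LP drops by at most $1$, so $\mu$ drops by at least $1$. In the second branch $v\notin X$: replace $G$ by $\operatorname{storso}_k(G,v)$ and update pairs via $R_v$. This is exact for budget-$k$ sets and preserves shadowlessness by \Cref{lem:sparse-torso-equivalence}, and the LP rises by at least $\tfrac12$ by \Cref{lem:full-torso-nonzero}(4), so $\mu$ drops by at least $\tfrac12$. After each step I apply cleaning (\Cref{lem:clean-equal-endpoint-pairs}); by \Cref{lem:cleaning-measure-progress} this never increases $\mu$.

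If no nonzero vertex exists, I use \Cref{lem:isolating-cut-identity}: the set $U=\bigcup_w N(FC_G(w))\setminus W$ has total size at most $4k$. For each $u\in U$ I branch on three options: $u\in X$; $u$ is connected to $w$ in $G-X$; or $u$ is separated from $w$. The connected case is implemented by contracting $u$ into $w$, a merge that needs no torso. Any option contradicting contractibility forces the furthest min isolating cut to be pushed, so the LP strictly increases, hence by at least $\tfrac12$ by half-integrality, and $\mu$ again decreases. If all of $U$ is consistent with contractibility, the node is a leaf where the contractible solution is recorded.

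Correctness follows by induction down the tree. Soundness holds because every operation lifts solutions back: deletion trivially, $\operatorname{storso}_k$ by \Cref{lem:sparse-torso-equivalence}, and contraction because it only adds connections. Completeness holds by following the branch consistent with a fixed shadowless solution; shadowlessness is preserved by the same lemmas and the cleaning lemma.

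For the tree size, each step has at most $O(k)$ children and decreases $\mu\le 2k$ by at least $\tfrac12$, so the depth is $O(k)$ and there are $k^{O(k)}$ nodes. Per node, computing the LP and the $|W|$ furthest isolating cuts costs $\widetilde O(\poly(k)|W|(|V|+|E|+|\mathcal T|))$. Sparse torsos keep graph size within an $O(k)$ factor, because $\operatorname{Conn}_{4k+1}$ adds $O(k|N(v)|)$ edges.

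The main obstacle is that repeated sparse torsos could compound edge counts. I would bound this by noting that each torso is applied along a path of depth $O(k)$, giving at most $k^{O(k)}$ blowup. I also need \Cref{lem:sparse-torso-lp-preservation} to guarantee that the LP value agrees with the full torso, which requires the live cutoff $2k$.
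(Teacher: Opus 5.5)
Your overall architecture matches the paper's: the measure $\mu=2b-\operatorname{OPT}^{\mathrm{MW}}_{\mathrm{LP}}(G,W)$, nonzero-vertex detection on the $O(k)$ support vertices via $\operatorname{storso}_k$, delete/sparse-torso branching on a nonzero vertex, and delete/contract branching on the boundaries $N(FC_G(w))\setminus W$ otherwise. However, the measure argument for the contraction branch has a gap.

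You claim that contracting $u\in N(FC_G(w))\setminus W$ into $w$ pushes the furthest isolating cut and therefore strictly increases the LP. Pushing only shows that $\sum_{w'}\lambda(w')$ increases by at least $1$. Turning that into an LP increase needs the identity $2\operatorname{OPT}^{\mathrm{MW}}_{\mathrm{LP}}=\sum_{w'}\lambda(w')$ of \Cref{lem:isolating-cut-identity} \emph{in the contracted instance}. That identity holds only if the contracted instance has no nonzero vertex; in general one only has $\operatorname{OPT}^{\mathrm{MW}}_{\mathrm{LP}}\le\tfrac12\sum_{w'}\lambda(w')$. A nonzero vertex may appear after the contraction, and then nothing forces the LP value, and hence $\mu$, to change at that level. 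This is why the paper, following Lemma~17 of \cite{chu2026faster}, uses a dichotomy: either the LP strictly increases, or the contracted instance contains a nonzero vertex.

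The repair is as follows. Contraction never lowers the LP, since it removes the variable of $u$ and only adds $W$--$W$ path constraints, so $\mu$ does not increase in that child. If the child has no nonzero vertex, the identity gives $2\operatorname{OPT}'\ge 2\operatorname{OPT}+1$. Otherwise the child immediately branches on a nonzero vertex, and both grandchildren decrease $\mu$ by at least $\tfrac12$ (\Cref{lem:full-torso-nonzero}(4) for the live sparse-torso child, and the budget drop for the delete child). So $\mu$ decreases by $\tfrac12$ every one or two live levels, not every level as you state; the depth is still $O(k)$.

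Two smaller points in the same passage need fixing. First, you group the option ``$u\in X$'' with contraction and say the LP strictly increases there too. Deletion can only lower the LP; the progress comes from the budget drop ($\mu$ decreases by at least $1$), exactly as in your nonzero-vertex case. Second, the option ``$u$ is separated from $w$'' is not an operation, since the algorithm does not know $X$. Instead, record the current no-nonzero node as a candidate and still spawn the delete/contract children for every pair $(w,u)$ with $u\in N(FC_G(w))\setminus W$. That gives $O(k)$ children per node and hence the $k^{O(k)}$ node bound.
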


\begin{proof}
At a node with remaining budget \(b\le k\), first clean equal-endpoint pairs by
Lemma~\ref{lem:clean-equal-endpoint-pairs}. If the branch survives, compute \(\operatorname{OPT}^{\mathrm{MW}}_{\mathrm{LP}}(G,W)\)
with threshold \(2b\) using \Cref{thm:node-multiway-cut-lp}. If
\(\operatorname{OPT}^{\mathrm{MW}}_{\mathrm{LP}}(G,W)>2b\), discard the node. Otherwise define
the measure
\[
    \mu=2b-\operatorname{OPT}^{\mathrm{MW}}_{\mathrm{LP}}(G,W).
\]

We next find a nonzero vertex if one exists, without constructing a full clique
torso. Keep the half-integral optimum \(d\) returned by the LP routine. Every
nonzero vertex belongs to the support of \(d\), and since
\[
    \sum_v d_v\le 2b\le 2k
\]
and every positive value is at least \(1/2\), this support has size at most
\(4k\). For each support vertex \(v\), build the sparse one-vertex torso
\(\operatorname{storso}_k(G,v)\) and run the thresholded Node Multiway Cut LP on
it with threshold \(2k\). If the call certifies value larger than \(2k\), then
\(v\) is nonzero because the current optimum is at most \(2b\le 2k\). Otherwise
let \(\lambda_v\) be the returned optimum value. By Lemmas~\ref{lem:sparse-torso-lp-preservation}
and~\ref{lem:full-torso-nonzero}, $v$ is nonzero exactly when
\[
    \lambda_v>\operatorname{OPT}^{\mathrm{MW}}_{\mathrm{LP}}(G,W).
\]
If no support vertex passes this test, then no nonzero vertex exists.

If a nonzero vertex \(v\) exists, create two children. The delete child removes
\(v\), removes all terminal pairs containing \(v\), and decreases the budget to
\(b-1\). The sparse-torso child replaces \(G\) by \(\operatorname{storso}_k(G,v)\),
applies the representative update to \(\mathcal T\), keeps budget \(b\), and then cleans.
The delete child decreases \(\mu\) by at least \(1\): every LP solution on
\(G-v\) extends to one on \(G\) by setting \(d_v=1\), so
\[
    \operatorname{OPT}^{\mathrm{MW}}_{\mathrm{LP}}(G-v,W)
    \ge
    \operatorname{OPT}^{\mathrm{MW}}_{\mathrm{LP}}(G,W)-1.
\]
The live sparse-torso child decreases \(\mu\) by at least \(1/2\) by Lemma~\ref{lem:full-torso-nonzero}(4),
and cleaning preserves or improves this progress by Lemma~\ref{lem:cleaning-measure-progress}.

If no nonzero vertex exists, compute the furthest isolating regions \(FC_G(w)\)
for all \(w\in W\) before applying the second branching rule. At a live no-nonzero node,
Lemma~\ref{lem:isolating-cut-identity} gives
\[
    \sum_{w\in W}\lambda_G(w)
    =
    2\operatorname{OPT}^{\mathrm{MW}}_{\mathrm{LP}}(G,W)
    \le 4k.
\]
Thus every isolating separator has size at most \(4k\). For each \(w\in W\), we
compute \(FC_G(w)\) by the standard vertex-splitting reduction to an \(s\)-\(t\)
max-flow instance and run capped augmenting paths, stopping after \(4k+1\)
augmentations. The vertices that cannot reach the sink in the final residual graph give the
furthest minimum isolating region. Computing the isolating regions separately for all $w\in W$ takes
\[
\widetilde O\!\left(
k|W|(|V|+|E|)
\right)
\]
time per no-nonzero node.

If no nonzero vertex exists, store the current node as a candidate. For every
\(w\in W\) and every \(v\in N(FC_G(w))\setminus W\), branch on deleting \(v\) and
on contracting \(v\) into \(w\). There are at most $4k$ such ordered choices at a live node by
Lemma~\ref{lem:isolating-cut-identity}. Delete branches decrease the measure by at least \(1\)
as above. For a contraction branch, the argument in the proof of Lemma~17
of~\cite{chu2026faster} shows that either the LP optimum strictly
increases, or the contracted instance contains a nonzero vertex.
By half-integrality, any strict increase is at least $1/2$.
Consequently, unless the branch is pruned, either the measure decreases
by at least $1/2$ immediately or the next nonzero-vertex branch decreases
it by at least $1/2$.
Lemma~\ref{lem:cleaning-measure-progress} again shows that cleaning cannot undo progress. Therefore every one or
two live branch levels decrease \(\mu\) by at least \(1/2\). Since
\(0\le \mu\le 2k\), every root-to-leaf path has length \(O(k)\), and the tree has
\(k^{O(k)}\) nodes.

Soundness is checked by reversing a root-to-descendant path. A delete child
solution lifts by adding back the deleted vertex. A sparse-torso child solution
lifts by Lemma~\ref{lem:sparse-torso-equivalence}. A contract child solution lifts by Lemma~7
of~\cite{chu2026faster}. Cleaning is
sound by Lemma~\ref{lem:clean-equal-endpoint-pairs}. Thus any yes-descendant gives a yes-root.

For preservation of a shadowless solution, let \(X\) be a shadowless solution at
a live node. At a nonzero branch, if \(v\in X\), the delete child preserves
\(X\setminus\{v\}\) and shadowlessness because
\[
    (G-v)-(X\setminus\{v\})=G-X.
\]
If \(v\notin X\), Lemma~\ref{lem:sparse-torso-equivalence} preserves feasibility and shadowlessness in the
sparse-torso child. At a no-nonzero node, if \(X\) is contractible then the stored
candidate is the desired one. If not, then Definition~\ref{def:contractible-solution} fails. If Item~(1) fails, some
boundary vertex \(v\) belongs to \(X\), and the corresponding delete child
preserves a shadowless solution. If Item~(1) holds but Item~(2) fails, then for some
\(w\in W\) and some \(v\in N(FC_G(w))\setminus W\), the vertex \(v\notin X\) is
in the same component of \(G-X\) as \(w\); the corresponding contract child preserves a shadowless solution by
Lemma~14(3) of~\cite{chu2026faster}. Following the preserving child
until the finite tree stops yields a stored candidate with a shadowless
contractible solution.

Sparse connectors have \(O(k|A|)\) edges, and there are only \(O(k)\) sparse-torso
operations on any live path. Terminal-pair representative updates multiply the
pair list by at most \(k+1\) per sparse torso. Thus every node has size bounded
by a \(k^{O(k)}\) factor times the input size. Multiplying this size bound by the
\(k^{O(k)}\) node bound and by the threshold LP cost from \Cref{thm:node-multiway-cut-lp} gives the
stated running time.
\end{proof}

\subsubsection{Final Region Contraction.}

This section follows the exact same algorithm of \cite{chu2026faster}, so we only state the algorithm and do not prove the full correctness. We will argue only about the running time.

For \(R\subseteq W\), define
\[
    P_W(G,R)
    =
    \bigcap_{w\in R} FC_G(w)
    \setminus
    \bigcup_{w'\in W\setminus R} FC_G(w').
\]
The empty intersection is \(V(G)\), so
\[
    P_W(G,\emptyset)
    =
    V(G)\setminus \bigcup_{w\in W}FC_G(w).
\]
For \(S\subseteq V(G)\), let
\[
    B(S)=N_G(V(G)\setminus S)\cap S.
\]
If \(P_W(G,\emptyset)\neq\emptyset\), output a fixed no-instance. Otherwise, for
every \(w\in W\), contract every vertex of
\[
    B(P_W(G,\{w\}))\setminus W
\]
into \(w\), and update terminal pairs accordingly.

\begin{lemma}[Near-linear implementation of the final contraction]%
\label{lem:final-contraction-implementation}%
Given a stored candidate and its already computed regions \(FC_G(w)\) for all
\(w\in W\), the final region contraction can be implemented in
\[
    O(|W||V|+|E|+|\mathcal T|)
\]
time, up to sorting or hashing factors.
\end{lemma}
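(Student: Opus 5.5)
The plan is to make one pass to record membership information for every vertex, then one pass over the edge lists, and finally to rewrite the terminal pairs through a relabeling map. Throughout, the $|W|$ regions $FC_G(w)$ are assumed to be stored explicitly, say as vertex lists. Their total size is at most $|W||V|$, and this term is exactly the one allowed in the stated bound.

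\textbf{Step 1 (membership counts).} For every vertex $u$ we build a counter $c(u)$ equal to the number of $w\in W$ with $u\in FC_G(w)$, together with one witness $w(u)$, namely the last such $w$ seen. A single scan of all region lists does this in $O(|W||V|)$ time.

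\textbf{Step 2 (classify regions and check emptiness).} With the counters in hand, membership is immediate:
\[
u\in P_W(G,\emptyset)\iff c(u)=0,
\qquad
u\in P_W(G,\{w\})\iff c(u)=1 \text{ and } w(u)=w .
\]
So checking whether $P_W(G,\emptyset)\neq\emptyset$ takes $O(|V|)$ time; if it is nonempty we output the fixed no-instance. Otherwise every vertex $u$ with $c(u)=1$ gets the label $\ell(u)=w(u)$, which identifies the unique single-terminal region containing it. All other vertices get the label $\ell(u)=\perp$.

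\textbf{Step 3 (compute the boundary vertices).} By definition, $B(S)=N_G(V(G)\setminus S)\cap S$ is the set of vertices of $S$ that have a neighbor outside $S$. For $S=P_W(G,\{w\})$, a vertex $u$ with $\ell(u)=w$ is in $B(S)$ exactly when some neighbor $x$ has $\ell(x)\neq w$. One scan of all adjacency lists therefore marks every boundary vertex of every $P_W(G,\{w\})$ simultaneously, in $O(|V|+|E|)$ time. This works because the sets $P_W(G,\{w\})$ are pairwise disjoint, so each vertex has at most one candidate region.

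\textbf{Step 4 (perform the contraction).} Define a map $\phi$ that sends each marked vertex $u\notin W$ to $\ell(u)$ and fixes every other vertex. We rewrite each edge $xy$ as $\phi(x)\phi(y)$, dropping self-loops. Duplicate edges are removed by sorting or hashing, which accounts for the stated allowance. We then rewrite each terminal pair $(s,t)$ as $(\phi(s),\phi(t))$ and deduplicate in the same way. Together this costs $O(|E|+|\mathcal T|)$ up to sorting or hashing factors. Summing Steps 1 to 4 gives $O(|W||V|+|E|+|\mathcal T|)$.

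The main obstacle is avoiding a per-terminal rescan of the graph, which would cost $O(|W||E|)$ for Step 3. The counter-plus-witness labeling removes this: disjointness of the single-terminal regions means each edge needs to be inspected only once. The only remaining subtlety is representing the contracted vertices as merged into an existing terminal $w$. This is handled by relabeling through $\phi$ rather than by physical merging of adjacency lists, so no cost of the form $|W|\cdot\deg$ arises.
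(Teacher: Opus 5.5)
Your proposal is correct and follows the paper's proof essentially step for step: a membership scan of the stored regions in $O(|W||V|)$ time, one edge scan to mark the boundary vertices of the single-terminal regions, a second edge scan to build the contracted graph, and one pass over $\mathcal{T}$. Your counter-plus-witness label is just a compact encoding of the paper's mask test $M(v)=\{w\}$ that makes the per-edge comparison explicitly $O(1)$; the one point to state explicitly is that rewritten pairs with $\phi(s)=\phi(t)$ must be kept rather than dropped like self-loops, because the paper passes them to the cleaning rule of Lemma~\ref{lem:clean-equal-endpoint-pairs}.
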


\begin{proof}
Compute for every vertex its membership mask
\[
    M(v)=\{w\in W : v\in FC_G(w)\}
\]
by scanning all stored regions. This costs \(O(|W||V|)\) time. A vertex lies in
\(P_W(G,\emptyset)\) exactly when \(M(v)=\emptyset\); if this happens, return the
fixed no-instance.

Otherwise a nonterminal vertex \(v\) lies in \(B(P_W(G,\{w\}))\setminus W\)
exactly when \(M(v)=\{w\}\) and some neighbor \(u\) has \(M(u)\neq\{w\}\). A
single edge scan marks precisely these vertices and records their image terminal
\(w\). A second edge scan constructs the contracted graph by replacing each edge
\(uv\) by the edge between the images of \(u\) and \(v\), discarding self-loops
and parallel edges. One scan of \(\mathcal T\) updates terminal pairs in the same way,
retaining equal-endpoint pairs for Lemma~\ref{lem:clean-equal-endpoint-pairs}. The total time after the masks are
known is
\[
    \widetilde O(|E|+|\mathcal T|).
\]
\end{proof}

\begin{theorem}[Sparse bipedalization]%
\label{thm:sparse-bipedalization}%
Given a compression instance $(G, \mathcal{T}, W, k)$ that may be assumed to have a shadowless solution in
the yes case, one can output \(k^{O(k)}\) explicit bipedal instances in
\[
    \widetilde O(k^{O(k)}|W|(|V(G)|+|E(G)|+|\mathcal{T}|))
\]
time such that a shadowless yes-case is preserved in one output and every output
solution lifts to a solution of the input.
\end{theorem}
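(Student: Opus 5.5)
The proof composes the two pieces already established: the branching of \Cref{lemma:branchingcontractible} and the final region contraction. We run the branching algorithm of \Cref{lemma:branchingcontractible} on the input $(G,\mathcal T,W,k)$. It produces a tree with $k^{O(k)}$ nodes in time $k^{O(k)}\widetilde O(|W|(|V|+|E|+|\mathcal T|))$. Its stored candidates are exactly the live no-nonzero nodes, and for each of them the regions $FC_G(w)$, $w\in W$, have already been computed.

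For every stored candidate we apply the final region contraction. If $P_W(G,\emptyset)\neq\emptyset$, we output a fixed no-instance. Otherwise, for each $w$, we contract $B(P_W(G,\{w\}))\setminus W$ into $w$. We then clean equal-endpoint pairs with \Cref{lem:clean-equal-endpoint-pairs} and output the result. By \Cref{lem:final-contraction-implementation} each contraction costs $\widetilde O(|W||V|+|E|+|\mathcal T|)$. Every node has size a $k^{O(k)}$ factor times the input, as argued at the end of the proof of \Cref{lemma:branchingcontractible}. Summing over the $k^{O(k)}$ candidates therefore gives the stated running time and at most $k^{O(k)}$ output instances.

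Correctness has three parts.
\begin{itemize}
\item \emph{Soundness.} An output solution lifts through the contraction step, since contraction only adds connectivity; this is the argument of \cite{chu2026faster}. It then lifts through the cleaning rule and through the tree path, using the soundness part of \Cref{lemma:branchingcontractible}.
\item \emph{Completeness.} \Cref{lemma:branchingcontractible} guarantees that some stored candidate has a shadowless contractible solution $X$. We must show $X$ survives the contraction. Shadowlessness means every vertex of $G-X$ reaches $W$. Together with $X$ being a multiway cut for $W$, this suggests $P_W(G,\emptyset)=\emptyset$, following Lemma~15 of \cite{chu2026faster}. Contractibility says $X$ avoids the boundaries $N(FC_G(w))\setminus W$ and keeps them apart from the wrong terminals. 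We would show that each vertex of $B(P_W(G,\{w\}))\setminus W$ lies in $w$'s component of $G-X$, so merging it into $w$ keeps $X$ a feasible shadowless solution.
\item \emph{Bipedality.} Every component of $G-W$ after contraction lies inside some region $P_W(G,R)$. Its attachments to $W$ are then limited to the terminals it is adjacent to after the boundary contraction. Following \cite{chu2026faster}, shadowlessness of the solution together with the structure of the furthest isolating cuts forces $|R|\le 2$ for nonempty interior regions. Cleaning preserves bipedality by \Cref{lem:clean-equal-endpoint-pairs}.
\end{itemize}

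The main obstacle is checking that the structural lemmas of \cite{chu2026faster} (bipedality of the contracted graph and preservation of a contractible solution) still apply in our sparse-torso setting. Those arguments use only connectivity among vertices outside the removed torso vertices, together with budget-$k$ cuts. \Cref{lem:sparse-torso-equivalence} and \Cref{lem:sparse-torso-lp-preservation} show that our sparse torsos agree with the full torsos on exactly these quantities. We would first try to transfer the arguments verbatim, replacing every appeal to clique edges with a path through a $\operatorname{Conn}_{4k+1}$ connector.
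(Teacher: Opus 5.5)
Your proposal follows the paper's proof. Both run Lemma~\ref{lemma:branchingcontractible}, apply the final region contraction of Lemma~\ref{lem:final-contraction-implementation} to every stored candidate, and clean with Lemma~\ref{lem:clean-equal-endpoint-pairs}; both then import the three properties of that contraction from Lemma~22 of~\cite{chu2026faster}, namely preservation of a shadowless contractible solution, lifting of output solutions, and bipedality of every nontrivial output.

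The ``main obstacle'' you flag does not arise: that lemma is applied to each stored candidate as a standalone instance with no nonzero vertex, and all sparse-torso issues are already handled inside Lemma~\ref{lemma:branchingcontractible} via Lemmas~\ref{lem:sparse-torso-lp-preservation} and~\ref{lem:sparse-torso-equivalence}. Also, the bipedality it provides holds unconditionally for every nontrivial output, not as a consequence of a solution's shadowlessness as your sketch suggests.
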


\begin{proof}
Apply Lemma \ref{lemma:branchingcontractible}. For every stored candidate, run the final region contraction
implemented in
Lemma~\ref{lem:final-contraction-implementation}; then clean equal-endpoint pairs by Lemma~\ref{lem:clean-equal-endpoint-pairs}. Lemma~22 of~\cite{chu2026faster} gives the three correctness
properties of this final contraction: a shadowless
contractible solution is preserved, every output solution lifts to the candidate,
and every nontrivial output is bipedal. Lemma~\ref{lem:clean-equal-endpoint-pairs} preserves these properties and
keeps bipedal instances bipedal. Combining with \Cref{lemma:branchingcontractible} gives the theorem.
\end{proof}

\section{Solving Bipedal Instances by \textsc{Almost 2-SAT}.}\label{sec:almost2sat}

Let
\[
    I=(G,\mathcal T,W,k)
\]
be a bipedal compression instance in which every vertex outside \(W\) is deletable.
The solver below has the exact one-sided guarantee needed after shadow removal and
bipedalization: if \(I\) has a shadowless solution, then the formula is a yes-instance;
if the formula is a yes-instance, then \(I\) has an actual multicut solution.

Before constructing the formula, reject if two distinct vertices of \(W\) are adjacent.
Such an edge cannot be hit by deleting vertices outside \(W\), so no feasible compression
solution exists in that case.

For every component \(C\) of \(G-W\), let
\[
    L(C)=N_G(C)\cap W,
\]
so \(|L(C)|\le 2\). For each \(v\in C\) and each \(w\in L(C)\), introduce a variable
\[
    p_{v,w},
\]
interpreted as ``\(v\) is kept and assigned to terminal \(w\).''

First preprocess zero-legged components. If \(L(C)=\emptyset\), then every vertex of
\(C\) must be deleted by any shadowless solution. We therefore add all vertices of \(C\)
to a remembered forced set \(Q_0\), delete them from the graph, decrease the budget by
\(|C|\), and remove terminal pairs incident with these vertices. If the budget becomes
negative, this branch rejects. This preprocessing cannot create a false yes: any solution
found later is combined with \(Q_0\), and the removed terminal pairs are separated by the
deletion of their incident endpoint. It preserves every shadowless yes-case because every
shadowless solution contains all vertices of every zero-legged component. After this
preprocessing, we rename the reduced instance and remaining budget back to
\[
    I=(G,\mathcal T,W,k);
\]
every remaining component has one or two labels.

We build a 2-CNF formula with soft clauses and hard clauses. \textsc{Almost 2-SAT}
deletes clauses; a hard clause is represented by \(k+1\) identical copies, so a deletion
set of size at most \(k\) cannot remove all copies.

\paragraph{Soft clauses.}
For each vertex \(v\in C\):
\begin{itemize}
    \item if \(L(C)=\{a,b\}\), add the soft clause
    \[
        (p_{v,a}\vee p_{v,b});
    \]
    \item if \(L(C)=\{a\}\), add the soft unit clause
    \[
        (p_{v,a}).
    \]
\end{itemize}

\paragraph{Hard clauses.}
For a two-legged component with \(L(C)=\{a,b\}\), add for every \(v\in C\) the hard
clause
\[
    (\neg p_{v,a}\vee \neg p_{v,b}),
\]
so a kept vertex has at most one label. If \(v\) is adjacent to \(a\in W\), add the hard
unit clause
\[
    (\neg p_{v,b});
\]
if \(v\) is adjacent to \(b\), add
\[
    (\neg p_{v,a}).
\]

For every edge \(uv\) inside a component with labels \(\{a,b\}\), add
\[
    (\neg p_{u,a}\vee \neg p_{v,b})
    \qquad\text{and}\qquad
    (\neg p_{u,b}\vee \neg p_{v,a}).
\]
These forbid an undeleted edge from crossing between the two terminal labels. One-legged
components need no internal edge clauses.

For every terminal pair \((s,t)\in \mathcal T\), add clauses forbidding the two endpoints from
being kept with a common label. A terminal \(w\in W\) is treated as a fixed kept object
with label \(w\). Thus:
\begin{itemize}
    \item if \(s,t\notin W\), add
    \[
        (\neg p_{s,w}\vee \neg p_{t,w})
    \]
    for every common label \(w\);
    \item if \(s=w\in W\) and \(t\notin W\), add
    \[
        (\neg p_{t,w})
    \]
    if \(w\) is a label of \(t\);
    \item if \(s,t\in W\) and \(s=t\), discard as no; if \(s\neq t\), no clause is needed
    because distinct \(W\)-terminals are already separated by the multiway requirement.
\end{itemize}
Equal-endpoint pairs \((v,v)\) with \(v\notin W\) are handled by hard unit clauses
forbidding every label of \(v\), forcing deletion of the soft clause of \(v\).

\begin{lemma}[Formula size]%
\label{lem:bipedal-formula-size}%
The formula has \(O(|V|)\) variables and
\[
    O(k(|V|+|E|+|\mathcal T|))
\]
clauses after hard-clause copying.
\end{lemma}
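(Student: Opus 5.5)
The plan is to count variables and clauses directly, type by type, and charge each count to a vertex, an edge, or a terminal pair of the (preprocessed) bipedal instance. The factor $k$ will come only from replicating each hard clause $k+1$ times.

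For the variables, every vertex $v\notin W$ lies in a unique component $C$ of $G-W$. By bipedality $|L(C)|\le 2$, so $v$ contributes at most two variables $p_{v,w}$. This gives at most $2|V|$ variables.

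For the clauses, I would go through the construction in order, bounding each type before copying.
\begin{itemize}
\item Soft clauses: there is one per vertex, so at most $|V|$.
\item Per-vertex hard clauses: each $v$ gets the at-most-one-label clause and at most two adjacency unit clauses, one for each of $a,b$ that $v$ is adjacent to. The adjacency units are written once per label, not once per edge to $W$, so this is $O(|V|)$.
\item Internal edge clauses: each edge $uv$ inside a two-legged component yields exactly two clauses, so $O(|E|)$ in total. One-legged components contribute none.
\item Terminal-pair clauses: each pair $(s,t)$ yields at most one clause per common label, hence at most $2$. If exactly one endpoint lies in $W$ it yields at most one clause, and if both lie in $W$ it yields none. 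This is $O(|\mathcal T|)$.
\item Equal-endpoint pairs $(v,v)$: each produces at most two unit clauses, again $O(|\mathcal T|)$.
\end{itemize}
Together the distinct clauses number $O(|V|+|E|+|\mathcal T|)$.

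Finally, each hard clause is replicated $k+1$ times, multiplying the count by at most $k+1$. This gives $O(k(|V|+|E|+|\mathcal T|))$ clauses.

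The main thing to get right is the adjacency unit clauses. If they were added per edge from $v$ to $W$, I would still charge them to edges, so the bound holds either way. I should also note that zero-legged preprocessing only shrinks the instance, so the bound stated in terms of the reduced instance implies the bound in terms of the original one.
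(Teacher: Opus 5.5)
Your proposal is correct and follows the same route as the paper: at most two label variables per nonterminal vertex, $O(1)$ clauses charged to each vertex, edge, and terminal pair before copying, and a factor $k+1$ from replicating the hard clauses. Your version simply itemizes the clause types that the paper's three-line proof treats together.
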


\begin{proof}
Each nonterminal vertex has at most two label variables. Before copying, each vertex,
edge, and terminal pair contributes \(O(1)\) clauses. Each hard clause is copied \(k+1\)
times.
\end{proof}

\begin{lemma}[Bipedal formula correctness]%
\label{lem:bipedal-formula-correctness}%
If \(I\) has a shadowless solution of size at most \(k\), then the constructed
\textsc{Almost 2-SAT} instance can be made satisfiable by deleting at most \(k\) soft
clauses. Conversely, if the formula can be made satisfiable by deleting at most \(k\)
clauses, then \(I\) has a solution of size at most \(k\).
\end{lemma}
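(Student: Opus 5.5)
The proof has two directions. In each, we translate between a vertex deletion set $X$ together with its component structure in $G-X$, and a truth assignment together with a set of deleted soft clauses. Each nonterminal vertex owns exactly one soft clause, so deleting that clause will correspond to deleting the vertex.

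\textbf{Completeness.} Let $X$ be a shadowless solution with $|X|\le k$. Recall that the forced set $Q_0$ is contained in every shadowless solution, so after preprocessing $X\setminus Q_0$ is a shadowless solution of the renamed instance within the renamed budget. We define an assignment as follows.
\begin{itemize}
\item For a kept vertex $v\in C\setminus X$, shadowlessness places $v$ in a component of $G-X$ that contains some terminal $w$. Because $X$ is a multiway cut for $W$, this $w$ is unique.
\item Any path from $v$ to $w$ leaves $C$ through a vertex of $N_G(C)\cap W=L(C)$. That vertex is already a terminal, so it must be $w$ itself. Hence $w\in L(C)$.
\item Set $p_{v,w}$ true and every other variable of $v$ false. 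For $v\in X$, set all variables of $v$ false.
\end{itemize}
The soft clauses violated by this assignment are exactly the ones belonging to vertices of $X$, so at most $k$ of them need to be deleted. We then check that every hard clause is satisfied.
\begin{itemize}
\item The at-most-one-label clauses hold because $w$ is unique.
\item The adjacency clauses hold: if $v$ is adjacent to $a$, then $v$ lies in $a$'s component, so $v$ does not receive label $b$.
\item The internal edge clauses hold because an undeleted edge $uv$ puts $u$ and $v$ in the same component of $G-X$, which contains a single terminal, so their labels agree.
\item The terminal-pair clauses hold because two endpoints with a common label lie in the same component of $G-X$, contradicting that $X$ separates the pair.
\item Equal-endpoint pairs $(v,v)$ force $v\in X$, and the hard clauses forbidding all labels of $v$ are then satisfied.
\end{itemize}

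\textbf{Soundness.} Suppose deleting a set $D$ of at most $k$ clauses makes the formula satisfiable. Each hard clause has $k+1$ copies, so at least one copy of every hard clause survives deletion; hence every hard clause is satisfied. Fix a satisfying assignment of the remaining clauses. Let $X$ be the set of nonterminal vertices none of whose label variables is true. Each $v\in X$ violates its own soft clause, so that clause lies in $D$, and therefore $|X|\le|D|\le k$. Together with $Q_0$, this gives a solution within the original budget.

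To show $X$ is feasible, we prove a labeling lemma: every component of $G-X$ contains at most one terminal, and every kept vertex in it carries that terminal's label if there is one. The argument runs along an arbitrary path in $G-X$.
\begin{itemize}
\item The path cannot jump between different components of $G-W$ except through terminals.
\item Within a two-legged component, the internal edge clauses force adjacent kept vertices to share a label.
\item A vertex adjacent to terminal $a$ is forced to have label $a$, by the adjacency clause together with its soft clause.
\item For one-legged components with $L(C)=\{a\}$, every kept vertex has label $a$.
\item Two distinct terminals are not adjacent, by the preliminary rejection.
\end{itemize}
Consequently no path in $G-X$ joins two distinct terminals, since it would force one vertex to carry two labels, contradicting the at-most-one-label clause. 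Any terminal pair connected in $G-X$ would then have endpoints sharing a label, which contradicts the terminal-pair clauses.

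\textbf{Expected obstacle.} The hardest step is a vertex in a two-legged component that is kept but has no label pinned by adjacency. Its truth value is then only constrained by the internal edge clauses, and two kept vertices in the same component of $G-X$ could a priori carry different labels, or its component might reach no terminal at all. We handle this by propagating labels along paths. Edge clauses forbid an adjacent $a$–$b$ labelling, and every kept vertex has exactly one label (from its soft clause plus the at-most-one clause). Hence labels are constant on each component of $G[C]-X$. A component of $G-X$ containing no terminal is harmless: a pair with both endpoints in it would share their single constant label, which the terminal-pair clauses forbid.
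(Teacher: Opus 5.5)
Your proof is correct and follows essentially the same route as the paper: for completeness you assign each kept vertex the label of the unique terminal in its component of $G-X$, and for soundness you propagate labels along paths in $G-X$ using the terminal-edge and edge hard clauses. The only differences are cosmetic. You take $X$ to be the vertices with no true label, which is a subset of the paper's set of vertices whose soft clause was deleted and works equally well. You also spell out $Q_0$, equal-endpoint pairs, and terminal-free components of $G-X$ explicitly, which the paper handles more briefly.
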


\begin{proof}
Let \(X\) be a shadowless solution. For every \(v\notin X\), shadowlessness gives a
component of \(G-X\) containing \(v\) and some terminal \(w\in W\). If \(C\) is the
component of \(G-W\) containing \(v\), then
\[
    w\in N_G(C)\cap W=L(C),
\]
because any path from \(v\) to \(w\) first leaves \(C\) through a neighbor in \(W\). Set
\(p_{v,w}=1\) and all other label variables of \(v\) to \(0\). For \(v\in X\), set all
label variables to \(0\) and delete the soft clause of \(v\).

Every undeleted vertex satisfies its soft clause. A two-legged kept vertex receives
exactly one label, so the at-most-one hard clauses hold. If \(v\) is adjacent to terminal
\(a\), then in \(G-X\) the vertex \(v\) is in the component of \(a\); assigning it the
other label would connect two terminals of \(W\), contradicting feasibility of \(X\).
Thus the terminal-edge hard clauses hold. If an edge \(uv\) joined two kept vertices
assigned to different labels \(a\) and \(b\), then the component of \(a\) in \(G-X\)
would contain \(u\), the edge \(uv\), and \(v\), and hence would meet the component of
\(b\), connecting distinct terminals of \(W\); this is impossible. Thus the edge hard
clauses hold. Finally, if a terminal pair clause failed, then both endpoints of that
terminal pair would be kept with a common label \(w\); the endpoint-to-\(w\) paths inside
\(G-X\) would connect the terminal pair, contradicting feasibility of \(X\).

Conversely, suppose deleting at most \(k\) clauses makes the formula satisfiable. Since
each hard clause has \(k+1\) copies, deleting at most \(k\) clauses leaves at least one
copy of each hard clause, so every hard constraint is satisfied. Let \(X\) be the set of
vertices whose soft clause is deleted. There is one soft clause per nonterminal vertex,
so
\[
    |X|\le k.
\]

We first show that distinct terminals of \(W\) are separated in \(G-X\). Suppose there is
a path from \(a\in W\) to \(b\in W\), where \(a\neq b\). Delete repeated vertices to make
it simple. The path has at least one nonterminal vertex because adjacent distinct terminals
were rejected before the formula was built. The first nonterminal vertex after \(a\) must
have label \(a\) by the terminal-edge hard clauses. Along every internal edge, the edge
hard clauses forbid changing from label \(a\) to a different label. Therefore the last
nonterminal vertex before \(b\) must still have label \(a\), while the terminal-edge hard
clauses force it to have label \(b\), a contradiction. Hence no such path exists.

Now suppose a terminal pair \((s,t)\in \mathcal T\) is connected in \(G-X\). Traverse a simple
\(s\)-\(t\) path. If the path meets a terminal \(w\in W\), then the preceding paragraph
implies it cannot meet any other terminal of \(W\), and all adjacent nonterminal vertices
on the path are forced to carry label \(w\). If the path does not meet \(W\), all its
vertices lie in one component of \(G-W\), which has one or two labels; the edge hard
clauses prevent a change of label along the path. In either case, the two endpoints have
a common label \(w\) in the sense used by the terminal-pair clauses, with terminals
treated as fixed labels. The corresponding hard pair clause forbids this. Thus no terminal
pair is connected in \(G-X\), and \(X\) solves \(I\).
\end{proof}

\begin{theorem}[Near-linear bipedal solver]%
\label{thm:near-linear-bipedal-solver}%
A bipedal compression instance can be processed in
\[
    2^{O(k)}\widetilde O(|V|+|E|+|\mathcal T|)
\]
time with the following guarantees: it outputs yes if there is a shadowless solution of
size at most \(k\), and every yes output certifies an actual solution of size at most \(k\).
\end{theorem}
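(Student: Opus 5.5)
The plan is to build the \textsc{Almost 2-SAT} formula described above and call a fast parameterized \textsc{Almost 2-SAT} solver, using \Cref{lem:bipedal-formula-size} and \Cref{lem:bipedal-formula-correctness} for size and correctness. The parameter remains the remaining budget $k$.

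\textbf{Construction.}
\begin{enumerate}
\item Reject if two distinct vertices of $W$ are adjacent. This takes one edge scan.
\item Compute the components $C$ of $G-W$ and their label sets $L(C)=N_G(C)\cap W$ by a BFS over $G-W$. While scanning edges into $W$, record each component's (at most two) labels and, for each vertex, which terminals it is adjacent to.
\item Remove zero-legged components into the forced set $Q_0$ and lower the budget. Reject if the budget becomes negative. Delete the affected terminal pairs in one scan of $\mathcal T$.
\item Generate the clauses:
\begin{itemize}
\item a per-vertex soft clause and at-most-one hard clause;
\item a hard unit clause per vertex--terminal adjacency;
\item two hard clauses per internal edge of each two-legged component;
\item for each terminal pair, at most two clauses, found by comparing the constant-size label sets of its endpoints.
\end{itemize}
Each hard clause is copied $k+1$ times.
\end{enumerate}
All of this runs in $O(k(|V|+|E|+|\mathcal T|))$ time, and \Cref{lem:bipedal-formula-size} gives a formula with $O(|V|)$ variables and $O(k(|V|+|E|+|\mathcal T|))$ clauses.

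\textbf{Solving.} Invoke the linear-time FPT algorithm for \textsc{Almost 2-SAT} of \cite{iwata2014linear}, which runs in $4^{k}\cdot O(\text{formula size})$ time. This gives $2^{O(k)}\Otil(|V|+|E|+|\mathcal T|)$ overall, since the extra factor of $k$ from hard-clause copying is absorbed into $2^{O(k)}$.

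\textbf{Guarantees.} If $I$ has a shadowless solution of size at most $k$, then \Cref{lem:bipedal-formula-correctness} makes the formula a yes-instance, so the solver outputs yes. Shadowlessness survives the zero-legged preprocessing, as argued in the text. If the solver outputs yes, it returns a set of at most $k$ deleted clauses. From it we read off $X$, the vertices whose soft clause was deleted. By the converse direction of \Cref{lem:bipedal-formula-correctness}, $X$ solves the reduced instance, and $X\cup Q_0$ solves the original instance within the original budget.

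\textbf{Main obstacle.} The delicate point is that the solver deletes \emph{clauses}, possibly including copies of hard clauses. The copying argument handles this: at most $k$ deletions cannot remove all $k+1$ copies of a hard clause. The solution size is then bounded by the number of deleted soft clauses only, because each nonterminal vertex owns exactly one soft clause. A secondary check is that the solver from \cite{iwata2014linear} accepts the clause-deletion formulation with multiplicities, or equivalently a weighted formulation. If it does not, replace copies by the standard variable-deletion reduction, which still keeps the formula size linear up to $O(k)$.
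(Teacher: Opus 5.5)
Your proposal is correct and matches the paper's proof: build the formula, bound its size by Lemma~\ref{lem:bipedal-formula-size}, run the cited linear-time \textsc{Almost 2-SAT} solver, and use the two directions of Lemma~\ref{lem:bipedal-formula-correctness}. Your extra points are consistent details that the paper leaves implicit: the linear-time construction, lifting via $X\cup Q_0$, and the caveat about whether the solver keeps duplicate copies of hard clauses.
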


\begin{proof}
Construct the formula described above. By Lemma~\ref{lem:bipedal-formula-size}, its number of variables plus clauses
is
\[
    O(k(|V|+|E|+|\mathcal T|)).
\]
Invoking the \textsc{Almost 2-SAT} algorithm of~\cite{iwata2014linear} solves it in 
\[
    2^{O(k)}\widetilde O(|V|+|E|+|\mathcal T|)
\]
time. If the bipedal instance has a shadowless solution, Lemma~\ref{lem:bipedal-formula-correctness} supplies a deletion set
of at most \(k\) soft clauses, so the \textsc{Almost 2-SAT} algorithm returns yes. If the
\textsc{Almost 2-SAT} algorithm returns yes, Lemma~\ref{lem:bipedal-formula-correctness} converts the deleted soft clauses
into a vertex set of size at most \(k\) solving the bipedal compression instance.
\end{proof}

\section{AI Disclosure.}
We used GPT 5.5 Thinking/Pro and Codex to fill in the details and help write parts of~\Cref{sec:multicut} and the related appendices, and give feedback on the paper. All approaches were devised by the authors, and the AI-assisted proofs were revised and verified by the authors.